
\documentclass[10pt]{article}
\usepackage[utf8]{inputenc}

\usepackage%
[%
left=3cm,%
right=3cm,%
top=3cm, %
bottom=3cm,%
a4paper%
]{geometry} %

\usepackage{latexsym, amssymb, amscd, amsthm, amsxtra, amsmath, amsthm, mathtools}
\usepackage{subcaption}
\usepackage{graphicx}
\usepackage{color}
\usepackage{natbib}
\usepackage{ifpdf}
\usepackage{multirow}
\usepackage{kotex}
\usepackage{hyperref}
\usepackage[linesnumbered,ruled]{algorithm2e}
\usepackage{enumerate}
\usepackage{epstopdf, epsfig}
\usepackage{bm}  %
\usepackage{tikz-cd}
\usepackage{apptools}
\usepackage{authblk}
\usepackage{dsfont}

\newtheorem{thm}{Theorem}
\newtheorem{cor}[thm]{Corollary}
\newtheorem{lem}[thm]{Lemma}
\newtheorem{prop}[thm]{Proposition}

\theoremstyle{definition}
\newtheorem{defn}{Definition}
\newtheorem*{defn*}{Definition}
\newtheorem{example}{Example}

\theoremstyle{remark}
\newtheorem{remark}{Remark}
\newtheorem*{remark*}{Remark}

\AtAppendix{\counterwithin{thm}{section}}

\newcommand{\norm}[1]{\left\Vert#1\right\Vert}
\newcommand{\abs}[1]{\left\vert#1\right\vert}

\newcommand{\Real}{\mathbb R}

\newcommand{\Xv}{\mathbf{B}(X)}

\def\E{\mathbb{E}}
\def\tr{\mbox{trace}}

\def\av{\mathbf a}
\def\bv{\mathbf b}

\def\ev{\mathbf e}

\def\gv{\mathbf g}

\def\jv{\mathbf j}
\def\gv{\mathbf g}

\def\rv{\mathbf r}

\def\tv{\mathbf t}
\def\uv{\mathbf u}
\def\vv{\mathbf v}
\def\wv{\mathbf w}
\def\xv{\mathbf x}
\def\yv{\mathbf y}
\def\zv{\mathbf z}

\def\Av{\mathbf A}

\def\Iv{\mathbf I}

\def\Mv{\mathbf M}

\def\Pv{\mathbf P}

\def\Sv{\mathbf S}

\def\Uv{\mathbf U}

\def\Xv{\mathbf X}

\def\Zv{\mathbf Z}

\newcommand{\deltav}{{\bm \delta}}

\newcommand{\etav}{\mbox{\boldmath{$\eta$}}}

\newcommand{\lambdav}{{\bm \lambda}}
\newcommand{\thetav}{\bm \theta}
\newcommand{\muv}{{\bm \mu}}

\newcommand{\xiv}{\mbox{\boldmath{$\xi$}}}

\newcommand{\piv}{\mbox{\boldmath{$\pi$}}}

\newcommand{\Sigmav}{{\bm \Sigma}}
\newcommand{\Lambdav}{\mbox{\boldmath{$\Lambda$}}}

\newcommand{\Cc}{\mathcal{C}}

\newcommand{\Fc}{\mathcal{F}}

\newcommand{\Lc}{\mathcal{L}}
\newcommand{\Mc}{\mathcal{M}}

\newcommand{\Sc}{\mathcal{S}}

\newcommand{\Xc}{\mathcal{X}}
\newcommand{\Yc}{\mathcal{Y}}
\newcommand{\Zc}{\mathcal{Z}}

\def\1v{\mathbf 1}
\def\0v{\mathbf 0}
\def\Id{ \mathbf{I}}

\def\P{\mathbb{P}}
\def\Lap{\mathrm{Lap}}

\title{Differentially Private Multivariate Statistics with an Application to Contingency Table Analysis\thanks{This work was supported by the National Research Foundation of Korea (NRF) grant funded by the Korea government (MSIT) (No.2019R1A2C2002256) and Institution of Information \& communications Technology Planning \& Evaluation(IITP) grant funded by the Korea government (MSIT) (No.2022-0-00937, Solving the problem of increasing the usability and usefulness synthetic data algorithm for statistical data).}}

\author[]{Minwoo Kim$^{\dagger}$}
\author[]{Jonghyeok Lee\thanks{Equal contribution.}~}
\author[]{Seung Woo Kwak}
\author[]{Sungkyu Jung\thanks{Corresponding author. Email address: \texttt{sungkyu@snu.ac.kr}}}
\affil[]{Department of Statistics, Seoul National University, \\ Seoul 08826, South Korea}

\date{\today}

\begin{document}

\maketitle

\begin{abstract} 
Differential privacy (DP) has become a rigorous central concept for privacy protection in the past decade. We use Gaussian differential privacy (GDP) in gauging the level of privacy protection for releasing statistical summaries from data. The GDP is a natural and easy-to-interpret differential privacy criterion based on the statistical hypothesis testing framework. 
The Gaussian mechanism is a natural and fundamental mechanism that can be used to perturb multivariate statistics to satisfy a $\mu$-GDP criterion, where $\mu>0$ stands for the level of privacy protection. Requiring a certain level of differential privacy inevitably leads to a loss of statistical utility. We improve ordinary Gaussian mechanisms by developing rank-deficient James-Stein Gaussian mechanisms for releasing private multivariate statistics, and show that the proposed mechanisms have higher statistical utilities. Laplace mechanisms, the most commonly used mechanisms in the pure DP framework, are also investigated under the GDP criterion. We show that optimal calibration of multivariate Laplace mechanisms requires more information on the statistic than just the global sensitivity, and derive the minimal amount of Laplace perturbation for releasing $\mu$-GDP contingency tables. Gaussian mechanisms are shown to have higher statistical utilities than Laplace mechanisms, except for very low levels of privacy. The utility of proposed multivariate mechanisms is further demonstrated using differentially private hypotheses tests on contingency tables. Bootstrap-based goodness-of-fit and homogeneity tests, utilizing the proposed rank-deficient James--Stein mechanisms, exhibit higher powers than natural competitors. 
\end{abstract}

\textit{Keywords}: Differential privacy, Sensitivity space, James-Stein estimator, Trade-off function, Private goodness-of-fit test

\section{Introduction} \label{section: intro}
Privacy has become a crucial concern in modern statistical analyses and machine learning.
Data sets collected in recent years have grown exponentially in size and often contain sensitive personal information.
Consequently, it has become increasingly important to prevent privacy disclosure when publicly releasing data sets containing personal information, or even when releasing statistical summaries from the data sets. In this paper, we investigate private release of multivariate statistics satisfying a pre-specified degree of privacy protection, governed by differential privacy. Differential privacy \citep{dwork2006calibrating,dwork2006our}, or DP, and its variants \citep{mironov2017renyi,dong2022gaussianDP} provide formal criteria for data privacy and can be achieved by randomly perturbing the statistics to be released.

Let $S = (x_1, \dots, x_n)  \in \Xc^n$ be a data set collected from  a sample space $\Xc$, and let $\thetav(S) \in \Real^p$ denote a multivariate summary statistic computed from the data set $S$. (We also call the map $\thetav: \Xc^n \to \Real^p$, defined for any $n \in \mathbb{N}$, a statistic.)
Our development applies to any form of  multivariate statistics, from the multivariate mean vector to regression coefficient estimators.
Since the most common types of data release from data sets containing sensitive personal information are \emph{contingency tables} and \emph{histograms},
we will pay special attention to the case that $\thetav$ is a contingency (or frequency) table.

Let $\Mv: \mathcal{X}^n \to \Real^p$ be a randomized map, called a \emph{mechanism}, that outputs a randomly perturbed
$\thetav(S)$.
A commonly used mechanism for differential privacy is an additive mechanism that adds a random noise $\xi$ to the given statistic, e.g.,
\begin{equation}\label{eq:GaussianMechanism}
  \Mv(S) = \thetav(S) + \xi, \quad \xi \sim N_p(\0v, \sigma^2 \Id_p),
\end{equation} 
where $\Id_p$ is the $p\times p $ identity matrix.
We choose to use Gaussian differential privacy (GDP) of \cite{dong2022gaussianDP} to measure the degrees of privacy protection in a mechanism $\Mv$, and to calibrate the amount of noise infused in $\Mv$. In the GDP framework, the level of privacy protection is summarized by a single parameter $\mu >0$, and we say a randomized mechanism $\Mv$ is $\mu$-GDP if determining whether any specific piece of personal information is present in the data set $S$, based on a single output $\Mv(S)$, is as difficult as testing
$$
H_0 : X \sim N(0,1) \quad {\rm vs.} \quad H_1 : X \sim N(\mu,1)
$$
based on a single observation of $X$. It is known that for the Gaussian mechanism (\ref{eq:GaussianMechanism}) to satisfy $\mu$-GDP (for a pre-specified $\mu > 0$), the amount of random noise in $\xi$ should satisfy $\sigma \ge \mu^{-1}\Delta_2(\thetav; \Xc^n)$, where $\Delta_2(\thetav; \Xc^n)$, called \emph{$L_2$-sensitivity} of $\thetav$, is the largest possible $L_2$-discrepancy of $\thetav$ when one individual's record in the data is replaced by someone else's. See Section \ref{subsec:1.1} for a brief review of GDP.

In this work, we propose and compare various refinements on multivariate additive mechanisms stemming from  adding Gaussian or Laplace-distributed noise, that are calibrated to satisfy $\mu$-GDP for each given  $\mu > 0$.
In particular, we propose \emph{rank-deficient James--Stein post-processed Gaussian mechanisms}, denoted $\Mv_{rJS}$, and show that $\Mv_{rJS}$ has a smaller loss of statistical utility than ordinary Gaussian mechanisms of (\ref{eq:GaussianMechanism}) in Section \ref{section: mv-mech}. The statistical utility of a mechanism $\Mv$ is measured via the mean squared error of $\Mv$ when compared to $\thetav$, i.e., $L_2(\Mv) = E \| \Mv(S) - \thetav(S)\|_2^2$. The rank-deficient James--Stein mechanism $\Mv_{rJS}$ improves upon Gaussian mechanisms in two ways.

First, $\Mv_{rJS}$ exploits the set $\Sc_{\thetav }:= \{\thetav(S) - \thetav(S'): S, S' \in \Xc^n, S \sim S'\} \subset \Real^p$ (the notation $S\sim S'$ means that data sets $S$ and $S'$ differ in exactly one record) rather than using only the maximum potential difference
$\Delta_2(\thetav; \Xc^n) = \max_{\vv \in \Sc_{\thetav }} \|\vv\|_2$
between data sets, which was used for calibration of (\ref{eq:GaussianMechanism}).
We show that if ${\rm span}(\Sc_{\thetav })$ has dimension less than $p$, then adding Gaussian noise with a carefully chosen rank-deficient covariance matrix not only satisfies $\mu$-GDP but also has larger statistical utility than (\ref{eq:GaussianMechanism}).
 The rank-deficient mechanisms are especially useful for releasing private contingency tables, since for a frequency table $\thetav$ (with no overlapping categories), ${\rm span}(\Sc_{\thetav })$ has dimension $p-1$, and is indeed a proper subspace of $\Real^p$.

Another improvement of a Gaussian mechanism (\ref{eq:GaussianMechanism}) is obtained by perceiving  (\ref{eq:GaussianMechanism}) for fixed $S$ as a multivariate mean $\thetav(S)$ estimation problem with one observation $\Mv(S)$. If the dimension $p$ is not small, the celebrated James--Stein shrinkage estimator provides smaller mean squared error than the ordinary least squares estimator (which is $\Mv(S)$ itself in our context)  \citep{james1961estimation, efron2012large,balle2018improving}.
The rank-deficient James--Stein mechanism $\Mv_{rJS}$ exploits both rank-deficiency of ${\rm span}(\Sc_\theta)$ and the James--Stein shrinkage phenomenon, and we show that $\Mv_{rJS}$ can be calibrated to satisfy $\mu$-GDP, and has the largest statistical utility among natural competitors of the same privacy level. 

In addition, we provide the optimal calibration for  Laplace mechanisms of the form $\Mv(S) = \thetav(S) + \xi$, $\xi = (\xi_1,\ldots,\xi_p)$, where each $\xi_i$ follows independently Laplace distribution with scale parameter $b$. In Section~\ref{sec:Lap_mech}, we develop several important technical lemmas that can be used to calibrate \emph{multivariate} Laplace mechanisms with respect to the $\mu$-GDP criterion, which turns out to be substantially different than the case of univariate statistics. 
Optimal calibration of Laplace mechanisms heavily depends on the sensitivity space $\Sc_{\thetav}$, and we provide optimal calibration results for the case where $\Delta_1(\thetav) =\max_{\vv \in \Sc_{\thetav} } \|\vv\|_1$ is the only information on $\thetav$, and also for the case that $\thetav$ is a contingency table. 

In Section \ref{sec:comparison}, we show that the statistical utility of Laplace mechanisms is {typically} lower than that of Gaussian mechanisms, when both mechanisms satisfy $\mu$-GDP. This makes sense because the GDP criterion uses the trade-off between two Gaussian distributions. However, perhaps surprisingly, Laplace mechanisms have higher utility than Gaussian mechanisms for low privacy regimes with larger $\mu$.

The proposed mechanisms and their calibration results are demonstrated by an application of releasing differentially private contingency (or frequency) tables, and we numerically compare the statistical utilities of various mechanisms in Section~\ref{sec:app_conting_tab}. We also provide private hypothesis testing procedures for testing goodness-of-fit and homogeneity among different populations, using the differentially private contingency tables we have developed. The test procedure assumes that the only available information on the data set is $\Mv(S)$, a randomly perturbed contingency table, and uses a parametric bootstrap procedure in the computation of p-values. 
In a numerical study, we reveal our proposed test procedures control the type I error rates at a given significance level, and the power increases as the sample size increases. Our test procedure based on the rank-deficient James--Stein mechanism $\Mv_{rJS}$ shows the highest empirical power among competitors. 

We note that there are other types of randomized mechanisms that can be used to satisfy a DP criterion. 
Prominent examples are the exponential mechanism \citep{mcsherry2007mechanism}, staircase mechanism \citep{kairouz2014extremal}, randomized responses \citep{dwork2006calibrating}, and $K$-norm mechanisms \citep{awan2021structure}. Compositions of these mechanisms can also satisfy a given privacy criterion \citep{dwork2014algorithmic,abadi2016deep,dong2021central,dong2022gaussianDP}. 
\cite{balle2018improving} proposed to use post-processing in the form of shrinkage, as we investigate in this work. While \cite{balle2018improving} demonstrated the advantage of the shrinkage empirically and also in a Bayesian perspective, we extend it to the case of shrinking towards the mean while simultaneously utilizing the structure of the sensitivity space $\Sv_{\thetav}$, and show its advantage more formally. 
Exploiting $\Sv_{\thetav}$ for maximizing statistical utilities of randomized mechanisms has been considered in   \cite{karwa2017sharing,avella2021privacy,awan2021structure} as well. The trade-off between differential privacy and statistical utility is studied in \cite{wasserman2010statistical, duchi2018minimax, cai2021cost}. Several authors including \cite{wasserman2010statistical,gaboardi2016DPchisq, wang2017revisiting, kifer2016new,son2022parametric} proposed differentially private procedures for contingency table analysis (or for releasing private histogram counts). 
All of these previous works, except \citep{dong2021central,dong2022gaussianDP}, are developed under the original $\epsilon$-DP criterion, or a  relaxed $(\epsilon,\delta)$-DP criterion. On the other hand, we use the GDP criterion, since  it enables a straightforward interpretation of privacy parameter $\mu$ as the effect size in a hypothesis testing framework. 

 In Section~\ref{subsec:1.1}, we provide a concise review of differential privacy, viewed in terms of statistical hypothesis testing. Multivariate Gaussian mechanisms and Laplace mechanisms are developed and studied in Sections~\ref{section: mv-mech} and \ref{sec:Lap_mech}, respectively, and their statistical utilities are compared in Section \ref{sec:comparison} with respect to the $L_r$-loss, $r \ge 1$. Applications to contingency table analysis are discussed in Section~\ref{sec:app_conting_tab}. Some important remarks about multivariate Laplace mechanisms are given in Section~\ref{sec:discussion}. Technical details and proofs of theoretical results are contained in the Appendix. 

\subsection{Differential privacy based on hypothesis testing} \label{subsec:1.1}

Let $\Xc$ be the sample space in which individual records of the data lie. For any two data sets $S = (x_1, \cdots, x_n) \in \Xc^n$ and $S' = (x_1', \cdots, x_n') \in \Xc^n$ of equal sample size, we say $ S $ and $ S' $ are neighbors if they differ by a single record, i.e., if there exists an $i \in \{1,\ldots,n\}$ such that $x_i \neq x_i'$ and $x_\iota = x_\iota'$ for all $\iota \neq i$.
A mechanism $M : \Xc^n \rightarrow \Yc $ takes as input a data set $ S $ and releases some
randomly perturbed data summary $ M(S)$ in some abstract space $\Yc$.
Having the data set $S$ fixed, $M(S)$ is a random variable taking values in $\Yc$. A mechanism $M$ is said to be differentially private at level $\varepsilon>0$, or $\varepsilon$-DP \citep{dwork2006calibrating}, if
for any neighboring data sets $S$, $S'$ and for any set $A$ of possible mechanism outputs
\begin{equation}\label{eq:epsilon-DP-definition}
   e^{-\varepsilon} \le  \frac{\P(M(S) \in A)}{\P(M(S') \in A)} \le e^\varepsilon,
\end{equation}
in which the probability is taken over the randomness of $M$, with $S$ and $S'$ being fixed.
For smaller values of the privacy parameter $\varepsilon$, it  is difficult to infer whether a specific private information is contained in the underlying data set $S$,  just based on a perturbed observation of $M(S)$. Note that the relaxed privacy criterion of $(\varepsilon, \delta)$-DP \citep{dwork2006our}, with an additional privacy parameter $\delta>0$, is satisfied for a mechanism $M$ if $
    \P(M(S) \in A) \le e^\varepsilon \P(M(S') \in A) + \delta$
    for all neighboring data sets $S$, $S'$ and for any $A$.

Differential privacy criteria can be equivalently described by the statistical hypothesis testing framework \citep{wasserman2010statistical, geng2015staircase,kairouz2015composition,dong2022gaussianDP}. For a fixed pair of neighboring data sets $S, S'$, denote the distribution of $M(S)$ (or $M(S')$) by $P$ (or $Q$, respectively). Consider testing
    $$ H_0: \mbox{data set is } S \quad vs. \quad H_1: \mbox{data set is } S',$$
or equivalently,
\begin{equation}\label{eq:test}
     H_0: M \sim P \quad vs. \quad H_1: M \sim Q,
\end{equation}
based on a single observation $M$, where $M = M(S) \sim P $ under $H_0$ and $M = M(S') \sim Q$ under $H_1$. %
The difference between $P$ and $Q$ may be described by the smallest type II error rate of level-$\alpha$ tests, for each $\alpha \in (0,1)$. This amounts to quantifying the trade-off between the level of a test and its power. (Recall that requiring more strict control of type I errors (i.e., smaller level) for a test results in smaller power of the test.) Let $\phi: \Yc \to [0, 1]$ be the decision function of a (randomized) test procedure for hypotheses \eqref{eq:test}, and write $\alpha_\phi
    = \E_{M \sim P}[\phi(M)]$ for the type I error rate, and $\beta_\phi
    = 1 - \E_{M \sim Q}[\phi(M)]$ for the type II error rate, of $\phi$.
Then the trade-off function $T(P,Q): [0,1] \to [0,1]$ between distributions $P$ and $Q$ (or between hypotheses \eqref{eq:test}) is defined by
$$
T(P,Q)(\alpha) = \inf_\phi\{\beta_\phi : \alpha_\phi \le \alpha\}.
$$
Note that for any $\alpha \in (0,1)$, %
$T(P,Q)(\alpha)$ is the type II error rate of the most powerful test of level $\alpha$.

For two distribution pairs $(P,Q)$ and $(F,G)$, if $T(P,Q) \ge T(F,G)$
(i.e., $T(P,Q)(\alpha) \ge T(F,G)(\alpha)$ for all $\alpha \in(0,1)$),
then $P$ and $Q$ are harder to distinguish than $F$ and $G$ are. That is, the indistinguishability between $M(S)$ and $M(S')$ represented by the corresponding trade-off function $T(M(S), M(S')) := T(P,Q)$ can be bounded by other trade-off functions. Based on this observation, \cite{dong2022gaussianDP} proposed a general criterion of differential privacy based on trade-off functions.
A function $f$ is called a trade-off function if there exist distributions $F,G$ such that $f = T(F,G)$.
Generally, any $f:[0,1] \to [0,1]$ is a trade-off function if it is convex, continuous, non-increasing, $f(x) \le 1 - x$ and is symmetric about the line $x=y$.
A mechanism $M$ is said to be $f$-differentially private (or $f$-DP) for a trade-off function $f$ if
\begin{equation}\label{eq:none}
  \inf_{S \sim S'} T(M(S), M(S')) \ge f,
\end{equation}
where the infimum is taken over the family of all neighboring pairs $S$ and $S'$ in $\mathcal X^n$. We write $S\sim S'$ if $S$ and $S'$ are neighbors.

The $(\varepsilon, \delta)$-DP criterion can be represented by the $f$-DP framework.
In particular, for any $\varepsilon >0$, $\delta \ge 0$, a mechanism $M$ is $(\varepsilon, \delta)$-DP if and only if $M$ is $f_{\varepsilon, \delta}$-DP, where
\begin{equation}\label{eq:DPtof-DP}
 f_{\varepsilon, \delta}(\alpha)
= \max\{
0,
1-\delta-e^{\varepsilon}\alpha,
e^{-\varepsilon}(1-\delta-\alpha)
\},
\end{equation}
as shown in \cite{wasserman2010statistical, kairouz2015composition,dong2022gaussianDP}.

A simple one-parameter family of trade-off functions is obtained by univariate Gaussian distributions differing only by the mean parameter. For $\mu \ge 0$, let $G_\mu$ be the trade-off function between $N(0,1)$ and $N(\mu,1)$. Using the Neyman--Pearson lemma, one can check that for $\alpha \in (0,1)$
\begin{equation}\label{eq:Gmu}
    G_\mu(\alpha) = T(N(0,1), N(\mu,1))(\alpha) = \Phi ( \Phi^{-1} ( 1-\alpha) - \mu),
\end{equation}
where $\Phi$ is the distribution function of the standard normal distribution \citep{dong2022gaussianDP}.

\begin{defn}[\cite{dong2022gaussianDP}] \label{def:GDP}
For a $\mu \ge 0$,  A mechanism $M$ is $\mu$-Gaussian differentially private (or $\mu$-GDP) if
$$\inf_{S \sim S'} T(M(S), M(S')) \ge G_\mu.$$
\end{defn}

If $M$ is $\mu$-GDP, distinguishing any two neighboring data sets based solely on an observed value of $M$ is at least as hard as distinguishing $N(0,1)$ and $N(\mu,1)$. Since
$0 \le \mu < \mu'$ if and only if $G_\mu > G_{\mu'}$,
the smaller the privacy parameter $\mu$ is, the greater privacy protection is guaranteed for a mechanism satisfying $\mu$-GDP.

Note that an ordinary (multivariate) Gaussian mechanism in the form of
\begin{equation}\label{M_2}
    \Mv_G(S) \coloneqq \thetav(S) + \xiv, \quad \xiv \sim N_p(\0v, \sigma^2 \Id_p),
\end{equation}
(where $\thetav(S) \in \Real^p$) is $\mu$-GDP for any $\mu \ge \mu_0 := \Delta_2 / \sigma$, where $\Delta_2 = \sup_{S \sim S'} \| \thetav(S)  - \thetav(S')\|_2$ is called the global $L_2$-sensitivity of the multivariate statistic $\thetav(S)$. (See Theorem~\ref{thm:G-mech} in Appendix~\ref{sec:appendix_lemmas_trade-off}.)
That is, the level of privacy protection of a Gaussian mechanism $\Mv_G$ in \eqref{M_2} is precisely described by the privacy parameter $\mu_0$ under the GDP criterion. On the other hand, the Gaussian mechanism (\ref{M_2}) does not satisfy $\varepsilon$-DP for any $\epsilon>0$, and there is no single set of privacy parameters $(\varepsilon, \delta)$ representing the degrees of privacy protection of a Gaussian mechanism under the $(\varepsilon, \delta)$-DP criterion. To be precise, the Gaussian mechanism (\ref{M_2})
satisfies $(\epsilon, \delta)$-DP for any combination of privacy parameters $(\varepsilon, \delta) \in (0,\infty)^2$ satisfying
$\Phi(\tfrac{\Delta_2}{2\sigma} - \tfrac{\varepsilon \sigma}{\Delta_2}) - e^\varepsilon \Phi(-\tfrac{\Delta_2}{2\sigma} - \tfrac{\varepsilon \sigma}{\Delta_2}) \le \delta$, where $\Phi$ is the standard normal distribution function \citep[Theorem 8,][]{balle2018improving}. Therefore, when statistics to be released are perturbed by Gaussian noise, using GDP criterion in gauging the privacy protection is more convenient (and is easier to interpret) than using $(\varepsilon, \delta)$-DP criterion.
For the rest of the article, we use the GDP criterion to control the level of privacy protection in randomized mechanisms.

Verifying whether a multivariate mechanism $\Mv$ satisfies $\mu$-GDP can be done via directly comparing trade-off functions $T(\Mv(S), \Mv(S'))$ with $G_\mu$. In Appendix~\ref{sec:appendix_lemmas_trade-off}, we provide several tools that can be used to work with trade-off functions between multivariate distributions.

As seen in (\ref{M_2}), the amount of noise in a randomized mechanism can be calibrated using the global $L_2$-sensitivity of the multivariate statistic $\thetav$.
The global $L_r$-sensitivity, or simply $L_r$-sensitivity, for $r > 0$, is
\begin{equation}\label{eq:Lr-sensitivity}
   \Delta_r = \Delta_r(\thetav; \Xc^n) = \sup_{S\sim S'} \norm{ \thetav(S) - \thetav(S')}_r,
\end{equation}
and  measures the maximum amount of changes in $\thetav$, in the $\ell_r$-norm sense, when one record in $S$ is replaced by any potential record in $\Xc$, for any $S$.
As an instance, if $\Xc = [0,1]^p$, the vector average $\thetav(S) = \bar{\xv}_n = \tfrac{1}{n}\sum_{i=1}^n \xv_i$, where $S = \{\xv_1,\ldots,\xv_n\} \subset \Xc^n$, has the $L_r$-sensitivity $\Delta_r = \tfrac{1}{n}\| \1v_p\|_r = (p^{1/r}) / n$.

\section{Improving Gaussian multivariate mechanisms} \label{section: mv-mech}

In this section, we propose rank-deficient James--Stein Gaussian mechanism $\Mv_{rJS}$ and investigate under which situations $\Mv_{rJS}$ leads to smaller loss of statistical utility than the ordinary Gaussian mechanism (\ref{eq:GaussianMechanism}).

The statistical utility of a randomized mechanism $\Mv$, obtained by a perturbation of $\thetav$, is measured
via the mean squared error of $\Mv$ compared to $\thetav$. For a fixed data set $S$, the $L_2$-cost of $\Mv(S)$ is defined to be
\begin{equation}\label{eq:L2cost_def}
    L_2(\Mv(S)) \coloneqq \E_{\Mv | S} \norm{ \Mv(S) - \thetav(S) }_2^2.
\end{equation}
Borrowing terminology from the theory of estimation, for mechanisms $\Mv$ and $\Mv'$ (both perturbing $\thetav$, we say $\Mv$ \textit{dominates} (in the $\ell_2$ sense) $\Mv'$ if for any $S \in \Xc^n$
\begin{align}\label{eq:dominance}
    L_2(\Mv(S)) \le L_2(\Mv'(S))
\end{align}
(with strict inequality for some $S$),  and $\Mv$ \textit{strictly dominates} $\Mv'$ if, for any $S \in \Xc^n$,     $L_2(\Mv(S)) < L_2(\Mv'(S))$.  Therefore,  for any pre-specified privacy level $\mu >0$, if both $\Mv$ and $\Mv'$ are $\mu$-GDP, and $\Mv$ dominates $\Mv'$, then $\Mv$ is said to have higher statistical utility than $\Mv'$.
Note that one may alternatively view the data set $S$ as a random sample, and use the unconditioned $L_2$-cost of $\Mv$, $L_2(\Mv) := \E \norm{ \Mv(S) - \thetav(S) }_2^2 = \E_S(  L_2(\Mv(S)) )$. If $\Mv$ dominates $\Mv'$, then $L_2(\Mv) \le L_2(\Mv')$, while the converse is generally not true. (If the distribution of $S$ leads that the strict inequality in (\ref{eq:dominance}) holds with a positive probability, then $L_2(\Mv) < L_2(\Mv')$.) Therefore, it is sufficient to compare the conditional $L_2$-costs (\ref{eq:L2cost_def}) of mechanisms, for all possible data sets $S \in \Xc^n$.

\subsection{Rank-deficient Gaussian mechanism}\label{sec:rank-defi_Gauss}

When the \emph{only} information we have on a multivariate statistic $\thetav$ is the $L_2$-sensitivity, the ordinary Gaussian mechanism
\begin{equation}\label{eq:gaussianMech_again}
 \Mv_G(S) = \thetav(S) + N_p(\0v, \frac{\Delta_2^2}{\mu^2}\Iv_p),
\end{equation}
calibrated to satisfy $\mu$-GDP (for any $\thetav$ with $\Delta_2(\thetav; \Xc^n) = \Delta_2$), provides the largest statistical utility among a class of $\mu$-GDP additive Gaussian mechanisms, as explained below.

Let $\Mc_{\mu}^{\Delta_2}$, for $\mu>0$ and $\Delta_2 > 0$, be the collection of all additive Gaussian mechanisms satisfying $\mu$-GDP for any statistic $\thetav$ of $\Delta_2(\thetav; \mathcal X^n) \le \Delta_2$, that are of the form $\Mv_\Sigmav(S) := \thetav(S) + N_p(\0v, \Sigmav)$, where $\Sigmav$ is a  $p\times p$ symmetric non-negative definite matrix.
Note that the statistical utility, or the $L_2$-cost, of $\Mv_\Sigmav$  is $\E \| \Mv_{\Sigmav}(S) - \thetav(S)\|^2 =  {\rm trace}(\Sigmav)$. Thus, the smaller trace$(\Sigmav)$, the larger statistical utility. However, in order to ensure that $\Mv_{\Sigmav}(S)$ is $\mu$-GDP, $\Sigmav$ can not be too small, as the following lemma states.

\begin{lem}\label{lem:iso_Gauss} Let $\mu > 0$, $\Delta_2 >0$ and $\Sigmav$ be a $p\times p$ symmetric non-negative definite matrix. The additive Gaussian mechanism $\Mv_\Sigmav$ is $\mu$-GDP for any $\thetav$ with $\Delta_2(\thetav; \Xc^n) \le \Delta_2$ if and only if $\lambda_{\min}(\Sigmav) \ge \Delta_2^2/\mu^2$, where $\lambda_{\min}(\Sigmav)$ is the smallest eigenvalue of $\Sigmav$.
\end{lem}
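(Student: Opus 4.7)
The plan is to reduce the $\mu$-GDP condition to an inequality among quadratic forms in the mean gap, and then locate the worst-case direction via Rayleigh--Ritz. The key observation is that the trade-off function between two Gaussians with a common covariance is entirely captured by the Mahalanobis length of their mean difference.

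First I would fix neighboring data sets $S \sim S'$, set $\vv = \thetav(S') - \thetav(S)$, and use translation invariance of trade-off functions to write $T(\Mv_\Sigmav(S), \Mv_\Sigmav(S')) = T(N_p(\0v, \Sigmav), N_p(\vv, \Sigmav))$. Assuming for the moment that $\Sigmav$ is positive definite, the Gaussian density ratio depends on the data only through the scalar $\vv^\top \Sigmav^{-1} \Xv$; after whitening and rotating coordinates the problem collapses to testing $N(0,1)$ versus $N(c(\vv),1)$ with $c(\vv) = \sqrt{\vv^\top \Sigmav^{-1} \vv}$, so the Neyman--Pearson reasoning behind \eqref{eq:Gmu} yields $T(N_p(\0v,\Sigmav), N_p(\vv,\Sigmav)) = G_{c(\vv)}$. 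Using that $G_c \ge G_\mu$ on $(0,1)$ iff $c \le \mu$ (the monotonicity fact already recorded after Definition~\ref{def:GDP}), the $\mu$-GDP requirement becomes $c(\vv) \le \mu$ for every $\vv$ arising as $\thetav(S') - \thetav(S)$ over the admissible class of $\thetav$.

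Next I would note that, as $\thetav$ varies over statistics with $\Delta_2(\thetav;\Xc^n) \le \Delta_2$, the collection of realizable $\vv$'s exhausts the closed Euclidean ball of radius $\Delta_2$ in $\Real^p$: for any target $\vv$ with $\|\vv\|_2 \le \Delta_2$ and any $a \ne b$ in $\Xc$, the statistic $\thetav(x_1,\ldots,x_n) = \Ind(x_n = a)\,\vv$ has $L_2$-sensitivity equal to $\|\vv\|_2$ and realizes $\vv$ as a sensitivity vector. Hence $\mu$-GDP is equivalent to
\begin{equation*}
    \sup_{\|\vv\|_2 \le \Delta_2} \vv^\top \Sigmav^{-1} \vv \;\le\; \mu^2,
\end{equation*}
and by Rayleigh--Ritz the left-hand side equals $\Delta_2^2\,\lambda_{\max}(\Sigmav^{-1}) = \Delta_2^2/\lambda_{\min}(\Sigmav)$, giving the stated equivalence $\lambda_{\min}(\Sigmav) \ge \Delta_2^2/\mu^2$.

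The remaining case is singular $\Sigmav$, needed to close the ``only if'' direction without writing $\Sigmav^{-1}$. If $\lambda_{\min}(\Sigmav) = 0$, pick a unit vector $\uv \in \ker(\Sigmav)$ and take $\thetav$ with sensitivity vector $\vv = \Delta_2 \uv$ as in the construction above; because $\vv \notin \ran(\Sigmav)$, the distributions $N_p(\0v,\Sigmav)$ and $N_p(\vv,\Sigmav)$ sit on disjoint parallel affine subspaces, are mutually singular, and their trade-off function is identically zero on $(0,1)$, so $\mu$-GDP fails for every $\mu>0$. The main obstacle I anticipate is a clean justification of the trade-off-function formula for multivariate Gaussians with a common covariance; once that density-ratio computation is in hand, the rest is Rayleigh--Ritz plus a small realizability construction for sensitivity directions.
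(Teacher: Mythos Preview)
Your proposal is correct and follows essentially the same approach as the paper: both reduce the multivariate Gaussian trade-off to $G_{\|\Sigmav^{-1/2}\vv\|_2}$ (the paper's Lemma~\ref{lem:mv-Tf-normal}), extremize over sensitivity directions via the minimum eigenvalue, and handle the singular case via mutual singularity (the paper's Lemma~\ref{lem:rank-deficient}(b)). Your explicit construction $\thetav(x_1,\ldots,x_n) = \Ind(x_n = a)\,\vv$ is a bit more concrete than the paper's ``let $\thetav$ be such that $\Delta_2\bv \in \Sc_{\thetav}$,'' but the ideas coincide.
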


A consequence of Lemma~\ref{lem:iso_Gauss} is that each mechanism in the class $\Mc_\mu^{\Delta_2}$ corresponds to a covariance matrix $\Sigmav$ satisfying $\lambda_{\min}(\Sigmav) \ge \Delta_2^2/\mu^2$. Particularly, {$\Mv_G \in \Mc_\mu^{\Delta_2}$. Moreover, for any mechanism $\Mv_\Sigmav \in \Mc_\mu^{\Delta_2}$, we have
$$L_2(\Mv_\Sigmav) = \tr(\Sigmav) \ge p \lambda_{\min}(\Sigmav)\ge p\Delta_2^2/\mu^2 = L_2( \Mv_G(S)) $$
for any data set $S$, and both equalities hold if and only if $\Sigmav = \Delta_2^2/\mu^2 \Iv_p$.
Thus, the ordinary Gaussian mechanism $\Mv_G$
strictly dominates any other mechanism in $\Mc_\mu^{\Delta_2}$.

We note that when considering only the class of Gaussian mechanisms with isotropic noises, a stronger statement can be made:

\begin{lem}\label{lem:iso_Gauss_stronger}
Let $\mu >0$, $\Delta_2>0$ and $\Sigmav = \sigma^2 \Iv_p$. Let $\thetav: \Xc^n \to \Real^p$ be a multivariate statistic with $\Delta_2(\thetav; \Xc^n) = \Delta_2$. Then, $\Mv_\Sigmav$ is $\mu$-GDP if and only if $\sigma^2  \ge \Delta_2^2/\mu^2$.
\end{lem}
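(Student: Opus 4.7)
The plan is to split the equivalence, dispatch the ``if'' direction by invoking Lemma~\ref{lem:iso_Gauss}, and reduce the ``only if'' direction to the trade-off between two spherical Gaussians whose means differ by the full sensitivity $\Delta_2$. The ``if'' direction is immediate: if $\sigma^2 \ge \Delta_2^2/\mu^2$, then $\lambda_{\min}(\sigma^2\Iv_p) = \sigma^2 \ge \Delta_2^2/\mu^2$, so $\Mv_\Sigmav$ is $\mu$-GDP for every statistic with $L_2$-sensitivity at most $\Delta_2$, and hence for our fixed $\thetav$.

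For the ``only if'' direction, the key observation is that for any neighboring pair $S \sim S'$, the outputs $\Mv_\Sigmav(S) \sim N_p(\thetav(S),\sigma^2\Iv_p)$ and $\Mv_\Sigmav(S') \sim N_p(\thetav(S'),\sigma^2\Iv_p)$ are spherical Gaussians differing only in their means. By the Neyman--Pearson lemma (a fact I would invoke from the trade-off tools of Appendix~\ref{sec:appendix_lemmas_trade-off}), the most powerful test rejects along the direction $\thetav(S)-\thetav(S')$ and yields
\[
T\bigl(\Mv_\Sigmav(S),\, \Mv_\Sigmav(S')\bigr) \;=\; G_{\|\thetav(S)-\thetav(S')\|_2/\sigma}.
\]
Since $\nu \mapsto G_\nu(\alpha)$ is strictly decreasing in $\nu$ for each fixed $\alpha \in (0,1)$, as seen from the closed form~\eqref{eq:Gmu}, the infimum over neighbors amounts to pushing the supremum into the subscript. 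If the supremum defining $\Delta_2$ is attained by some neighboring pair, then $\inf_{S\sim S'} T(\Mv_\Sigmav(S),\Mv_\Sigmav(S')) = G_{\Delta_2/\sigma}$ directly; otherwise I would take a sequence $S_n \sim S_n'$ with $\|\thetav(S_n)-\thetav(S_n')\|_2 \uparrow \Delta_2$, and pointwise continuity of $\nu\mapsto G_\nu(\alpha)$ yields the same identity as a limit. The $\mu$-GDP requirement $\inf T \ge G_\mu$ then reads $G_{\Delta_2/\sigma} \ge G_\mu$ pointwise, and monotonicity forces $\Delta_2/\sigma \le \mu$, i.e., $\sigma^2 \ge \Delta_2^2/\mu^2$.

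There is no deep obstacle: the argument is essentially a Neyman--Pearson computation for testing spherical Gaussian means, combined with the explicit monotonicity and continuity of the family $\{G_\nu\}_{\nu\ge 0}$. The only delicate bookkeeping concerns possible non-attainment of the supremum defining $\Delta_2$, which is absorbed cleanly by the pointwise continuity of $\nu \mapsto G_\nu(\alpha)$ evident from \eqref{eq:Gmu}.
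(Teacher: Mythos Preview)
Your proposal is correct and follows essentially the same approach as the paper: both directions invoke Lemma~\ref{lem:iso_Gauss} for sufficiency and reduce necessity to the identity $\inf_{S\sim S'} T(\Mv_\Sigmav(S),\Mv_\Sigmav(S')) = G_{\Delta_2/\sigma}$ via the spherical-Gaussian trade-off computation and the monotonicity of $\nu \mapsto G_\nu$. Your explicit handling of the case where the supremum defining $\Delta_2$ is not attained (via pointwise continuity of $G_\nu$) is a small refinement the paper leaves implicit.
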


Nevertheless, if more information about the multivariate statistic $\thetav$ is known to us, one can improve upon the ordinary Gaussian mechanism by exploiting the additional information. We assume that, for a statistic $\thetav$, the set $\mathcal{S}_{\thetav} = \{\thetav(S) - \thetav(S'): S, S' \in \Xc^n, S\sim S'\}$, which may be called the \emph{sensitivity space} of $\thetav$ \citep{awan2021structure}, is known to us.

For a multivariate statistic $\thetav$ taking values in $\Real^p$, we say $\mathcal{S}_{\thetav}$ is \textit{rank-deficient} if $d_{\thetav}:={\rm dim}({\rm span} (\mathcal{S}_{\thetav})) < p$.
If $\mathcal{S}_{\thetav}$ is rank-deficient, then for any $S$, when  one record of $S$ is replaced by any element in $\Xc$, the value of the statistic $\Pi_{\mathcal{S}_{\thetav}}^\perp \thetav(S)$, where $\Pi_{\mathcal{S}_{\thetav}}^\perp$ is the orthogonal projection operator onto the orthogonal complement of ${\rm span} (\mathcal{S}_{\thetav})$, does not vary.
Thus, in view of privacy protection, there is no need to add a perturbation along
the orthogonal complement of
${\rm span} (\mathcal{S}_{\thetav})$,
as $\thetav(S)$ and $\thetav(S')$ (for $S\sim S'$) are already indistinguishable along the subspace. This observation motivates us to consider a rank-deficient Gaussian mechanism, defined as follows.

Suppose $\thetav$ is given, and $\mathcal{S}_{\thetav}$ is {rank-deficient}. Let $\Pv_{\thetav}$ be the $p \times p$ matrix of orthogonal projection whose range is ${\rm span} (\mathcal{S}_{\thetav})$. For any orthogonal basis $\uv_1,\ldots, \uv_{d_{\thetav}}$ for the subspace ${\rm span}(\mathcal{S}_{\thetav})$, $\Pv_{\thetav} = \sum_{i=1}^{d_{\thetav}} \uv_i\uv_i^\top$.
A multivariate additive Gaussian mechanism $\Mv_r$ given by
\begin{equation}\label{M_3}
    \Mv_r(S) \coloneqq \thetav(S) +  N_p\left( \0v, \sigma^2 \Pv_{\thetav}\right),
\end{equation}
for some $\sigma >0$, is called a rank-deficient Gaussian mechanism.

\begin{thm} \label{thm:rank-deficient-G-mech}
Let $\mu >0$. For a multivariate statistic $\thetav$ with rank-deficient $\Sc_{\thetav}$, the rank-deficient Gaussian mechanism $\Mv_r$ is $\mu$-GDP if and only if $\sigma \ge \Delta_2(\thetav; \Xc^n) / \mu$.
\end{thm}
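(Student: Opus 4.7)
The plan is to reduce the comparison of the two (possibly degenerate) Gaussian distributions $\Mv_r(S)$ and $\Mv_r(S')$ to a non-degenerate Gaussian testing problem on the subspace $V := \mathrm{span}(\Sc_{\thetav})$, and then invoke the one-dimensional Neyman--Pearson computation that already underlies Theorem~\ref{thm:G-mech}. The key structural observation is that for any neighbors $S\sim S'$, the difference $\vv := \thetav(S)-\thetav(S')$ lies in $\Sc_{\thetav}\subseteq V$ by definition, so $\Pv_{\thetav}\vv = \vv$ and the two output distributions are translates of each other within the same $d_{\thetav}$-dimensional affine flat $\thetav(S) + V = \thetav(S') + V$.

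First, I would fix an orthonormal basis $\uv_1,\ldots,\uv_{d_{\thetav}}$ of $V$, collect it into $\Uv = [\uv_1\ \cdots\ \uv_{d_{\thetav}}]\in\Real^{p\times d_{\thetav}}$, and use $\Pv_{\thetav} = \Uv\Uv^\top$ to realise the noise as $\xiv = \Uv\etav$ with $\etav\sim N_{d_{\thetav}}(\0v,\sigma^2\Iv_{d_{\thetav}})$. Both $\Mv_r(S)$ and $\Mv_r(S')$ are almost surely supported on $\thetav(S)+V$, so the measurable bijection $\yv\mapsto \Uv^\top(\yv-\thetav(S))$ maps this affine flat onto $\Real^{d_{\thetav}}$ and transforms the testing problem $H_0:\Mv_r(S)$ vs.\ $H_1:\Mv_r(S')$ into the equivalent problem $H_0:N_{d_{\thetav}}(\0v,\sigma^2\Iv)$ vs.\ $H_1:N_{d_{\thetav}}(\Uv^\top\vv,\sigma^2\Iv)$. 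Trade-off functions are invariant under measurable bijections of the sample space, so this transformation preserves $T(\Mv_r(S),\Mv_r(S'))$.

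Next, by the standard Neyman--Pearson calculation for two Gaussians with common isotropic covariance (the $d_{\thetav}$-dimensional analogue of \eqref{eq:Gmu}, which is exactly the content invoked in Theorem~\ref{thm:G-mech}), the trade-off function between $N_{d_{\thetav}}(\0v,\sigma^2\Iv)$ and $N_{d_{\thetav}}(\muv,\sigma^2\Iv)$ equals $G_{\|\muv\|_2/\sigma}$. Since $\vv\in V$ implies $\|\Uv^\top\vv\|_2 = \|\vv\|_2$, we obtain
\[
T(\Mv_r(S),\Mv_r(S')) \;=\; G_{\|\vv\|_2/\sigma}.
\]
Finally, using that $\mu\mapsto G_\mu$ is pointwise decreasing, $\inf_{S\sim S'}T(\Mv_r(S),\Mv_r(S')) = G_{\Delta_2/\sigma}$, where $\Delta_2 = \Delta_2(\thetav;\Xc^n) = \sup_{\vv\in\Sc_{\thetav}}\|\vv\|_2$. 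By Definition~\ref{def:GDP}, the mechanism $\Mv_r$ is $\mu$-GDP iff $G_{\Delta_2/\sigma}\ge G_\mu$ iff $\Delta_2/\sigma\le \mu$ iff $\sigma\ge \Delta_2/\mu$, which is the claim.

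The main obstacle I anticipate is the careful handling of the degenerate covariance: one must justify that restricting to the common affine support $\thetav(S)+V$ is legitimate (the two laws are not mutually singular despite being singular in $\Real^p$), and that the trade-off function is genuinely preserved under the projection $\Uv^\top$. Once this reduction is in place the calculation is routine, and monotonicity of $\{G_\mu\}_{\mu\ge 0}$ converts a supremum over the sensitivity space into the familiar $L_2$-sensitivity $\Delta_2$.
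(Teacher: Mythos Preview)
Your proposal is correct and follows essentially the same route as the paper. The paper packages the reduction step you describe (restricting to the common affine support and projecting via an orthonormal basis of $\mathrm{span}(\Sc_{\thetav})$) into a separate technical lemma (Lemma~\ref{lem:rank-deficient}(a)), which computes $T(N_p(\muv_1,\Sigmav),N_p(\muv_2,\Sigmav)) = G_{\|(\Sigmav^{\dagger})^{1/2}(\muv_2-\muv_1)\|_2}$ for rank-deficient $\Sigmav$ when $\muv_2-\muv_1\in\Cc(\Sigmav)$; applying this with $\Sigmav=\sigma^2\Pv_{\thetav}$ and noting $(\sigma^2\Pv_{\thetav})^\dagger = \sigma^{-2}\Pv_{\thetav}$ gives exactly your $G_{\|\vv\|_2/\sigma}$, after which the infimum and monotonicity step are identical to yours.
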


In the following, $\Mv_r$ with $\sigma = \Delta_2(\thetav; \Xc^n) / \mu$ is referred to as the $\mu$-GDP rank-deficient Gaussian mechanism.
It is easily checked that for any $\mu>0$, $L_2(\Mv_r(S)) = {d_{\thetav}\Delta_2^2(\thetav)}/{\mu^2}$ (for any $S$) and $\Mv_r$  strictly dominates $\Mv_{G}$, while both mechanisms satisfy $\mu$-GDP.

The advantage of $\Mv_r$ over $\Mv_G$ is realized when $\Sc_{\thetav}$ is rank-deficient. In the next example we exhibit that when the statistic $\thetav$ is a frequency table (or the vector of histogram counts, or a contingency table), $\Sc_{\thetav}$ is indeed rank-deficient.

\begin{example}\label{ex:contingency-table}
Suppose that the sample space $\Xc$ is partitioned into $p$ parts (or categories), denoted by $C_1,\ldots,C_p$. For $S = (x_1,\ldots,x_n) \in \Xc^n$, let $\thetav(S)$ be the frequency table $\thetav(S) = (\theta_1(S),\ldots, \theta_p(S))^\top$, where $\theta_i(S)= \#\{j : x_j \in C_i\}$  is the count of records in $S$ belonging to the category $C_i$. Since the categories do not overlap, if one record in $S$ is replaced by any element in $\Xc$, resulting in a neighboring data set $S'$, then
$    \thetav(S) - \thetav(S') = \ev_i - \ev_j$,
for some $i,j \in \{1,\ldots,p\}$, where $\{\ev_1,\ldots, \ev_p\}$ is the standard basis of $\Real^p$. That is, for any $S\sim S'$, $\thetav(S')$ differs from $\thetav(S)$ at most two coordinates, and the sensitivity space of $\thetav$ is $\Sc_{\thetav} = \{\ev_i - \ev_j : i,j = 1,\ldots, p \}$. Thus, for any $p$ and for any partition, $\Delta_r(\thetav) = 2^{1/r}$ for any $r>0$; in particular,
$$\Delta_2(\thetav; \Xc^n) = \sqrt{2}, \quad \Delta_1(\thetav) = 2.$$
Moreover, $\Sc_{\thetav}$ is rank-deficient since ${\rm span}(\Sc_{\thetav})$ is orthogonal to $\1v_p = (1,\ldots,1)^\top \in \Real^p$. Therefore, the rank-deficient Gaussian mechanism for frequency table $\thetav$, $$\Mv_r(S) = \thetav(S) + N_p(\0v, \sigma^2 (\Id_p - \frac{1}{p}\1v_p\1v_p^\top) ),$$
where $\sigma  = \Delta_2(\thetav; \Xc^n) / \mu = \sqrt{2}/ \mu$, is $\mu$-GDP (for any $\mu$) and strictly dominates the ordinary Gaussian mechanism.
\end{example}

\subsection{James--Stein shrinkage mechanisms}\label{sec:2.2}
The mechanism output $\Mv_G(S)$ of an ordinary Gaussian mechanism can be viewed as the maximum likelihood estimator of $\thetav(S) \in \Real^p$ (for each given $S$) from the model $(\Mv_G(S))_i \sim_{\rm ind} N(\theta_i(S), \sigma^2)$ $(i=1,\ldots,n)$.
It is well-known that, for $p \ge 3$, the celebrated James--Stein shrinkage estimator \citep{james1961estimation} dominates $\Mv_G(S)$ under the squared error loss in estimation of $\thetav(S)$. In this section, we propose two types of mechanisms derived by applications of James--Stein shrinkage, confirm that these new mechanisms can be calibrated to satisfy $\mu$-GDP (for any $\mu>0$), and investigate under which situations the James--Stein mechanisms have high statistical utilities.

The James--Stein shrinkage Gaussian mechanisms are defined by shrinking the output of ordinary Gaussian mechanism towards a common value. For notational convenience, for a given $\sigma > 0$, write the James--Stein shrinkage operator (shrinking all elements of a vector toward zero) by $m_0(\cdot; \sigma): \Real^p \to \Real^p$, which is defined by
$$
m_0(\xv; \sigma) = \left(1 - \frac{(p-2)\sigma^2}{\left\| \xv \right\|_2^2}\right) \xv.$$
Similarly, write the operator shrinking to the average $\bar{x} = p^{-1}\1v_p^\top \xv$ by $m_a(\cdot; \sigma): \Real^p \to \Real^p$,
\begin{equation}\label{eq:shrink_to_ave}
  m_a(\xv; \sigma) =\bar{x}\1v_p + \left( 1 - \frac{(p-3)\sigma^2}{\|\xv_c\|_2^2} \right)\xv_c,
\end{equation}
where $\xv_c = \xv - \bar{x} \1v_p$.

\begin{defn}\label{def:JS}\label{eq:M_4}\label{M_5}
Let $\thetav: \Xc^n \to \Real^p$ be a multivariate statistic, and $\sigma > 0$ be given.
\begin{itemize}
  \item[(i)] The James--Stein Gaussian mechanism shrinking to zero is $\Mv_{JS0}: \Xc^n \to \Real^p$, defined by  $\Mv_{JS0}(S) = m_0( \Mv_G(S) ; \sigma)$,
where $\Mv_G(S) = \thetav(S) + N_p(\0v, \sigma^2 \Iv_p)$.
  \item[(ii)]
 The James--Stein Gaussian mechanism shrinking to average   is $\Mv_{JS}: \Xc^n \to \Real^p$, defined by $\Mv_{JS}(S) = m_a( \Mv_G(S) ; \sigma)$.
\end{itemize}
\end{defn}

Both $\Mv_{JS0}(S)$ and $\Mv_{JS}(S)$ depend on the data set $S$ only through the output $\Mv_G(S)$ of the ordinary Gaussian mechanism. In the data privacy literature, such mechanism is said to be obtained via \emph{post-processing} $\Mv_G(S)$. If a randomized mechanism $\Mv$ satisfies a privacy criterion (e.g., it is $\mu$-GDP for some $\mu>0$), then any information obtained via a post-processing of (the output of) $\Mv$ retains at least the same level of privacy protection guaranteed for $\Mv$ \citep{dwork2014algorithmic,dong2022gaussianDP}; see Lemma~\ref{lem:post-processing} in Appendix~\ref{sec:appendix_lemmas_trade-off}. This leads that the James--Stein mechanisms are $\mu$-GDP for any $\mu \ge \Delta_2(\thetav; \Xc^n)/\sigma$. Equivalently, for any $\mu>0$, both $\Mv_{JS0}(S)$ and $\Mv_{JS}(S)$ with $\sigma \ge \Delta_2(\thetav; \Xc^n)/\mu$   satisfy $\mu$-GDP. We say that a James--Stein mechanism is calibrated to satisfy $\mu$-GDP if $\sigma = \Delta_2(\thetav; \Xc^n)/\mu$.

We remark that post-processed Gaussian mechanisms, including $\Mv_{JS0}$, have been considered in the literature; see, e.g., \cite{balle2018improving}. Nevertheless, to the best of our knowledge, our work is the first attempt of calibrating the amount of perturbation with respect to the GDP criterion, and of investigating $\Mv_{JS}$, the James--Stein Gaussian mechanism shrinking to average.

The conditional $L_2$-costs of $\Mv_{JS0}$ and $\Mv_{JS}$ are given as follows.

\begin{prop}\label{prop:M45_costs} For a multivariate statistic $\thetav: \Xc^n \to \Real^p$ with $\Delta_2(\thetav; \Xc^n) = \Delta_2 > 0$, let $\Mv_{JS0}$ and $\Mv_{JS}$ be the James--Stein Gaussian mechanisms with $\sigma>0$, as defined in Definition~\ref{def:JS}. For any data set $S \in \Xc^n$,
\begin{equation}\label{eq:M4cost}
        L_2(\Mv_{JS0}(S)) = \sigma^2  p - \sigma^2 (p-2)^2 \left\{ \frac{1}{2} e^{-\frac{1}{2}\tau(S)} \sum_{\beta = 0}^{\infty} \left( \frac{\tau(S) }{2} \right)^\beta \frac{1}{\beta! (\frac{1}{2}p + \beta - 1)} \right\},
\end{equation}
where $ \tau(S)  = {\|\thetav(S)\|_2^2}/{\sigma^2}$, and
    \begin{align} \label{eq:M5cost}
        L_2(\Mv_{JS}(S)) = \sigma^2 p - \sigma^2(p-3)^2 \left\{ \frac{1}{2} e^{-\frac{1}{2}\tau_{\1v, \perp}(S)} \sum_{\beta = 0}^{\infty} \left( \frac{\tau_{\1v, \perp}(S)}{2} \right)^\beta \frac{1}{\beta! \{\frac{1}{2}(p - 1) + \beta - 1\}} \right\} ,
    \end{align}
    where $\tau_{\1v, \perp}(S) =  {\|(\Id_p - \frac{1}{p} \1v_p \1v_p^\top)\thetav(S)\|_2^2}/{\sigma^2}$.
\end{prop}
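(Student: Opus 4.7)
The plan is to recognize Proposition~\ref{prop:M45_costs} as essentially the classical risk formula for the James--Stein estimator, applied in two different settings (full space for $\Mv_{JS0}$, and the $(p-1)$-dimensional subspace orthogonal to $\1v_p$ for $\Mv_{JS}$), combined with the standard Poisson-mixture representation for the expectation of the reciprocal of a noncentral chi-square.

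For part (i), I would fix $S$, write $Y = \Mv_G(S) \sim N_p(\thetav(S), \sigma^2 \Iv_p)$, and note that $\Mv_{JS0}(S) = (1 - (p-2)\sigma^2/\|Y\|_2^2) Y$ is exactly the James--Stein estimator. Expanding the squared loss and applying Stein's identity (or integration by parts component-wise), one obtains the standard identity
\begin{equation*}
L_2(\Mv_{JS0}(S)) \;=\; p\sigma^2 \;-\; (p-2)^2 \sigma^4 \, \E\!\left[ \tfrac{1}{\|Y\|_2^2} \right].
\end{equation*}
The key distributional fact is $\|Y\|_2^2 / \sigma^2 \sim \chi^2_p(\tau(S))$ with noncentrality $\tau(S) = \|\thetav(S)\|_2^2/\sigma^2$. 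Using the Poisson mixture representation $\chi^2_p(\lambda) \stackrel{d}{=} \chi^2_{p+2K}$ with $K \sim \mathrm{Poisson}(\lambda/2)$, together with $\E[1/\chi^2_{p+2k}] = 1/(p+2k-2)$, gives
\begin{equation*}
\E\!\left[\tfrac{1}{\|Y\|_2^2}\right] \;=\; \tfrac{1}{\sigma^2}\sum_{\beta=0}^\infty \tfrac{e^{-\tau(S)/2}(\tau(S)/2)^\beta}{\beta!\,(p+2\beta-2)} \;=\; \tfrac{1}{2\sigma^2}\,e^{-\tau(S)/2}\sum_{\beta=0}^\infty \tfrac{(\tau(S)/2)^\beta}{\beta!\,(\tfrac{1}{2}p+\beta-1)},
\end{equation*}
which, after substitution, is exactly \eqref{eq:M4cost}.

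For part (ii), I would decompose $\Real^p$ orthogonally as $\mathrm{span}(\1v_p) \oplus \1v_p^\perp$ and write $Y = \bar Y \1v_p + Y_c$, $\thetav(S) = \bar\theta\,\1v_p + \thetav_c(S)$, where $Y_c = (\Iv_p - \tfrac{1}{p}\1v_p\1v_p^\top)Y$ and similarly for $\thetav_c(S)$. From \eqref{eq:shrink_to_ave}, $\Mv_{JS}(S) - \thetav(S) = (\bar Y - \bar\theta)\1v_p + \bigl(1 - (p-3)\sigma^2/\|Y_c\|_2^2\bigr)Y_c - \thetav_c(S)$, and the two summands are orthogonal, so the squared loss splits as
\begin{equation*}
\|\Mv_{JS}(S) - \thetav(S)\|_2^2 \;=\; p(\bar Y - \bar\theta)^2 \;+\; \Bigl\|\bigl(1 - \tfrac{(p-3)\sigma^2}{\|Y_c\|_2^2}\bigr) Y_c - \thetav_c(S)\Bigr\|_2^2.
\end{equation*}
Since $\bar Y - \bar\theta \sim N(0, \sigma^2/p)$, the first expectation contributes $\sigma^2$. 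For the second term, $Y_c$ restricted to the $(p-1)$-dimensional subspace $\1v_p^\perp$ is $N(\thetav_c(S), \sigma^2 \Iv_{p-1})$ (after choosing an orthonormal basis), and the shrinkage factor $(p-3)\sigma^2/\|Y_c\|_2^2 = ((p-1)-2)\sigma^2/\|Y_c\|_2^2$ is precisely the James--Stein factor in dimension $p-1$. Thus the argument of part (i) applies verbatim with $p$ replaced by $p-1$ and $\tau(S)$ replaced by $\tau_{\1v,\perp}(S) = \|\thetav_c(S)\|_2^2/\sigma^2$, yielding $(p-1)\sigma^2 - (p-3)^2\sigma^2 \cdot \tfrac{1}{2}e^{-\tau_{\1v,\perp}/2}\sum_\beta (\tau_{\1v,\perp}/2)^\beta/[\beta!(\tfrac{1}{2}(p-1)+\beta-1)]$. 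Adding $\sigma^2$ from the first term gives \eqref{eq:M5cost}.

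The only non-routine ingredient is the Poisson-mixture evaluation of $\E[1/\chi^2_p(\lambda)]$; everything else is a careful orthogonal decomposition plus Stein's identity. The main bookkeeping hazard is keeping track of the reduction in degrees of freedom (from $p$ to $p-1$) and verifying that the shrinkage constant $p-3$ in $m_a$ is indeed the correct James--Stein constant on the $(p-1)$-dimensional subspace $\1v_p^\perp$, which is what makes the second component's risk formula match the one in part~(i) with $p$ replaced by $p-1$.
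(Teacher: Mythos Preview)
Your proposal is correct and follows essentially the same route as the paper: Stein's identity to reduce the risk of $\Mv_{JS0}$ to $p\sigma^2 - (p-2)^2\sigma^4\,\E[1/\|Y\|_2^2]$, the noncentral $\chi^2$ expectation for the latter, and then an orthogonal decomposition along $\1v_p$ to reduce $\Mv_{JS}$ to a $(p-1)$-dimensional instance of the first computation. The only cosmetic difference is that the paper cites a textbook formula for $\E[\sigma^2/\|\xv\|^2]$ directly, whereas you derive it via the Poisson-mixture representation of the noncentral chi-square; the two are equivalent.
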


Proposition~\ref{prop:M45_costs} allows us to directly compare the conditional $L_2$-costs of $\Mv_{JS0}$ and $\Mv_{JS}$ with the $L_2$-cost of the ordinary Gaussian mechanism $\Mv_G$, for each $S$.

\begin{thm}
\label{thm:JS-mech}
Let $\mu>0$. For a multivariate statistic $\thetav: \Xc^n \to \Real^p$ with $\Delta_2(\thetav; \Xc^n) = \Delta_2 > 0$, let $\Mv_G$ be the ordinary Gaussian mechanism, and $\Mv_{JS0}$  and $\Mv_{JS}$ be the James--Stein Gaussian mechanisms, all calibrated to satisfy $\mu$-GDP (i.e., with $\sigma = \Delta_2 / \mu$).
\begin{enumerate}
    \item[(i)] If $p \ge 3$, $\Mv_{JS0}$ strictly dominates $\Mv_G$.
    \item[(ii)] If $p \ge 4, \Mv_{JS}$ strictly dominates $\Mv_G$.
\end{enumerate}
\end{thm}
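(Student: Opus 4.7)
The plan is to deduce strict dominance directly from the cost identities of Proposition~\ref{prop:M45_costs}, together with the elementary identity $L_2(\Mv_G(S)) = \E\|\xiv\|_2^2 = \sigma^2 p$, which holds uniformly in $S$. Both (\ref{eq:M4cost}) and (\ref{eq:M5cost}) display the conditional cost of the James--Stein mechanism as $\sigma^2 p$ minus a nonnegative correction, so the problem reduces to verifying that the correction is strictly positive for every $S \in \Xc^n$.

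For part~(i), subtracting (\ref{eq:M4cost}) from $\sigma^2 p$ gives
$$
L_2(\Mv_G(S)) - L_2(\Mv_{JS0}(S)) = \sigma^2 (p-2)^2 \cdot \frac{1}{2} e^{-\tau(S)/2} \sum_{\beta=0}^{\infty} \frac{(\tau(S)/2)^\beta}{\beta!\,(p/2 + \beta - 1)}.
$$
For $p \ge 3$, the prefactor $(p-2)^2$ is strictly positive, the exponential is strictly positive, and each denominator satisfies $p/2 + \beta - 1 \ge p/2 - 1 > 0$. Hence the series is a sum of nonnegative terms whose $\beta = 0$ contribution equals $1/(p/2 - 1) > 0$, regardless of whether $\tau(S) = 0$ or $\tau(S) > 0$. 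This yields strict dominance, uniformly in $S$.

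Part~(ii) proceeds identically from (\ref{eq:M5cost}): the difference $L_2(\Mv_G(S)) - L_2(\Mv_{JS}(S))$ equals $\sigma^2 (p-3)^2$ times the analogous Poisson-weighted series in $\tau_{\1v, \perp}(S)$, with denominators $(p-1)/2 + \beta - 1$. For $p \ge 4$ one has $(p-3)^2 > 0$ and $(p-1)/2 - 1 \ge 1/2 > 0$, so the $\beta = 0$ term again guarantees strict positivity. The threshold $p \ge 4$ (rather than $p \ge 3$) is forced precisely by these two conditions: at $p = 3$ both the prefactor and the leading denominator degenerate, which is natural because $\Mv_{JS}$ effectively performs James--Stein shrinkage on the $(p-1)$-dimensional centered component, and the classical requirement ``dimension $\ge 3$'' translates into $p \ge 4$.

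The main work has already been carried out in Proposition~\ref{prop:M45_costs}; what remains here is a sign-and-range check on an explicit Poisson mixture, and I do not anticipate a serious obstacle. The only subtlety worth making explicit is the distinction between strict and weak dominance, which is entirely secured by the $\beta = 0$ term of each series and therefore also handles the boundary data sets at which $\thetav(S) = \0v$ in part~(i) or $\thetav(S) \in {\rm span}(\1v_p)$ in part~(ii).
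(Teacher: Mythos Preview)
Your proof is correct and follows the same approach as the paper: compute $L_2(\Mv_G(S)) = p\sigma^2$ and read off strict positivity of the James--Stein correction directly from the series expressions in Proposition~\ref{prop:M45_costs}. Your version is in fact more explicit than the paper's, which simply asserts that the inequality ``can be checked'' from \eqref{eq:M4cost}; your careful verification that the $\beta=0$ term is already strictly positive (so the argument covers $\tau(S)=0$ and $\tau_{\1v,\perp}(S)=0$) fills in exactly the detail the paper leaves to the reader.
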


Unlike the ordinary and rank-deficient Gaussian mechanisms, $\Mv_G$ and $\Mv_r$, the conditional $L_2$-costs of $\Mv_{JS}$ and $\Mv_{JS0}$, given $S$, depend on the value of $\thetav(S)$. As a consequence, there is in general no simple expression for $L_2(\Mv_{JS}) = \E_S \left( L_2(\Mv_{JS})\right)$, which depends on the distribution of the statistic $\thetav(S)$. Nevertheless,
under the conditions of Theorem~\ref{thm:JS-mech}, $L_2(\Mv_{JS}) < L_2(\Mv_{G})$ holds for any distribution of $\thetav(S)$.
In Section~\ref{sec:app_conting_tab}, we numerically compare the unconditioned $L_2$-costs of various mechanisms, when $\thetav(S)$ is a random frequency table following a multinomial distribution.

\subsection{Rank-deficient James–Stein Gaussian mechanism}\label{sec:2.3}

We have demonstrated that the ordinary Gaussian mechanisms can be improved by either utilizing the structure of the sensitivity space $\Sc_{\thetav}$ of $\thetav$ or using a James--Stein shrinkage of mechanism outputs. In this subsection, we combine those two ideas and propose rank-deficient James–Stein Gaussian mechanisms, $\Mv_{rJS}$.  The mechanism $\Mv_{rJS}$ is obtained by first perturbing the statistic $\thetav(S)$ only for the span of the sensitivity space, then applying a James--Stein shrinkage for the lower dimensional subspace.

\begin{defn}\label{def:rJS}
Let $\thetav: \Xc^n \to \Real^p$ be a multivariate statistic, and
Let $\Uv_{\thetav}$ and $\Uv_1$ be orthonormal basis matrices for ${\rm span}(\Sc_{\thetav})$ and the orthogonal complement of ${\rm span}(\Sc_{\thetav})$, respectively, so that $[\Uv_1 \; \Uv_{\thetav}]$ is a $p \times p$ orthogonal matrix. The mechanism, perturbing $\thetav$,
\begin{equation}\label{eq:rJS}
 \Mv_{rJS}(S) = \Uv_1
                 \Uv_1^\top \thetav(S)  +
                \Uv_{\thetav}  m_a ( \Uv_{\thetav}^\top \Mv_r(S) ; \sigma)
,
\end{equation}
where $\Mv_r(S) = \thetav(S) + N_p(\0v, \sigma^2 \Uv_{\thetav}\Uv_{\thetav}^\top)$ is the rank-deficient Gaussian mechanism defined in (\ref{M_3}), and $m_a(\cdot ; \sigma)$ is the James--Stein shrinkage map (\ref{eq:shrink_to_ave}), is called a rank-deficient James--Stein mechanism.
\end{defn}

In the definition above, the mechanism $\Mv_{rJS}$ not only depends on $\sigma$ but also depends on the choice of orthonormal basis (the columns of $\Uv_{\thetav}$) for ${\rm span}(\Sc_{\thetav})$. As we shall see shortly, the level of privacy protection guaranteed by $\Mv_{rJS}$ is not affected by the choice of $\Uv_{\thetav}$, but the conditional $L_2$-cost of $\Mv_{rJS}(S)$ varies for different choices of $\Uv_{\thetav}$, even if $\thetav(S)$ is fixed.

For any choice of $\Uv_{\thetav}$ that does not directly depend on $\thetav(S)$ or $S$, $\Mv_{rJS}(S)$ depends on data set $S$ \emph{only} through $\Mv_r$. This can be seen by noting that  $\Uv_1^\top\Uv_{\thetav} = \0v$, which in turn leads that  $\Uv_1^\top \Mv_r(S) = \Uv_1^\top \thetav(S)$ for any data set $S$. Since $\Mv_{rJS}(S)$ is obtained via post-processing $\Mv_r$, if $\Mv_r$ is $\mu$-GDP for some $\mu>0$, so is $\Mv_{rJS}$. By Theorem~\ref{thm:rank-deficient-G-mech},  $\Mv_{rJS}$, with any choice of $\Uv_{\thetav}$, is $\mu$-GDP if $\sigma \ge \Delta_2(\thetav; \Xc^n)/\mu$.

On the other hand, the conditional $L_2$-cost of $\Mv_{rJS}$ depends on $\Uv_{\thetav}$. In particular, for any given $S$, $L_2(\Mv_{rJS}(S))$ is exactly Equation (\ref{eq:M5cost}) with $p$ replaced by $d_{\thetav} = {\rm dim}({\rm span}(\Sc_{\thetav}))$ and $\tau_{\1v, \perp}(S)$ replaced by
$\widetilde{\tau}(S) =
\| (\Id_{d_{\thetav}} - \frac{1}{d_{\thetav}} \1v_{d_{\thetav}} \1v_{d_{\thetav}}^\top) \Uv_{\thetav}^\top \thetav(S) \|_2^2 / {\sigma^2}$. (See Proposition~\ref{lem:M6_cost} in the appendix for the explicit expression of $L_2(\Mv_{rJS}(S))$.)
The next result is obtained by directly comparing $L_2(\Mv_{rJS}(S))$ with $L_2(\Mv_r(S)) = d_{\thetav} \sigma^2$, for all $S \in \Xc^n$.

\begin{thm}
\label{thm:rJS-mech}
Let $\mu>0$. Suppose $\thetav: \Xc^n \to \Real^p$ is a multivariate statistic with rank-deficient $\Sc_{\thetav}$, i.e. $d_{\thetav} < p$,  and  $\Delta_2(\thetav; \Xc^n) = \Delta_2 > 0$. Let $\Uv_{\thetav}$ (needed to define $\Mv_{rJS}$) consist of any orthonormal basis of ${\rm span}(\Sc_{\thetav})$.
Suppose $\Mv_G$, $\Mv_r$ and $\Mv_{rJS}$ are calibrated to satisfy $\mu$-GDP (i.e., with $\sigma = \Delta_2 / \mu$). Then, $\Mv_r$ strictly dominates $\Mv_G$.
  If $d_{\thetav} \ge 4$, $\Mv_{rJS}$ strictly dominates each of $\Mv_r$ and $\Mv_G$.
\end{thm}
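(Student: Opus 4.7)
My plan is to handle the two dominance claims separately, leveraging results already established. For the first claim, $\Mv_r$ strictly dominating $\Mv_G$, I can simply invoke the $L_2$-cost identities noted directly after Theorem~\ref{thm:rank-deficient-G-mech}: for every $S$, $L_2(\Mv_G(S)) = p\sigma^2$ and $L_2(\Mv_r(S)) = d_{\thetav}\sigma^2$, and the strict rank-deficiency $d_{\thetav}<p$ forces $L_2(\Mv_r(S)) < L_2(\Mv_G(S))$ for all $S \in \Xc^n$.

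The heart of the theorem is the second claim, which I plan to prove by reducing $\Mv_{rJS}$ to an ordinary James--Stein Gaussian mechanism acting in the lower ambient dimension $d_{\thetav}$. Writing $\Iv_p = \Uv_1\Uv_1^\top + \Uv_{\thetav}\Uv_{\thetav}^\top$ and decomposing $\thetav(S)$ accordingly, the $\Uv_1\Uv_1^\top \thetav(S)$ summand cancels in $\Mv_{rJS}(S) - \thetav(S)$, and because the columns of $\Uv_{\thetav}$ are orthonormal, the Euclidean norm is preserved:
\[
\|\Mv_{rJS}(S) - \thetav(S)\|_2^2 \;=\; \|m_a(\Uv_{\thetav}^\top \Mv_r(S);\sigma) - \Uv_{\thetav}^\top \thetav(S)\|_2^2.
\]
Moreover, $\Uv_{\thetav}^\top \Mv_r(S) \sim N_{d_{\thetav}}(\Uv_{\thetav}^\top \thetav(S),\, \sigma^2 \Iv_{d_{\thetav}})$, since $\Uv_{\thetav}^\top\Uv_{\thetav} = \Iv_{d_{\thetav}}$. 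Thus the quantity on the right is distributed exactly as the squared error of the ordinary James--Stein mechanism $\Mv_{JS}$ (shrinking to the average) applied, in ambient dimension $d_{\thetav}$, to the statistic $\Uv_{\thetav}^\top \thetav(S)$. Applying Proposition~\ref{prop:M45_costs}(ii) with $p$ replaced by $d_{\thetav}$ therefore yields the closed form
\[
L_2(\Mv_{rJS}(S)) \;=\; d_{\thetav}\sigma^2 \,-\, \sigma^2(d_{\thetav}-3)^2\, R(S),
\]
where $R(S)$ is the nonnegative Poisson-mixture series in (\ref{eq:M5cost}) evaluated at $\widetilde{\tau}(S) = \|(\Iv_{d_{\thetav}} - \tfrac{1}{d_{\thetav}}\1v_{d_{\thetav}}\1v_{d_{\thetav}}^\top)\Uv_{\thetav}^\top \thetav(S)\|_2^2/\sigma^2$.

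Finally, for $d_{\thetav} \ge 4$ the prefactor $(d_{\thetav}-3)^2$ is at least $1$, and every summand of $R(S)$ has strictly positive denominator $\beta!\,\{\tfrac{1}{2}(d_{\thetav}-1) + \beta - 1\}$ for $\beta \ge 0$ (the worst case being $\beta=0$, $d_{\thetav}=4$, where it equals $\tfrac{1}{2}$), so $R(S) > 0$ for \emph{every} $S$, including the case $\widetilde{\tau}(S)=0$ where only the $\beta=0$ term contributes. Hence $L_2(\Mv_{rJS}(S)) < d_{\thetav}\sigma^2 = L_2(\Mv_r(S)) < p\sigma^2 = L_2(\Mv_G(S))$ strictly, for every $S\in\Xc^n$, establishing that $\Mv_{rJS}$ strictly dominates both $\Mv_r$ and $\Mv_G$. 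The main bookkeeping obstacle I anticipate is making the reduction step rigorous: verifying both the distributional identity $\Uv_{\thetav}^\top\Mv_r(S)\sim N_{d_{\thetav}}(\Uv_{\thetav}^\top\thetav(S), \sigma^2 \Iv_{d_{\thetav}})$ and the invariance of $L_2(\Mv_{rJS}(S))$ (through the rotational invariance of the James--Stein shrinkage towards the average) under different orthonormal bases used to form $\Uv_{\thetav}$; once these points are settled, the remainder is a direct plug-in into Proposition~\ref{prop:M45_costs}.
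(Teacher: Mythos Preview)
Your proposal is correct and follows essentially the same route as the paper: the paper's proof simply cites the explicit cost formula in Proposition~\ref{lem:M6_cost}, whose derivation is exactly your reduction---use the orthogonal decomposition $[\Uv_1\;\Uv_{\thetav}]$ to cancel the $\Uv_1$-component, observe $\Uv_{\thetav}^\top\Mv_r(S)\sim N_{d_{\thetav}}(\Uv_{\thetav}^\top\thetav(S),\sigma^2\Iv_{d_{\thetav}})$, and apply \eqref{eq:JS_MSE} in dimension $d_{\thetav}$. One small correction to your anticipated obstacle: $L_2(\Mv_{rJS}(S))$ is \emph{not} invariant under the choice of $\Uv_{\thetav}$ (indeed $\widetilde\tau(S)$ changes when you rotate the basis, as the paper itself notes), but this is irrelevant since the strict inequality $L_2(\Mv_{rJS}(S))<d_{\thetav}\sigma^2$ holds for every fixed choice of $\Uv_{\thetav}$, which is all the theorem claims.
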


We remark that one can try to choose a particular $\Uv_{\thetav}$ that minimizes $L_2(\Mv_{rJS}(S))$, but doing so requires using the \emph{unperturbed} information in $\| (\Id_{d_{\thetav}} - \frac{1}{d_{\thetav}} \1v_{d_{\thetav}} \1v_{d_{\thetav}}^\top) \Uv_{\thetav}^\top \thetav(S) \|_2$. Such a mechanism depends on $S$  through not only $\Mv_r(S)$ but also $\thetav(S)$, and determining the level of its privacy protection seems a difficult task.

It is natural to ask whether any mechanism among the James--Stein shrinkage mechanisms $\Mv_{JS0}$, $\Mv_{JS}$ and $\Mv_{rJS}$ dominates another. The answer is no. The conditional $L_2$-costs of these mechanisms depend on $\|\thetav(S)\|_2$,
$\|(\Id_p - \frac{1}{p} \1v_p \1v_p^\top)\thetav(S)\|_2$ or  $\| (\Id_{d_{\thetav}} - \frac{1}{d_{\thetav}} \1v_{d_{\thetav}} \1v_{d_{\thetav}}^\top) \Uv_{\thetav}^\top \thetav(S) \|_2$ (as well as $d_{\thetav}$). As a result, $L_2(\Mv_{JS0}(S)) <  L_2(\Mv_{JS}(S))$ is true for some $S$, but is false for other choices of $S$. Likewise, although  $L_2(\Mv_{JS}(S)) >  L_2(\Mv_{rJS}(S))$  for {most} data sets, it is not true for some $S$ (for some $\thetav$ and $\Xc^n$). To compare these mechanisms we simplify the situation by considering an asymptotic direction in which the sample size $n$ of data set $S$ increases, while the dimension $p$ of statistic $\thetav$ and the privacy level $\mu$ are fixed.

In Theorem \ref{thm:L2cost-conv} below, we consider any sequence of data sets $S_1,S_2,\ldots, S_n,\ldots$ of increasing sample sizes, which are considered deterministic. Note that the $L_2$-sensitivity of $\thetav$ may depend on the sample size $n$. For instance, for $\thetav(S) = \bar{\xv}_n$ with $\Xc = [0,1]^p$, $\Delta_2(\thetav; \Xc^n) = \sqrt{p}/n$.

\begin{thm} \label{thm:L2cost-conv}
Let $\mu>0$, and let $S_1,\ldots,S_n,\ldots$ be any sequence of data sets of increasing sample sizes such that $S_n \in \Xc^n$ for each $n$.
For each $n$, $\thetav: \Xc^n \to \Real^p$ is a multivariate statistic, and the randomized mechanisms $\Mv_G$, $\Mv_r$, $\Mv_{JS0}$, $\Mv_{JS}$ and $\Mv_{rJS}$ satisfy $\mu$-GDP; that is, $\sigma_n := \Delta_2(\thetav; \Xc^n)/ \mu$ is used for all mechanisms with sample size $n$. In the statements below, $\Uv_{\thetav}$ is any  basis matrix chosen for both $\Mv_{r}$ and $\Mv_{rJS}$.
\begin{itemize}
  \item[(i)] If $p \ge 3$ and $\|\thetav(S_n)\|_2 /\Delta_2(\thetav; \Xc^n) \to \infty$ as $n \to\infty$, then
    $$
    \E \left\{ \|\Mv_{JS0}(S_n) - \Mv_{G}(S_n)\|_2^2/\sigma_n^2 \right\}
    \to 0, \quad \mbox{as }~ n \to \infty.
    $$
  \item [(ii)] If $p \ge 4$ and $\|(\Id_p - \frac{1}{p} \1v_p \1v_p^\top)\thetav(S)\|_2/\Delta_2(\thetav; \Xc^n) \to \infty$ as $n \to\infty$, then
    $$
    \E \left\{ \|\Mv_{JS}(S_n) - \Mv_{G}(S_n)\|_2^2/\sigma_n^2 \right\} \to 0, \quad \mbox{as }~ n \to \infty.
    $$
  \item [(iii)] If $d_{\thetav} \ge 4$ and
  $\| (\Id_{d_{\thetav}} - \frac{1}{d_{\thetav}} \1v_{d_{\thetav}} \1v_{d_{\thetav}}^\top) \Uv_{\thetav}^\top \thetav(S) \|_2/\Delta_2(\thetav; \Xc^n) \to \infty$ as $n \to\infty$, then
      $$
    \mathbb E \left\{ \|\Mv_{rJS}(S_n) - \Mv_{r}(S_n)\|_2^2/\sigma_n^2 \right\} \to 0, \quad \mbox{as }~ n \to \infty.
    $$
\end{itemize}
\end{thm}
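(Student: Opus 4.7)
The plan is to observe that each of the three assertions reduces to the analytic fact $\E[1/\chi^2_\nu(\lambda)] \to 0$ as $\lambda \to \infty$ for each fixed integer $\nu \ge 3$, via an explicit algebraic identity that expresses the James--Stein correction as a simple rescaling of a Gaussian vector. For part (i), the definition of $m_0$ yields
\begin{equation*}
\Mv_{JS0}(S_n) - \Mv_G(S_n) = -\frac{(p-2)\sigma_n^2}{\|\Mv_G(S_n)\|_2^2}\,\Mv_G(S_n),
\end{equation*}
so $\|\Mv_{JS0}(S_n) - \Mv_G(S_n)\|_2^2 = (p-2)^2\sigma_n^4/\|\Mv_G(S_n)\|_2^2$. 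Writing $\Mv_G(S_n) = \thetav(S_n) + \sigma_n Z$ with $Z \sim N_p(\0v, \Iv_p)$, the rescaled norm $\|\Mv_G(S_n)\|_2^2/\sigma_n^2$ is noncentral chi-squared with $p$ degrees of freedom and noncentrality $\tau_n := \|\thetav(S_n)\|_2^2/\sigma_n^2$. Since $\sigma_n = \Delta_2(\thetav;\Xc^n)/\mu$, the hypothesis on $\|\thetav(S_n)\|_2/\Delta_2(\thetav;\Xc^n)$ forces $\tau_n \to \infty$, and taking expectations gives
\begin{equation*}
\E\bigl\{\|\Mv_{JS0}(S_n) - \Mv_G(S_n)\|_2^2/\sigma_n^2\bigr\} = (p-2)^2\,\E\bigl[1/\chi^2_p(\tau_n)\bigr].
\end{equation*}

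Parts (ii) and (iii) follow the same pattern after projecting onto the appropriate subspace. For (ii), since $m_a$ shrinks only the centered coordinates, one has $\Mv_{JS}(S_n) - \Mv_G(S_n) = -(p-3)\sigma_n^2\,\xv_c/\|\xv_c\|_2^2$ with $\xv := \Mv_G(S_n)$ and $\xv_c := (\Iv_p - p^{-1}\1v_p\1v_p^\top)\xv$, and $\|\xv_c\|_2^2/\sigma_n^2 \sim \chi^2_{p-1}(\tau_{\1v,\perp}(S_n))$ since the residual Gaussian has rank $p-1$. For (iii), the orthogonality $\Uv_1^\top \Uv_{\thetav} = \0v$ forces $\Uv_1^\top \Mv_r(S_n) = \Uv_1^\top \thetav(S_n)$, while $\Uv_{\thetav}^\top \Mv_r(S_n) = \Uv_{\thetav}^\top \thetav(S_n) + W$ with $W \sim N_{d_{\thetav}}(\0v,\sigma_n^2 \Iv_{d_{\thetav}})$. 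Consequently,
\begin{equation*}
\Mv_{rJS}(S_n) - \Mv_r(S_n) = \Uv_{\thetav}\bigl[m_a(\Uv_{\thetav}^\top \Mv_r(S_n);\sigma_n) - \Uv_{\thetav}^\top \Mv_r(S_n)\bigr],
\end{equation*}
whose squared norm equals $(d_{\thetav}-3)^2\sigma_n^4/\|(\Uv_{\thetav}^\top \Mv_r(S_n))_c\|_2^2$, with $\|(\Uv_{\thetav}^\top\Mv_r(S_n))_c\|_2^2/\sigma_n^2 \sim \chi^2_{d_{\thetav}-1}(\widetilde\tau(S_n))$ and $\widetilde\tau(S_n) \to \infty$ by assumption. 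In all three cases, the hypotheses on $p$ or $d_{\thetav}$ translate into $\nu \ge 3$ for the relevant chi-squared.

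All three claims therefore reduce to the auxiliary fact that $\E[1/\chi^2_\nu(\lambda)] \to 0$ as $\lambda \to \infty$ for each fixed integer $\nu \ge 3$, which is the main (though routine) analytical step. I would dispatch it via the Poisson mixture $\chi^2_\nu(\lambda) \stackrel{d}{=} \chi^2_{\nu + 2K}$ with $K \sim \mathrm{Poisson}(\lambda/2)$, so that using $\E[1/\chi^2_m] = (m-2)^{-1}$ for $m>2$,
\begin{equation*}
\E\bigl[1/\chi^2_\nu(\lambda)\bigr] = \E\bigl[(\nu + 2K - 2)^{-1}\bigr].
\end{equation*}
Given $\eps > 0$, I would choose $K_0$ with $(\nu + 2K_0 - 2)^{-1} < \eps$ and split the expectation: the tail $\E[(\nu+2K-2)^{-1}\Ind_{K \ge K_0}] \le \eps$, while the head is at most $(\nu-2)^{-1}\P(K < K_0) \to 0$ since $K \to \infty$ in probability as $\lambda \to \infty$. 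Combined with the three reductions of the previous paragraphs, this completes the proof.
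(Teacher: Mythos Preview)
Your reduction is identical to the paper's: you write each difference $\Mv_{JS0}-\Mv_G$, $\Mv_{JS}-\Mv_G$, $\Mv_{rJS}-\Mv_r$ as a scalar multiple of the relevant (projected) Gaussian vector, take the squared norm, and recognize a noncentral chi-squared in the denominator. The dimension conditions $p\ge 3$, $p\ge 4$, $d_{\thetav}\ge 4$ are used exactly as you say, to ensure $\nu\ge 3$ so that the inverse moment is finite.

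The only substantive divergence is in how you dispatch $\E[1/\chi^2_\nu(\lambda)]\to 0$. The paper expresses this expectation via the confluent hypergeometric function ${}_1F_1$, applies Kummer's transformation to obtain the integral $\tfrac12\int_0^1(1-y)^{(\nu-4)/2}e^{-\lambda y}\,dy$, and then bounds it separately for $\nu\ge 4$ (via $(1-y)^r\le e^{-ry}$) and for $\nu=3$ (via Dawson's integral). Your Poisson-mixture argument---writing $\chi^2_\nu(\lambda)\stackrel{d}{=}\chi^2_{\nu+2K}$ with $K\sim\mathrm{Poisson}(\lambda/2)$ and using $\E[1/\chi^2_m]=(m-2)^{-1}$---is more elementary, treats all $\nu\ge 3$ uniformly without case splits, and avoids special functions entirely. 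Both arguments are correct; yours is shorter and arguably more transparent, while the paper's integral representation yields an explicit quantitative bound on the rate of decay.
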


In Theorem~\ref{thm:L2cost-conv}, we have assumed that for any $S$, $\Mv_{JS0}(S)$ and $\Mv_{JS}(S)$ are obtained by post-processing $\Mv_G(S)$ (and $\Mv_{rJS}(S)$ by $\Mv_r(S)$). The conditions such as $\|\thetav(S)\|_2/\Delta_2(\thetav; \Xc^n) \to \infty$ hold for most applications. For example, when $\thetav(S) = \bar{\xv}_n$ with $\Xc = [0,1]^p$, $\|\thetav(S)\|_2 \le \sqrt{p}$ is bounded but
$\Delta_2(\thetav; \Xc^n) = \sqrt{p}/n$ decreases, and the condition is satisfied. Since for fixed $\mu$, $\sigma_n \asymp \Delta_2(\thetav; \Xc^n)$, Theorem~\ref{thm:L2cost-conv}(i) yields that
 $\E \|\Mv_{JS0}(S_n) - \Mv_{G}(S_n)\|_2^2 = o(n^{-2})$. As an another example, let $\thetav$ be the frequency table defined in Example \ref{ex:contingency-table}. For such cases, $\Delta_2(\thetav; \Xc^n) = \sqrt{2}$ is constant over varying $n$, but $\|\thetav(S)\|_2 \ge \sqrt{\|\thetav(S)\|_1} = \sqrt n$, thus the condition is satisfied. The difference between $\Mv_{JS0}(S_n)$ and $\Mv_{G}(S_n)$ gets smaller for larger $n$, when appropriately scaled.

From Theorem~\ref{thm:L2cost-conv}, we conclude that while each of $\Mv_{JS0}$ and $\Mv_{JS}$ has strictly smaller $L_2$-cost than the $L_2$-cost of $\Mv_G$ for each $n$, the difference becomes smaller for large $n$. Moreover, we can compare $\Mv_{rJS}$ with either  $\Mv_{JS0}$ or $\Mv_{JS}$ in the asymptotic setting, as follows.
\begin{cor}\label{cor:tothm:L2cost-conv} Under the conditions of Theorem~\ref{thm:L2cost-conv}, assume further that $\Sc_{\thetav}$ is rank-deficient, i.e., $d_{\thetav} < p$. Then, for any sequence $S_n$, and for any basis matrix $\Uv_{\thetav}$ chosen for $\Mv_{rJS}$,
\begin{equation*}\label{eq:cortoL2cost-cov}
  \limsup_{n\to\infty} {L_2(\Mv_{rJS}(S_n))}/{\sigma_n^2} < \liminf_{n\to\infty} {L_2(\Mv_{JS}(S_n))}/{\sigma_n^2},
\end{equation*}
which also holds if $ \Mv_{JS}$ is replaced by $\Mv_{JS0}$.
\end{cor}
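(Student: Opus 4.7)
The plan is to reduce the claimed strict inequality to the two explicit limits
$L_2(\Mv_{JS}(S_n))/\sigma_n^2 \to p$ and $L_2(\Mv_{rJS}(S_n))/\sigma_n^2 \to d_{\thetav}$ as $n\to\infty$. Once both convergences are in hand, the rank-deficiency hypothesis $d_{\thetav}<p$ immediately yields
$$\limsup_{n\to\infty} L_2(\Mv_{rJS}(S_n))/\sigma_n^2 = d_{\thetav} < p = \liminf_{n\to\infty} L_2(\Mv_{JS}(S_n))/\sigma_n^2,$$
which is the desired conclusion. The same scheme with part (i) in place of part (ii) of Theorem~\ref{thm:L2cost-conv} will handle the case when $\Mv_{JS}$ is replaced by $\Mv_{JS0}$.

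The main technical tool is Minkowski's inequality in $L^2(\P;\Real^p)$, namely
$$\bigl|\,\|\Uv-\wv\|_{L^2(\P)}-\|\Vv-\wv\|_{L^2(\P)}\,\bigr| \le \|\Uv-\Vv\|_{L^2(\P)},$$
for random vectors $\Uv,\Vv$ and a deterministic $\wv$, where $\|\Uv-\wv\|_{L^2(\P)} := \sqrt{\E\|\Uv-\wv\|_2^2}$. Applying this with $\Uv=\Mv_{JS}(S_n)$, $\Vv=\Mv_G(S_n)$, $\wv=\thetav(S_n)$ and dividing through by $\sigma_n$ gives
$$\Bigl|\,\sqrt{L_2(\Mv_{JS}(S_n))/\sigma_n^2} - \sqrt{p}\,\Bigr| \le \sqrt{\E\|\Mv_{JS}(S_n)-\Mv_G(S_n)\|_2^2/\sigma_n^2},$$
where I have used $L_2(\Mv_G(S_n))=p\sigma_n^2$. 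Theorem~\ref{thm:L2cost-conv}(ii) forces the right-hand side to vanish, so $L_2(\Mv_{JS}(S_n))/\sigma_n^2 \to p$. Instantiating the same inequality with $\Uv=\Mv_{rJS}(S_n)$, $\Vv=\Mv_r(S_n)$, $\wv=\thetav(S_n)$, together with $L_2(\Mv_r(S_n))=d_{\thetav}\sigma_n^2$ and Theorem~\ref{thm:L2cost-conv}(iii), delivers $L_2(\Mv_{rJS}(S_n))/\sigma_n^2 \to d_{\thetav}$.

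No serious obstacle is anticipated: the proof is essentially bookkeeping around Minkowski's inequality and the closed-form costs $L_2(\Mv_G(S))=p\sigma^2$ and $L_2(\Mv_r(S))=d_{\thetav}\sigma^2$ already recorded in Section~\ref{section: mv-mech}. The one point requiring care is confirming that the specific growth hypotheses of parts (i)--(iii) of Theorem~\ref{thm:L2cost-conv} are actually in force for the sequences $S_n$ under consideration; this is exactly what the phrase ``under the conditions of Theorem~\ref{thm:L2cost-conv}'' is meant to encode, and must be stated explicitly in the proof.
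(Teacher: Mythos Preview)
Your proposal is correct and takes essentially the same approach as the paper's proof: both exploit the triangle-type inequality in $L^2(\P;\Real^p)$ to transfer the asymptotic vanishing of $\E\|\Mv_{JS}(S_n)-\Mv_G(S_n)\|_2^2/\sigma_n^2$ and $\E\|\Mv_{rJS}(S_n)-\Mv_r(S_n)\|_2^2/\sigma_n^2$ (from Theorem~\ref{thm:L2cost-conv}) to the normalized $L_2$-costs, using the closed forms $L_2(\Mv_G(S))=p\sigma^2$ and $L_2(\Mv_r(S))=d_{\thetav}\sigma^2$. The paper expands $\|\av_n\|_2^2=\|(\av_n-\rv_n)+\rv_n\|_2^2$ and bounds the cross term via Cauchy--Schwarz, while you apply the reverse Minkowski inequality directly; these are equivalent, and your formulation is arguably cleaner since it yields the exact limits $d_{\thetav}$ and $p$ rather than merely $\limsup\le d_{\thetav}$ and $\liminf\ge p$.
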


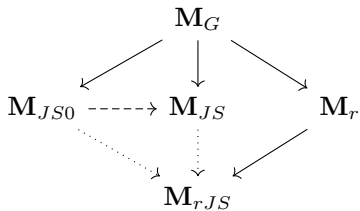
\begin{figure}
    \centering
    \begin{tikzcd}
     & \Mv_G \arrow[ld] \arrow[d] \arrow[rd] & \\
    \Mv_{JS0} \arrow[r, dashed] \arrow[rd, dotted] & \Mv_{JS} \arrow[d, dotted] & \Mv_r \arrow[ld] \\
     & \Mv_{rJS} &
    \end{tikzcd}
    \caption{Notation $\Mv \longrightarrow \Mv'$ is used when the statistical utility of $\Mv'$ is greater than that of $\Mv$  (under appropriate conditions) for finite sample sizes. The dotted arrow is used when the comparison is made for large sample sizes, and the dashed arrow is used when the comparison was made empirically; see Section  \ref{sec:app_conting_tab}.}
    \label{fig:relationship}
\end{figure}

Figure \ref{fig:relationship} summarizes our discussion in this section. Overall, $\Mv_{rJS}$ shows the largest statistical utility (or the smallest $L_2$-cost) among the Gaussian mechanisms we have considered. We have also considered a rank-deficient James--Stein Gaussian mechanism, \emph{shrinking to zero} (given by replacing $m_a$ with $m_0$ in \eqref{eq:rJS}), but because its empirical $L_2$-cost is typically larger than that of $\Mv_{rJS}$, we have chosen to only discuss $\Mv_{rJS}$. Other forms of post-processing can be utilized to further reduce the mechanisms' $L_2$-cost. See Section~\ref{sec:contingency_table_release} for an example.

\section{Calibration of multivariate Laplace mechanisms for \texorpdfstring{$\mu$}{mu}-GDP}
\label{sec:Lap_mech}
 Laplace mechanisms are additive mechanisms perturbing a statistic $\thetav$ with a Laplace-distributed noise, and have been commonly used as a general-purpose mechanism satisfying $\varepsilon$-DP \citep{dwork2014algorithmic}. Let $\mathrm{Lap}(\delta, b)$ stand for the Laplace distribution with location parameter $\delta \in \Real$ and scale parameter $b > 0$. For a random vector $\Xv = (X_1, \dots, X_p)$, we write $\Xv \sim \Lap_p(\deltav, b)$,
where $\deltav = (\delta_1, \cdots, \delta_p)^\top \in \Real^p$, if $X_i$ follows $\mathrm{Lap}(\delta_i, b)$ independently for $i = 1,\ldots,p$. For a multivariate statistic $\thetav$, the additive mechanism that adds a $\Lap_p(\0v, b)$-distributed noise,
\begin{equation}\label{eq:laplace_mechanism}
 \Mv_{\Lap}(S; b) = \thetav(S) +\Lap_p(\0v, b),
\end{equation}
is called multivariate Laplace mechanism.  It is well-known that a multivariate Laplace mechanism  $\Mv_{\Lap}(\cdot; b)$ with $b \ge \Delta_1 / \varepsilon$ satisfies $\varepsilon$-DP \citep{dwork2006calibrating}, where $\Delta_1 = \Delta_1(\thetav ; \Xc^n)$ is the $L_1$-sensitivity of $\thetav$ (see (\ref{eq:Lr-sensitivity})).

In this section, we calibrate the amount of noise in multivariate Laplace mechanisms with respect to $\mu$-GDP, for a fair comparison with Gaussian mechanisms.
We show that, in contrast to the case of Gaussian mechanisms, the optimally-calibrated scale parameter $b$ for \emph{univariate} Laplace mechanisms, with respect to $\mu$-GDP criterion, turns out to be sub-optimal when used for some \emph{multivariate} Laplace mechanisms. An optimal calibration of the amount of Laplace noise is achieved for the special case that $\thetav$ is a multivariate frequency table.

\subsection{Universal conversion and improved Laplace mechanisms}\label{subsec:Lap_univ_conv}

A naive approach to determining the level of privacy protection with respect to GDP criterion in a Laplace mechanism $\Mv_{\Lap}(\cdot; b)$  is to use the fact that $\Mv_{\Lap}(\cdot; b)$ is $\varepsilon$-DP for any $\varepsilon \ge \Delta_1 / b$, and then to determine $\mu(\varepsilon)$ such that any $\varepsilon$-DP mechanism satisfies $\mu(\varepsilon)$-GDP.  Such a conversion from $\varepsilon$-DP to $\mu$-GDP can be obtained by comparing the trade-off function  $T_{\varepsilon \mbox{-} \mathrm{DP}}$ of $\varepsilon$-DP, where
\begin{equation}\label{eq:trade_off_eps_del}
    T_{\varepsilon \mbox{-} \mathrm{DP}}(\alpha) =
    \begin{cases}
        1- e^\varepsilon \alpha & \mbox{if }\: 0 \le \alpha < \frac{1}{1 + e^{\varepsilon}}, \\
        e^{-\varepsilon}(1-\alpha) & \mbox{otherwise},
    \end{cases}
\end{equation}
with the trade-off function $G_\mu$ (Equation~(\ref{eq:Gmu})) corresponding to $\mu$-GDP. %

\begin{lem}[Universal conversion] \label{lem:eps-DP_to_mu-GDP_conversion}
(i) Let $\Mv$ be a mechanism satisfying $\varepsilon$-DP, for $\varepsilon>0$. Then $\Mv$ is $\mu$-GDP for any $\mu \ge 2\Phi^{-1}\{e^\epsilon / (e^\epsilon + 1)\}$, where $\Phi$ is the standard normal distribution function.

(ii) Let $\mu >0$ be given. For any $\varepsilon \le \varepsilon_{\mu}:= \log\{\Phi(\mu/2)/\Phi(-\mu/2)\}$, a mechanism satisfying $\varepsilon$-DP is $\mu$-GDP.
\end{lem}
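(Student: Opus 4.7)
The plan is to recast the result as a pointwise comparison of two trade-off functions and then exploit convexity. By the equivalence between $\varepsilon$-DP and $f$-DP with $f = T_{\varepsilon\text{-DP}}$ (the case $\delta=0$ of (\ref{eq:DPtof-DP})), a mechanism $\Mv$ is $\varepsilon$-DP iff $\inf_{S\sim S'} T(\Mv(S),\Mv(S')) \ge T_{\varepsilon\text{-DP}}$, while by Definition~\ref{def:GDP} it is $\mu$-GDP iff the same infimum dominates $G_\mu$. Both (i) and (ii) therefore reduce to a single question: for which pairs $(\varepsilon,\mu)$ does $T_{\varepsilon\text{-DP}}(\alpha) \ge G_\mu(\alpha)$ hold for every $\alpha\in[0,1]$? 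Once the sharp threshold is identified, (i) is its ``$\mu$ large'' form and (ii) its ``$\varepsilon$ small'' form.

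Both $T_{\varepsilon\text{-DP}}$ and $G_\mu$ are trade-off functions: they take value $1$ at $0$ and $0$ at $1$, and satisfy the involution $f(f(\alpha))=\alpha$ (symmetry about $y=x$). The piecewise linear $T_{\varepsilon\text{-DP}}$ has a unique kink at $\alpha^*_\varepsilon := 1/(1+e^\varepsilon)$, where $T_{\varepsilon\text{-DP}}(\alpha^*_\varepsilon) = \alpha^*_\varepsilon$; using $\Phi^{-1}(1-x) = -\Phi^{-1}(x)$, one checks that $G_\mu$ crosses the diagonal at $\alpha^*_\mu := \Phi(-\mu/2)$. Since $G_\mu$ is strictly decreasing and convex with $G_\mu(\alpha) \le \alpha$ iff $\alpha \ge \alpha^*_\mu$, the necessary inequality $G_\mu(\alpha^*_\varepsilon) \le \alpha^*_\varepsilon$ is equivalent to $1/(1+e^\varepsilon) \ge \Phi(-\mu/2)$. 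Elementary algebra rewrites this as $\varepsilon \le \log[\Phi(\mu/2)/\Phi(-\mu/2)] = \varepsilon_\mu$, equivalently $\mu \ge 2\Phi^{-1}\{e^\varepsilon/(e^\varepsilon+1)\}$, which is exactly the threshold in the lemma.

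The main step is promoting this kink inequality to $T_{\varepsilon\text{-DP}} \ge G_\mu$ on all of $[0,1]$. On $[0,\alpha^*_\varepsilon]$, convexity of $G_\mu$ together with $G_\mu(0)=1$ and $G_\mu(\alpha^*_\varepsilon) \le \alpha^*_\varepsilon$ gives, for $\alpha = t\alpha^*_\varepsilon$ with $t \in [0,1]$,
\begin{equation*}
G_\mu(\alpha) \le (1-t)G_\mu(0) + tG_\mu(\alpha^*_\varepsilon) \le 1 - t(1 - \alpha^*_\varepsilon) = 1 - e^\varepsilon \alpha = T_{\varepsilon\text{-DP}}(\alpha),
\end{equation*}
using $(1 - \alpha^*_\varepsilon)/\alpha^*_\varepsilon = e^\varepsilon$. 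The matching inequality on $[\alpha^*_\varepsilon,1]$ will then follow by applying the involution $f(f(\alpha))=\alpha$ shared by both trade-off functions. The only delicate point is ensuring this reduction to a single kink comparison is exhaustive; convexity of $G_\mu$, the two shared endpoints $(0,1)$ and $(1,0)$, and the diagonal symmetry together suffice, but the symmetry step is worth spelling out carefully. With that in hand, parts (i) and (ii) are just two rearrangements of the same threshold.
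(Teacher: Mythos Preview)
Your proposal is correct and follows essentially the same route as the paper: reduce to the trade-off function comparison $T_{\varepsilon\text{-DP}} \ge G_\mu$, verify it at the kink $\alpha^*_\varepsilon = 1/(1+e^\varepsilon)$, use convexity of $G_\mu$ to bound it by the chord (which is exactly $T_{\varepsilon\text{-DP}}$) on $[0,\alpha^*_\varepsilon]$, and finish via the diagonal symmetry of trade-off functions. The only stylistic difference is that you present (i) and (ii) as two rearrangements of a single threshold, whereas the paper proves (i) first and then deduces (ii) by monotonicity of $\varepsilon \mapsto e^\varepsilon/(1+e^\varepsilon)$.
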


The universal conversion in Lemma \ref{lem:eps-DP_to_mu-GDP_conversion} can be used in determining the level $\mu$ of GDP criterion for any mechanism that is known to be $\varepsilon$-DP for some $\varepsilon>0$. In particular, a Laplace mechanism $\Mv_{\Lap}(\cdot; b)$, satisfying $(\Delta_1 / b)$-DP, is $\mu$-GDP for any $\mu \ge 2 \Phi^{-1}\{e^{\Delta_1 / b} / (e^{\Delta_1 / b} + 1)\}$.
Conversely, if a desired level $\mu >0 $ is prespecified, then choosing the scale parameter $b$ of $\Mv_{\Lap}(\cdot; b)$ to be %
$$b_{\mu}^{\varepsilon} := \Delta_1 / \varepsilon_{\mu} = \Delta_1 / \log\{\Phi(\mu/2)/\Phi(-\mu/2)\},$$  guarantees that $\Mv_{\Lap}(\cdot ; b)$ is $\mu$-GDP, since
\begin{equation}\label{eq:conversion_universal_lap}
  \inf_{S \sim S'} T(\Mv_{\Lap}(S; b_{\mu}^{\varepsilon}),
\Mv_{\Lap}(S'; b_{\mu}^{\varepsilon}))
\ge
T_{\varepsilon_{\mu} \mbox{-} \mathrm{DP}} \ge G_{\mu},
\end{equation}
in which the first inequality is given by the fact that $\Mv_{\Lap}(\cdot ; b)$ is $(\Delta_1 /b)$-DP, and the second inequality is from Lemma~\ref{lem:eps-DP_to_mu-GDP_conversion}(ii).

\begin{figure}
    \centering
    \begin{subfigure}{0.45\textwidth}
        \centering
        \includegraphics[width=.95\textwidth]{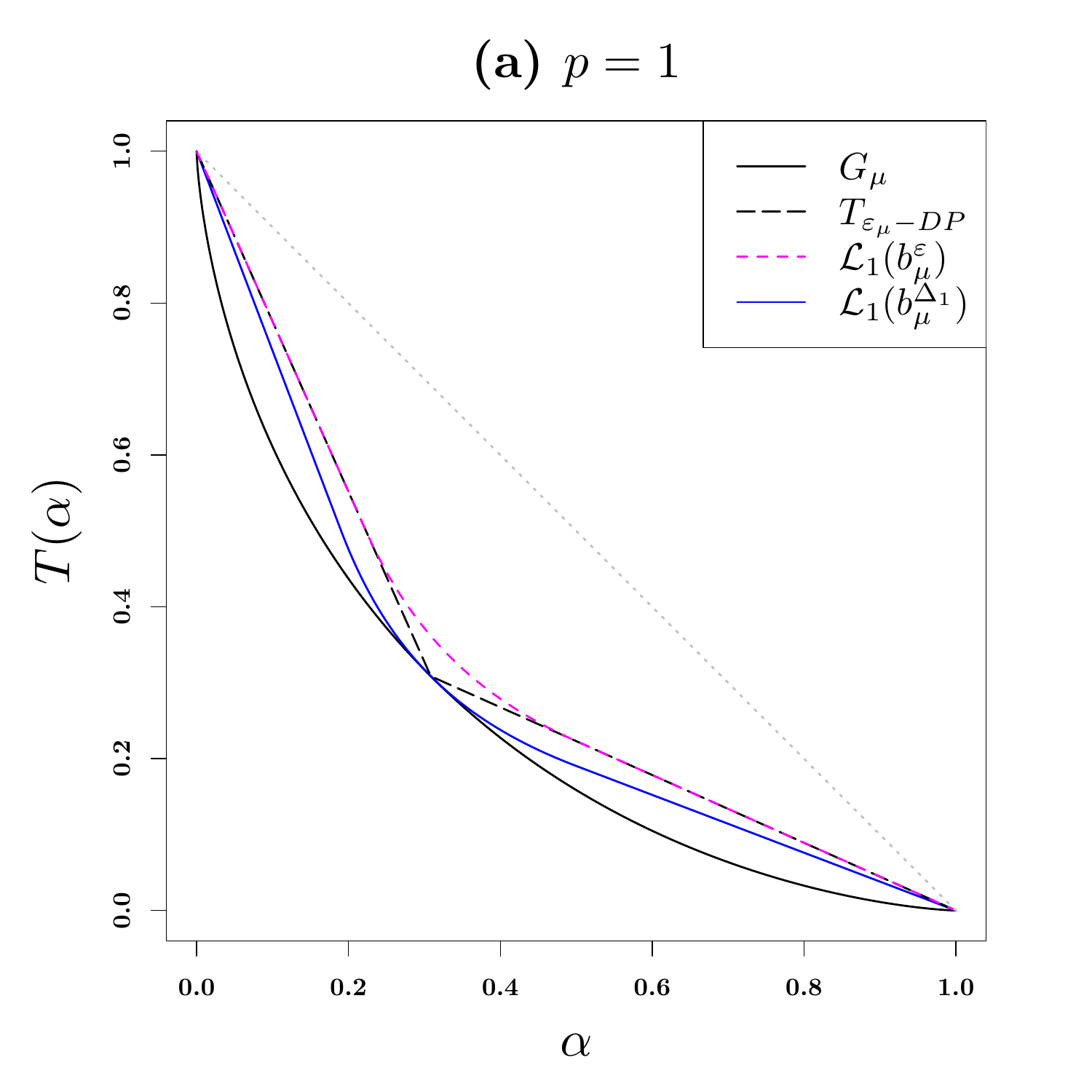}
        \label{fig:lap_tf_1dim}
    \end{subfigure}
    \begin{subfigure}{0.45\textwidth}
        \centering
        \includegraphics[width=.95\textwidth]{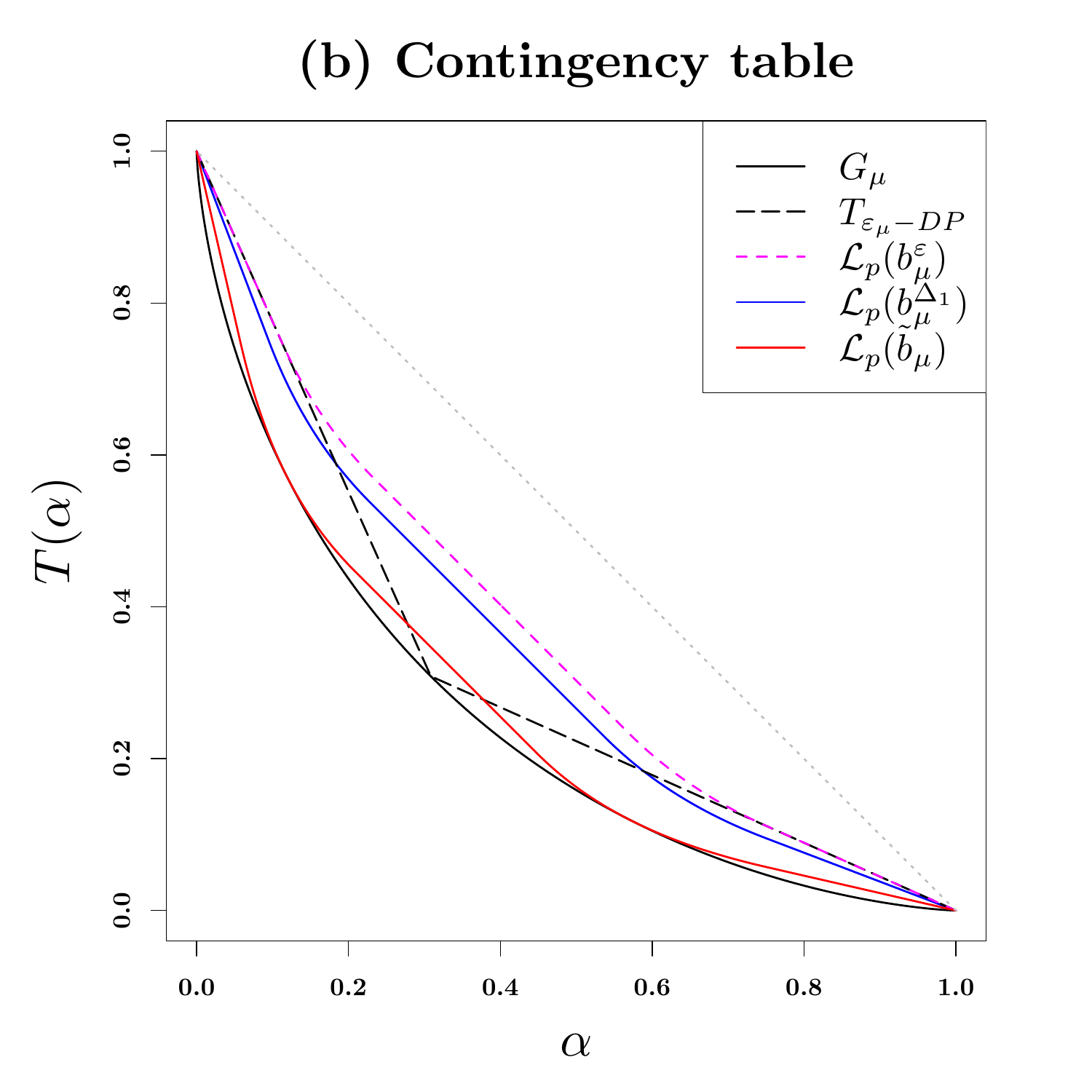}
        \label{fig:lap_tf_table}
    \end{subfigure}
    \caption{%
     The graphs of trade-off functions discussed in Section~\ref{sec:Lap_mech} for $\mu = 1$.
     The Laplace mechanism $\Mv_{\Lap}(\cdot; b_\mu^{\Delta_1})$ is optimal in the univariate case (a), but is  sub-optimal   for a multivariate case (b), in which  $\Mv_{\Lap}(\cdot; \tilde{b}_\mu)$ is optimally calibrated to $\mu$-GDP. 
    }
    \label{fig:lap_tf}
\end{figure}

Such a two-step calibration of $b$ (by $b_{\mu}^{\varepsilon}$) for Laplace mechanisms is by no means optimal. As we shall see shortly, one can choose the scale parameter $b$ strictly smaller than $b_{\mu}^{\varepsilon}$, while still satisfying $\mu$-GDP. Intuitively, this is because the equalities in  (\ref{eq:conversion_universal_lap}) are not simultaneously attained.
To visually confirm this fact, the three trade-off functions in (\ref{eq:conversion_universal_lap}) are plotted in Figure \ref{fig:lap_tf}, for the univariate case  on the left panel and for a multivariate case on the right panel.
Let
\begin{equation}\label{eq:L_p}
  \Lc_p(b) : = \Lc_p(b; \thetav, \Xc^n) = \inf_{S \sim S'} T(\Mv_{\Lap}(S; b),
\Mv_{\Lap}(S'; b ))
\end{equation}
 denote the trade-off function representing the worst-case Laplace trade-offs, where $p$ is the dimension of the statistic $\thetav(S) \in \Real^p$.
It can be observed in Figure \ref{fig:lap_tf} that
there exists a unique $\alpha_0 \in (0,1)$ such that
$T_{\varepsilon_{\mu} \mbox{-} \mathrm{DP}}(\alpha_0) = G_{\mu}(\alpha_0)$, but  $ \Lc_p( b_{\mu}^{\varepsilon})(\alpha_0) > T_{\varepsilon_{\mu} \mbox{-} \mathrm{DP}}(\alpha_0)$ (for both univariate and multivariate cases).

In the two-step calibration above, we have used the $L_1$-sensitivity $\Delta_1(\thetav; \Xc^n)$ of ${\thetav}$ only for calibration with respect to the $\varepsilon$-DP criterion. Naturally, an improvement over the two-step calibration can be obtained by directly using the $L_1$-sensitivity with respect to GDP criterion. We discuss the univariate case (i.e., the case where $p = 1$) first, and then move on to the multivariate case.

\subsubsection{Calibration of univariate Laplace mechanisms}\label{sec:Lap_univariate_sub}
Let $p = 1$. 
For this univariate case, 
we write $\theta$ and $M_{\rm Lap}(\cdot ; b)$ for $\thetav$ and $\Mv_{\rm Lap}(\cdot ; b)$. In calibration of the scale parameter $b$ of $M_{\rm Lap}(\cdot ; b)$ to satisfy $\mu$-GDP (for a given $\mu>0$), we find the smallest $b$ satisfying
$$ \Lc_1(b) \ge G_\mu,$$
where the trade-off function $\Lc_1(b)$ is defined in (\ref{eq:L_p}).
The expression of the trade-off function $\Lc_1(b)$ is useful for such purpose.

\begin{lem}\label{lem:univariate_laplace_tradeoff} For any $b>0$ and $\Delta_1 = \Delta_1(\theta; \Xc^n) = \sup_{S \sim S'} | \theta(S) - \theta(S')|$, the value of the trade-off function $\Lc_1(b):[0,1] \to [0,1]$ at $\alpha \in [0,1]$ is, for $\delta = \Delta_1 / b$,
\begin{align}\label{eq:Lap_tradeoff_1dim}
    \Lc_1(b)(\alpha) =
    \begin{cases}
      1 - e^\delta \alpha, & \text{if $\alpha < \frac{1}{2}e^{-\delta}$};\\
      \frac{e^{-\delta}}{4\alpha}, & \text{if $\frac{1}{2}e^{-\delta} \leq \alpha < \frac{1}{2}$};\\
      e^{-\delta}(1-\alpha), & \text{otherwise.}
    \end{cases}
\end{align}
\end{lem}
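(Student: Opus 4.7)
}
The plan is to reduce to a single neighboring pair by monotonicity in $\Delta$, apply the Neyman--Pearson lemma to the likelihood ratio between two Laplace densities, and then parametrize one-sided rejection regions by a threshold $c \in \Real$.

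Fix neighboring data sets $S\sim S'$ and write $\theta_0 = \theta(S)$, $\theta_1 = \theta(S')$, $\Delta_0 = |\theta_1 - \theta_0|$ and $\delta_0 = \Delta_0 / b$. Without loss of generality assume $\theta_1 \ge \theta_0$, so $P = \mathrm{Lap}(\theta_0, b)$ and $Q = \mathrm{Lap}(\theta_1, b)$. First I would show that $T(P,Q)(\alpha)$ is given by the right-hand side of \eqref{eq:Lap_tradeoff_1dim} with $\delta$ replaced by $\delta_0$. Since the candidate formula is pointwise strictly decreasing in $\delta_0$, and $\delta_0 \le \delta = \Delta_1/b$ with the supremum attained (or approached) over neighbors, taking the infimum over $S\sim S'$ in \eqref{eq:L_p} yields the value at $\delta_0 = \delta$. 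This handles the reduction step, so the real work is in computing $T(P,Q)$ for a single fixed pair.

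For the computation, I would write out the likelihood ratio
\[
L(x) = \frac{dQ}{dP}(x) = \exp\!\left(\frac{|x - \theta_0| - |x - \theta_1|}{b}\right),
\]
which is constantly $e^{-\delta_0}$ on $(-\infty, \theta_0]$, strictly increasing from $e^{-\delta_0}$ to $e^{\delta_0}$ on $[\theta_0,\theta_1]$ (equal to $e^{(2x-\theta_0-\theta_1)/b}$), and constantly $e^{\delta_0}$ on $[\theta_1,\infty)$. Because $L$ is non-decreasing, by the Neyman--Pearson lemma the most powerful level-$\alpha$ tests are (possibly randomized) threshold tests $\phi_c(x) = \mathbf{1}\{x > c\}$. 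I would then tabulate $(\alpha_c,\beta_c) = (P(X>c), Q(X\le c))$ on the three regimes for $c$:
\begin{itemize}
\item $c \ge \theta_1$: direct integration gives $\alpha_c = \tfrac12 e^{-(c-\theta_0)/b}$ and $\beta_c = 1 - \tfrac12 e^{-(c-\theta_1)/b}$; eliminating $c$ yields $\beta_c = 1 - e^{\delta_0}\alpha_c$ for $\alpha_c \in [0, \tfrac12 e^{-\delta_0}]$.
\item $\theta_0 \le c \le \theta_1$: $\alpha_c = \tfrac12 e^{-(c-\theta_0)/b}$ and $\beta_c = \tfrac12 e^{-(\theta_1-c)/b}$, so $\alpha_c\beta_c = \tfrac14 e^{-\delta_0}$, giving $\beta_c = e^{-\delta_0}/(4\alpha_c)$ for $\alpha_c \in [\tfrac12 e^{-\delta_0}, \tfrac12]$.
\item $c \le \theta_0$: $\alpha_c = 1 - \tfrac12 e^{-(\theta_0-c)/b}$ and $\beta_c = \tfrac12 e^{-(\theta_1-c)/b} = e^{-\delta_0}(1-\alpha_c)$ for $\alpha_c \in [\tfrac12, 1]$.
\end{itemize}
As $c$ ranges over $\Real$, $\alpha_c$ sweeps continuously across $[0,1]$, so every level $\alpha$ is attained by a deterministic threshold test and no randomization is needed; this immediately produces the piecewise formula. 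I would verify continuity at the break points $\alpha = \tfrac12 e^{-\delta_0}$ and $\alpha = \tfrac12$ (both pieces agree at $\beta = \tfrac12$ and $\beta = \tfrac12 e^{-\delta_0}$, respectively) and, as a sanity check, verify the involution property $f\circ f = \mathrm{id}$ required of any trade-off function.

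The main obstacle, and it is a minor one, is handling the two flat pieces of $L$ correctly: a priori one worries that the Neyman--Pearson test on $\{L > t\}$ at the level $t \in \{e^{-\delta_0}, e^{\delta_0}\}$ must be randomized to attain a prescribed $\alpha$. The tabulation above shows this is unnecessary because the three $c$-regimes already produce a continuous bijection onto $\alpha \in [0,1]$; hence the pointwise infimum defining $T(P,Q)$ is achieved and equals the stated formula. Combining with the opening reduction step completes the proof.
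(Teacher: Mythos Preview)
Your proposal is correct and follows the same overall strategy as the paper: reduce to the worst neighboring pair by monotonicity of the Laplace trade-off in the shift parameter, then compute the single-pair trade-off via the Neyman--Pearson lemma. The one notable difference is in how the NP test is parametrized. The paper thresholds on the log-likelihood ratio $l(x)=|x|-|x-\delta|$, which has flat plateaus at $\pm\delta$, and therefore explicitly introduces a randomization parameter $\gamma$ on the events $\{l(X)=\pm\delta\}$ to sweep through all levels $\alpha$. You instead exploit the monotone likelihood ratio in $x$ to threshold directly on the observation via $\phi_c=\mathbf{1}\{x>c\}$; since $c\mapsto P(X>c)$ is continuous for the Laplace law, every $\alpha$ is hit by a deterministic test and randomization is bypassed. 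Both routes are valid; yours is slightly slicker in one dimension, while the paper's likelihood-ratio parametrization is what generalizes cleanly to the bivariate computation in Section~\ref{subsubsec:2dim_Lap}, where no single monotone sufficient statistic is available.
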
 
Using the exact expressions of $\Lc_1(b)$ and $G_\mu$  given in (\ref{eq:Lap_tradeoff_1dim}) and (\ref{eq:Gmu}) respectively, one can directly compare $\Lc_1(b)(\alpha)$ with $G_\mu(\alpha)$ at $\alpha \in (0,1)$. Such comparison leads to the following result on the \emph{optimal} calibration of univariate Laplace mechanism for $\mu$-GDP. For $\mu > 0$ and $\Delta_1 > 0$, define
$$b_\mu^{\Delta_1}
     := {\Delta_1}/[{-2\log\{2\Phi(-\tfrac{\mu}{2})\}}].$$

\begin{prop}
\label{prop:1dim_lap_stronger}
Let $\theta: \Xc^n \to \Real$ be any univariate statistic with   $\Delta_1(\theta; \Xc^n) = \Delta_1$, and $\Delta_1 \ge 0$.
\begin{itemize}
  \item[(i)] For any $\mu > 0$, the Laplace mechanism with scale parameter $b$, $M_{\rm Lap}(\cdot ; b)$, is $\mu$-GDP if and only if
      $  b \ge b_\mu^{\Delta_1}.$ Moreover, $b_\mu^{\Delta_1} < b_\mu^\varepsilon$.
  \item[(ii)] Conversely, for any $b > 0$, $M_{\rm Lap}(\cdot ; b)$ is $\mu$-GDP if and only if $\mu \ge -2\Phi^{-1}\left\{ \frac{1}{2}{\exp(\frac{-\Delta_1}{2b})} \right\}$.
\end{itemize}
 \end{prop}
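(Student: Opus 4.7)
The plan is to rewrite the $\mu$-GDP criterion as the pointwise domination $\Lc_1(b)(\alpha) \ge G_\mu(\alpha)$ on $(0,1)$, and then to compare the two trade-off functions directly via the explicit piecewise formula \eqref{eq:Lap_tradeoff_1dim}. For necessity in (i), I would evaluate both functions at the Laplace fixed point $\alpha_L = e^{-\delta/2}/2$ (with $\delta = \Delta_1/b$), which lies in the middle hyperbolic piece and is the unique solution of $\Lc_1(b)(\alpha) = \alpha$. Since $G_\mu$ is a convex, self-inverse trade-off function with its own fixed point $\alpha_G = \Phi(-\mu/2)$, one has $G_\mu(\alpha) \le \alpha$ if and only if $\alpha \ge \alpha_G$. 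Therefore $\Lc_1(b) \ge G_\mu$ forces $\alpha_L \ge \alpha_G$, which rearranges exactly to $b \ge b_\mu^{\Delta_1}$.

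For sufficiency, note that $\Lc_1(b)$ is non-decreasing in $b$ (each piece in \eqref{eq:Lap_tradeoff_1dim} is non-decreasing in $b$), so it suffices to treat the boundary case $b = b_\mu^{\Delta_1}$, at which $\delta^* := \Delta_1/b_\mu^{\Delta_1} = -2\log\{2\Phi(-\mu/2)\}$ and $\alpha_L = \alpha_G = \alpha^*$. At this value, both $G_\mu$ and $\Lc_1(b_\mu^{\Delta_1})$ pass through $(\alpha^*, \alpha^*)$ with common slope $-1$. Since both functions are self-inverse (their graphs are symmetric about $y = x$), establishing $\Lc_1 \ge G_\mu$ on $[0, 1/2]$ automatically yields the inequality on $[1/2, 1]$ by reflection. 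On the middle hyperbolic piece $[e^{-\delta^*}/2, 1/2]$, the desired inequality becomes $\alpha G_\mu(\alpha) \le e^{-\delta^*}/4 = (\alpha^*)^2$, which I would verify by showing that $\alpha \mapsto \alpha G_\mu(\alpha)$ has $\alpha^*$ as its unique interior critical point on $(0,1)$. After substituting $u = \Phi^{-1}(1-\alpha)$ and using $G_\mu'(\alpha) = -\phi(u-\mu)/\phi(u)$, the critical-point equation reduces to $\Phi(u-\mu)/\phi(u-\mu) = \Phi(-u)/\phi(-u)$; strict monotonicity of $x \mapsto \Phi(x)/\phi(x)$ (a consequence of log-concavity of $\Phi$) yields $u = \mu/2$ as the unique solution. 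Since $\alpha G_\mu(\alpha)$ vanishes at both endpoints of $[0,1]$, the unique critical point $\alpha^*$ is the global maximum. On the left linear piece $[0, e^{-\delta^*}/2]$, where $\Lc_1 = 1 - e^{\delta^*}\alpha$, the difference $(1 - e^{\delta^*}\alpha) - G_\mu(\alpha)$ is concave (affine minus convex), so non-negativity follows from checking the two endpoints: equality holds at $\alpha = 0$, and at $\alpha = e^{-\delta^*}/2$ the needed inequality $G_\mu(e^{-\delta^*}/2) \le 1/2$ reduces to $\Phi(-\mu) \le 2\Phi(-\mu/2)^2$, which is again a direct consequence of log-concavity of $\Phi$ applied at the midpoint of $\{0, -\mu\}$.

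The strict comparison $b_\mu^{\Delta_1} < b_\mu^{\varepsilon}$ in (i) is a short calculation: after taking logs, it reduces to $4\Phi(\mu/2)\Phi(-\mu/2) < 1$ for $\mu > 0$, which follows from AM--GM applied to $\Phi(\mu/2) + \Phi(-\mu/2) = 1$ (strict since $\mu \neq 0$). Statement (ii) is then an algebraic restatement of (i): $\mu \ge -2\Phi^{-1}\{\tfrac{1}{2}\exp(-\Delta_1/(2b))\}$ rearranges to $2\Phi(-\mu/2) \le \exp(-\Delta_1/(2b))$, which is precisely $b \ge b_\mu^{\Delta_1}$. I expect the main obstacle to be the middle-piece estimate, since the tangency and common convexity of $G_\mu$ and $\Lc_1$ at $\alpha^*$ do not by themselves prevent additional crossings on the hyperbolic piece; the argument crucially depends on strict monotonicity of the inverse Mills ratio $x \mapsto \Phi(x)/\phi(x)$ to exclude spurious critical points of $\alpha G_\mu(\alpha)$ elsewhere on $(0,1)$.
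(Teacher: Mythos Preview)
Your proof is correct and takes a genuinely different route on the crucial step. Both you and the paper reduce to the boundary case $\delta^* = \Delta_1/b_\mu^{\Delta_1} = -2\log\{2\Phi(-\mu/2)\}$ and then compare $\Lc_1(b_\mu^{\Delta_1})$ with $G_\mu$ piecewise; the divergence is in how the middle hyperbolic piece is handled. The paper (via its Lemma on $T_{\Lap_1,\delta}$ versus $G_\mu$) works with the derivative difference directly: it introduces $g(\alpha) = \log\alpha - \tfrac{\mu}{2}\Phi^{-1}(\alpha) - \tfrac{\mu^2}{4} - \log\Phi(-\mu/2)$, reparametrizes by $\eta = -2\Phi^{-1}(\alpha)$, and does a case split on the sign of $\bar h(0) = 2\phi(0) - \mu/2$ to locate where the derivative difference changes sign, finally invoking $2\Phi(-\mu/2)^2 > \Phi(-\mu)$ at $\alpha = 1/2$. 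Your reformulation $\Lc_1 \ge G_\mu \iff \alpha G_\mu(\alpha) \le (\alpha^*)^2$ on the hyperbolic piece is cleaner: the critical-point equation $\Phi(u-\mu)/\phi(u-\mu) = \Phi(-u)/\phi(-u)$ is resolved in one stroke by strict monotonicity of $\Phi/\phi$, with no case analysis. Your fixed-point argument for necessity ($\alpha_L \ge \alpha_G$) is also more transparent than the paper's formulation, though the underlying content is identical. On the remaining pieces---the left linear segment via concavity of the difference and the log-concavity estimate $\Phi(-\mu) \le 2\Phi(-\mu/2)^2$, the symmetry reduction to $[0,1/2]$, the AM--GM for $b_\mu^{\Delta_1} < b_\mu^\varepsilon$, and the algebraic equivalence in~(ii)---your argument and the paper's coincide in substance.
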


Proposition~\ref{prop:1dim_lap_stronger} tells us that $b_\mu^{\Delta_1} = \inf\{ b >0 : \Lc_1(b) \ge G_{\mu} \}$, and as a result, we may say that a univariate Laplace mechanism with scale parameter $b_\mu^{\Delta_1}$ is optimally calibrated to satisfy $\mu$-GDP. As visualized in Figure~\ref{fig:lap_tf}(a), the graph of $\Lc_1(b_\mu^{\Delta_1})$ sits just above the graph of $G_\mu$; any $b < b_\mu^{\Delta_1}$ results in  $\Lc_1(b)(\alpha) < G_\mu(\alpha)$ for some $\alpha\in (0,1)$.

 Recall that in Lemma~\ref{lem:iso_Gauss_stronger} we have established a result on an optimal calibration for a class of \emph{multivariate} Gaussian mechanisms. Since Proposition~\ref{prop:1dim_lap_stronger} above states an optimality result for univariate Laplace mechanisms similar to the conclusion of Lemma~\ref{lem:iso_Gauss_stronger}, one may conjecture that the above result generalizes to the case of multivariate Laplace mechanisms. This conjecture is \emph{partly} true, as we discuss in the next subsection.

\subsubsection{Calibration of multivariate Laplace mechanisms} \label{sec:Lap_multivariate_sub}

A simple-minded approach of calibrating multivariate Laplace mechanism, $\Mv_{\rm Lap}(\cdot ; b)$, for $p \ge 2$ is to directly compare the exact expression for the trade-off function $\Lc_p(b; \thetav, \Xc^n)$ with $G_\mu$, as done for the $p = 1$ case in Section~\ref{sec:Lap_univariate_sub}. 
Unfortunately, the exact expression for the trade-off function $\Lc_p(b; \thetav, \Xc^n)$ is extremely dependent on the shape of the sensitivity space $\Sc_{\thetav}$, and turns out to be quite messy even for $p = 2$. 
Evaluating $\Lc_p(b; \thetav, \Xc^n) = \inf_{S \sim S'} T(\Mv_{\Lap}(S; b),
\Mv_{\Lap}(S'; b ))$ requires an evaluation of the trade-off function   between two multivariate Laplace distributions,
since for any $S,S' \in \Xc^n$, $b>0$,
\begin{align}
 T(\Mv_{\Lap}(S; b), \Mv_{\Lap}(S'; b ))  &  = T({\rm Lap}_p(\thetav(S), b), {\rm Lap}_p(\thetav(S'),b)) \nonumber \\
   & = T({\rm Lap}_p(\0v, 1), {\rm Lap}_p(\deltav,1)),  \label{eq:LapMulti}
\end{align}
where  $\deltav = \tfrac{1}{b}(\thetav(S) - \thetav(S'))$, due to the invariance of trade-off functions under simultaneous affine transformations (see Lemma~\ref{lem:Tf-linearity}). A key observation that can be used  for further simplifying (\ref{eq:LapMulti}) is the following.
\begin{lem} \label{lem-high2one}
    Let $p \ge 1$. For any $\deltav = (\delta_1, \dots, \delta_p) \in \Real^p$, we have
          \begin{equation}\label{eq:trade-off_multiLap}
       T(\Lap_p(\0v_p, 1), \Lap_p(\deltav, 1))
    \ge T(\Lap(0, 1), \Lap(\|\deltav\|_1, 1)).
      \end{equation}
    \begin{itemize}
      \item[(i)] If the number of nonzero elements in $\deltav$ is at most one, then the equality in (\ref{eq:trade-off_multiLap}) holds for all $\alpha \in (0,1)$.
      \item[(ii)] If at least two elements in $\deltav$ are nonzero, then strict inequality  in (\ref{eq:trade-off_multiLap}) holds for some $\alpha \in (0,1)$. That is,
      there exists an interval $(a,b)$, $0<a<b<1$, such that for any $\alpha \in(a,b)$, $T(\Lap_p(\0v_p, 1), \Lap_p(\deltav, 1)) (\alpha) > T(\Lap(0, 1), \Lap(\|\deltav\|_1, 1)) (\alpha)$.
    Moreover, $(a, b)$ contains $\alpha_0 = \tfrac{1}{2}e^{-\|\deltav\|_1/2}$ which satisfies that $T(\Lap(0, 1), \Lap(\|\deltav\|_1, 1))(\alpha_0) = \alpha_0$.
    \end{itemize}
\end{lem}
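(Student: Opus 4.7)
The proof has three parts: the general inequality (\ref{eq:trade-off_multiLap}) via the data processing inequality (DPI), equality in case (i) via an explicit coupling, and strict inequality in (ii) via a total-variation/fixed-point argument. By Lemma~\ref{lem:Tf-linearity} applied to the coordinate-wise sign flip $x_i \mapsto \mathrm{sgn}(\delta_i) x_i$, I may assume without loss of generality that $\delta_i \ge 0$ for all $i$, and set $d := \|\deltav\|_1 = \sum_i \delta_i$.

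For equality in (i), when $\delta_{i_0}$ is the only nonzero coordinate, the randomized map $K : \Real \to \Real^p$ defined by $K(x) = (Z_1, \ldots, Z_{i_0-1}, x, Z_{i_0+1}, \ldots, Z_p)$ with $Z_j \stackrel{\mathrm{iid}}{\sim} \Lap(0,1)$ pushes $\Lap(0,1)$ to $\Lap_p(\0v, 1)$ and $\Lap(d, 1)$ to $\Lap_p(\delta_{i_0}\ev_{i_0}, 1) = \Lap_p(\deltav, 1)$. The coordinate projection $\pi_{i_0}$ pushes the multivariate pair back to the univariate pair, so two-sided DPI (Lemma~\ref{lem:post-processing}) yields pointwise equality of the two trade-off functions on $[0,1]$.

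For the general inequality I use the tensorization identity $T(\Lap_p(\0v, 1), \Lap_p(\deltav, 1)) = \bigotimes_{i=1}^p T_{\delta_i}^L$, where $T_a^L := T(\Lap(0,1), \Lap(a,1))$ and $\otimes$ is the tensor product of trade-off functions of Dong et al.\ (2022). Induction on the number of nonzero coordinates (using part (i) as the base, together with monotonicity of $\otimes$ in each argument) reduces the statement to the two-fold bound $T_a^L \otimes T_b^L \ge T_{a+b}^L$ for all $a, b \ge 0$. I would prove this by constructing a Markov kernel $K : \Real \to \Real^2$ pushing $\Lap(0,1)$ to $\Lap_2(\0v, 1)$ and $\Lap(a+b, 1)$ to $\Lap_2((a, b), 1)$, then applying DPI. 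The construction of this kernel is the principal technical obstacle: the linear ansatz $Y_i = \alpha_i X + W_i$ with $\alpha_1 = a/(a+b)$, $\alpha_2 = b/(a+b)$ correctly matches both marginal distributions if each $W_i$ is taken to be the mixture $\alpha_i^2 \delta_0 + (1-\alpha_i^2)\Lap(0,1)$ (whose characteristic function $(1+\alpha_i^2 t^2)/(1+t^2)$ is the required quotient), but it produces positively correlated $(Y_1, Y_2)$; ensuring independence under both hypotheses requires a more delicate joint construction of $(W_1, W_2)$.

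For the strict inequality in (ii), the key fact is that for any pair $(P,Q)$ satisfying $T(P,Q) = T(Q,P)$ (which holds for our Laplace pairs via the involution $\xv \mapsto \deltav - \xv$ that swaps the two distributions), the unique fixed point $\alpha^*$ of $T(P,Q)$ satisfies $2\alpha^* = 1 - \mathrm{TV}(P,Q)$, by the identity $\inf_\alpha(\alpha + T(\alpha)) = 1 - \mathrm{TV}(P,Q)$ with the infimum attained at $\alpha^*$. Direct integration gives $\mathrm{TV}(\Lap(0,1), \Lap(d,1)) = 1 - e^{-d/2}$, so $\alpha_0 = \tfrac{1}{2}e^{-d/2}$ is the fixed point of $T(\Lap(0,1), \Lap(d,1))$, confirming $T(\Lap(0,1),\Lap(d,1))(\alpha_0) = \alpha_0$. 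I would then show
\[
\mathrm{TV}(\Lap_p(\0v, 1), \Lap_p(\deltav, 1)) < 1 - e^{-d/2}
\]
whenever at least two $\delta_i$'s are nonzero. Writing the log-likelihood ratio $L = \sum_i \phi_i(X_i)$ with $\phi_i(x) = |x| - |x - \delta_i|$, using the symmetry $L^{H_1} \stackrel{d}{=} -L^{H_0}$ to express $\mathrm{TV}_{\mathrm{multi}} = 1 - 2 P_{H_0}(L > 0) - P_{H_0}(L = 0)$, and performing a direct case analysis over the atomic and continuous contributions of each factor $\phi_i(X_i)$ (the atomic masses at the extremes $\pm\delta_i$ of each coordinate shrink geometrically in $p$, leaving excess probability in the interior) yields the strict TV inequality. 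Consequently $\alpha^*_{\mathrm{multi}} > \alpha_0$, and since convex trade-off functions lie strictly above the diagonal on $[0, \alpha^*)$, we obtain $T_{\mathrm{multi}}(\alpha_0) > \alpha_0 = T_{\mathrm{uni}}(\alpha_0)$. Continuity of $T_{\mathrm{multi}} - T_{\mathrm{uni}}$ then extends the strict inequality to an open neighborhood $(a, b)$ of $\alpha_0$, as claimed.
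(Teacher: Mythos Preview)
Your plan for part (i) and the fixed-point/TV framework for part (ii) are sound and morally match the paper. The real gap is the main inequality (\ref{eq:trade-off_multiLap}): you try to prove $T_a^L \otimes T_b^L \ge T_{a+b}^L$ by exhibiting a Markov kernel $K:\Real\to\Real^2$ that simultaneously pushes $\Lap(0,1)\mapsto\Lap_2(\0v,1)$ and $\Lap(a+b,1)\mapsto\Lap_2((a,b),1)$, and you correctly identify that the linear ansatz fails to produce independence. In fact the existence of such a kernel is \emph{equivalent} (by Blackwell's theorem) to the trade-off inequality you are trying to prove, so this route is circular unless you actually build $K$---which is not easy here.

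The paper avoids the kernel entirely. It uses the composition lemma of \cite{dong2022gaussianDP} (if $T(P,Q)\ge f$ and $T(Q,R)\ge g$ then $T(P,R)(\alpha)\ge g(1-f(\alpha))$) with $P=\Lap_2(\0v,1)$, $Q=\Lap_2((\delta_1,0),1)$, $R=\Lap_2((\delta_1,\delta_2),1)$, so that $T(P,Q)=T_{\delta_1}^L$ and $T(Q,R)=T_{\delta_2}^L$ exactly. Then, because the univariate Laplace density is log-concave, one has the closed form $T_t^L(\alpha)=F(F^{-1}(1-\alpha)-t)$ with $F$ the $\Lap(0,1)$ cdf, and the algebraic identity
\[
T_{\delta_2}^L\bigl(1-T_{\delta_1}^L(\alpha)\bigr)=F\bigl(F^{-1}(1-\alpha)-\delta_1-\delta_2\bigr)=T_{\delta_1+\delta_2}^L(\alpha)
\]
gives $T_{\delta_1}^L\otimes T_{\delta_2}^L = T(P,R) \ge T_{\delta_1+\delta_2}^L$ in one line. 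Induction then handles general $p$. This is much cleaner than a kernel construction.

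For part (ii), your TV argument is equivalent to the paper's, via $1-2\alpha^*=\mathrm{TV}$. Rather than compute $\mathrm{TV}(\Lap_p(\0v,1),\Lap_p(\deltav,1))$ from scratch, the paper simply reads the bivariate fixed point $\alpha_0^{(2)}=\tfrac{1}{4}e^{-\|\deltav\|_1/2}(2+\delta_2)$ directly off the explicit expression for $T(\Lap_2(\0v,1),\Lap_2(\deltav,1))$ that they derive separately, and observes $\alpha_0^{(2)}>\alpha_0$ whenever $\delta_2>0$. For $p\ge 3$ both you and the paper reduce to $p=2$ by tensoring down via the already-established inequality (\ref{eq:trade-off_multiLap}); note that this reduction requires (\ref{eq:trade-off_multiLap}) as input, so your ``direct case analysis'' for general $p$ is not needed once the main inequality is in hand.
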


\begin{remark}
The trade-off function between multivariate \emph{Gaussian} distributions can be represented exactly by a corresponding trade-off function between univariate Gaussian distributions: For any $p\ge 1$, and for any $\deltav \in \Real^p$,
 $T(N_p(\0v_p, 1), N_p(\deltav, 1))
    = T(N(0, 1), N(\|\deltav\|_2, 1))$ (see Lemma~\ref{lem:mv-Tf-normal}). In contrast, the trade-off function between multivariate Laplace distribution can only be lower-bounded by a trade-off function between univariate Laplace distributions, as stated in Lemma~\ref{lem-high2one}.
\end{remark}

As a result of Lemma~\ref{lem-high2one} and using (\ref{eq:LapMulti}), the trade-off function $\Lc_p(b; \thetav, \Xc^n)$ of a multivariate Laplace mechanism can be lower-bounded by a trade-off function between \emph{univariate} Laplace distributions, as shown in the equations below:
\begin{align*}
\Lc_p(b; \thetav, \Xc^n)  & =  \inf_{S \sim S'} T(\Mv_{\Lap}(S; b),
\Mv_{\Lap}(S'; b ))\\
   &  \ge \inf_{S \sim S'} T(\Lap(0,1),\Lap(\|\thetav(S) - \thetav(S')\|_1/b,1)) \\
   & = T(\Lap(0,1),\Lap( \sup_{S \sim S'}\|\thetav(S) - \thetav(S')\|_1/b,1))\\
   & = T(\Lap(0,1),\Lap( {\Delta_1(\thetav, \Xc^n)}/{b},1)).
\end{align*}

The above observation allows us to generalize the optimal calibration result on univariate Laplace mechanisms  in Proposition~\ref{prop:1dim_lap_stronger} to multivariate cases. Recall that for $\mu > 0$ and $\Delta_1 > 0$,
$b_\mu^{\Delta_1}
     = {\Delta_1}/[{-2\log\{2\Phi(-\tfrac{\mu}{2})\}}].$

\begin{thm} \label{thm:improved_lap_mech}
Let $p \ge 1$, and $\Delta_1 > 0$.
\begin{itemize}
  \item[(i)] For any $\mu > 0$, the multivariate Laplace mechanism with scale parameter $b$, $\Mv_{\Lap}(\cdot ; b)$, is $\mu$-GDP for any $\thetav$ with $\Delta_1(\thetav, \Xc^n) \le \Delta_1$ if and only if
      $  b \ge b_\mu^{\Delta_1}.$
  \item[(ii)] Conversely, for any $b > 0$, $\Mv_{\rm Lap}(\cdot ; b)$ is $\mu$-GDP for any   $\thetav$ with $\Delta_1(\thetav, \Xc^n) \le \Delta_1$ if and only if $\mu \ge -2\Phi^{-1}\left\{ \frac{1}{2}{\exp(\frac{-\Delta_1}{2b})} \right\} $.
  \item[(iii)] Suppose $p \ge 2$. Then for any given $\thetav$ with $\Delta_1(\thetav, \Xc^n) \le \Delta_1$, $\Mv_{\Lap}(\cdot ; b)$, is $\mu$-GDP if  $  b \ge b_\mu^{\Delta_1}.$ However, there exists $\thetav$ such that $\Delta_1(\theta, \Xc^n) \le \Delta_1$ and $\Mv_{\Lap}(\cdot ; b)$ is $\mu$-GDP for some $b < b_\mu^{\Delta_1}.$
\end{itemize}
\end{thm}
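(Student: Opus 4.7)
The plan is to reduce the multivariate problem to the univariate one by combining Lemma~\ref{lem-high2one} with Proposition~\ref{prop:1dim_lap_stronger}, and to handle the three parts of the statement separately.

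For part (i), the sufficient direction follows from the chain
\begin{equation*}
  \Lc_p(b;\thetav,\Xc^n) \;\ge\; T(\Lap(0,1),\Lap(\Delta_1(\thetav,\Xc^n)/b,1)) \;\ge\; T(\Lap(0,1),\Lap(\Delta_1/b,1)),
\end{equation*}
where the first inequality is Lemma~\ref{lem-high2one} applied pointwise to each neighboring pair and taking the infimum, and the second uses that $\delta \mapsto T(\Lap(0,1),\Lap(\delta,1))$ is pointwise decreasing (by Neyman--Pearson or by inspection of the formula in Lemma~\ref{lem:univariate_laplace_tradeoff}) together with $\Delta_1(\thetav,\Xc^n)\le \Delta_1$. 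The rightmost expression equals $\Lc_1(b;\theta_0,\Xc^n)$ for a univariate $\theta_0$ of sensitivity $\Delta_1$, and Proposition~\ref{prop:1dim_lap_stronger}(i) shows this is $\ge G_\mu$ iff $b\ge b_\mu^{\Delta_1}$. For the converse, I would embed the univariate worst case into $\Real^p$ by taking $\thetav(S)=(\theta_0(S),0,\ldots,0)^\top$: its neighboring differences have at most one nonzero coordinate, so Lemma~\ref{lem-high2one}(i) gives equality in the first displayed inequality, and Proposition~\ref{prop:1dim_lap_stronger}(i) then shows $\mu$-GDP fails whenever $b<b_\mu^{\Delta_1}$.

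Part (ii) is an algebraic reformulation of (i). Starting from the defining equation $b_\mu^{\Delta_1} = \Delta_1/[-2\log\{2\Phi(-\mu/2)\}]$, I would solve for $\mu$ to obtain $\mu = -2\Phi^{-1}\{\tfrac{1}{2}\exp(-\Delta_1/(2b))\}$. Since $\mu \mapsto b_\mu^{\Delta_1}$ is strictly increasing (as $\mu \mapsto \Phi(-\mu/2)$ is strictly decreasing), the condition $b\ge b_\mu^{\Delta_1}$ of (i) is equivalent to the stated lower bound on $\mu$.

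For part (iii), the sufficient condition is immediate from (i) applied to the given $\thetav$. The interesting content is exhibiting a $\thetav$ for which a strictly smaller $b<b_\mu^{\Delta_1}$ already suffices. The canonical example is the contingency table of Example~\ref{ex:contingency-table}, whose neighboring differences $\ev_i-\ev_j$ have exactly two nonzero coordinates. By Lemma~\ref{lem-high2one}(ii), for each such difference the trade-off function $T(\Lap_p(\0v,1),\Lap_p((\ev_i-\ev_j)/b,1))$ strictly exceeds the univariate surrogate $T(\Lap(0,1),\Lap(\Delta_1/b,1))$ on a neighborhood of the point $\alpha_0=\tfrac{1}{2}e^{-\Delta_1/(2b)}$. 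At $b=b_\mu^{\Delta_1}$, the piecewise formula in Lemma~\ref{lem:univariate_laplace_tradeoff} shows that the univariate surrogate is tangent to $G_\mu$ exactly at $\alpha_0$ and strictly above $G_\mu$ elsewhere, so the multivariate trade-off $\Lc_p(b_\mu^{\Delta_1};\thetav,\Xc^n)-G_\mu$ is strictly positive on all of $(0,1)$. A continuity-in-$b$ argument then yields the desired $b'<b_\mu^{\Delta_1}$.

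The main obstacle is this last continuity step: I need uniform slack of $\Lc_p(b_\mu^{\Delta_1};\thetav,\Xc^n)$ over $G_\mu$ on $(0,1)$, and then continuity of $\Lc_p$ in $b$ strong enough to preserve the inequality globally. For the contingency table, $\Lc_p(b;\thetav,\Xc^n)$ is the pointwise minimum of finitely many trade-off functions (one per equivalence class of $\ev_i-\ev_j$), each smooth in $b$, so the technical issue reduces to matching the local slack near $\alpha_0$ (from Lemma~\ref{lem-high2one}(ii)) with the explicit slack away from $\alpha_0$ (from the piecewise form of $\Lc_1$) and invoking compactness on $[\eta,1-\eta]$ together with boundary control as $\alpha\to 0^+,1^-$.
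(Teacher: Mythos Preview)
Your proposal is correct and follows essentially the same approach as the paper: parts (i) and (ii) are handled identically (reduction via Lemma~\ref{lem-high2one} to the univariate case and then Proposition~\ref{prop:1dim_lap_stronger}, with the converse established by embedding a one-coordinate difference), and part (iii) uses the same two-step strategy of first showing $\Lc_p(b_\mu^{\Delta_1};\thetav,\Xc^n)>G_\mu$ strictly on all of $(0,1)$ and then pushing $b$ down by a continuity/compactness argument with boundary control from the linear part of $\Lc_p$ versus the divergent slope of $G_\mu$. The only cosmetic difference is the choice of witness statistic in (iii): the paper takes $\Sc_{\thetav}=\{\tfrac{\Delta_1}{2}(\ev_i+\ev_j):i\neq j\}$ so that the $L_1$-sensitivity is exactly $\Delta_1$ for arbitrary $\Delta_1>0$, whereas your contingency-table example has $\Delta_1(\thetav)=2$ and would need a trivial rescaling to match a general $\Delta_1$; the subsequent analysis (reduction to a bivariate Laplace trade-off with two equal nonzero shifts, strict gap at $\alpha_0$ via Lemma~\ref{lem-high2one}(ii), and the compactness argument) is the same in both.
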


In Theorem~\ref{thm:improved_lap_mech} above, parts (i) and (ii) provide an optimal calibration of Laplace mechanism (with respect to GDP criterion) when the \emph{only} information we have on the multivariate statistic $\thetav$ is the $L_1$-sensitivity. Part (iii), when compared to Proposition~\ref{prop:1dim_lap_stronger}, reveals a subtle difference between the calibrations of univariate and multivariate Laplace mechanisms. Simply put, the Laplace mechanism $\Mv_{\rm Lap}(\cdot ; b_\mu^{\Delta_1})$ is optimal in the univariate case, but is sub-optimal for some multivariate statistics $\thetav$. This conclusion can be visually verified in Figure ~\ref{fig:lap_tf}(b), drawn for the case where the multivariate statistic $\thetav$ is a frequency table: The trade-off function $\Lc_p(b_\mu^{\Delta_1})$
is strictly larger than $G_\mu$, suggesting that calibrating $b$ smaller than $b_\mu^{\Delta_1}$ is possible.

Whether the multivariate Laplace mechanism $\Mv_{\rm Lap}(\cdot ; b_\mu^{\Delta_1})$ is optimal or not depends on the statistic $\thetav$, and it is natural to investigate the sensitivity space $\Sc_{\thetav}$ for further improvement. 
In the next subsection, we take a full advantage of the sensitivity space in calibrating the Laplace mechanisms for the case where $\thetav$ is a frequency table.

\subsection{Laplace mechanisms for frequency tables}
\label{sec:lap_mech_table}

In this subsection, we consider the case where the multivariate statistic $\thetav$ is a frequency table. As shown in Example~\ref{ex:contingency-table}, the sensitivity space of $\thetav$ is
$$\Sc_{\thetav} = \{\thetav(S) - \thetav(S'): S, S'
\in \Xc^n, S\sim S'\} = \{\ev_i - \ev_j : i,j = 1,\ldots, p \}.$$
Since any element in $\Sc_{\thetav}$ involves only two coordinates of $\Real^p$, for any $p$, the trade-off function $\Lc_p(b; \thetav, \Xc^n)$ of a multivariate Laplace mechanism perturbing the frequency table $\thetav(S)$ with Laplace scale parameter $b$ is exactly a trade-off function between \emph{two-dimensional} Laplace distributions. In particular, if $\thetav$ is a frequency table, then for any $b>0$,
\begin{align}
 \Lc_p(b; \thetav, \Xc^n)  & =  \inf_{S \sim S'} T(\Mv_{\Lap}(S; b),
\Mv_{\Lap}(S'; b )) \nonumber \\
   & = \inf_{S \sim S'} T(\Lap_p(\0v,b), \Lap_p ( \thetav(S) - \thetav(S'), b)) \label{eq:freq_1} \\
   & = T(\Lap_p(\0v,b), \Lap_p ( \ev_1 - \ev_2 , b))\nonumber \\
   & = T(\Lap_2(\0v,b), \Lap_2 ( \1v , b)), \nonumber
\end{align}
since taking $\thetav(S) - \thetav(S') = \ev_1 - \ev_2 = (1,-1,0,\ldots,0)^\top \in \Sc_{\thetav}$ achieves the minimum of (\ref{eq:freq_1}).
To distinguish the general trade-off function $\Lc_p(b; \thetav, \Xc^n)$ (used for any $\thetav$) from the special case of frequency tables, write $\Lc_{\rm freq}(b) =  \Lc_p(b; \thetav_0, \Xc^n)= T(\Lap_2(\0v,b), \Lap_2 ( \1v , b))$ when  $\thetav_0$ is a frequency table.

An explicit expression of $\Lc_{\rm freq}(b)$ can be obtained by the Neyman--Pearson lemma: For $0 < \alpha \le \frac{1}{4}e^{-\frac{1}{b}} \left( 2 + \frac{1}{b} \right)$,
\begin{align}\label{eq:Lap_tradeoff_commondelta}
    \Lc_{\rm freq}(b) (\alpha) =
    \begin{cases}
      -e^{\frac{2}{b}}\alpha + 1
      & \text{if $0 \leq \alpha < \frac{1}{4}e^{-\frac{2}{b}}$},\\
      \frac{1}{4}e^{-C_b(\alpha)} \left( 3+C_b(\alpha) \right)
      & \text{if $\frac{1}{4}e^{-\frac{2}{b}} \leq \alpha < \frac{1}{4}e^{-\frac{1}{b}} \left( 1 + \frac{1}{b} \right)$},\\
      -\alpha + \frac{1}{2}e^{-\frac{1}{b}}(2+\frac{1}{b})
      &\text{if $\frac{1}{4}e^{-\frac{1}{b}}(1 + \frac{1}{b}) \leq \alpha \le \frac{1}{4}e^{-\frac{1}{b}}(2 + \frac{1}{b})$},
    \end{cases}
\end{align}
and $\Lc_{\rm freq}(b) (\alpha)$ for $\alpha \ge \frac{1}{4}e^{-\frac{1}{b}} \left( 2 + \frac{1}{b} \right)$ is given by the symmetry about the line $x=y$.
In (\ref{eq:Lap_tradeoff_commondelta}), $C_b(\alpha)  = W(4 \alpha e^{1 + \frac{2}{b}}) - 1$, where $W$ is the Lambert $W$ function \citep{lambert1758observationes} restricted to $[0,\infty)$, so that $W$ is the inverse of the function $w \mapsto we^w$ \citep[Section 4.13,][]{olver2010nist}.
An expression for the trade-off function between more general two-dimensional Laplace distributions, $T(\Lap_2(\deltav_1,b),  \Lap_2 ( \deltav_2 , b))$, for any $\deltav_1, \deltav_2 \in \Real^2$, can be found in Appendix \ref{sec:appendix_proofs1}.

Using the expression (\ref{eq:Lap_tradeoff_commondelta}), one can directly compare $\Lc_{\rm freq}(b)(\alpha)$ with $G_\mu(\alpha)$ at $\alpha \in (0,1)$. For $\mu > 0$, let
\begin{equation} \label{eq:bmu-contingency}
    \tilde b_{\mu} = \inf{\{b>0 \mid \Lc_{\rm freq}(b) \geq G_{\mu}\}}.
\end{equation}

Choosing the scale parameter $b$ of multivariate Laplace mechanisms to be $b = \tilde b_{\mu}$ is optimal when $\thetav$ is a frequency table, in the sense of the following theorem.
\begin{thm} \label{thm:...}
Let $p \ge 2$, and suppose that the multivariate statistic $\thetav: \Xc^n \to \Real^p$ is a frequency table, defined in Example~\ref{ex:contingency-table}. The multivariate Laplace mechanisms $\Mv_{\Lap}(\cdot; b)$ is $\mu$-GDP if and only if $b \ge \tilde{b}_{\mu}$.
\end{thm}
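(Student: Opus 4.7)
The proof proposal is as follows. By Definition \ref{def:GDP}, the mechanism $\Mv_{\Lap}(\cdot; b)$ is $\mu$-GDP if and only if
$$\inf_{S \sim S'} T(\Mv_{\Lap}(S; b), \Mv_{\Lap}(S'; b)) \ge G_\mu.$$
For a frequency table $\thetav$, the derivation carried out just before (\ref{eq:Lap_tradeoff_commondelta}) shows that the left-hand side equals $\Lc_{\rm freq}(b)$, irrespective of $p \ge 2$, because the worst-case neighboring pair yields $\thetav(S) - \thetav(S') = \ev_i - \ev_j$, and by the translation/permutation invariance of trade-off functions (Lemma \ref{lem:Tf-linearity}) this reduces to the two-dimensional case $T(\Lap_2(\0v, b), \Lap_2(\1v, b))$. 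Hence the task reduces to proving that $\Lc_{\rm freq}(b) \ge G_\mu$ if and only if $b \ge \tilde{b}_\mu$.

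Next I would establish that $b \mapsto \Lc_{\rm freq}(b)(\alpha)$ is nondecreasing in $b$ for every fixed $\alpha \in (0,1)$. The cleanest way is to exploit the scaling identity $\Lap_2(\0v, b) \stackrel{d}{=} b \cdot \Lap_2(\0v, 1)$, which together with Lemma \ref{lem:Tf-linearity} yields
$$\Lc_{\rm freq}(b) = T(\Lap_2(\0v, 1), \Lap_2(b^{-1}\1v, 1)).$$
As $b$ increases, the location parameter $b^{-1}\1v$ shrinks toward $\0v$ coordinate-wise, making the two distributions harder to distinguish; monotonicity of the trade-off function under such shrinkage follows either from Lemma \ref{lem-high2one} combined with the elementary monotonicity of univariate Laplace trade-offs, or directly from the explicit formula (\ref{eq:Lap_tradeoff_commondelta}) (the cutoffs move and the values rise monotonically in $b$). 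Pointwise continuity of $b \mapsto \Lc_{\rm freq}(b)(\alpha)$ also follows from (\ref{eq:Lap_tradeoff_commondelta}) since each branch and each breakpoint is continuous in $b$, and the Lambert $W$ function is continuous on $[0,\infty)$.

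With monotonicity and pointwise continuity in hand, the infimum in the definition of $\tilde{b}_\mu$ is attained: $\Lc_{\rm freq}(\tilde{b}_\mu) \ge G_\mu$. Then, for any $b \ge \tilde{b}_\mu$, monotonicity gives $\Lc_{\rm freq}(b) \ge \Lc_{\rm freq}(\tilde{b}_\mu) \ge G_\mu$, so $\Mv_{\Lap}(\cdot; b)$ is $\mu$-GDP. Conversely, if $b < \tilde{b}_\mu$, the definition of $\tilde{b}_\mu$ as the infimum of $\{b' > 0 : \Lc_{\rm freq}(b') \ge G_\mu\}$ together with monotonicity forces $\Lc_{\rm freq}(b) \not\ge G_\mu$ (there exists some $\alpha$ with $\Lc_{\rm freq}(b)(\alpha) < G_\mu(\alpha)$), so $\Mv_{\Lap}(\cdot; b)$ fails to be $\mu$-GDP. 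This proves the equivalence.

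The main obstacle I anticipate is the rigorous verification that the $b$-dependence of the piecewise formula (\ref{eq:Lap_tradeoff_commondelta})—whose three pieces and the Lambert-$W$ term all depend on $b$—is genuinely monotone and continuous across the breakpoints. Careful bookkeeping at the two transition values $\alpha = \tfrac{1}{4}e^{-2/b}$ and $\alpha = \tfrac{1}{4}e^{-1/b}(1+1/b)$ will be required; alternatively, the monotonicity can be obtained abstractly (bypassing the explicit formula) by a domination argument showing that $\Lap_2(\0v, b_2)$ is obtained from $\Lap_2(\0v, b_1)$ by an additional independent noising when $b_2 > b_1$, then invoking the post-processing property of trade-off functions.
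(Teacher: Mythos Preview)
Your overall strategy is correct and in line with the paper, which in fact gives no separate proof of this theorem: the reduction $\inf_{S\sim S'}T(\Mv_{\Lap}(S;b),\Mv_{\Lap}(S';b))=\Lc_{\rm freq}(b)$ is carried out in the display preceding (\ref{eq:Lap_tradeoff_commondelta}), after which $\tilde b_\mu$ is \emph{defined} by (\ref{eq:bmu-contingency}) and the equivalence is treated as immediate. You correctly identify that what is tacitly being used is that $\{b>0:\Lc_{\rm freq}(b)\ge G_\mu\}$ is a closed half-line, which requires monotonicity and continuity of $b\mapsto\Lc_{\rm freq}(b)$.

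Two of the three routes you suggest for monotonicity are flawed, however. First, Lemma~\ref{lem-high2one} only bounds a multivariate Laplace trade-off below by a univariate one; it says nothing comparing $\Lc_{\rm freq}(b_1)$ with $\Lc_{\rm freq}(b_2)$. Second, the ``domination'' alternative you float at the end --- realizing $\Lap_2(\0v,b_2)$ as $\Lap_2(\0v,b_1)$ plus independent noise when $b_2>b_1$ and invoking post-processing --- fails because the Laplace family is not closed under convolution (the sum of independent Laplace variables is not Laplace). Your remaining route, direct inspection of the explicit formula (\ref{eq:Lap_tradeoff_commondelta}), does work but is tedious. A cleaner argument, available from the paper's own toolkit, is to write $\Lc_{\rm freq}(b)=T_{\mathrm{Lap}_1,1/b}\otimes T_{\mathrm{Lap}_1,1/b}$, note that $t\mapsto T_{\mathrm{Lap}_1,t}(\alpha)=F(F^{-1}(1-\alpha)-t)$ is decreasing in $t$ (Remark~\ref{rmk:log-conc-tf}), and apply Proposition~\ref{prop-tensor}(2) twice (together with commutativity) to conclude $\Lc_{\rm freq}(b_1)\le\Lc_{\rm freq}(b_2)$ whenever $b_1\le b_2$. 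Continuity follows the same way, or from (\ref{eq:Lap_tradeoff_commondelta}) as you note. With this correction your proof is complete.
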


The optimality of $\Mv_{\Lap}(\cdot; \tilde b_{\mu})$ for frequency tables can be visually confirmed in Figure~\ref{fig:lap_tf}, in which the trade-off function $\Lc_{\rm freq}(\tilde b_{\mu}) = \Lc_p( \tilde b_{\mu})$ is directly above $G_\mu$.

The optimal scale parameter $ \tilde b_{\mu}$ in (\ref{eq:bmu-contingency}) has no closed-form expression, and is only evaluated numerically. We propose a bisection algorithm of numerically approximating $\tilde b_{\mu}$ for given $\mu>0$; see Algorithm \ref{alg:bisection}.  A key observation in designing such an algorithm is that $\tilde b_{\mu}$ is the unique root of the equation $g_\mu(b) = 0$, where $g_\mu(b) = \min_{\alpha \in I_b} \left\{ \Lc_{\rm freq}(b)(\alpha) - G_{\mu}(\alpha) \right\} = 0$, $I_b = [ \frac{1}{4}e^{-\frac{2}{b}} , \frac{1}{4} e^{-\frac{1}{b}} (1 + \frac{1}{b}) ]$. This is valid since trade-off functions are symmetric about the line $x = y$ and the function $\Lc_{\rm freq}(b)$ is linear on the complement of $I_b$.

\begin{algorithm}[tb]
	\SetAlgoLined\DontPrintSemicolon
	\KwResult{Return numerically approximated $ \tilde b_{\mu} $.}
	Initialize $ b^{U} = b_{\mu}^{\varepsilon} $ and $ b^{L} = 0 $\;
	\While{until the stopping criterion is met}{
		$ b \gets (b^U + b^L)/2 $\;
		$ l \gets \min_{\alpha \in I} (\Lc_{\rm freq}(b)(\alpha) - G_{\mu}(\alpha)) $
		where
		$I = \left[\frac{1}{4}e^{-\frac{2}{b}},	\frac{1}{4} e^{-\frac{1}{b}} (1 + \frac{1}{b}) \right]$\;
		\uIf{$ \vert l \vert < \epsilon_{tol} $}{
			Stopping criterion is met\;
			\Return $ b $\;
		}
		
		\uElseIf{$l > 0$}{
			$ b^U \gets b $\;
		}

		\ElseIf{$l < 0$}{
			$ b^L \gets b $\;
		}
	}
	\caption{Bisection algorithm for finding $ \tilde b_{\mu}$.}\label{alg:bisection}
\end{algorithm}

\section{Comparison between Gaussian and Laplace mechanisms}\label{sec:comparison}

In this section, we compare the statistical utilities of $\mu$-GDP Gaussian and Laplace mechanisms, both perturbing a multivariate statistic $\thetav$. For a fair comparison, we use the $L_r$-cost for all $r \ge 1$, as opposed to just using the $L_2$-cost, as done for comparison among Gaussian mechanisms.

We first compare the ordinary Gaussian mechanism $\Mv_G$ (\ref{eq:gaussianMech_again}) and the multivariate Laplace mechanism $\Mv_{\Lap}(\cdot, b_{\mu}^{\Delta_1})$ (\ref{eq:laplace_mechanism}), since for any $\mu>0$  and for any statistic $\thetav$, both mechanisms are $\mu$-GDP (see Lemma~\ref{lem:iso_Gauss_stronger} and Theorem \ref{thm:improved_lap_mech}). Write $\Mv_{\Lap}$ for $\Mv_{\Lap}(\cdot, b_{\mu}^{\Delta_1})$ for notational simplicity.

Since the conditional $L_r$-costs of $\Mv_G(S)$  and $\Mv_{\Lap}(S)$ do not depend on the data set $S$, we can say that $\Mv_G$ strictly dominates  $\Mv_{\Lap}$ in the $\ell_r$ sense if $L_r(\Mv_{\Lap}) > L_r(\Mv_G)$. To investigate under which conditions the Gaussian mechanism strictly dominates the Laplace mechanism, the relative efficiency
\begin{equation} \label{eq:L_r-cost-ratio}
{\rm RE}_r(\Mv_G , \Mv_{\Lap} ) := \frac{L_r(\Mv_{\Lap})}{L_r(\Mv_G)}
= \left(\frac{\Delta_1}{\Delta_2}\right)^r  2^{\frac{r}{2}} \Gamma\left(\frac{r}{2} + 1\right) \left(\frac{\mu}{-2 \log (2 \Phi (-\frac{\mu}{2}))}\right)^r,
\end{equation}
where $\Delta_r = \Delta_r(\thetav; \Xc^n)$ for $r = 1,2$, can be used. In particular, the Gaussian mechanism dominates the Laplace mechanism if ${\rm RE}_r(\Mv_G , \Mv_{\Lap} ) > 1$. Observe from (\ref{eq:L_r-cost-ratio}) that the relative efficiency depends on  three parameters: privacy parameter $\mu>0$, the ratio of sensitivities $\Delta_1/ \Delta_2\ge 1$, and $r \ge 1$. The dimension $p$ affects the relative efficiency  only through the ratio $\Delta_1/ \Delta_2$.
For each given $r$, ${\rm RE}_r(\Mv_G , \Mv_{\Lap} )$ increases (i.e., the Gaussian mechanism becomes more efficient than the Laplace mechanism) as  $\Delta_1/ \Delta_2$ increases or $\mu$ decreases.  (These claims are based on Lemma \ref{lem:re} in Appendix~\ref{sec:detailsforComparionsSection}, which collects some facts about ${\rm RE}_r(\Mv_G , \Mv_{\Lap} )$.)

Figure \ref{fig:Lap_vs_Gauss} plots the region (of $\mu, \Delta_1/\Delta_2, r$) under which the Gaussian mechanism dominates the Laplace mechanism. We point out that for high privacy situations with smaller privacy parameter $\mu$, the Gaussian mechanism dominates the Laplace mechanism. In particular, if $\mu < 2.55$ (roughly), ${\rm RE}_r(\Mv_G , \Mv_{\Lap} ) > 1$ for any $r \ge 1$ and $\Delta_1/\Delta_2 \ge 1$. Note that the ratio $\Delta_1/\Delta_2$ is typically greater than $1$; if $\thetav$ is a frequency table,  $\Delta_1/\Delta_2 = \sqrt{2}$, and if $\thetav$ is a multivariate mean, $\Delta_1/\Delta_2 = \sqrt{p}$ (for $\Xc = [0,1]^p$). For such typical cases, the Gaussian mechanism dominates the Laplace mechanism for a wide range of $\mu$, and the Laplace mechanism is useful only under very low privacy situations, as shown in Figure \ref{fig:Lap_vs_Gauss}(b).

\begin{figure}[t]
    \centering
    \begin{subfigure}{0.45\textwidth}
        \centering
        \includegraphics[width=.95\textwidth]{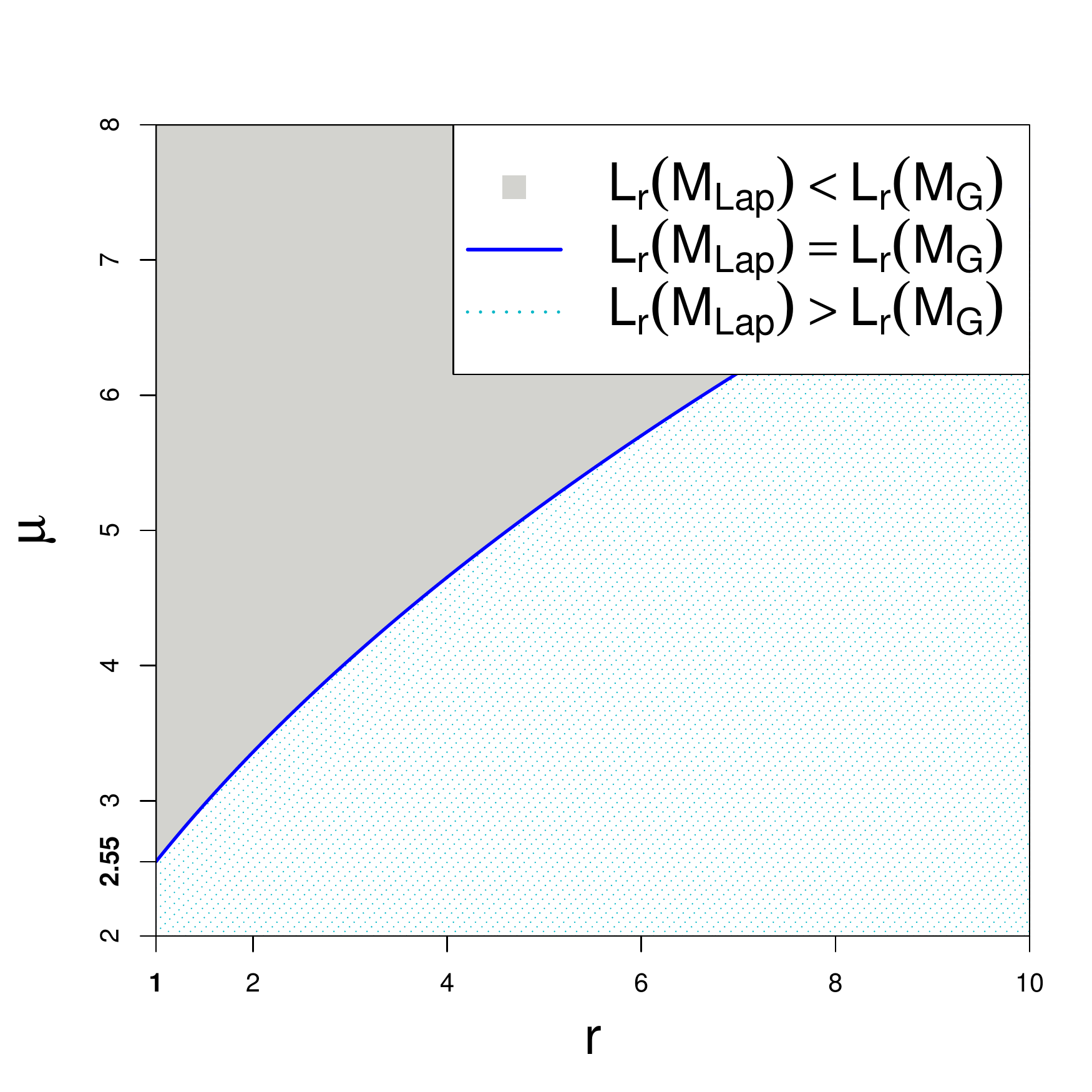}
        \caption{$\Delta_1/\Delta_2 = 1$.}
        \label{fig:mu_r_Lap_vs_Gauss}
    \end{subfigure}
    \begin{subfigure}{0.45\textwidth}
        \centering
        \includegraphics[width=.95\textwidth]{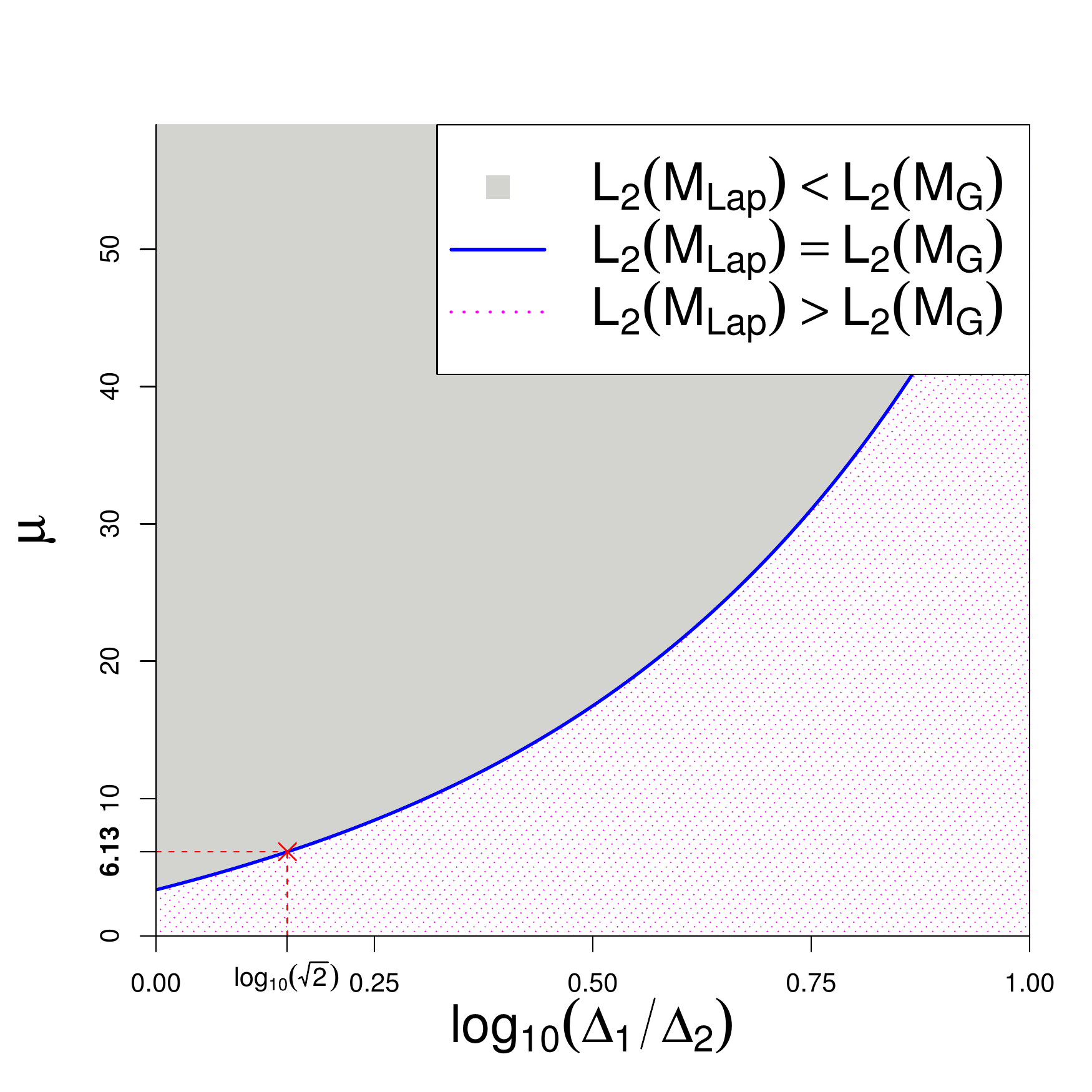}
        \caption{$r = 2$.}
        \label{fig:mu_sens_Lap_vs_Gauss}
    \end{subfigure}
    \caption{The partition of the set of parameters ($\mu, \Delta_1/\Delta_2, r$), depending on whether the Gaussian mechanism dominates the Laplace mechanism or not. %
    }
    \label{fig:Lap_vs_Gauss}
\end{figure}

Next, we focus on the case where $\thetav$ is a frequency table, so that one can use a much smaller scale parameter $\tilde{b}_\mu$ than $b_{\mu}^{\Delta_1}$ used in (\ref{eq:L_r-cost-ratio}), for the $\mu$-GDP Laplace mechanism. Figure \ref{fig:L2_g_lap} plots the relative efficiency of the Gaussian mechanism over the Laplace mechanisms when the scale parameter $b$ is set to be $\tilde{b}_\mu$, $b_{\mu}^{\Delta_1}$ or $b_\mu^\epsilon$, over a range of privacy parameter $\mu>0$, for $r=2$. The ordinary Gaussian mechanism $\Mv_G$ dominates the optimally calibrated $\Mv_{\Lap}(\cdot; \tilde{b}_\mu)$, for high privacy situations (corresponding to $\mu < 3.9$, approximately). We note that rank-deficient James--Stein mechanism $\Mv_{rJS}$ typically has a much smaller $L_2$-cost than $L_2(\Mv_G)$. Thus, unless the privacy protection level is very low (i.e., $\mu$ is large), Gaussian mechanisms have advantages over Laplace mechanisms, when all mechanisms satisfy the $\mu$-GDP criterion.

\begin{figure}%
	\centering
	\includegraphics[width=0.5\linewidth]{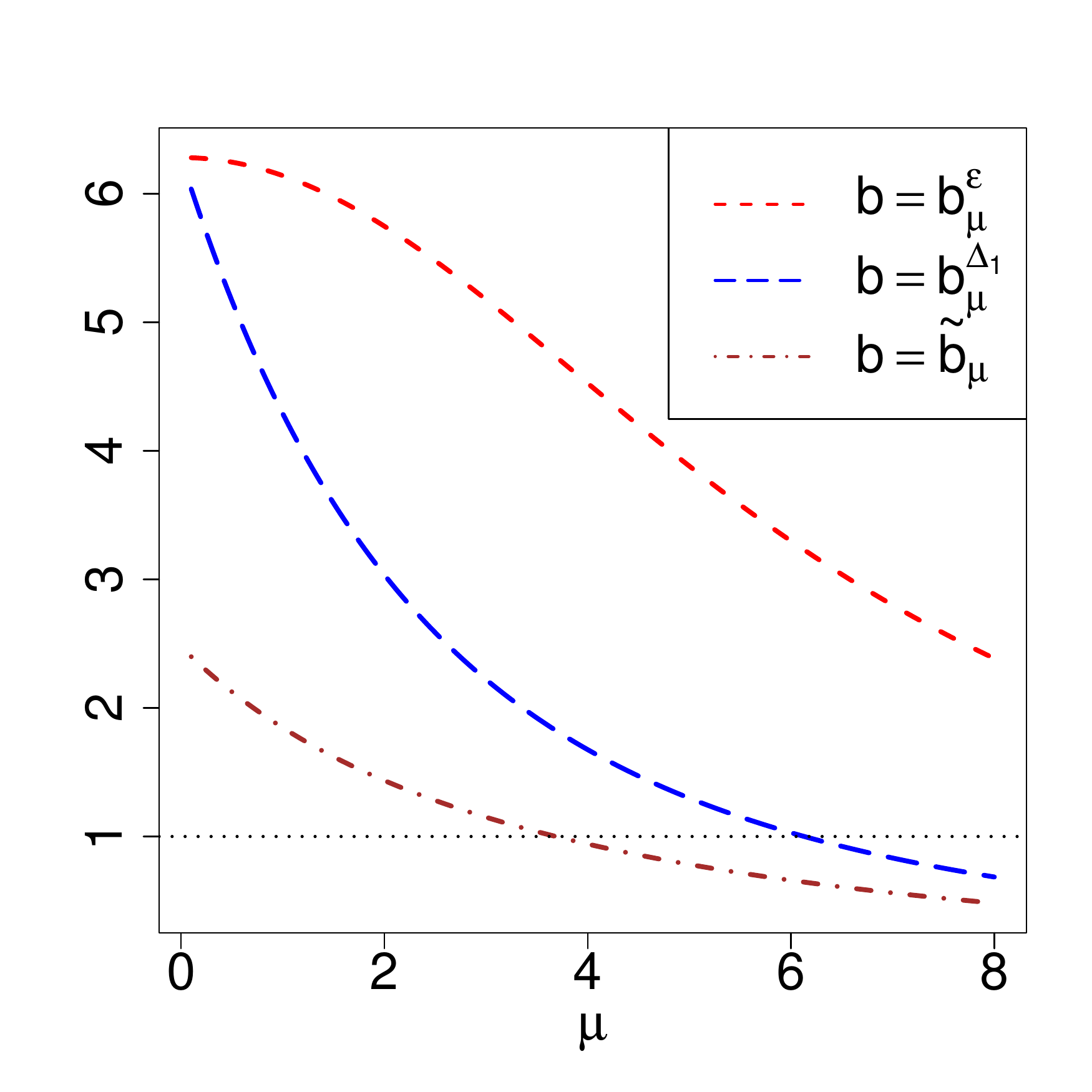}
	\caption{%
 The relative efficiency ${\rm RE}_2(\Mv_G, \Mv_{\Lap}(\cdot; b))$ (for three choices of $b$) as a function of $\mu$. The Gaussian mechanism $\Mv_G$ dominates $\Mv_{\Lap}(\cdot; b)$ if the relative efficiency is greater than 1. }
	\label{fig:L2_g_lap}
\end{figure}

\section{Applications to contingency table analysis} \label{sec:app_conting_tab}

A contingency table, or a multi-way frequency table, is a fundamental data summary widely used in various fields, such as official statistics, social sciences, and genome-wide association studies. Analysis of contingency tables routinely involves hypothesis tests based on the $ \chi^2 $ statistics such as the goodness-of-fit (GOF) and homogeneity tests. 

In this section, we demonstrate the applications of randomized mechanisms discussed in Sections \ref{section: mv-mech} and \ref{sec:Lap_mech} for private release of contingency tables, and numerically compare the statistical utilities. We also propose procedures for \emph{private GOF and homogeneity tests}, that can be conducted when only the differentially private (i.e., perturbed) contingency tables are available. For this application of hypothesis tests, the power of the test serves as another measure of statistical utility. %

Since a contingency table is simply the frequency table of Example~\ref{ex:contingency-table}, arranged in the form of cross-tabulation, we let $\thetav: \Xc^n \to \Real^p$ be the $r \times c$ contingency table, where  $p = r \times c$ is the number of cells.

\subsection{Private contingency table release}\label{sec:contingency_table_release}
The randomized multivariate mechanisms developed in Sections \ref{section: mv-mech} and \ref{sec:Lap_mech} are used to release randomly perturbed contingency tables, satisfying the $\mu$-GDP criterion. 

The ordinary Gaussian mechanism $ \Mv_G $ and the two James--Stein
mechanisms $ \Mv_{JS0} $ and $ \Mv_{JS} $ can be calibrated to satisfy $\mu$-GDP, by choosing $\sigma = \Delta_2(\thetav; \Xc^n)/\mu = \sqrt{2} / \mu$. The rank-deficient mechanisms $\Mv_r$ and $\Mv_{rJS}$ are beneficial since the sensitivity space of contingency tables is rank-deficient, with $d_{\thetav} = p-1$, and they are $\mu$-GDP if $\sigma = \sqrt{2} / \mu$. Since the output of $\Mv_{rJS}$ depends on the specific choice of the basis matrix $\Uv_{\thetav}$ (orthogonal to $\1v_p$), we choose to set $\Uv_{\thetav}$ to be the $p \times (p-1)$ Helmert sub-matrix \citep{harville1998matrix}. The Helmert matrix of order $p$ is $\Uv = [\uv_1, \Uv_{\thetav}]$, whose first column is $\mathbf{\uv}_1^\top=\1v_p / \sqrt{p}$, and the $k$th column (for $k=2,\ldots,p$) is $\mathbf{\uv}_k^\top=[k(k-1)]^{-1 / 2}(1,1, \ldots, 1,1-k, 0,0, \ldots, 0)$. 
For example, for $p = 4$, the basis matrix is 
$$
    \Uv_{\thetav} = \begin{pmatrix}
        1/\sqrt{2} & 1/\sqrt{6} & 1/\sqrt{12} \\
        -1/\sqrt{2} & 1/\sqrt{6} & 1/\sqrt{12} \\
        0 & -2/\sqrt{6} & 1/\sqrt{12} \\
        0 & 0 & -3/\sqrt{12}
    \end{pmatrix}.
$$

The multivariate Laplace mechanism with the optimal calibration $\Mv_{\Lap}(\cdot; \tilde b_{\mu})$ can also be used to release $\mu$-GDP contingency tables (see Section \ref{sec:lap_mech_table}).  

When the amount of perturbation in each of these mechanisms is large (or equivalently $\mu$ is small), it is possible that the output of the mechanism $\Mv(S) = \thetav(S) + \xiv$ contains negative values. Since it is desirable to have non-negative frequency counts, one can post-process each of mechanism outputs to have non-negative values. For any mechanism $\Mv$, we define the truncated mechanism, obtained by truncating the output of $\Mv(S)$ at zero, to be 
\begin{equation}\label{eq:contingency_truncated}
    (\Mv^c(S))_j = \max \{0, (\Mv(S))_j\}, \; j = 1, \cdots, p.
\end{equation}
By the post-processing property (Lemma~\ref{lem:post-processing}), if $\Mv$ is $\mu$-GDP, so is $\Mv^c$.
We write $\Mv_{rJS}^c$ for the truncated mechanism stemming from $\Mv_{rJS}$, and use a similar notation for other mechanisms. If $\thetav(S)_j \ge 0$ for any $j$ (which is the case for $\thetav$ being contingency tables), $\Mv^c$ dominates $\Mv$.

\subsection{Numerical comparison of \texorpdfstring{$L_2$}{L2} costs}\label{subsec:numerical_comparison}

In this section, we numerically compare the $L_2$-costs of the mechanisms discussed in Section~\ref{sec:contingency_table_release}. 
For this, we assume that the unperturbed contingency table $\thetav(S) \sim {\rm Multinomial}_p(n, \piv)$ is sampled from a multinomial distribution. Each $\thetav(S)$ is then ``sanitized'' by the application of the randomized mechanisms discussed in the previous section. 

We consider two models for the true probabilities $\piv = (\pi_1, \dots, \pi_p)$. In Model I, we assume a uniform probability for each cell, i.e., we set $\pi = 1/p$ for all $i$. In Model II, we set cell probabilities to increase linearly, i.e., $\pi_i = i / (\sum_{k=1}^p k)$. 
For each $p = 5, 10$, and $n = 100,200,\ldots, 3000$, a contingency table $\thetav(S)$ is sampled, and perturbed to obtain an output each from the mechanisms---$\Mv_G$, $\Mv_r$, $\Mv_{JS0}$, $\Mv_{JS}$, $\Mv_{rJS}$ and $\Mv_{\Lap}(\cdot; \tilde{b}_\mu)$ and their truncated versions---satisfying $\mu$-GDP. We set $\mu = 0.1$.
The squared $L_2$-loss $\|\Mv(S) - \thetav(S)\|^2_2$ for each output $\Mv(S)$ is averaged over $K = 10,000$ repetitions, for each mechanism. This is an estimate of the unconditional $L_2$-cost $L_2(\Mv)$.

\begin{figure}
\centering
\begin{subfigure}{0.45\textwidth}
    \includegraphics[width=\textwidth]{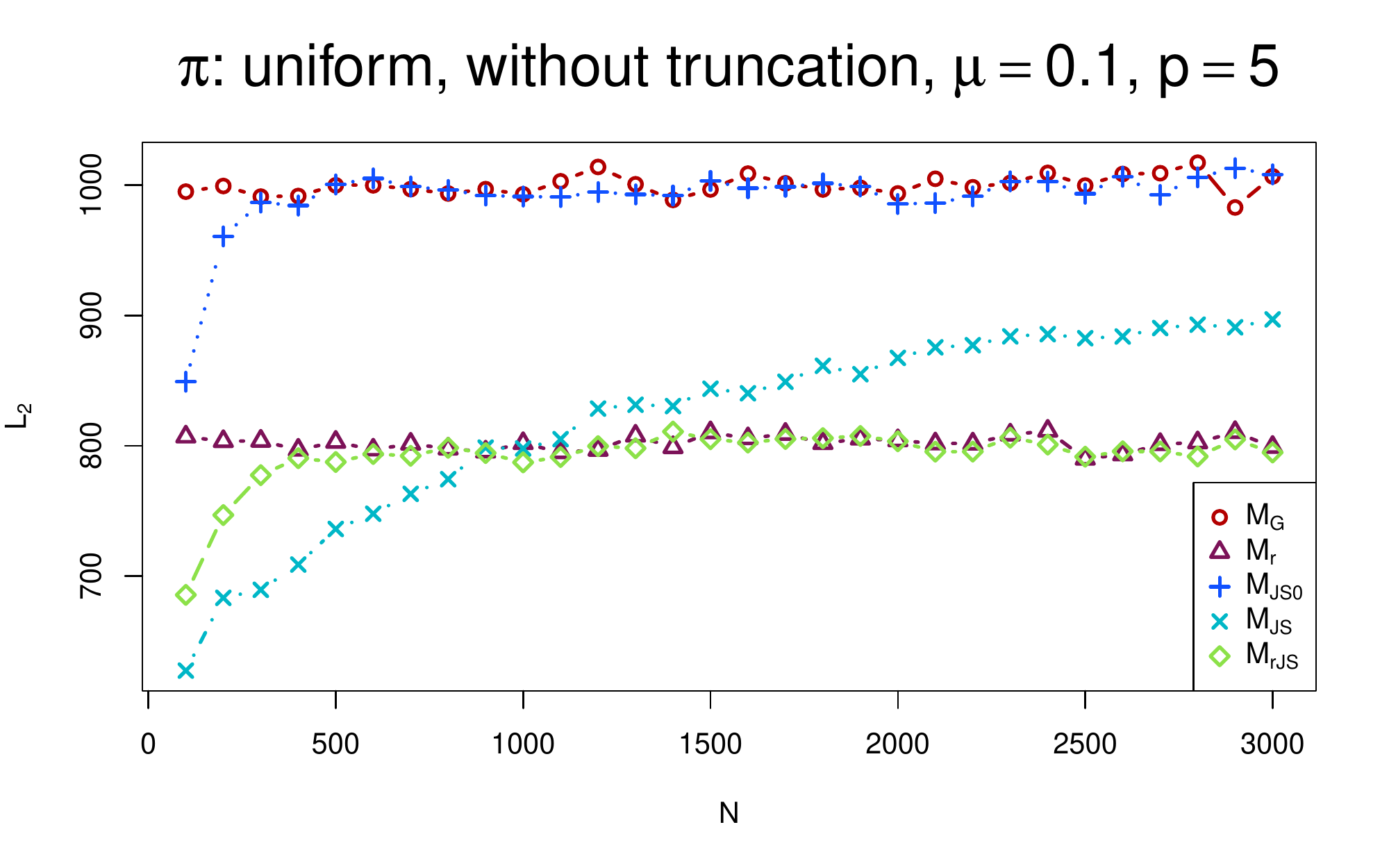}
\end{subfigure}
\begin{subfigure}{0.45\textwidth}
    \includegraphics[width=\textwidth]{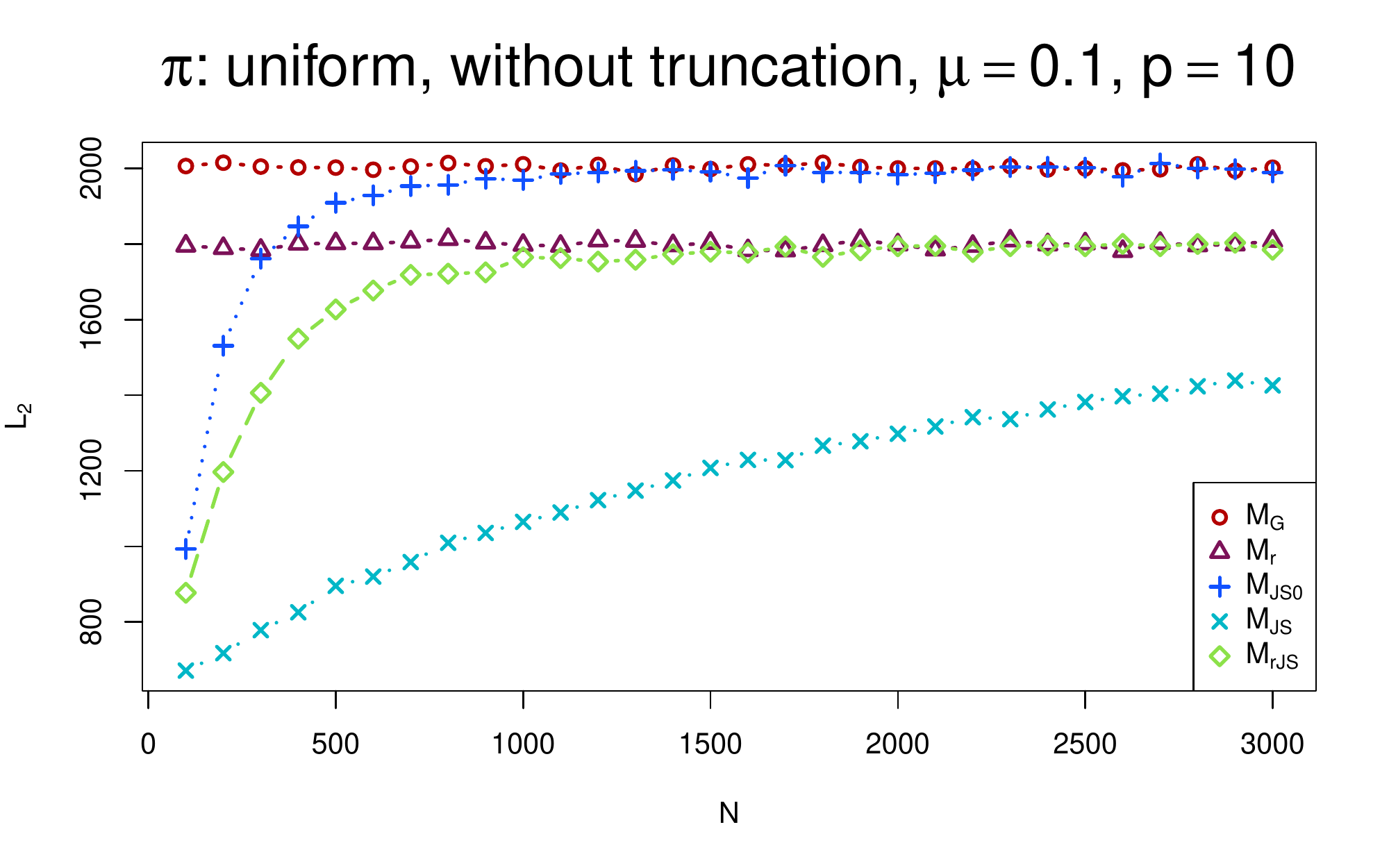}
\end{subfigure}

\begin{subfigure}{0.45\textwidth}
    \includegraphics[width=\textwidth]{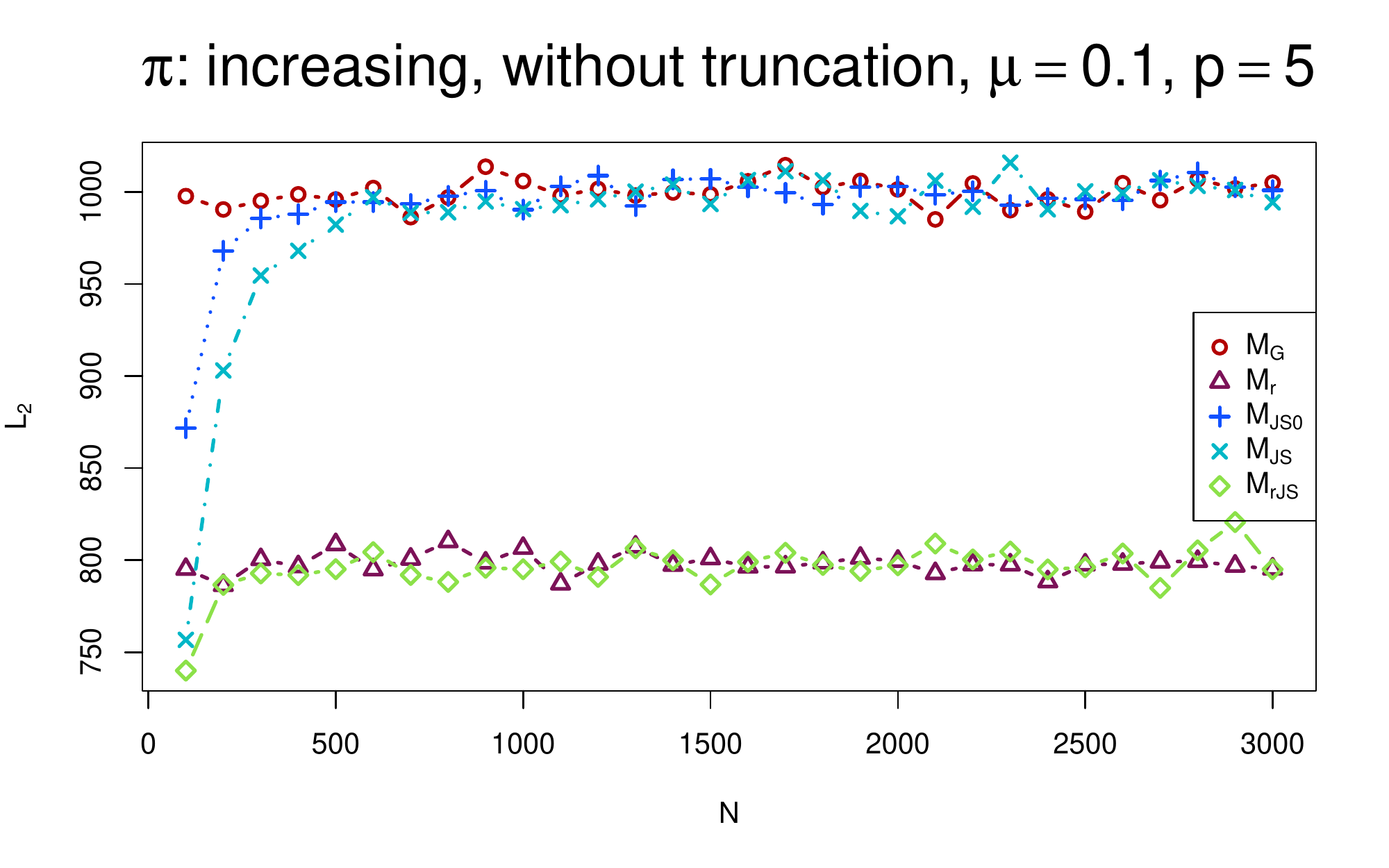}
\end{subfigure}
\begin{subfigure}{0.45\textwidth}
    \includegraphics[width=\textwidth]{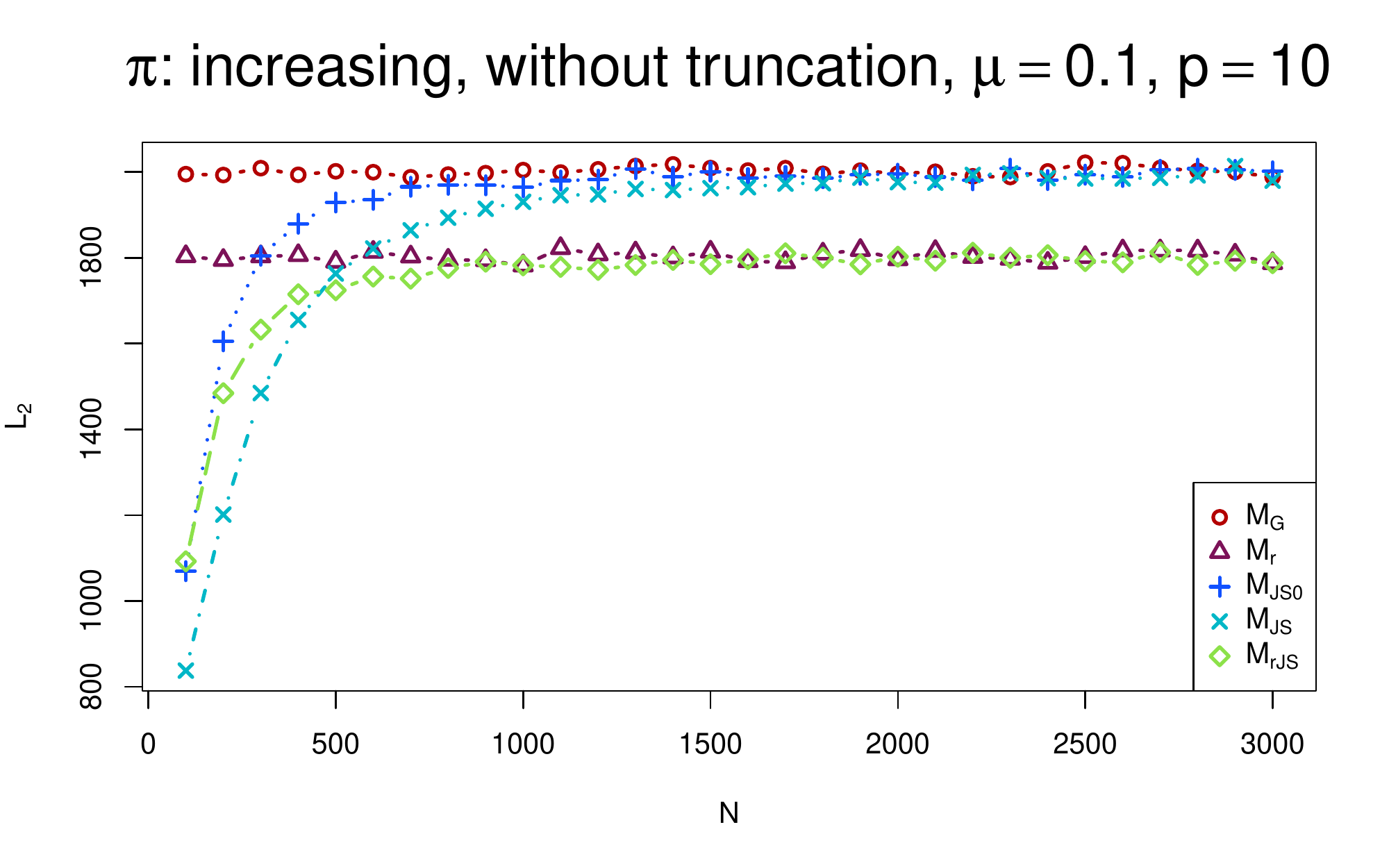}
\end{subfigure}
\caption{A comparison of $L_2$-costs of mechanisms without truncation, over increasing sample sizes. The first (or second) row shows results under Model I (or Model II, respectively).}
\label{fig:L2sim-nc}
\end{figure}

\begin{figure}
\centering
\begin{subfigure}{0.45\textwidth}
    \includegraphics[width=\textwidth]{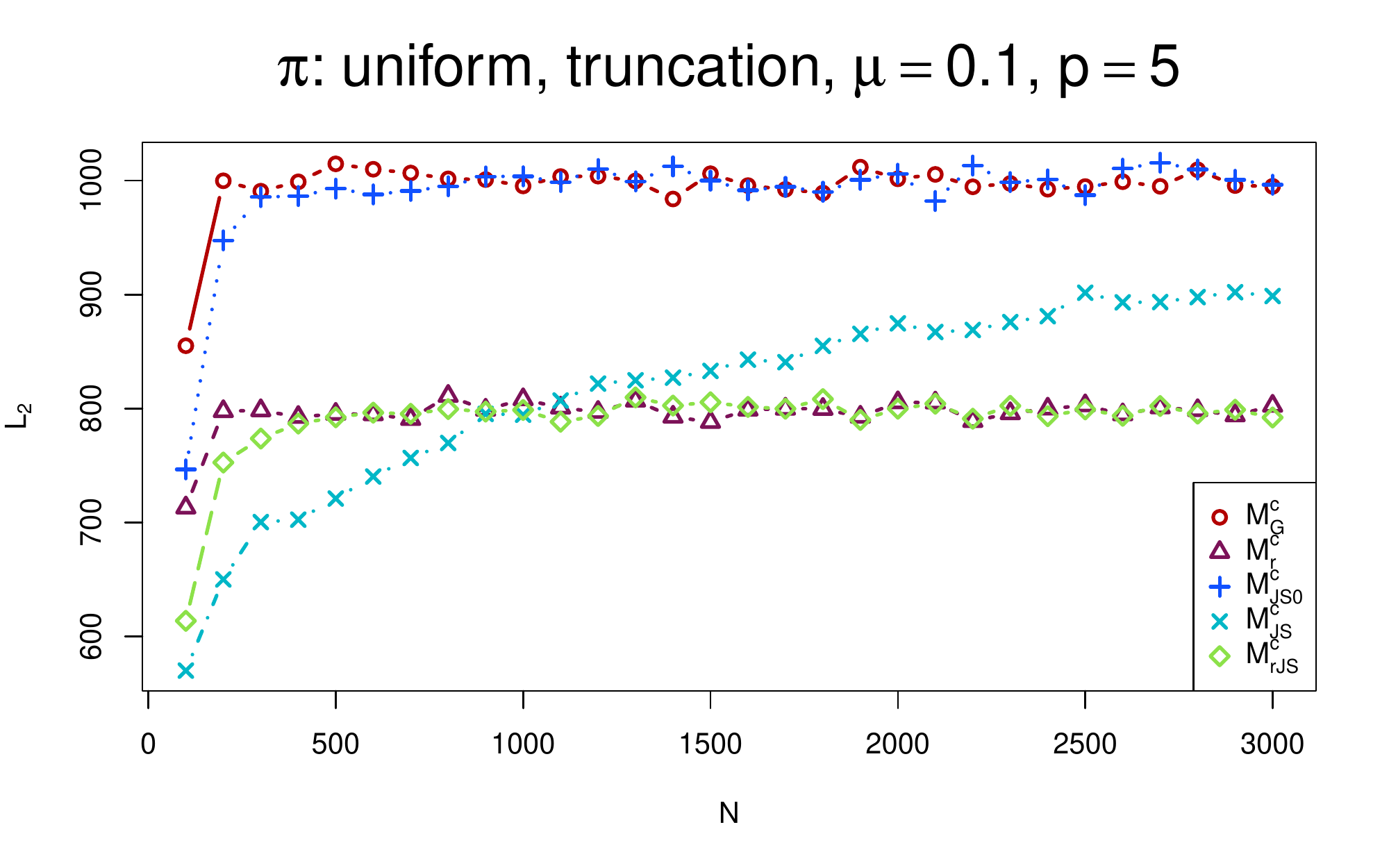}
\end{subfigure}
\begin{subfigure}{0.45\textwidth}
    \includegraphics[width=\textwidth]{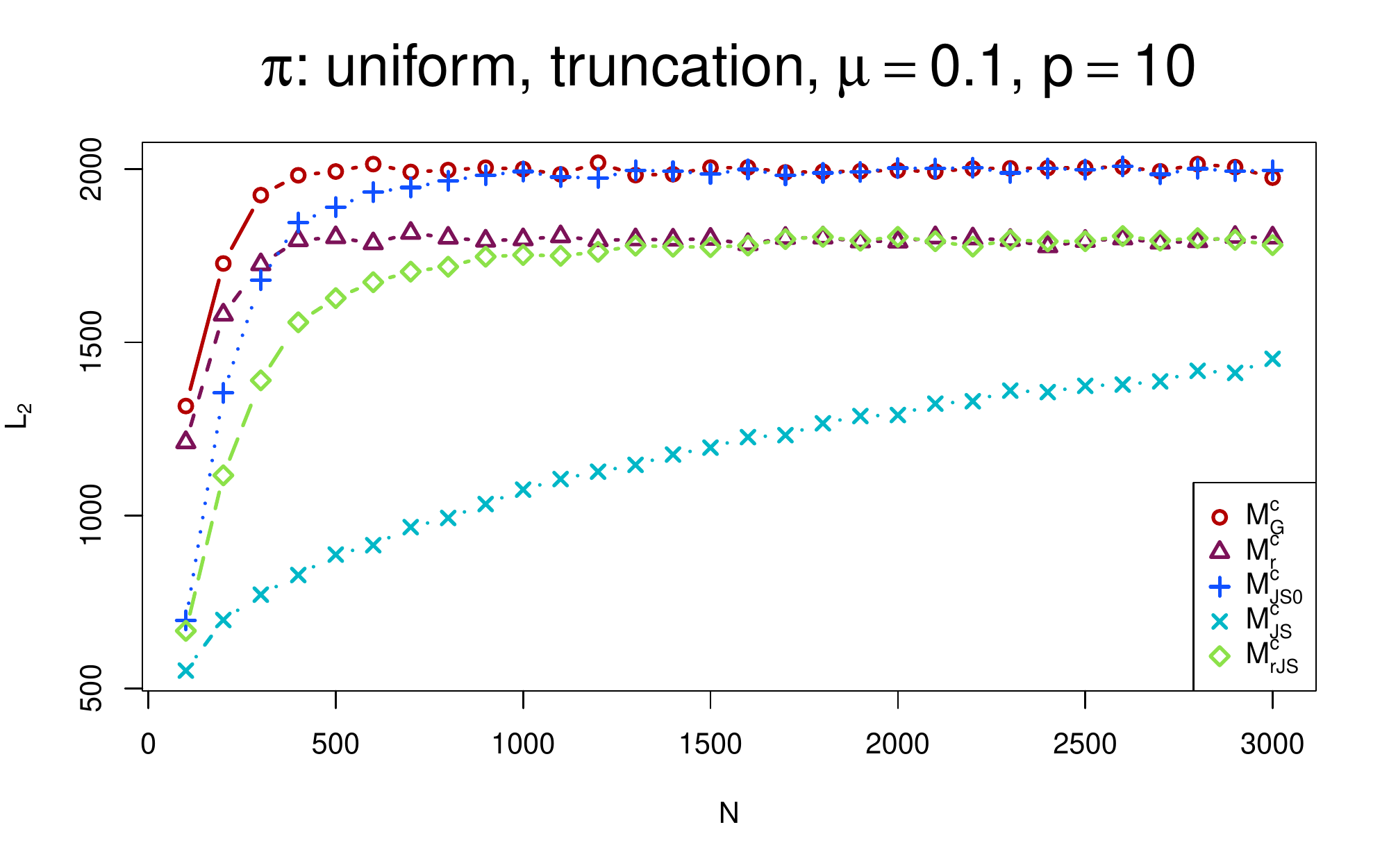}
\end{subfigure}
\begin{subfigure}{0.45\textwidth}
    \includegraphics[width=\textwidth]{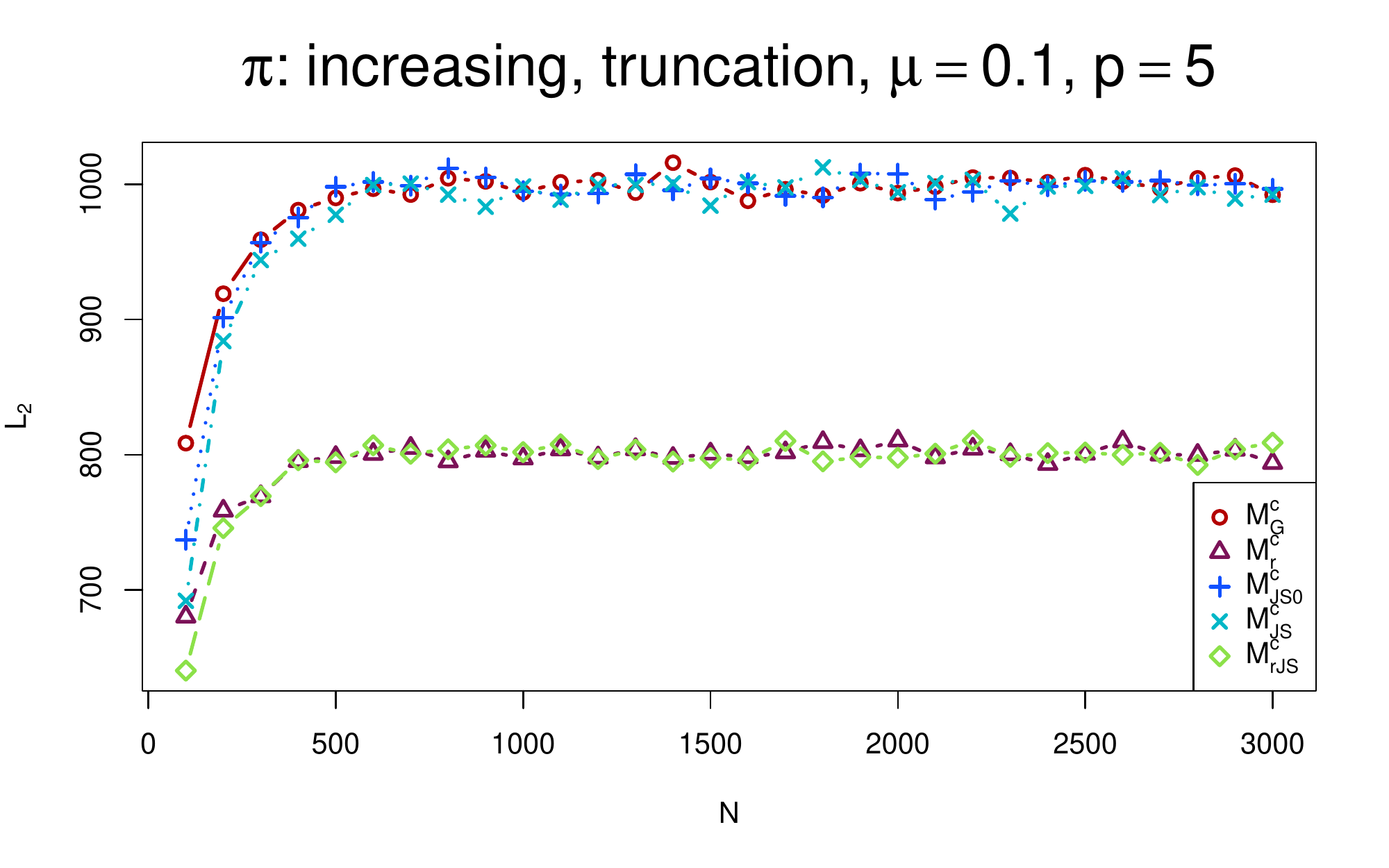}
\end{subfigure}
\begin{subfigure}{0.45\textwidth}
    \includegraphics[width=\textwidth]{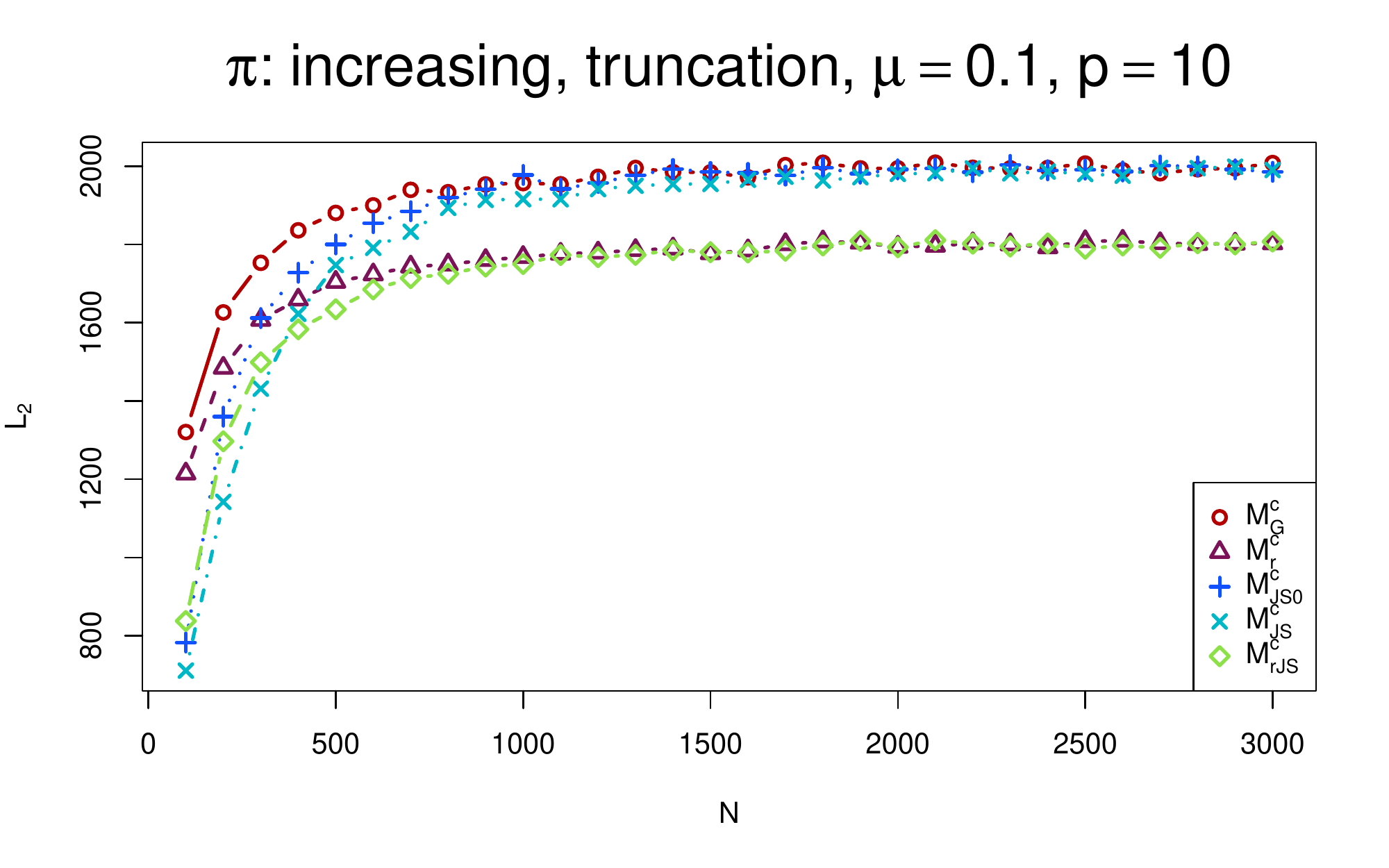}
\end{subfigure}
\caption{A comparison of $L_2$-costs of truncated mechanisms, over increasing sample sizes. The first (or second) row shows results under Model I (or Model II, respectively).}
\label{fig:L2sim-cut}
\end{figure}

The simulated results are plotted in Figures \ref{fig:L2sim-nc} and \ref{fig:L2sim-cut}. Results from $\Mv_{\Lap}(\cdot ; \tilde{b}_\mu)$ are omitted because their $L_2$-costs are much larger than those of Gaussian mechanisms, as seen in Section \ref{sec:comparison}. 
See Figure~\ref{fig:relationship} for a summary of the comparison between Gaussian mechanisms in this simulation study. In addition, we observe the following. 

The benefit of James--Stein shrinkage eventually wears off as the sample size increases, as seen in Theorem~\ref{thm:L2cost-conv}. However, depending on the model, the benefit of shrinkage can be persistent.  
For Model I, $\E(\thetav(S)) = n\piv = (n/p) \1v_p$ and 
the James--Stein mechanism $\Mv_{JS}$ performs best for smaller sample sizes as the model agrees with the target of the shrinkage. Nevertheless, the rank-deficient mechanisms $\Mv_{r}$ and $\Mv_{rJS}$ exhibit smallest $L_2$-costs for larger sample sizes. On the other hand, for Model II, while the rank-deficient James--Stein mechanism $\Mv_{rJS}$ performs the best across different sample sizes, $\Mv_r$'s $L_2$-cost is also the smallest for moderately large sample sizes. 

We also observe that the advantage of truncation is more pronounced for small sample sizes. The benefit of truncation is large only if there are many negative valued outputs in the untruncated mechanisms. As the sample size increases, $\E(\thetav(S)) = n \piv$  increases, thus the outputs of $\Mv(S)$ tend to be strictly positive.

\subsection{Differentially private hypothesis tests for contingency table}\label{sec:DP-tests}

In this section, we discuss private $\chi^2$ tests based on $\mu$-GDP contingency tables $\Mv(S)$, perturbing $\thetav(S)$, for testing  goodness-of-fit (GOF) and homogeneity. We consider two-way contingency tables  $\thetav(S)$  of size $r\times c$, in which  $\thetav(S)_{ij}$ collects the count of observations belonging to the $(i,j)$th category. 
  
Classical $ \chi^2 $ tests are based on the test statistic $ T $ given by 
\begin{equation}\label{eq:chi_sq}
    T = \sum_{i=1}^c \sum_{j=1}^r \frac{(\thetav(S)_{ij} - E_{ij})^2}{E_{ij}},    
\end{equation} 
where $ E_{ij} $ is the expected cell count under the null hypothesis. The large-sample null distribution of $T$ is $\chi^2_k$ for some degrees of freedom $k$, and this fact is used to construct a $\chi^2$ test. Suppose now that a \emph{sanitized} contingency table $\Mv(S)$ is only available, in place of the original data summary $\thetav(S)$. (The randomized mechanism $\Mv$ can be any mechanism we discussed in Sections \ref{section: mv-mech} and \ref{sec:Lap_mech}.)  Some researchers including \cite{kifer2016new, gaboardi2016DPchisq, wang2017revisiting,son2022parametric} proposed to simply replace $\thetav(S)$  by $\Mv(S)$ in (\ref{eq:chi_sq}), resulting in a private statistic 
$$
    \tilde T = \sum_{i=1}^c \sum_{j=1}^r \frac{(\Mv(S)_{ij} - \tilde E_{ij})^2}{\tilde E_{ij}},
$$
where $\tilde E_{ij}$ is obtained from $\Mv(S)$ under the null hypothesis. Critical values for the private test statistic $\tilde T$ are obtained either by large-sample asymptotic approximations or by a bootstrap procedure. In the following, we use a parametric bootstrap procedure to approximate the null distribution of $\tilde T$ for the GOF test and the homogeneity test. 

In the construction of private tests, we assume the following.
\begin{enumerate}

    \item A sanitized table $\Mv(S)$, satisfying $\mu$-GDP, is only available to us. The original table $\thetav(S)$ as well as the data set $S$ are not accessible. 

    \item The exact mechanism used to produce $\Mv(S)$ is known. 

    \item The privacy level $\mu$ (or, equivalently the scale parameters $\sigma$ and $b$, for Gaussian and Laplace mechanisms, respectively) is known.
\end{enumerate}

Since our test procedure depends on the data set $S$ only via  $\Mv(S)$, if $\Mv(S)$ is $\mu$-GDP, then any conclusion made from the test procedure is also $\mu$-GDP. 

We remark that most of private $\chi^2$ tests in the literature are based on Laplace mechanisms and are calibrated to satisfy $\epsilon$-DP. On the other hand, we calibrate our randomized mechanisms to satisfy $\mu$-GDP. Moreover, those previous works assume that the total sample size $n$ is known  \citep{gaboardi2016DPchisq, kifer2016new}. However, when $\Mv(S)$ is only available, the sample size $n$ is also hidden. Our test procedures do not assume that $n$ is known.

\subsubsection{Private goodness of fit test}\label{sec:GOF}
Consider a contingency table $ \thetav(S) \sim \mbox{Multinomial}(n, \piv)$, where $\piv \in \Real^{r}_+ \times \Real^{c}_+$ such that $\sum_{i,j} \pi_{ij} = 1$.
For a given probability parameter $ \piv_0 $,
the GOF test aims to test
\begin{equation} \label{eq: hypothesis-GOF}
    H_0: \piv = \piv_0 \quad \mbox{vs.} \quad H_1: \piv \neq \piv_0.
\end{equation}
In the non-private GOF test, the expected cell count under $H_0$ is simply $ E_{ij} = n (\piv_0)_{ij} $ and the $ \chi^2 $ statistic becomes
$
    T_{GOF} =\sum_{i=1}^r\sum_{j=1}^c {(\thetav(S)_{ij} - n (\piv_0)_{ij})^2}/ \{{n(\piv_0)_{ij}}\}.
$
Suppose now that an output $\tilde\thetav$ of $\Mv(S)$ is only available to us.  The private $\chi^2$ test statistic is 
$$
    \tilde T_{GOF} = \sum_{i=1}^r\sum_{j=1}^c \frac{(\tilde\thetav_{ij} - \tilde n (\piv_0)_{ij})^2}{\tilde n (\piv_0)_{ij}},
$$
where $\tilde n = \sum_{i,j} \tilde \thetav_{ij}$.
 
To approximate the null distribution of $\tilde T_{GOF}$, we use a parametric bootstrap resampling under the null hypothesis, which amounts to sampling $ \tilde \thetav^{(b)}  \sim {\rm Multinomial}(\tilde{n}, \piv_0)$ and calculate bootstrap-replicates $\tilde T_{GOF}^{(b)}$ $ b = 1, \dots, B $ for some large $B$. (We used $B = 5000$ in our numerical studies.)
The bootstrapped p-value of the observed statistic $\tilde T_{GOF}$ is then the proportion of bootstrap replicates greater than $\tilde T_{GOF}$. See Algorithm \ref{alg:BootGOF}. In the implementation of this private GOF test, we set $ 0/0 = 0 $, and if $\tilde n \le 0$, which can happen if the amount of perturbation is large, then we do not reject the null hypothesis.

\begin{algorithm}
	\SetAlgoLined\DontPrintSemicolon
	\SetKwFunction{proc}{BootGOF}
	\SetKwProg{myproc}{Procedure}{}{}
  	\myproc{\proc{$\tilde \thetav$, $ \mu $, $ \mathbf \Mv(\cdot) $, $\piv_0 $, $ \alpha $}}{ 
		$ \tilde n \gets \sum_{i=1}^r \sum_{j=1}^{c} \tilde \thetav_{ij}  $ \; 
		$ \tilde T_{GOF} =\sum_{i=1}^r \sum_{j=1}^{c} \frac{(\tilde \thetav_{i,j} - \tilde n (\piv_0)_{ij})^2}{\tilde n (\piv_0)_{ij}} $\;
	    \For(){$b = 1, \dots, B$}{
			$ \tilde \thetav^{(b)} \gets \mathbf{M}(\thetav^{(b)}) $ where
			$\thetav^{(b)} \sim \mbox{Multinomial}(\tilde n, \piv_0) $\;
			$ \tilde n^{(b)} \gets \sum_{i=1}^r \sum_{j=1}^{c} \tilde \thetav_{ij}^{(b)}  $\; 
			$\tilde T_{GOF}^{(b)} = \sum_{i=1}^r \sum_{j=1}^{c}
            \frac{(\tilde \thetav_{ij}^{(b)} - \tilde n^{(b)}  (\piv_{0})_{ij})^2}{\tilde n^{(b)}  (\piv_{0})_{ij}} $\;
		}
		$ \mbox{P-value} \gets \frac{1}{B}\sum_{b=1}^{B}I(\tilde T_{GOF}^{(b)} \ge \tilde T_{GOF}) $\;
		\uIf{$\mbox{P-value} < \alpha$}{
			\Return reject $ H_0 $\;
		}
		\Else(){
			\Return do not reject $H_0$\;
		}
 	}
	\caption{Private parametric bootstrap test of goodness of fit}\label{alg:BootGOF}
\end{algorithm}

\subsubsection{Private homogeneity test}
Consider the situation where there are $r$ populations, and the data set $S$ is the union of data sets $S_1, \dots, S_r $ from each population. For each $i = 1,\ldots, r$, the $i$th row of the $r\times c$ contingency table $\thetav(S)$ is $\theta(S_i)$. Assume that $\theta(S_i) \sim \mbox{Multinomial}(n_i, \piv_{i}) $ where $ \piv_i \in \mathbb{R}_+^c $ is a vector of probabilities for the $i$th population. Testing the homogeneity among these population amounts to testing the null hypothesis $H_0: \piv_1 = \dots = \piv_r$ against a general alternative.  

In the classical homogeneity test, the $\chi^2$ statistic is obtained from the data $\thetav(S)$ as follows: Since the null probabilities $\piv_0 := \piv_1 = \cdots = \piv_r$ are unknown, an estimator $\hat\piv_0 =  \sum_{i=1}^r \thetav(S_i) / ({\sum_{i=1}^r n_i}) =: (\hat\pi_{01}, \ldots, \hat\pi_{0c})^\top$ is used to set the expected counts $E_{ij} = n_i\hat\pi_{0j}$ under the null hypothesis. The non-private $\chi^2$ statistic is 
$
    T_{Hom} = \sum_{i=1}^r \sum_{j=1}^c {(\theta(S_i)_{j} - E_{ij} )^2}/\{E_{ij}\}.
$

If an output $\tilde{\thetav}$ of $\Mv(S)$ is only available, then the private $\chi^2$ statistic, say  $\tilde{T}_{Hom}$, is obtained by replacing $\thetav(S)$ with $\tilde{\thetav}$ in the formula for the non-private statistic $T_{Hom}$. A parametric bootstrap procedure for testing the homogeneity using the private statistic $\tilde{T}_{Hom}$ can be constructed by following the same arguments used for the private GOF test in Section \ref{sec:GOF}; see Algorithm \ref{alg:BootHomTest}.

\begin{algorithm}
	\SetAlgoLined\DontPrintSemicolon
	\SetKwFunction{proc}{BootHom}
	\SetKwProg{myproc}{Procedure}{}{}
  	\myproc{\proc{$\tilde \thetav$, $\mu $, $ \mathbf M(\cdot) $, $ \alpha $}}{
		$ \tilde n_i \gets \sum_{j=1}^{c}  (\tilde\thetav_{i})_j  $ for $ i = 1, \ldots, r $\;
		$ \tilde n \gets \sum_{i=1}^r \tilde n_i$	\;
		$ \tilde \piv_0 \gets (\tilde n)^{-1} \sum_{i=1}^r \tilde \thetav_{i}$	\;
		$ \tilde T_{Hom} \gets \sum_{i=1}^r \sum_{j=1}^c
		\frac{((\tilde \thetav_{i})_j - \tilde n_i \tilde \piv_{0j})^2}{\tilde n_i \tilde \piv_{0j}} $\;
	    \For(){$b = 1, \dots, B$}{
			$\tilde \thetav_{i}^{(b)} \gets
			\mathbf{M}(\thetav_i) $ where
			$\theta_i \sim \mbox{Multinomial}(\tilde n_i, \tilde \piv_0) $
			for $ i = 1, \ldots, r $\;
			$ \tilde n_i^{(b)} \gets  \sum_{j=1}^{c}  (\tilde\thetav_{i}^{(b)})_j  $
			for $ i = 1, \ldots, r $\;
		    $\tilde n^{(b)} \gets \sum_{i=1}^r \tilde n_i^{(b)}$	\;
		    $\tilde \piv_0^{(b)} \gets (\tilde n^{(b)})^{-1} \sum_{i=1}^r \tilde\thetav_{i}^{(b)} $ \;
			$\tilde T_{Hom}^{(b)} \gets \sum_{i=1}^r \sum_{j=1}^c \frac{((\tilde\thetav_{i}^{(b)})_j - \tilde n^{(b)} \tilde \piv^{(b)}_{0j})^2}{\tilde n^{(b)} \tilde \piv^{(b)}_{0j}} $\;
		}
		$ \mbox{P-value} \gets \frac{1}{B}\sum_{b=1}^{B}I(\tilde T_{Hom}^{(b)} \ge \tilde T_{Hom}) $\;
		\uIf{$\mbox{P-value} < \alpha$}{
			\Return reject $ H_0 $\;
		}
		\Else(){
			\Return do not reject $H_0$\;
		}
 	}
	\caption{Private parametric bootstrap test of homogeneity}\label{alg:BootHomTest}
\end{algorithm}

\subsection{Simulation studies for private test procedures}\label{sec:sim_gof}

In this section, we numerically investigate the private GOF tests in Section~\ref{sec:DP-tests}, based on the $\mu$-GDP private contingency tables discussed in Section~\ref{sec:contingency_table_release}. In particular, we examine whether the test procedures control the type I error rate at the given significance level $\alpha$, and compare the power of the tests as a measure of the statistical utility of the randomized mechanisms. We also have investigated the performance of private homogeneity tests, and the results are listed and discussed in Appendix~\ref{sec:app_sim}. The conclusions made for homogeneity tests are similar to the case of GOF tests. 

Probability models for the null and alternative hypotheses are generated using real data sets from 2020 resident population demographic data of  South Korea \citep{korAge2020} and the United States \citep{americaAge2020}. We set the null probabilities $\piv_0$ to be the $p \times 1$ vector $\piv_{\rm Kor}^{(p)}$ consisting of the age-wise population proportions, given by $p$ age groups, from the South Korean demographic data. We use $p = 9, 18$ corresponding to 5-year and 10-year age groups, respectively. The alternative probabilities $\piv_{\rm US}$ consist of the $p$ proportions from the U.S. demographic data. 

We simulate under two privacy levels given by $\mu=0.1$ (representing a high privacy regime) and $\mu = 0.3$ (moderate privacy). Type I error rates of the private GOF test (Algorithm \ref{alg:BootGOF}) are evaluated by randomly generating contingency tables $\thetav(S) \sim \mbox{Multinomial}(n, \piv_{Kor}^{(p)})$, applying a mechanism in Section~\ref{sec:contingency_table_release}, then conducting the private GOF test (Algorithm \ref{alg:BootGOF}) with the null hypothesis $\piv = \pi_{Kor}^{(p)} $. For the power analysis, we generate tables from
$\mbox{Multinomial}(n, \pi_{US}^{(p)})$. 
Over a range of sample sizes $n$, we record the proportions of rejection (with significance level $\alpha = 0.05$) over $K = 1000$ repetitions, as estimates of the type I error rate and the power of the test. 
 
The simulation results for the GOF test, based on the truncated Gaussian type mechanisms ($\Mv_{G}^c$, $\Mv_{r}^c$, $\Mv_{JS0}^c$, $\Mv_{JS}^c$, $\Mv_{rJS}^c$) and the truncated Laplace mechanism ($\Mv_{\Lap}(\cdot \: ; \tilde b_{\mu})$), are plotted in Figures \ref{fig:gof-type1} and \ref{fig:gof-power}

\begin{figure}
\centering
\begin{subfigure}{0.4\textwidth}
    \centering
    \includegraphics[width=.8\textwidth]{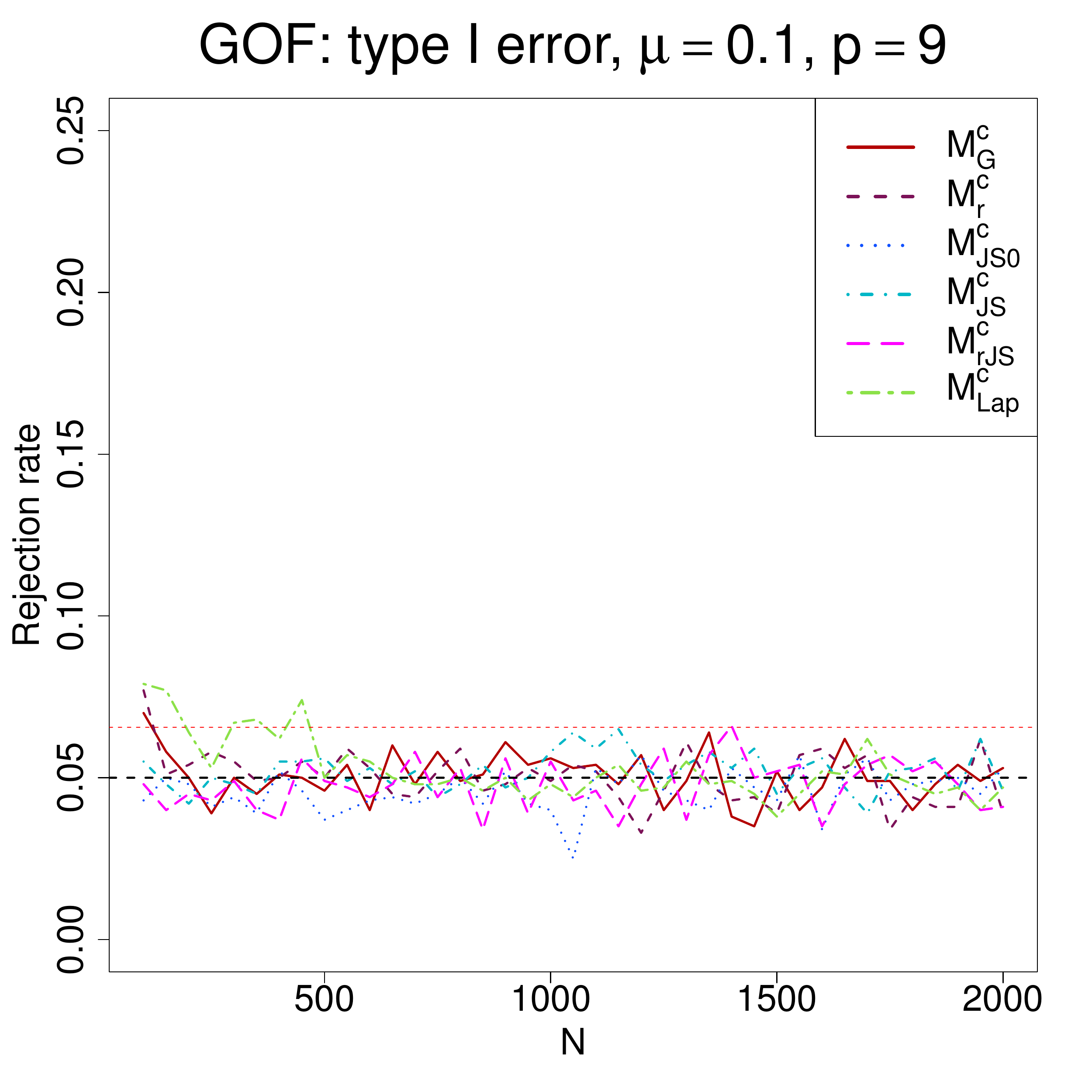}
\end{subfigure}
\begin{subfigure}{0.4\textwidth}
    \centering
    \includegraphics[width=.8\textwidth]{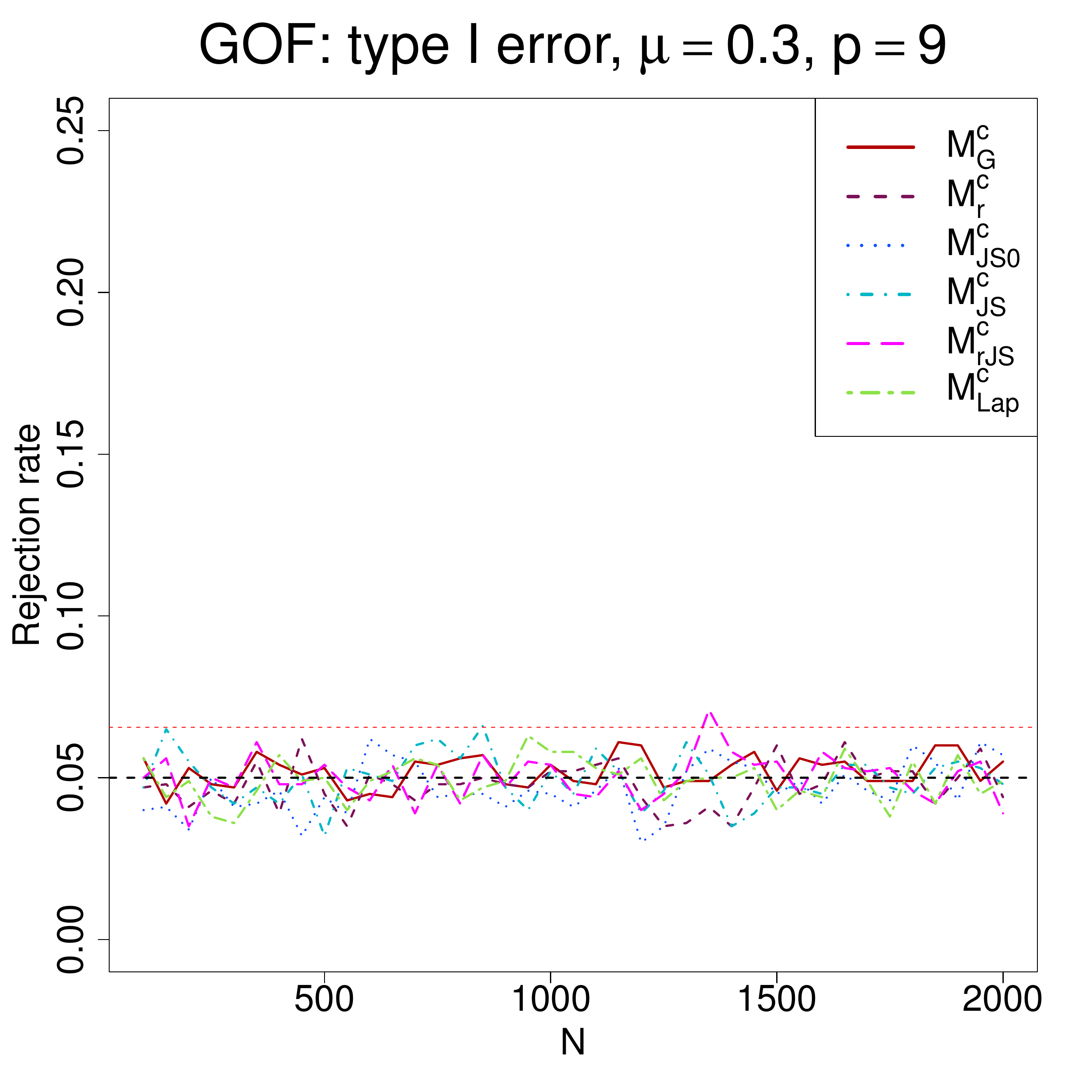}
\end{subfigure}
\begin{subfigure}{0.4\textwidth}
    \centering
    \includegraphics[width=.8\textwidth]{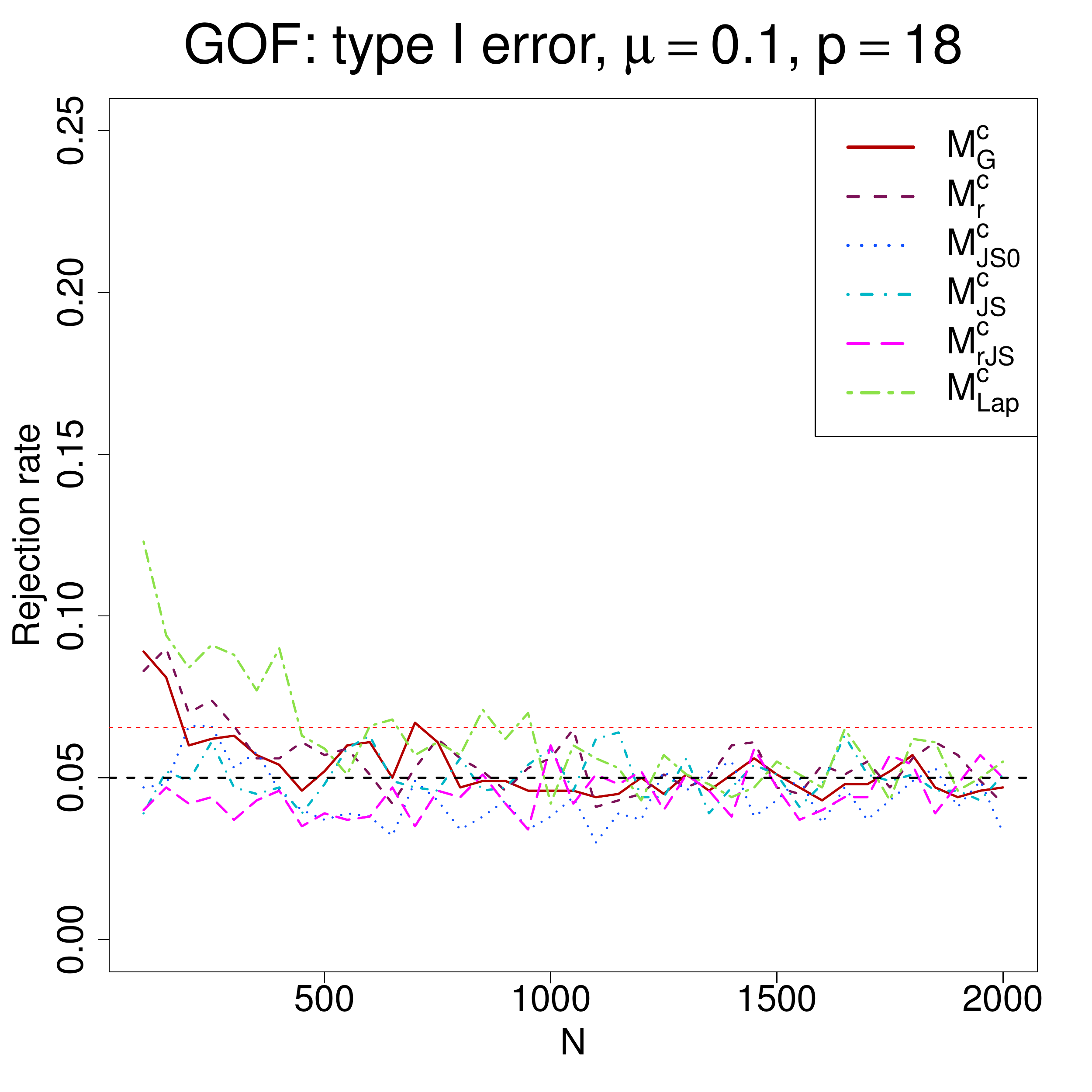}
\end{subfigure}
\begin{subfigure}{0.4\textwidth}
    \centering
    \includegraphics[width=.8\textwidth]{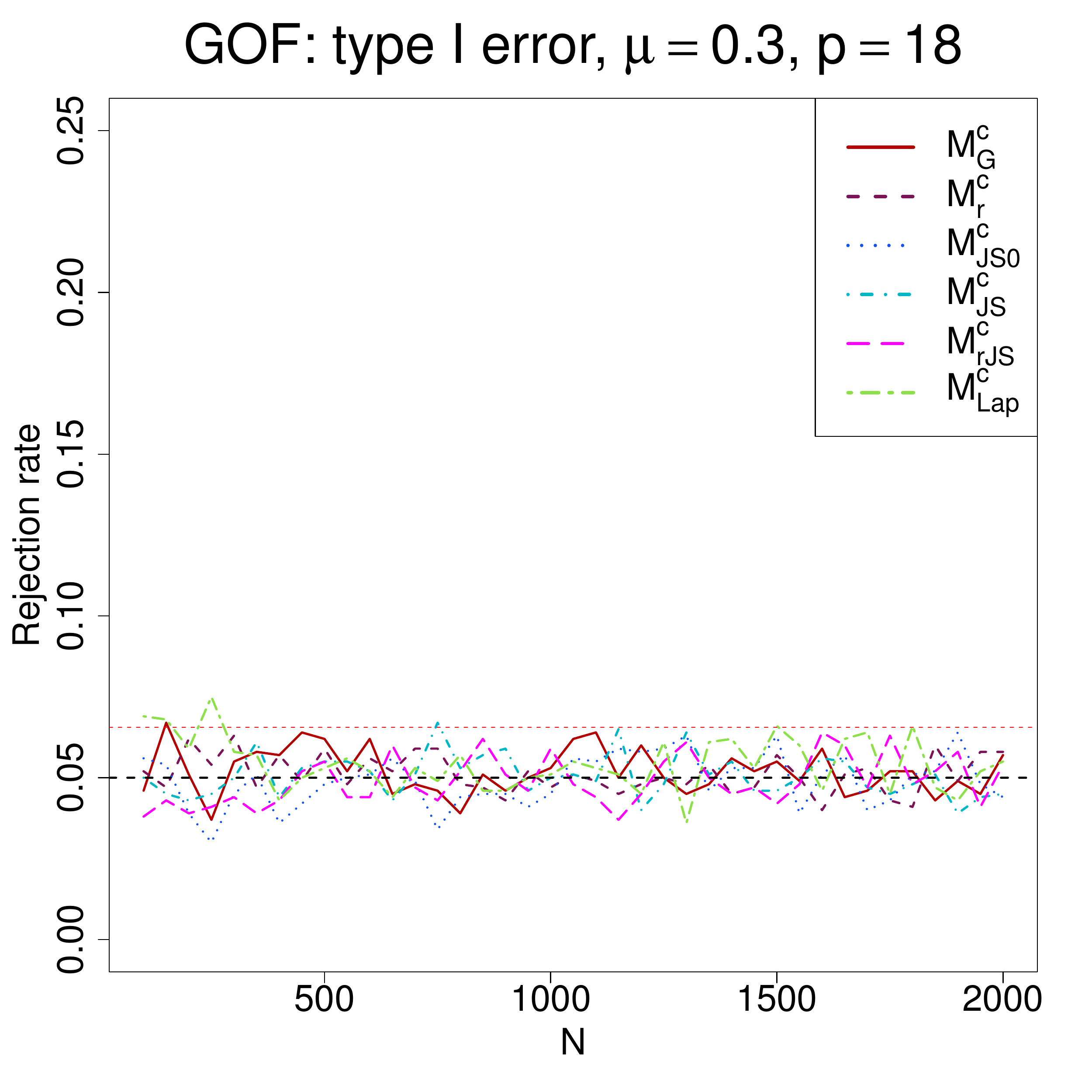}
\end{subfigure}
\caption{Graph of empirical type I error rates of GOF tests for each mechanism. The black dotted line is the level of the test, $\alpha = 0.05$. The red dotted line is the maximum empirical rejection rate value that contains $0.05$ within two times of standard error.}
\label{fig:gof-type1}
\end{figure}

\begin{figure}
\centering
\begin{subfigure}{0.4\textwidth}
    \centering
    \includegraphics[width=.8\textwidth]{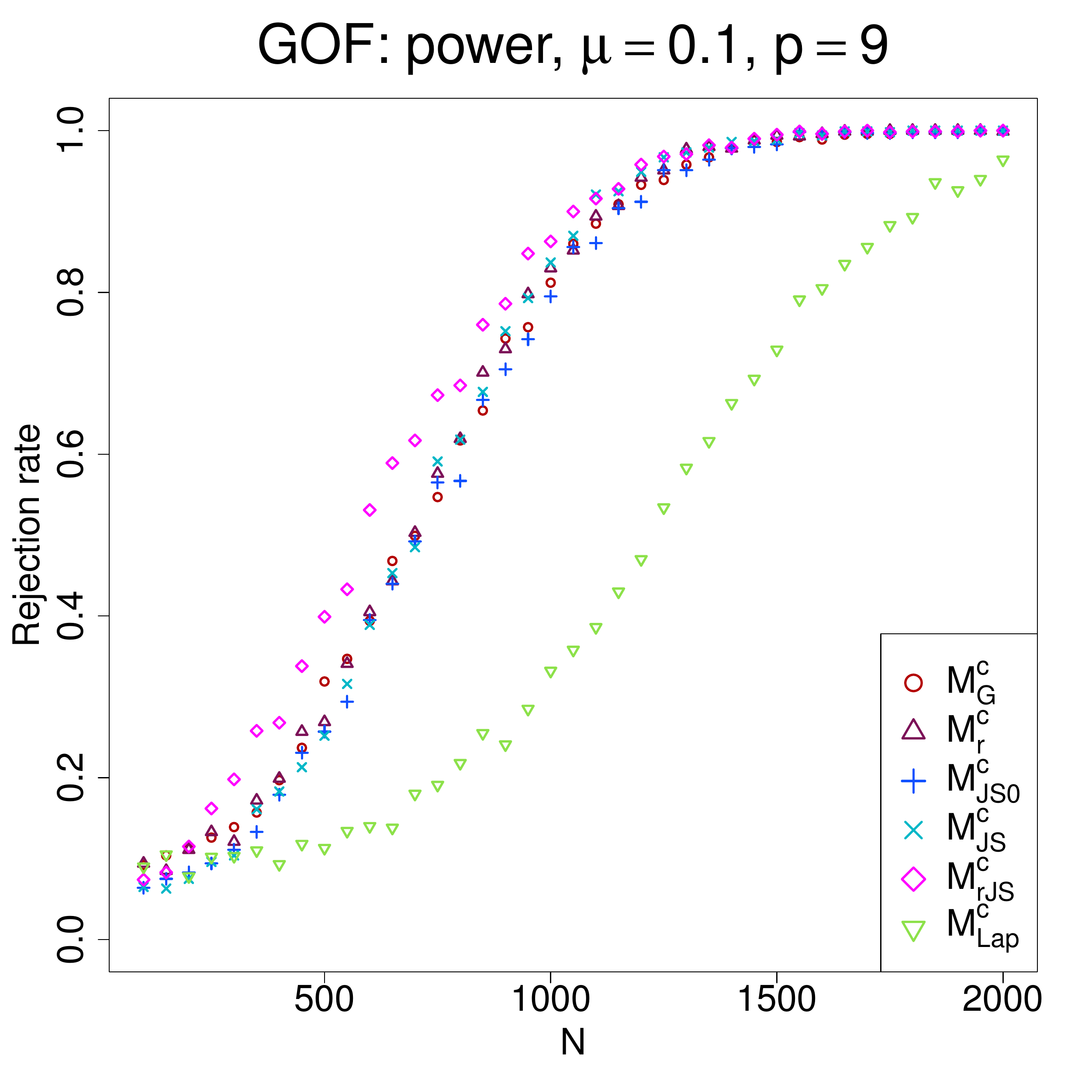}
\end{subfigure}
\begin{subfigure}{0.4\textwidth}
    \centering
    \includegraphics[width=.8\textwidth]{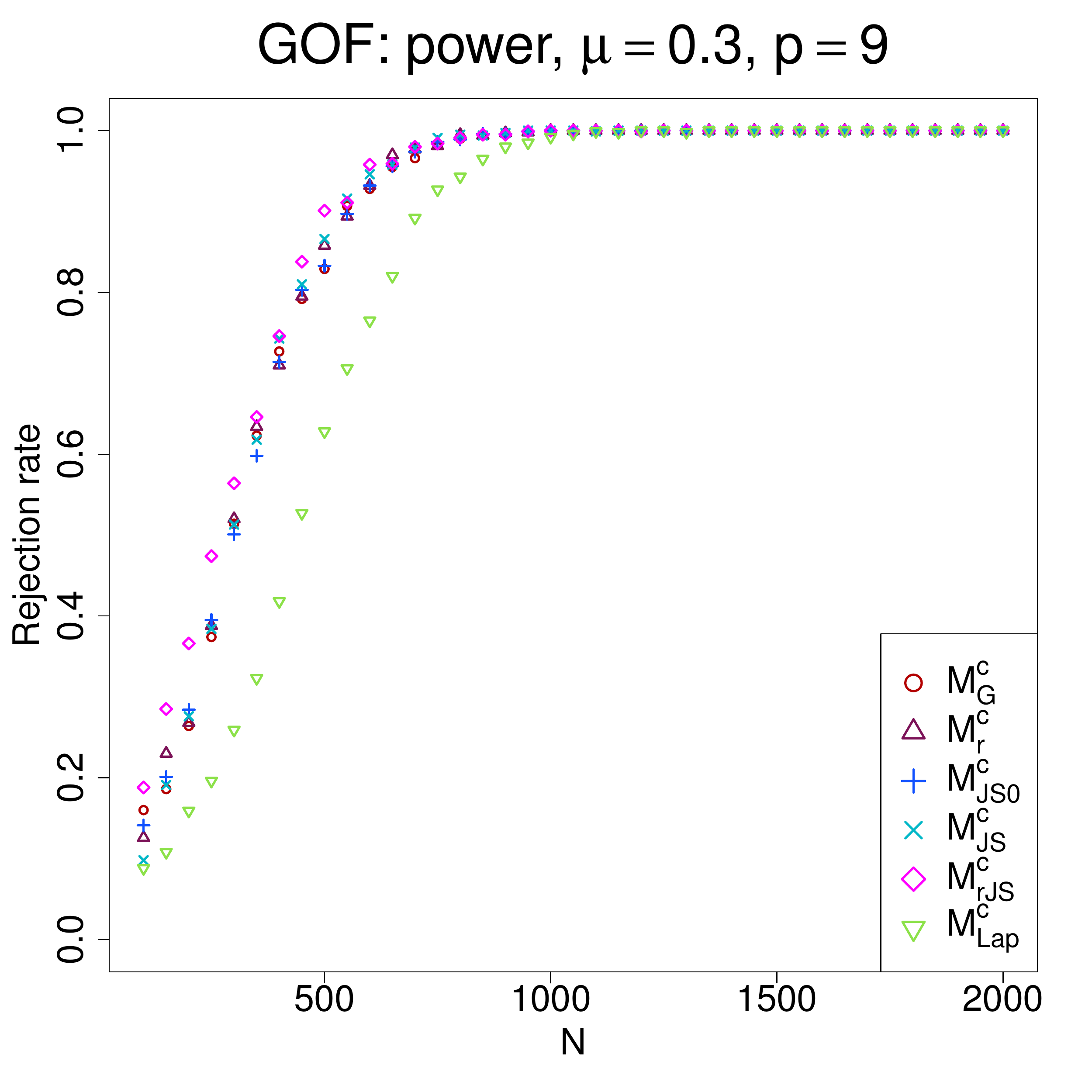}
\end{subfigure}
\begin{subfigure}{0.4\textwidth}
    \centering
    \includegraphics[width=.8\textwidth]{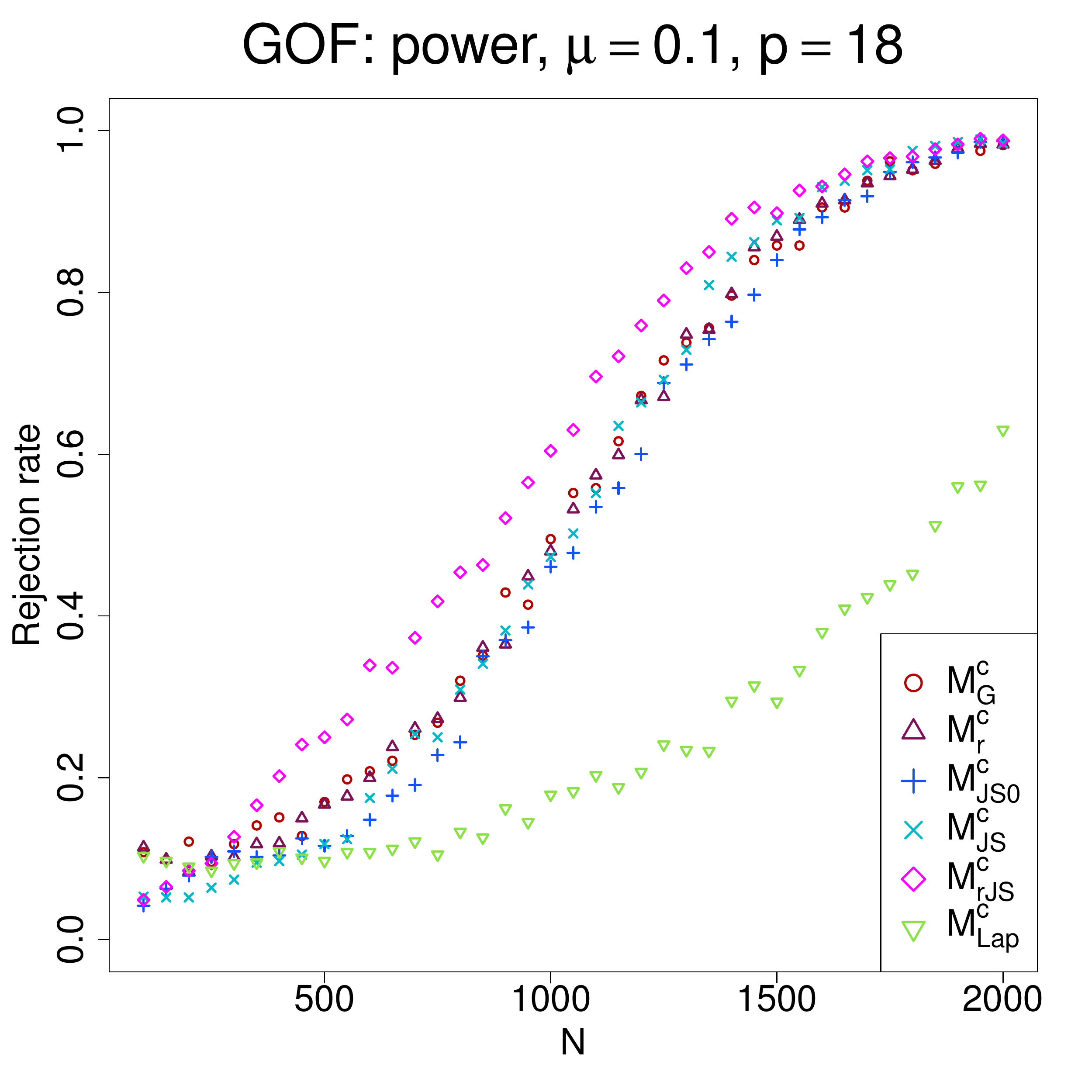}
\end{subfigure}
\begin{subfigure}{0.4\textwidth}
    \centering
    \includegraphics[width=.8\textwidth]{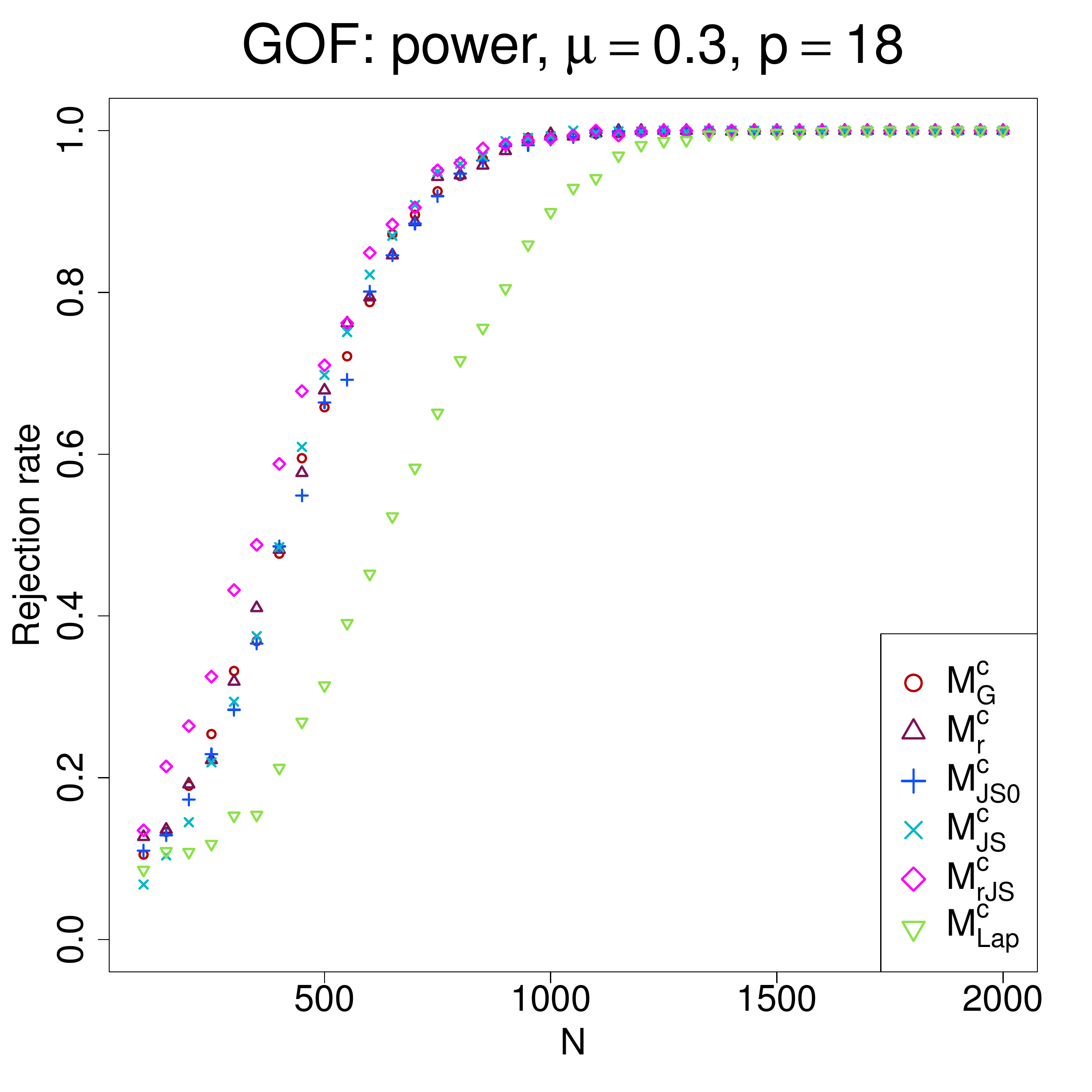}
\end{subfigure}
\caption{Graph of empirical power of GOF tests for each mechanism.}
\label{fig:gof-power}
\end{figure}

Private GOF tests from all mechanisms control the type I error rate at $\alpha = 0.05$ for most cases, as can be seen in Figure \ref{fig:gof-type1}. 
The only exception is the case of $N \le 500$ with $\mu=0.1$ and $p=18$, in which case the amount of perturbation is huge. Nevertheless, all the tests based on the James--Stein mechanisms control the type I error rates successfully. From Figure \ref{fig:gof-power}, we observe that the power is increasing as the sample size increases, for all cases. However, the power of the Laplace mechanism-based test is significantly lower than tests based on Gaussian-type mechanisms, albeit  the Laplace mechanism is tightly calibrated to satisfy $\mu$-GDP. Among Gaussian-type mechanisms, the test based on $\Mv_{rJS}^c$ stands out in terms of empirical power. The superior performance of using $\Mv_{rJS}^c$ over other choices is most pronounced for the high privacy regime ($\mu=0.1$) and for smaller sample sizes ($n \le 1000$).

\section{Discussion} \label{sec:discussion}

In this article, we demonstrated that the rank-deficiency of the sensitivity space $\Sc_{\thetav}$ and James--Stein shrinkage phenomenon can be used to improve the statistical utility of multivariate Gaussian mechanisms, which turns out to be more advantageous than using the optimally calibrated Laplace mechanism. In particular, except for the case where low levels of privacy protection are imposed, ordinary Laplace mechanisms are strictly dominated by ordinary Gaussian mechanisms. Nevertheless, one may naturally ask whether utilizing the rank-deficiency of $\Sc_{\thetav}$ and/or shrinkage phenomenon is beneficial for multivariate Laplace mechanisms. 

Similarly to the Gaussian case, utilizing the rank-deficiency of $\Sc_{\thetav}$ for Laplace mechanisms is also beneficial. To see this, let $\Delta_1(\thetav; \Xc^n) = \Delta_1$ and ${\rm dim}(\Sc_{\thetav}) = d_{\thetav} < p $. 
Let $\Uv = [\Uv_{{\Sc_{\thetav}}}, \Uv_1]$ be a $p \times p $ orthogonal matrix, where ${\rm span}(\Uv_{{\Sc_{\thetav}}}) = {\rm span}(\Sc_{\thetav})$. 
Then a rank-deficient Laplace mechanism $\Mv_{\Lap,r}(\cdot, b) $ may be defined as  
$$\Mv_{\Lap,r}(S; b) =    \Uv_{{\Sc_{\thetav}}}[ \Uv_{{\Sc_{\thetav}}}^\top \thetav(S) + \Lap_{d_{\thetav}}(\0v, b)] +  \Uv_1\Uv_1^\top \thetav(S).$$
Let $\Delta_1^{\Uv} :=  \sup_{\vv \in S_{\thetav}}\|\Uv_{{\Sc_{\thetav}}}^\top \vv\|_1$.
An application of  Theorem~\ref{thm:improved_lap_mech} implies that $\Mv_{\Lap,r}(\cdot, b)$ for $b \ge b_\mu^{\Delta_1^\Uv} :={\Delta_1^\Uv}/[{-2\log\{2\Phi(-\tfrac{\mu}{2})\}}]$ is $\mu$-GDP. 
Since the $L_r$-cost of $\Mv_{\Lap,r}(\cdot; b)$ is proportional to $d_{\thetav} b^r$, 
$\Mv_{\Lap,r}(\cdot; b)$ strictly dominates $\Mv_{\Lap}(\cdot; b)$ for $d_{\thetav} < p$. On the other hand, the smallest scale parameter $b$ of $\Mv_{\Lap,r}$, satisfying $\mu$-GDP, depends on the specific choice of $\Uv_{\Sc_{\thetav}}$. We suspect that $b_\mu^{\Delta_1^\Uv}$ is smaller than $b_\mu^{\Delta_1}$ for some choices of $\Uv_{\Sc_{\thetav}}$, but is not for other choices of $\Uv_{\Sc_{\thetav}}$. We have not pursued further investigation in this direction. This is because that the statistical utility of the ordinary Gaussian mechanism far exceeds that of Laplace mechanisms, and their rank-deficient siblings are deemed to have similar relations.

We believe that an application of shrinkage for multivariate Laplace mechanisms can be beneficial. However, since the $L_1$ shrinkage is more natural for Laplace perturbations, such an investigation falls outside the scope of this article. In addition, we note that the analytical composition bounds through  Edgeworth expansion \citep{zheng2020sharp, wang2022analytical} can be applied in handling Laplace trade-off functions when $p \ge 3$, and the central limit theorem approach seems plausible when $p$ is large \citep{dong2022gaussianDP, dong2021central, sommer2019privacy}. Applications of these approximations in calibration of randomized mechanisms of high-dimensional statistics may be an interesting direction for future research.


\newpage

\appendix 
\section{Technical tools for working with trade-off functions}\label{sec:appendix_lemmas_trade-off}

Oftentimes it is convenient to write trade-off functions with respect to random variables.
For random variables $ X \sim P$ and $ Y \sim Q $, we write $T(X, Y) \coloneqq T(P, Q)$.
If the distributions $ P $ and $ Q $ have densities $ f $ and $ g $, we also denote
$ T(f, g) \coloneqq T(P, Q)$.

Trade-off functions are invariant under simultaneous one-to-one affine transformations:
\begin{lem}\label{lem:Tf-linearity} Suppose $f$ and $g$ are density functions. For $p$-variate random vectors $\xv \sim f$ and $\yv \sim g$, and for any full-rank $p\times p$ matrix $\Av$ and $\bv \in \Real^p$,
   $$T(\Av\xv + \bv, \Av\yv + \bv) = T(\xv,\yv) = T(f,g).$$
\end{lem}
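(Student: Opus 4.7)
The plan is to show the claim by a direct bijection between randomized tests for the two pairs of distributions, preserving both type I and type II error rates.

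First, I would unpack the definition: $T(\xv,\yv)(\alpha) = \inf\{\beta_\phi : \alpha_\phi \le \alpha\}$, where the infimum runs over measurable $\phi:\Real^p \to [0,1]$, $\alpha_\phi = \E[\phi(\xv)]$ and $\beta_\phi = 1 - \E[\phi(\yv)]$. Similarly, $T(\Av\xv+\bv, \Av\yv+\bv)(\alpha) = \inf\{\tilde\beta_\psi : \tilde\alpha_\psi \le \alpha\}$ over $\psi:\Real^p \to [0,1]$, with $\tilde\alpha_\psi = \E[\psi(\Av\xv+\bv)]$ and $\tilde\beta_\psi = 1 - \E[\psi(\Av\yv+\bv)]$.

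Next, I would set up the correspondence: given any test $\psi$ in the transformed problem, define $\phi(\zv) = \psi(\Av\zv + \bv)$, which is measurable and $[0,1]$-valued. Since $\Av$ is full-rank (hence invertible) and $\bv$ is fixed, the map $\psi \mapsto \phi$ is a bijection between measurable tests, with inverse $\phi \mapsto \psi$ given by $\psi(\wv) = \phi(\Av^{-1}(\wv - \bv))$. By construction,
\begin{equation*}
 \alpha_\phi = \E[\phi(\xv)] = \E[\psi(\Av\xv + \bv)] = \tilde\alpha_\psi,
 \qquad
 \beta_\phi = 1 - \E[\phi(\yv)] = 1 - \E[\psi(\Av\yv + \bv)] = \tilde\beta_\psi.
\end{equation*}
Thus the set of achievable $(\alpha_\phi, \beta_\phi)$ pairs in the original problem coincides with the set of achievable $(\tilde\alpha_\psi, \tilde\beta_\psi)$ pairs in the transformed one, and taking the infimum of the second coordinate subject to the first being at most $\alpha$ yields $T(\xv,\yv)(\alpha) = T(\Av\xv+\bv, \Av\yv+\bv)(\alpha)$ for each $\alpha \in [0,1]$.

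There is essentially no obstacle here; the only points to note are (i) that invertibility of $\Av$ is exactly what makes the correspondence a bijection (so no tests are ``lost'' in either direction), and (ii) that the equality $T(\xv,\yv) = T(f,g)$ in the statement is just the notational convention introduced at the top of the appendix. The argument does not rely on $\xv$ and $\yv$ having densities, so $T(f,g) = T(\xv,\yv)$ holds by definition once $\xv \sim f$ and $\yv \sim g$.
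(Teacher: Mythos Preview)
Your proof is correct and takes a genuinely different route from the paper's. The paper invokes the Neyman--Pearson lemma to write down the optimal rejection region $R_\alpha = \{\wv : \log(g(\wv)/f(\wv)) \ge c_\alpha\}$ for the original pair, computes the transformed densities $\tfrac{1}{|\Av|}f(\Av^{-1}(\cdot - \bv))$ and $\tfrac{1}{|\Av|}g(\Av^{-1}(\cdot - \bv))$, and then checks that the optimal threshold and type II error for the transformed problem coincide with the original ones. Your argument instead bypasses Neyman--Pearson entirely: you set up a bijection $\psi \leftrightarrow \phi$ between \emph{all} tests in the two problems that preserves both error rates, so the achievable $(\alpha,\beta)$ regions---and hence their lower boundaries---agree pointwise. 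Your approach is more elementary (no likelihood ratios, no densities needed, as you note) and in fact more general, since it applies to any measurable bijection of the sample space, not just affine ones. The paper's approach, by contrast, makes explicit use of the density assumption in the lemma's hypothesis and gives a slightly more concrete picture of how the optimal rejection regions transform.
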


\begin{proof}[Proof of Lemma~\ref{lem:Tf-linearity}]
By the Neyman-Pearson lemma, the most powerful rejection region corresponding to $T(\xv,\yv)$ is $R_\alpha = \{ \wv : \log (g(\wv) / f(\wv)) \ge c_\alpha\}$ where $c_\alpha$ satisfies $\mathbb P_{\Mv \sim f}(\Mv \in R_\alpha) = \alpha$. %
The density functions of $\Av\xv+\bv$ and $\Av\yv+\bv$ are $\frac{1}{|\Av|}f(\Av^{-1}(\cdot - \bv))$ and $\frac{1}{|\Av|}g(\Av^{-1}(\cdot - \bv))$, respectively. Then the rejection region corresponding to $T(\Av\xv+\bv,\Av\yv+\bv)$ at level $\alpha$ is
$$R_\alpha' = \{\zv: \log \frac{g(\Av^{-1}(\zv - \bv))}{f(\Av^{-1}(\zv - \bv))} \ge c_\alpha'\},$$
where $c_\alpha'$ satisfies $\mathbb P(\Zv \in R_\alpha' ) = \alpha$ for $\Zv \stackrel{d}= \Av\xv + \bv$. %
It can be checked that $c_\alpha' = c_\alpha$, and that $\mathbb P_{\Mv \sim g}(\Av\Mv+\bv \notin R_\alpha' ) = \mathbb P_{\Mv \sim g}(\Mv \notin R_\alpha )$ for all $\alpha \in (0,1)$. [This lemma was used implicitly in \cite{dong2022gaussianDP}.]
\end{proof}

The next few results are useful when dealing with Gaussian mechanisms.

\begin{lem}\label{lem:mv-Tf-normal} Let $\muv, \muv_1, \muv_2 \in \Real^p$ be arbitrary and $\Sigmav$ be a $p\times p$ symmetric positive-definite matrix. Then,
\begin{itemize}
  \item[(i)]
$T(N_p(\0v,\Id_p), N_p( \muv, \Id_p)) = T(N(0,1), N(\norm{\muv}_2, 1)) = G_{\norm{\muv}_2}$.

  \item[(ii)]  $T(N_p(\muv_1, \Sigmav), N_p(\muv_2, \Sigmav))  = G_{\|\Sigmav^{-1/2}(\bm{\mu}_2 - \bm{\mu}_1)\|_2}$.
\end{itemize}
\end{lem}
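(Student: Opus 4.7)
The plan is to reduce both statements to essentially a one-dimensional Neyman--Pearson calculation, using the affine invariance established in Lemma~\ref{lem:Tf-linearity}.

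For part (i), I would first pick any $p\times p$ orthogonal matrix $\Qv$ with $\Qv\muv = \|\muv\|_2 \ev_1$ (which exists whenever $\muv \ne \0v$; the case $\muv = \0v$ is trivial since both sides equal $T(N(0,1),N(0,1))$, the identity trade-off). Applying Lemma~\ref{lem:Tf-linearity} with $\Av = \Qv$ and $\bv = \0v$, and using that $\Qv\Id_p\Qv^\top = \Id_p$, reduces the problem to showing
\[
T(N_p(\0v, \Id_p),\, N_p(\|\muv\|_2 \ev_1, \Id_p)) = G_{\|\muv\|_2}.
\]
In this rotated setting the two densities factor as a product over coordinates, and their ratio is $\exp(\|\muv\|_2 x_1 - \|\muv\|_2^2/2)$, which depends only on $x_1$. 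By the Neyman--Pearson lemma, the most powerful level-$\alpha$ test rejects when $x_1 > \Phi^{-1}(1-\alpha)$, and the marginals of $x_1$ under the two hypotheses are $N(0,1)$ and $N(\|\muv\|_2,1)$. A direct computation of the type II error rate then yields $G_{\|\muv\|_2}(\alpha) = \Phi(\Phi^{-1}(1-\alpha) - \|\muv\|_2)$, matching (\ref{eq:Gmu}).

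For part (ii), I would apply Lemma~\ref{lem:Tf-linearity} with the whitening map $\Av = \Sigmav^{-1/2}$ and translation $\bv = -\Sigmav^{-1/2}\muv_1$. If $\xv \sim N_p(\muv_i, \Sigmav)$, then $\Av\xv + \bv \sim N_p(\Sigmav^{-1/2}(\muv_i - \muv_1), \Id_p)$ for $i=1,2$. Hence
\[
T(N_p(\muv_1,\Sigmav), N_p(\muv_2, \Sigmav)) = T(N_p(\0v, \Id_p),\, N_p(\Sigmav^{-1/2}(\muv_2-\muv_1), \Id_p)),
\]
and part (i) immediately identifies this with $G_{\|\Sigmav^{-1/2}(\muv_2 - \muv_1)\|_2}$.

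There is no genuine obstacle here; the result is essentially folklore. The only step that deserves some care is the justification that, in the rotated problem, the MP level-$\alpha$ rejection region can be taken to depend only on $x_1$, so that the $p$-variate trade-off coincides with the univariate one. This follows from the Neyman--Pearson lemma since the likelihood ratio is a monotone function of $x_1$ alone, making the first-coordinate marginal sufficient for the comparison of the two Gaussians.
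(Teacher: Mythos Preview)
Your proposal is correct and follows essentially the same approach as the paper. The only cosmetic difference is in part (i): the paper applies the Neyman--Pearson lemma directly in the original coordinates to obtain the rejection region $\{\xv : \xv^\top\muv \ge \|\muv\|_2\,\Phi^{-1}(1-\alpha)\}$, whereas you first rotate so that $\muv$ aligns with $\ev_1$ and then read off the one-dimensional test; these are the same computation up to a change of basis, and part (ii) is handled identically in both.
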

\begin{proof}[Proof of Lemma~\ref{lem:mv-Tf-normal}]
(i) By the Neyman--Pearson lemma, the most powerful rejection region corresponding to $T(N_p(\0v,\Id_p), N_p( \muv, \Id_p))$ is
$R_\alpha = \{ \xv \in \Real^p : \xv^\top \muv \ge  \norm{\muv}_2 \Phi^{-1}(1-\alpha)\}$. 
Now, one can easily check that the type II error rate is $\mathbb{P}_{\xv \sim N_p(\muv, \Id_p)}(\xv \notin R_\alpha) = G_{\norm{\muv}_2}(\alpha)$.

(ii) By Lemma \ref{lem:Tf-linearity}, we have
$T(N_p(\muv_1, \Sigmav), N_p(\muv_2, \Sigmav)) = T(N_p(\0v, \Id_p), N_p(\Sigmav^{-1/2}(\muv_2 - \muv_1), \Id_p))$. Applying part (i) gives the desired result.
\end{proof}

An immediate consequence of Lemmas~\ref{lem:Tf-linearity} and \ref{lem:mv-Tf-normal} is the following result on the privacy guarantee of ordinary Gaussian mechanisms.

\begin{thm}[Ordinary Gaussian mechanism] \label{thm:G-mech}
Let $\thetav$ be a multivariate statistic with $\Delta_2(\thetav; \Xc^n) = \Delta_2$.
The ordinary Gaussian mechanism $\Mv_G$, given by $\Mv_G(S)  = \thetav(S) + N_p(\0v, \sigma^2 \Id_p)$,
is $\mu$-GDP for any $\mu \ge \mu_0 := \Delta_2 / \sigma$.
\end{thm}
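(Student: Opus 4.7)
The plan is to verify the $\mu$-GDP condition directly from the definition by computing the trade-off function between the output distributions on an arbitrary pair of neighboring data sets and bounding it from below by $G_{\mu}$.

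First I would fix neighboring data sets $S \sim S'$ and observe that $\Mv_G(S) \sim N_p(\thetav(S), \sigma^2 \Id_p)$ and $\Mv_G(S') \sim N_p(\thetav(S'), \sigma^2 \Id_p)$. These two distributions differ only in the mean, sharing the common covariance $\sigma^2 \Id_p$. By Lemma~\ref{lem:mv-Tf-normal}(ii), with $\Sigmav = \sigma^2 \Id_p$, the trade-off function equals
\[
T(\Mv_G(S), \Mv_G(S')) \;=\; G_{\|\thetav(S)-\thetav(S')\|_2 / \sigma}.
\]

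Next I would use two elementary facts. From the definition \eqref{eq:Lr-sensitivity} of $L_2$-sensitivity, $\|\thetav(S) - \thetav(S')\|_2 \le \Delta_2$, so $\|\thetav(S)-\thetav(S')\|_2/\sigma \le \Delta_2/\sigma = \mu_0$. And as noted after Definition~\ref{def:GDP}, the family $\{G_\nu\}_{\nu \ge 0}$ is non-increasing in $\nu$ (larger mean-separation makes the hypotheses easier to distinguish, i.e.\ smaller trade-off). Combining these,
\[
T(\Mv_G(S), \Mv_G(S')) \;=\; G_{\|\thetav(S)-\thetav(S')\|_2 / \sigma} \;\ge\; G_{\mu_0} \;\ge\; G_\mu
\]
for any $\mu \ge \mu_0$.

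Finally I would take the infimum over all neighboring pairs $S \sim S'$, which preserves the lower bound, yielding $\inf_{S\sim S'} T(\Mv_G(S), \Mv_G(S')) \ge G_\mu$. By Definition~\ref{def:GDP}, this is exactly the statement that $\Mv_G$ is $\mu$-GDP. There is no real obstacle here; the proof is essentially a one-line consequence of Lemma~\ref{lem:mv-Tf-normal}(ii) combined with the monotonicity of $\nu \mapsto G_\nu$, with the role of the isotropic covariance being to reduce the multivariate trade-off to the univariate Gaussian trade-off $G_{\|\cdot\|_2/\sigma}$.
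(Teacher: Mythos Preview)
Your proposal is correct and matches the paper's own proof essentially line for line: both invoke Lemma~\ref{lem:mv-Tf-normal} to identify $T(\Mv_G(S),\Mv_G(S')) = G_{\|\thetav(S)-\thetav(S')\|_2/\sigma}$, then use the monotonicity of $\nu \mapsto G_\nu$ together with $\|\thetav(S)-\thetav(S')\|_2 \le \Delta_2$ to bound the infimum over neighboring pairs below by $G_{\mu_0} \ge G_\mu$.
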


\begin{proof}[Proof of Theorem \ref{thm:G-mech}]
For any data sets $S$ and $S'$, $\Mv_G(S) \sim N_p(\thetav(S), \sigma^2\Id_p)$ and $\Mv_G(S') \sim N_p(\thetav(S'),\sigma^2\Id_p)$. By Lemmas~\ref{lem:Tf-linearity} and \ref{lem:mv-Tf-normal},
$$T(\Mv_G(S), \Mv_G(S')) = G_{\frac{\norm{ \thetav(S)-\thetav(S')}_2}{\sigma}}.$$
Since   $0 \le a \le b$ if and only if $G_a \ge G_{b}$,
$$
\inf_{S\sim S'}T(\Mv_G(S), \Mv_G(S'))  = \inf_{S\sim S'}G_{\frac{\norm{ \thetav(S)-\thetav(S')}_2}{\sigma}}
 = G_{\sup_{S\sim S'}\frac{\norm{ \thetav(S)-\thetav(S')}_2}{\sigma}} =
 G_{ \frac{\Delta_2}{\sigma}} =  G_{ \mu_0}.
$$
Thus, for any $\mu \ge\mu_0$,
$\inf_{S\sim S'}T(\Mv_G(S), \Mv_G(S'))  \ge G_{\mu}$.
\end{proof}

Care is needed when we compare two multivariate normal distributions with a \emph{rank-deficient} covariance matrix.
\begin{lem}\label{lem:rank-deficient}
Let $\xv \sim N_p(\muv_1, \Sigmav)$ and $\yv \sim N_p(\muv_2, \Sigmav)$, where $\Sigmav$ is a $p\times p$ symmetric semi-definite matrix with ${\rm rank}(\Sigmav) < p$. Let $\Cc(\Sigmav) \subset \Real^p$ be the column space of the matrix $\Sigmav$.
 \begin{itemize}
   \item [(a)] If $\muv_2 - \muv_1 \in \Cc(\Sigmav)$, then $T(\xv,\yv) = G_{\| (\Sigmav^\dagger)^{\frac{1}{2}} (\muv_2 -\muv_1) \|_2 }$, where $\Sigmav^\dagger$ is the Moore-Penrose pseudo-inverse of $\Sigmav$.
   \item [(b)] If $\muv_2 - \muv_1 \notin \Cc(\Sigmav)$, then $T(\xv,\yv)(\alpha) = 0$ for all $\alpha\in [0,1]$.
 \end{itemize}
\end{lem}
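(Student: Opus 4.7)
The plan is to reduce the rank-deficient problem to a full-rank Gaussian problem on the column space $\Cc(\Sigmav)$. Let $r = {\rm rank}(\Sigmav)$ and write the spectral decomposition $\Sigmav = \Uv_1 \Lambdav_1 \Uv_1^\top$, where the columns of the $p\times r$ matrix $\Uv_1$ form an orthonormal basis of $\Cc(\Sigmav)$ and $\Lambdav_1$ is the diagonal matrix of positive eigenvalues. Let $\Uv_2$ be a $p\times(p-r)$ matrix whose columns form an orthonormal basis of $\Cc(\Sigmav)^\perp$, so that $[\Uv_1\ \Uv_2]$ is orthogonal. The key observation is that $\Uv_2^\top \Sigmav \Uv_2 = \0v$, so $\Uv_2^\top \xv = \Uv_2^\top \muv_1$ and $\Uv_2^\top \yv = \Uv_2^\top \muv_2$ almost surely, while $\Uv_1^\top \xv \sim N_r(\Uv_1^\top \muv_1, \Lambdav_1)$ and $\Uv_1^\top \yv \sim N_r(\Uv_1^\top \muv_2, \Lambdav_1)$.

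For part (b), if $\muv_2 - \muv_1 \notin \Cc(\Sigmav)$, then $\Uv_2^\top \muv_1 \ne \Uv_2^\top \muv_2$. The (deterministic) test $\phi(\zv) = \Ind\{\Uv_2^\top \zv \ne \Uv_2^\top \muv_1\}$ then satisfies $\alpha_\phi = 0$ and $\beta_\phi = 0$. Hence $T(\xv,\yv)(0) = 0$, and since $T(\xv,\yv)$ is non-increasing and non-negative, $T(\xv,\yv) \equiv 0$ on $[0,1]$.

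For part (a), when $\muv_2 - \muv_1 \in \Cc(\Sigmav)$ we have $\Uv_2^\top \muv_1 = \Uv_2^\top \muv_2$, so almost surely $\xv$ and $\yv$ both lie on the common affine subspace $\muv_1 + \Cc(\Sigmav)$, and their $\Uv_2$-components agree deterministically under both hypotheses. The map $\zv \mapsto \Uv_1^\top \zv$ is thus a bimeasurable bijection between this common support and $\Real^r$ (with inverse $\wv \mapsto \Uv_1 \wv + \Uv_2\Uv_2^\top \muv_1$). Hence any test on $\xv$ corresponds to a test on $\Uv_1^\top \xv$ with the same error rates and vice versa, so $T(\xv,\yv) = T(\Uv_1^\top \xv, \Uv_1^\top \yv)$. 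Applying Lemma~\ref{lem:mv-Tf-normal}(ii) to the full-rank Gaussian pair gives $T(\Uv_1^\top \xv, \Uv_1^\top \yv) = G_{\|\Lambdav_1^{-1/2}\Uv_1^\top(\muv_2 - \muv_1)\|_2}$. To finish, use $\Sigmav^\dagger = \Uv_1 \Lambdav_1^{-1}\Uv_1^\top$ to compute
\[
\|(\Sigmav^\dagger)^{1/2}(\muv_2 - \muv_1)\|_2^2 = (\muv_2-\muv_1)^\top \Uv_1 \Lambdav_1^{-1}\Uv_1^\top (\muv_2 - \muv_1) = \|\Lambdav_1^{-1/2}\Uv_1^\top(\muv_2 - \muv_1)\|_2^2,
\]
matching the two expressions.

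The only subtle point is justifying $T(\xv,\yv) = T(\Uv_1^\top \xv, \Uv_1^\top \yv)$ in part (a): because both distributions are singular with respect to Lebesgue measure on $\Real^p$, one cannot invoke Lemma~\ref{lem:mv-Tf-normal}(ii) directly. The cleanest way is the bimeasurable-bijection argument above (equivalent to noting that $\Uv_1^\top(\cdot)$ is a sufficient statistic whose inverse on the common support is measurable). Once this identification is in place, everything else is routine.
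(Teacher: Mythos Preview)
Your proof is correct and follows essentially the same approach as the paper: reduce via the spectral decomposition $\Sigmav = \Uv_1\Lambdav_1\Uv_1^\top$, observe that the $\Uv_2$-component is deterministic, then in case~(a) identify the testing problem with the full-rank $r$-dimensional Gaussian problem and apply Lemma~\ref{lem:mv-Tf-normal}, and in case~(b) exhibit a test with zero type~I and type~II error. Your treatment of the subtle point in~(a)---making explicit the bimeasurable bijection (sufficiency) argument for why $T(\xv,\yv)=T(\Uv_1^\top\xv,\Uv_1^\top\yv)$---is in fact slightly more careful than the paper's, which simply asserts the equivalence of the two testing problems.
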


\begin{proof}
By Lemma \ref{lem:Tf-linearity}, $T(\xv, \yv) = T( N_p(\muv_1, \Sigmav), N_p(\muv_2, \Sigmav)) = T( N_p(\0v, \Sigmav), N_p(\muv, \Sigmav))$ for $\muv = \muv_2 - \muv_1$. Without loss of generality,  we let $\muv_1 = \0v$ and $\muv_2 = \muv$.

Let $\Sigmav = \Uv \Lambdav \Uv^\top$ be the eigendecomposition of $\Sigmav$ such that $\Lambdav = \mbox{diag}(\lambda_1,\ldots,\lambda_q)$ for $q = {\rm rank}(\Sigmav) < p$. Let $\Uv_2$ be such that $[\Uv, \Uv_2]$ is an orthogonal matrix. Then for
\begin{align*}
  &\zv_1 = \Uv^\top \xv,  \quad    \wv_1 = \Uv^\top \yv, \\
  &\zv_2 = \Uv_2^\top \xv,\quad   \wv_2 = \Uv_2^\top \yv,
\end{align*}
we have
\begin{align*}
  &\zv_1 \sim N_q(\0v, \Lambdav), \quad   \wv_1 \sim N_q(\Uv^\top \muv, \Lambdav), \\
  &\zv_2 =  \0v,  \quad  \quad \quad \quad \ \wv_2 = \Uv_2^\top \muv.
\end{align*}
(a) Suppose $\muv \in \Cc(\Sigmav)$, that is, $\Uv_2^\top \muv = 0$. Then, testing
$$H_0: N_p(\0v, \Sigmav) \quad vs. \quad H_1: N_p(\muv, \Sigmav)$$
based on one observation from either distribution is equivalent to testing
$$H_0: N_q(\0v, \Lambdav) \quad vs. \quad H_1: N_q(\Uv^\top \muv, \Lambdav)$$
based on one observation from either distribution. Thus,
\begin{align*}
  T(\xv,\yv) & = T(\zv_1, \wv_1)  \\
             & = T( N_q(\0v, \Lambdav), N_q(\Uv^\top \muv, \Lambdav) ) \\
             & = T( N(0,1), N( \| \Lambdav^{-1/2} \Uv^\top \muv \|_2, 1) ).
\end{align*}
Since $ \| \Lambdav^{-1/2} \Uv^\top \muv \|_2 =  \| \Uv \Lambdav^{-1/2} \Uv^\top \muv \|_2 = \| (\Sigmav^\dagger)^{\frac{1}{2}} \muv \|_2$, the result follows.

(b) Suppose $\muv \notin \Cc(\Sigmav)$.
Then $\Uv_2^\top \muv \neq 0$.
This means that $\zv_2 =  \0v$ and  $\wv_2 = \Uv_2^\top \muv$ are distinguishable with probability 1.
Consider a test function $\phi$ defined by
$$
\phi(\xv)
=
\begin{cases}
\mbox{reject}
& \mbox{if } (\Uv_2^\top\mu)^{\top} (\Uv_2^\top\xv - \Uv_2^\top\mu/2) > 0, \\
\mbox{do not reject} & \mbox{otherwise}.
\end{cases}
$$
Then the type I and II error rates are calculated as
$\alpha_\phi = \P( -\|\wv_2\|_2^2/2 > 0 ) = 0$ and
$\beta_\phi = \P( \|\wv_2\|_2^2/2 \le 0 ) = 0$.
Since the trade-off function is non-increasing, we have
$$
T(\xv, \yv) \le T(\xv, \yv)(0) = T(\xv, \yv)(\alpha_{\phi}) = \beta_{\phi} = 0.  %
$$
This completes the proof.
\end{proof}

Finally, we formally state the post-processing property for the $f$-DP criterion, discussed in Sections~\ref{sec:2.2} and \ref{sec:2.3}.

\begin{lem}[Post-processing property, \cite{dong2022gaussianDP}]\label{lem:post-processing}
Let $f$ be a trade-off function. If a mechanism $M : \Xc^n \rightarrow \Yc$ is $f$-DP, then $Proc \circ M$ is also $f$-DP for any post-processing function $Proc : \Yc \rightarrow \Zc$ where $\Zc$ is some abstract space.
\end{lem}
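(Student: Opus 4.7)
The plan is to prove the post-processing property by the standard data-processing argument: any test that could be used on the post-processed output can be mimicked on the raw output, so the trade-off function can only weakly increase under post-processing.

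First I would fix an arbitrary pair of neighboring data sets $S \sim S'$, and write $P = \mathcal{L}(M(S))$, $Q = \mathcal{L}(M(S'))$ on $\mathcal{Y}$, and $P' = \mathcal{L}(\mathrm{Proc}(M(S)))$, $Q' = \mathcal{L}(\mathrm{Proc}(M(S')))$ on $\mathcal{Z}$. The goal is to show $T(P', Q') \ge T(P, Q)$ pointwise on $[0,1]$, from which $f$-DP transfers to $\mathrm{Proc}\circ M$ because
\[
\inf_{S\sim S'} T(\mathrm{Proc}(M(S)), \mathrm{Proc}(M(S'))) \;\ge\; \inf_{S\sim S'} T(M(S), M(S')) \;\ge\; f.
\]

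Next I would fix $\alpha \in (0,1)$ and any (possibly randomized) decision function $\phi : \mathcal{Z} \to [0,1]$ with type I error $\alpha_\phi = \mathbb{E}_{Z\sim P'}[\phi(Z)] \le \alpha$. Define $\tilde\phi : \mathcal{Y} \to [0,1]$ by composing with the (possibly randomized) post-processing channel, $\tilde\phi(y) = \mathbb{E}[\phi(\mathrm{Proc}(y))]$, where the expectation is taken over the internal randomness of $\mathrm{Proc}$. By the tower property, $\mathbb{E}_{Y\sim P}[\tilde\phi(Y)] = \mathbb{E}_{Z\sim P'}[\phi(Z)] = \alpha_\phi \le \alpha$ and $\mathbb{E}_{Y\sim Q}[\tilde\phi(Y)] = \mathbb{E}_{Z\sim Q'}[\phi(Z)]$, so $\tilde\phi$ is a valid level-$\alpha$ test for $(P, Q)$ with the \emph{same} type II error rate as $\phi$ has for $(P', Q')$. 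Therefore
\[
T(P, Q)(\alpha) = \inf_{\psi : \alpha_\psi \le \alpha} \beta_\psi \;\le\; \beta_{\tilde\phi} \;=\; \beta_\phi.
\]
Taking the infimum over all level-$\alpha$ tests $\phi$ on $\mathcal{Z}$ yields $T(P,Q)(\alpha) \le T(P', Q')(\alpha)$, which is the desired pointwise inequality.

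The only mild technical point I expect is handling a \emph{randomized} post-processing map $\mathrm{Proc}$: one must be careful that the induced test $\tilde\phi = \mathbb{E}[\phi \circ \mathrm{Proc}]$ is measurable and that its error rates really equal those of $\phi$ applied to the post-processed variable. This follows from Fubini applied to the joint law of $(Y, \mathrm{Proc}(Y))$, and deterministic $\mathrm{Proc}$ is a special case. Everything else is the routine observation that the class of tests available after post-processing embeds into the class of tests available before post-processing, so the infimum can only grow.
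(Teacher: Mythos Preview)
Your argument is correct and is precisely the standard data-processing proof: every level-$\alpha$ test on the post-processed output induces, via $\tilde\phi(y)=\mathbb{E}[\phi(\mathrm{Proc}(y))]$, a level-$\alpha$ test on the raw output with identical type~II error, so $T(P,Q)\le T(P',Q')$ pointwise and the $f$-DP bound transfers.

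The paper itself does not prove this lemma; it simply quotes it from \cite{dong2022gaussianDP} without argument. Your write-up therefore supplies what the paper omits, and it matches the proof given in that reference. The one small comment is that the lemma as stated in the paper does not explicitly say $\mathrm{Proc}$ may be randomized, whereas you (rightly) handle that generality; this is harmless since the deterministic case is all the paper actually uses (e.g.\ James--Stein shrinkage and truncation).
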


\section{Proofs for Section~\ref{section: mv-mech}}

\begin{proof}[Proof of Lemma~\ref{lem:iso_Gauss}]
Assume $\lambda_{\min}(\Sigmav) \ge \Delta_2^2 / \mu^2 > 0$. Then $\Sigmav$ is positive definite, and for any $\thetav$ with $\Delta_2(\thetav; \Xc^n) \le \Delta_2$, Lemma~\ref{lem:mv-Tf-normal} implies that 
\begin{align*} %
\begin{split}
    &\inf_{S \sim S'} T(\Mv_{\Sigmav}(S), \Mv_{\Sigmav}(S'))\\
    &= \inf_{S \sim S'} T(N(0,\Sigmav), N(\thetav(S) - \thetav(S'), \Sigmav))\\
    &= \inf_{S \sim S'} T(N(0, 1), N(\norm{\Sigmav^{-1/2} (\thetav(S) - \thetav(S'))}_2, 1))\\
    &= T(N(0, 1), N(\sup_{S \sim S'} \norm{\Sigmav^{-1/2} (\thetav(S) - \thetav(S'))}_2, 1))\\
    &\ge T(N(0, 1), N(\norm{\Sigmav^{-1/2}}_2 \sup_{S \sim S'} \norm{(\thetav(S) - \thetav(S'))}_2, 1))\\
    &= T(N(0,1), T(\frac{\Delta_2(\thetav; \Xc^n)}{\sqrt{\lambda_{\rm min}(\Sigmav)}},1). 
\end{split}
\end{align*}
Since $\frac{\Delta_2(\thetav, \Xc^n)}{\sqrt{\lambda_{\rm min}(\Sigmav)}} \le \frac{\Delta_2}{\sqrt{\lambda_{\rm min}(\Sigmav)}} \le \mu$, we have $\inf_{S \sim S'} T(\Mv_{\Sigmav}(S), \Mv_{\Sigmav}(S')) \ge G_\mu$, as required. 
 
The `only if` part can be verified by showing the contrapositive: If $\lambda_{\min}(\Sigmav) < \Delta_2^2/\mu^2$, then there exists a statistic $\thetav$ with $\Delta_2(\thetav; \Xc^n) \le \Delta_2$ such that $\Mv_{\Sigmav}$ does not satisfy $\mu$-GDP. 

Let $\bv \in \Real^p$ be an eigenvector of $\Sigmav$ satisfying $\Sigmav \bv = \lambda_{\min}(\Sigmav) \bv$ and $\|\bv\|_2 = 1$. Let $\thetav$ be such that for some $S,S' \in \Xc^n$, $S\sim S'$, $\thetav(S) - \thetav(S') = \Delta_2 \bv$. That is, $\Delta_2 \bv \in \Sc_{\thetav}$.
The trade-off function corresponding to $\Mv_{\Sigmav}$ is   
\begin{align*}
    \inf_{S \sim S'} T(\Mv_{\Sigmav}(S), \Mv_{\Sigmav}(S')) 
    = \inf_{\deltav \in \Sc_{\thetav} } T(N_p( \deltav , \Sigmav), N_p(\0v, \Sigmav)) 
    \le T(N_p( \Delta_2 \bv  , \Sigmav), N_p(\0v, \Sigmav)).
\end{align*} 

If $\lambda_{\min}(\Sigmav) = 0$, then by Lemma~\ref{lem:rank-deficient}, 
$T(N_p( \Delta_2 \bv  , \Sigmav), N_p(\0v, \Sigmav))(\alpha) = 0 < G_\mu (\alpha)$ for all $\alpha \in (0,1)$. Thus, $\Mv_{\Sigmav}$ does not satisfy $\mu$-GDP (for any $\mu >0$).

Suppose that $\lambda_{\min}(\Sigmav) > 0$. Then,  we have 
\begin{align*}
  \norm{\Sigmav^{-1/2}\Delta_2 \bv}_2 & =
     \norm{\Sigmav^{-1/2}}_2 \norm{\Delta_2 \bv}_2 \\ 
     & = \norm{\Sigmav^{-1/2}}_2  \Delta_2 \\
     & = \Delta_2 / \sqrt{\lambda_{\min}(\Sigmav)} > \mu,
\end{align*} 
which in turn implies that 
$T(N_p( \Delta_2 \bv  , \Sigmav), N_p(\0v, \Sigmav)) =  T(N(0,1), N(\norm{\Sigmav^{-1/2}\Delta_2 \bv}_2, 1)) < G_\mu$, as desired.  
\end{proof}

\begin{proof}[Proof of Lemma~\ref{lem:iso_Gauss_stronger}]
Lemma~\ref{lem:iso_Gauss} implies that if $\sigma^2  \ge \Delta_2^2/\mu^2$, 
$\Mv_\Sigmav$ is $\mu$-GDP.

For the `only if' part, suppose that $\sigma^2 < \frac{\Delta_2^2}{\mu^2}$. Then, for any $\thetav$ with $\Delta_2(\thetav; \Xc^n) = \Delta_2$,
\begin{align*} %
    &\inf_{S \sim S'} T(\Mv_{\Sigmav}(S), \Mv_{\Sigmav}(S'))\\
    &= T(N(0, 1), N(\sup_{S \sim S'} \norm{\sigma^{-1} (\thetav(S) - \thetav(S'))}_2, 1))\\
    &= G_{\Delta_2 / \sigma}\\
    &< G_\mu
\end{align*}
on $(0,1)$.
\end{proof}

\begin{proof}[Proof of Theorem~\ref{thm:rank-deficient-G-mech}]
For any $S\sim S'$, we have $\thetav(S) - \thetav(S') \in \Cc(\Pv_{\thetav})$.
Thus, by Lemma \ref{lem:rank-deficient} (a),
\begin{align*}
    T(\Mv_r(S), \Mv_r(S'))
    &= G_{\norm{((\sigma^2 \Pv_{\thetav})^\dagger)^{1/2} (\thetav(S) - \thetav(S'))}_2}\\
    &= G_{\frac{1}{\sigma} \norm{(\Pv_{\thetav})^{1/2} (\thetav(S) - \thetav(S'))}_2}\\
    &= G_{\frac{1}{\sigma} \norm{\thetav(S) - \thetav(S')}_2}.
\end{align*}

Therefore, we have
$$
    \inf_{S \sim S'} T(\Mv_r(S), \Mv_r(S'))
    = G_{\frac{1}{\sigma} \cdot \sup_{S \sim S'} \norm{\thetav(S) - \thetav(S')}_2}
    = G_{\frac{\Delta_2(\thetav; \Xc^n)}{\sigma}}.
$$
The result follows as $G_\mu$ strictly and uniformly (on $(0,1)$) decreases with respect to $\mu$.
\end{proof}

Next, we provide a proof of Proposition~\ref{prop:M45_costs}. We first list the related result on the conditional $L_2$-cost of $\Mv_{rJS}$ in Proposition \ref{lem:M6_cost}, and give proofs of Propositions~\ref{prop:M45_costs} and \ref{lem:M6_cost}.

\begin{prop}\label{lem:M6_cost} For a multivariate statistic $\thetav: \Xc^n \to \Real^p$ with $\Delta_2(\thetav; \Xc^n) = \Delta_2 > 0$, let $\Mv_{rJS}$  be the rank-deficient James--Stein Gaussian mechanisms with $\sigma>0$, as defined in Definition~\ref{def:rJS}. For any data set $S \in \Xc^n$,
    \begin{align} \label{eq:M6cost}
        L_2(\Mv_{rJS}(S)) = \sigma^2 d_{\thetav} - \sigma^2(d_{\thetav}-3)^2 \left\{ \frac{1}{2} e^{-\frac{1}{2}\widetilde{\tau}(S)} \sum_{\beta = 0}^{\infty} \left( \frac{\widetilde{\tau}(S)}{2} \right)^\beta \frac{1}{\beta! \{\frac{1}{2}(d_{\thetav} - 1) + \beta - 1\}} \right\} ,
    \end{align}
    where
$\widetilde{\tau}(S) =
\| (\Id_{d_{\thetav}} - \frac{1}{d_{\thetav}} \1v_{d_{\thetav}} \1v_{d_{\thetav}}^\top) \Uv_{\thetav}^\top \thetav(S) \|_2^2 / {\sigma^2}$.
\end{prop}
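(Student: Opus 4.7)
The plan is to reduce the computation of $L_2(\Mv_{rJS}(S))$ to the situation already handled by Proposition~\ref{prop:M45_costs}(ii), but in the lower-dimensional subspace $\mathrm{span}(\Sc_{\thetav})$. First I would exploit that $[\Uv_1\;\Uv_{\thetav}]$ is a $p\times p$ orthogonal matrix, so that $\thetav(S)=\Uv_1\Uv_1^\top\thetav(S)+\Uv_{\thetav}\Uv_{\thetav}^\top\thetav(S)$. Next I would rewrite the rank-deficient noise: since $N_p(\0v,\sigma^2\Uv_{\thetav}\Uv_{\thetav}^\top)\stackrel{d}{=}\Uv_{\thetav}\Zv$ with $\Zv\sim N_{d_{\thetav}}(\0v,\sigma^2\Iv_{d_{\thetav}})$, we obtain $\Uv_{\thetav}^\top\Mv_r(S)=\Uv_{\thetav}^\top\thetav(S)+\Zv=:\Yv$, which is distributed as $N_{d_{\thetav}}(\muv,\sigma^2\Iv_{d_{\thetav}})$ with $\muv:=\Uv_{\thetav}^\top\thetav(S)$. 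Substituting into~\eqref{eq:rJS} gives
\begin{equation*}
  \Mv_{rJS}(S)-\thetav(S)
  =\Uv_{\thetav}\bigl(m_a(\Yv;\sigma)-\muv\bigr),
\end{equation*}
and the orthonormality of the columns of $\Uv_{\thetav}$ yields
$\|\Mv_{rJS}(S)-\thetav(S)\|_2^2=\|m_a(\Yv;\sigma)-\muv\|_2^2$.

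With this reduction in hand, I would observe that the right-hand side is exactly the squared error of the shrinkage-to-average James--Stein estimator applied to one observation $\Yv\sim N_{d_{\thetav}}(\muv,\sigma^2\Iv_{d_{\thetav}})$ for the unknown mean $\muv\in\Real^{d_{\thetav}}$. But this is precisely the quantity whose expectation is computed in the proof of Proposition~\ref{prop:M45_costs}(ii); that computation is purely a statement about $\E\|m_a(\muv+\sigma\Zv';\sigma)-\muv\|_2^2$ for $\Zv'\sim N_k(\0v,\Iv_k)$, with $k=d_{\thetav}$ and centered-norm parameter $\|(\Iv_k-\tfrac{1}{k}\1v_k\1v_k^\top)\muv\|_2^2/\sigma^2$. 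Applying it with $p\mapsto d_{\thetav}$ and $\tau_{\1v,\perp}(S)\mapsto\widetilde{\tau}(S)=\|(\Iv_{d_{\thetav}}-\tfrac{1}{d_{\thetav}}\1v_{d_{\thetav}}\1v_{d_{\thetav}}^\top)\Uv_{\thetav}^\top\thetav(S)\|_2^2/\sigma^2$ yields~\eqref{eq:M6cost}.

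There is no real obstacle beyond the bookkeeping in the two substitutions above; the only point that deserves care is to verify that the inner shrinkage in~\eqref{eq:rJS} uses the same scale $\sigma$ that governs the distribution of $\Yv$ (which it does by construction), so that the $\mu$-to-$\muv$ translation of Proposition~\ref{prop:M45_costs}(ii) is exact and does not require the random vector $\Yv$ to arise from any particular data-set construction.
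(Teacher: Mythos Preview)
Your proposal is correct and follows essentially the same approach as the paper: both reduce $L_2(\Mv_{rJS}(S))$ to $\E\|m_a(\Yv;\sigma)-\muv\|_2^2$ with $\Yv\sim N_{d_{\thetav}}(\muv,\sigma^2\Iv_{d_{\thetav}})$ and $\muv=\Uv_{\thetav}^\top\thetav(S)$, then invoke the shrinkage-to-average risk formula~\eqref{eq:JS_MSE} from Proposition~\ref{prop:M45_costs}(ii) with $p$ replaced by $d_{\thetav}$. Your write-up is slightly more explicit about the orthogonal decomposition and the distributional identity $\Uv_{\thetav}^\top\Mv_r(S)\stackrel{d}{=}\muv+\Zv$, but the argument is the same.
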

 
\begin{proof}[Proofs of Propositions~\ref{prop:M45_costs} and \ref{lem:M6_cost}]
For some deterministic $\etav \in \Real^p$, suppose that a random vector $\xv$ follows $N_p(\etav, \sigma^2 \Id_p)$.  
First, we verify \eqref{eq:M4cost} by showing that 
\begin{equation}\label{eq:JS0_MSE}
    \E \norm{m_0(\xv;\sigma) - \etav}_2^2 = \sigma^2  p - \sigma^2 (p-2)^2 \left\{ \frac{1}{2} e^{-\frac{1}{2}\tau} \sum_{\beta = 0}^{\infty} \left( \frac{\tau}{2} \right)^\beta \frac{1}{\beta! (\frac{1}{2}p + \beta - 1)} \right\},
\end{equation}
where $ \tau = {\norm{\etav}_2^2}/{\sigma^2}$.
Let $\gv(\tv) = -\frac{(p-2)}{\|\tv\|^2} \tv$ for $\tv \in \Real^p$. We have
\begin{align*}
    \frac{m_0(\xv;\sigma)}{\sigma} = \frac{\xv}{\sigma} + \gv\left(\frac{\xv}{\sigma}\right),
\end{align*}
with
$\xv / \sigma \sim N_p(\etav / \sigma, \Id_p)$. Using Stein's unbiased risk estimate lemma, we obtain
\begin{align*}
    \E\| m_0(\xv;\sigma) - \etav \|^2
    &= p\sigma^2 + \sigma^2 \E\left[ \left\|\gv \left(\frac{\xv}{\sigma}\right)\right\|^2 + 2 \nabla^\top \gv\left(\frac{\xv}{\sigma}\right) \right] \nonumber \\
    &= p\sigma^2 - (p-2)^2 \sigma^2 \E\left[ \frac{\sigma^2}{\|\xv\|^2} \right].
\end{align*}
Since $\frac{\| \xv \|^2}{\sigma^2} \sim \chi^2(\tau, p)$, we have \citep{anderson2009introduction}
\begin{align*}
    \E\left[ \frac{\sigma^2}{\|\xv\|^2} \right] = \frac{1}{2} e^{-\frac{1}{2}\tau} \sum_{\beta = 0}^{\infty} \left( \frac{\tau}{2} \right)^\beta \frac{1}{\beta! (\frac{1}{2}p + \beta - 1)}.
\end{align*}
This gives \eqref{eq:JS0_MSE}.

Now, we prove \eqref{eq:M5cost} by showing that
\begin{equation}\label{eq:JS_MSE}
    \E \norm{m_a(\xv;\sigma) - \etav}_2^2 = \sigma^2 p - \sigma^2(p-3)^2 \left\{ \frac{1}{2} e^{-\frac{1}{2}\tau_{\1v, \perp}} \sum_{\beta = 0}^{\infty} \left( \frac{\tau_{\1v, \perp}}{2} \right)^\beta \frac{1}{\beta! \{\frac{1}{2}(p - 1) + \beta - 1\}} \right\},    
\end{equation}
where $\tau_{\1v, \perp} =  {\|(\Id_p - \frac{1}{p} \1v_p \1v_p^\top)\etav\|_2^2}/{\sigma^2}$.

Take a $p \times p$ orthogonal matrix $\Uv = [\uv_1 \; \Uv_2]$ with $\uv_1 \coloneqq \frac{1}{\sqrt{p}} \1v_p$. Also, let $\yv = \begin{bmatrix}
    y_1\\
    \yv_2
\end{bmatrix} \coloneqq \begin{bmatrix}
    \uv_1^\top \xv\\
    \Uv_2^\top \xv
\end{bmatrix} = \Uv^\top \xv \in \Real^p$ with $y_1 \in \Real$, $\yv_2 \in \Real^{p-1}$ so that

\begin{align*}
    \yv =
    \begin{bmatrix}
    y_1\\
    \yv_2
    \end{bmatrix} =
    \begin{bmatrix}
    \sqrt{p} \overline{x}\\
    \Uv_2^\top \xv
    \end{bmatrix}
    \sim N_p \left(
    \begin{bmatrix}
    \eta_1\\
    \etav_2
    \end{bmatrix}
    , \sigma^2 \Id_p \right),
\end{align*}
where
\begin{align*}
    \eta_1 = \frac{1}{\sqrt{p}} \1v_p^\top \etav \in \Real, \; \etav_2 = \Uv_2^\top \etav \in \Real^{p-1}.
\end{align*}
Then,
\begin{align*}
    \Uv^\top m_a(\xv;\sigma) &=
    \begin{bmatrix}
    \uv_1^\top\\
    \Uv_2^\top
    \end{bmatrix}
    \left( \overline{x} \1v_p + \left( 1 - \frac{(p-3)\sigma^2}{\| \xv_c \|^2} \right) \xv_c \right)\\
    &=
    \begin{bmatrix}
    \frac{\overline{x}}{\sqrt{p}} \1v_p^\top \1v_p + \left( 1 - \frac{(p-3)\sigma^2}{\| \xv_c \|^2} \right) (\frac{1}{\sqrt{p}} \1v_p^\top \xv - \frac{\overline{x}}{\sqrt{p}} \1v_p^\top \1v_p)\\
    \overline{x} \Uv_2^\top \1v_p + \left( 1 - \frac{(p-3)\sigma^2}{\| \xv_c \|^2} \right) (\Uv_2^\top \xv - \overline{x} \Uv_2^\top \1v_p)
    \end{bmatrix}\\
    &=
    \begin{bmatrix}
    \sqrt{p} \overline{x}\\
    \left( 1 - \frac{(p-3)\sigma^2}{\| \xv_c \|^2} \right) \Uv_2^\top \xv
    \end{bmatrix}\\
    &=
    \begin{bmatrix}
    y_1\\
    \left( 1 - \frac{(p-3)\sigma^2}{\| \xv_c \|^2} \right) \yv_2
    \end{bmatrix}.
\end{align*}
However,
\begin{align*}
    \| \xv_c \|^2 &= \left\| \left(\Id_p - \frac{1}{p} \1v_p \1v_p^\top \right) \xv \right\|^2
    = \left\| (\Uv \Uv^\top - \uv_1 \uv_1^\top ) \xv \right\|^2
    = \left\| \Uv_2 \Uv_2^\top \xv \right\|^2
    = \left\| \Uv_2^\top \xv \right\|^2
    = \left\| \yv_2 \right\|^2.
\end{align*}
It implies that
$$
    \left( 1 - \frac{(p-3)\sigma^2}{\| \xv_c \|^2} \right) \yv_2
    = \left( 1 - \frac{\{(p-1) - 2\}\sigma^2}{\| \yv_2 \|^2} \right) \yv_2
    = m_0(\yv; \sigma)
$$
with $\yv_2 \sim N_{p-1}(\etav_2, \sigma^2 \Id_{p-1})$. Therefore, we obtain
\begin{align*}
    \E\| m_a(\xv;\sigma) - \etav \|^2
    &= \E\| \Uv^\top m_a(\xv;\sigma) - \Uv^\top \etav \|^2 \nonumber\\
    &= \E(y_1 - \eta_1)^2 + \E\left\| \left( 1 - \frac{(p-3) \sigma^2}{\| \yv_2 \|^2} \right) \yv_2 - \etav_2 \right\|^2 \nonumber\\
    &= \sigma^2 + \E \norm{m_0(\yv; \sigma) - \etav_2}_2^2.
\end{align*}
Here, \eqref{eq:JS0_MSE} with $\norm{\etav_2}_2^2 = \|(\Id_p - \frac{1}{p} \1v_p \1v_p^\top) \etav\|_2^2$ gives \eqref{eq:JS_MSE}.

For Proposition \ref{lem:M6_cost}, note that
\begin{align*}
    L_2(\Mv_{rJS}(S))
    &= \E \left[ \norm{[U_1 \; U_{\thetav}]^\top (\Mv_{rJS}(S) - \thetav(S))}_2^2 \; \middle| \; S \right]\\
    &= \E \left[ \norm{m_a(\Uv_{\thetav}^\top \Mv_r(S) ; \sigma) - \Uv_{\thetav}^\top \thetav(S)}_2^2 \; \middle| \; S \right]
\end{align*}
and $\Uv_{\thetav}^\top \Mv_r(S) \sim N_{d_{\thetav}} (\Uv_{\thetav}^\top \thetav(S), \sigma^2 \Id_{d_{\thetav}})$.
Therefore, by \eqref{eq:JS_MSE}, we have \eqref{eq:M6cost}.
\end{proof}
   
\begin{proof}[Proof of Theorem~\ref{thm:JS-mech}]

Let $p \ge 3$. Since for any $S \in \Xc^n$, $L_2(\Mv_G(S)) = p\sigma^2$, it can checked using \eqref{eq:M4cost} that $L_2(\Mv_{JS0}(S)) > L_2(\Mv_G(S))$ for any $S$. Thus $\Mv_{JS0}$ strictly dominates $\Mv_G$. Proof for (ii) is similarly obtained. 
\end{proof}

\begin{proof}[Proof of Theorem~\ref{thm:rJS-mech}]
The conclusion is drawn by observing that $$L_2(\Mv_{rJS}(S)) < L_2(\Mv_{r}(S)) < L_2(\Mv_{G}(S))$$ for any $S$, where
$L_2(\Mv_{rJS}(S))$ is given in Proposition~\ref{lem:M6_cost}. 
\end{proof}

\begin{proof}[Proof of Theorem~\ref{thm:L2cost-conv}]

We give proof of (ii) and (iii) only, since (i) can be shown in a similar manner. In this proof, we wrote $\thetav_n = \thetav(S_n)$.

(ii) Let $\Pi_{1}^{\perp} = \Id_p - \frac{1}{p} \1v_p \1v_p^\top$.
A simple but key observation is that
\begin{equation}\label{eq:diff-Mjs} %
    \Mv_{G} - \Mv_{JS}
    = \frac{(p-3)\sigma^2}{\|\Pi_{1}^{\perp}\Mv_G \|_2^2}\Pi_{1}^{\perp}\Mv_G.
\end{equation}
From this, we have
\begin{equation*}
\| \Mv_{G} - \Mv_{JS} \|_2^2 / \sigma^2
=  \frac{(p-3)^2}{\|\Pi_{1}^{\perp}\Mv_G / \sigma \|_2^2}.
\end{equation*}
Since $\Mv_G(S_n)$ follows $N_p(\thetav_n, \sigma^2 I_p)$,
$\|\Pi_{1}^{\perp}\Mv_G / \sigma \|_2^2$ follows the non-central $\chi^2$ distribution
$\chi^2(m, 2\lambda_N)$ where $m = p-1$ is a degree of freedom and
$\lambda_N = \frac{1}{2\sigma^2} \thetav_N^T \Pi_1^{\perp} \thetav_N$ is the non-centrality parameter.
The expectation of non-central $\chi^2$ distribution can be expressed by
confluent hypergeometric function ${ }_1 F_1$.
From (10.9) of \cite{paolella2007intermediate},
it holds that
\begin{equation*}
\mathbb E \left\{\frac{1}{\|\Pi_{1}^{\perp}\Mv_G / \sigma \|_2^2}\right\}
= \frac{1}{2e^{\lambda_n}}\frac{\Gamma(\frac{p-3}{2})}{\Gamma(\frac{p-1}{2})}
{ }_1 F_1(\frac{p-3}{2}, \frac{p-1}{2}; \lambda_n)
= \frac{\Gamma(\frac{p-3}{2})}{2\Gamma(\frac{p-1}{2})}
{ }_1 F_1(1, \frac{p-1}{2}; -\lambda_n).
\end{equation*}
For the second equality, we use Kummer's transformation:
${ }_1 F_1(a, b; z) = e^{z}{ }_1 F_1(b-a, b; -z)$.
Then the integral form of ${ }_1 F_1$, (5.27) of \cite{paolella2007intermediate},
implies that
\begin{equation}\label{eq:pf7-2_int_form}
\frac{\Gamma(\frac{p-3}{2})}{2\Gamma(\frac{p-1}{2})} { }_1 F_1(1, \frac{p-1}{2}; -\lambda_n)
= \frac{1}{2} \int_0^1 (1-y)^{\frac{p-5}{2}} e^{-\lambda_n y} dy.
\end{equation}
First, we consider the case of $p \ge 5$.
Since $(1-x)^r \le e^{-rx}$ for any $x \in [0, 1]$ and $r \ge 0$, \eqref{eq:pf7-2_int_form} has an upper bound as follows.
\begin{equation} \label{eq:pf7-2_ineq_p5}
\mathbb E \left\{\frac{1}{\|\Pi_{1}^{\perp}\Mv_G / \sigma \|_2^2}\right\}
= \frac{1}{2} \int_0^1 (1-y)^{\frac{p-5}{2}} e^{-\lambda_n y} dy
\le \frac{1}{2(\frac{p-5}{2} + \lambda_n)}(1 - e^{-(\frac{p-5}{2} + \lambda_n)}).
\end{equation}
If $p=4$, \eqref{eq:pf7-2_int_form} is expressed as
\begin{equation} \label{eq:pf7-2_ineq_p4}
\mathbb E \left\{\frac{1}{\|\Pi_{1}^{\perp}\Mv_G / \sigma \|_2^2}\right\}
= \frac{1}{2} \int_0^1 (1-y)^{-1/2} e^{-\lambda_n y} dy
= \frac{F(\sqrt{\lambda_n})}{\sqrt{\lambda_n}}
< \frac{1 - e^{-\lambda_n}}{\sqrt{\lambda_n}},
\end{equation}
where $F(x) = e^{-x^2}\int_0^x e^{y^2} dy$ is a Dawson's integral.
The last inequality of \eqref{eq:pf7-2_ineq_p4} comes from 7.8.7 of \cite{olver2010nist}.
Note that $p$ and $\mu$ are fixed in our setting.
So, the right-hand side of the inequalities \eqref{eq:pf7-2_ineq_p5} and \eqref{eq:pf7-2_ineq_p4} converge to $0$ as $n \to \infty$ under
the condition $\|\Pi_1^{\perp} \thetav_n / \Delta_2\|_2^2 \to \infty$.
This proves (ii) of the theorem.

(iii) Note that
$$
\Mv_r - \Mv_{rJS} = (\Uv_1 \Uv_1^\top + \Uv_{\thetav} \Uv_{\thetav}^\top) \Mv_r - \Mv_{rJS} = \Uv_{\thetav} (\Uv_{\thetav}^\top \Mv_r - m_a(\Uv_{\thetav}^\top \Mv_r; \sigma)).
$$
Let $\widetilde \Mv_r \coloneqq \Uv_{\thetav}^T \Mv_r \sim N_q(\Uv_{\thetav}^T \thetav_n, \sigma^2 I_{d_{\thetav}})$.
Since the multiplication of the orthogonal matrix preserves the $l_2$-norm, we have
$$
\|\Mv_r - \Mv_{rJS}\|_2^2 = \|\widetilde \Mv_r - m_a(\widetilde \Mv_r; \sigma)\|_2^2.
$$
From the definition of JS-process,
$$
\|\widetilde \Mv_r - m_a(\widetilde \Mv_r; \sigma)\|_2^2
= \widetilde \Mv_r^T \Pi_{1_{d_{\thetav}}}^{\perp} \widetilde \Mv_r
$$
and
\begin{equation*}
\|\Mv_r - \Mv_{rJS}\|_2^2/\sigma^2 \sim \chi^2(\tilde m, 2\tilde \lambda),
\end{equation*}
where $\tilde m = d_{\thetav}-1$ and
$\tilde \lambda =\frac{1}{2\sigma^2} \thetav_n^T\Pi_{\Uv_{\theta}} \thetav_n$.
Thus, by applying a similar argument from the proof of (ii), we get (iii).

\end{proof}

\begin{proof}[Proof of Corollary~\ref{cor:tothm:L2cost-conv}]
 We begin with writing $\av_n = \Mv_{rJS}(S_n) -\thetav(S_n)$, $\rv_n = \Mv_{r}(S_n) -\thetav(S_n)$, $\gv_n = \Mv_{G}(S_n) -\thetav(S_n)$ and $\jv_n = \Mv_{JS}(S_n) -\thetav(S_n)$. Then, by the Cauchy-Schwarz inequality, we have
\begin{align*}
  {L_2(\Mv_{rJS}(S_n))}/{\sigma_n^2} & = \E \| \av_n \|_2^2 /\sigma_n^2 \\
   & =  \E \| \av_n - \rv_n + \rv_n\|_2^2 /\sigma_n^2 \\
   & \le \frac{\E \| \av_n - \rv_n \|_2^2}{\sigma_n^2} + \frac{ \E\| \rv_n \|_2^2}{\sigma_n^2}  + 2\left(
         \frac{\E \| \av_n - \rv_n \|_2^2}{\sigma_n^2}\frac{ \E\| \rv_n \|_2^2}{\sigma_n^2} \right)^{\frac{1}{2}}.
\end{align*} Observe that
$\lim_{n\to\infty} {\E \| \av_n - \rv_n \|_2^2}/{\sigma_n^2} = 0$ by Theorem~\ref{thm:L2cost-conv}, and
${ \E\| \rv_n \|_2^2}{\sigma_n^2}  =  L_2( \Mv_r(S_n)) / \sigma_n^2 =d_{\thetav}$ for every $n$. Thus,
\begin{equation}\label{eq:proof1forcor:tothm:L2cost-conv}
  \limsup_{n\to\infty} {L_2(\Mv_{rJS}(S_n))}/{\sigma_n^2} \le d_{\thetav}
\end{equation}
 Similarly,
  \begin{align*}
  {L_2(\Mv_{JS}(S_n))}/{\sigma_n^2}  & =\E \| \jv_n \|_2^2 /\sigma_n^2\\
   & \ge \frac{ \E\| \gv_n \|_2^2}{\sigma_n^2} - \frac{\E \| \jv_n - \gv_n \|_2^2}{\sigma_n^2} - 2\left(
         \frac{\E \| \jv_n - \gv_n \|_2^2}{\sigma_n^2}\frac{ \E\| \gv_n \|_2^2}{\sigma_n^2} \right)^{\frac{1}{2}},
\end{align*}
and $\lim_{n\to\infty} {\E \| \jv_n - \gv_n \|_2^2}/{\sigma_n^2} = 0$ by Theorem~\ref{thm:L2cost-conv}, and
${ \E\| \gv_n \|_2^2}{\sigma_n^2}  =  L_2( \Mv_G(S_n)) / \sigma_n^2 = p$ for every $n$. Thus, $\liminf_{n\to\infty} {L_2(\Mv_{JS}(S_n))}/{\sigma_n^2} \ge p$. This, together with (\ref{eq:proof1forcor:tothm:L2cost-conv}), gives the conclusion since $p > d_{\thetav}$. The argument above works for $\Mv_{JS}$ replaced by $\Mv_{JS0}$ as well.
\end{proof}

\section{Proofs and technical details for Section~\ref{sec:Lap_mech}}\label{sec:appendix_proofs}

In Section~\ref{sec:appendix_proofs1}, we give exact expressions of trade-off functions between univariate and bivariate Laplace distributions, and the trade-off functions corresponding to Laplace mechanisms for the cases $p = 1$ (Lemma~\ref{lem:univariate_laplace_tradeoff})  and $p = 2$ (Lemma~\ref{lem:bivariate_laplace_tradeoff_worst}). Section~\ref{sec:appendix_proofs2} contains the proofs for Section~\ref{sec:Lap_mech}.
Throughout the section, we denote  $$T_{\mathrm{Lap}_p, \deltav} \coloneqq T(\Lap_p(\0v, 1), T(\Lap_p(\deltav, 1))$$ for any $\deltav \in \Real^p$.

\subsection{Trade-off functions between univariate and bivariate Laplace distributions}\label{sec:appendix_proofs1}
 
The exact expression of the trade-off function between univariate Laplace distributions, with a common scale parameter, is given in Proposition A.6 of \cite{dong2022gaussianDP}, 
which is derived from the log-concavity of the Laplace distribution. 
Below, we give an alternative proof of Proposition A.6 of \cite{dong2022gaussianDP}, by directly using the Neyman-Pearson lemma. It will be instructive in deriving the trade-off function between bivariate Laplace distributions.

First, we describe a general framework of using the Neyman-Pearson lemma in arbitrary dimension.
For $\deltav \in \Real^p$, the trade-off function $T(\mathrm{Lap}_p(\0v, 1), \mathrm{Lap}_p(\deltav, 1))$ is invariant with respect to the sign changes of $\delta_i$'s. 
Moreover, switching the order of $\delta_i$'s also does not affect the value of the trade-off function.
Thus, we only need to concern the case where $\delta_1 \ge \cdots \ge \delta_p \geq 0$.  

Recall that the trade-off function's value at $\alpha \in [0, 1]$ is the type II error rate of the most powerful test with significance level $\alpha$ on the hypotheses
$$
    H_0: \Xv \sim \mathrm{Lap}_p(\0v, 1) \quad vs. \quad H_1: \Xv \sim \mathrm{Lap}_p(\deltav, 1).
$$
The log-likelihood ratio function is
$$
l(\xv)
\coloneqq \log \frac{f(\xv|\deltav, b)}{f(\xv|\0v, b)}
= \sum_{i=1}^{p} (|x_i| - |x_i - \delta_i|),
$$
where $f(\xv|\deltav, b) = (2b)^{-p}\exp\left(-\frac{1}{b} \sum_{i=1}^p |x_i - \delta_i|\right)$ is the density function
of $\Lap_p(\deltav, b)$.
By the Neyman-Pearson lemma, the most powerful test at level $\alpha$ has the form of
$$
    \phi_{K,\gamma}(\Xv) = \mathds{1}(l(\Xv) > K) + \gamma \cdot \mathds{1}(l(\Xv) = K)
$$
for some $K$ and $\gamma \in [0, 1]$.
Here, $\phi_{K, \gamma}$ is a randomized test function, and $K$ and $\gamma$ are determined by $\alpha$ to satisfy $\alpha = \E_{\Xv \sim \mathrm{Lap}_p(\0v, 1)}
(\phi_{K,\gamma}(\Xv))$.
Randomization is necessary for any $\deltav$ as there exists $K$ such that $\P(l(\Xv) = K) > 0$.
For example, for $\delta = 1$, $l(X)$ has a nontrivial point mass at $K 
 = \pm 1$ as shown in Figure \ref{fig:l(x)}.
 
\begin{figure}[t]
\centering
\begin{subfigure}{0.4\textwidth}
    \centering
    \includegraphics[width=1.0\linewidth]{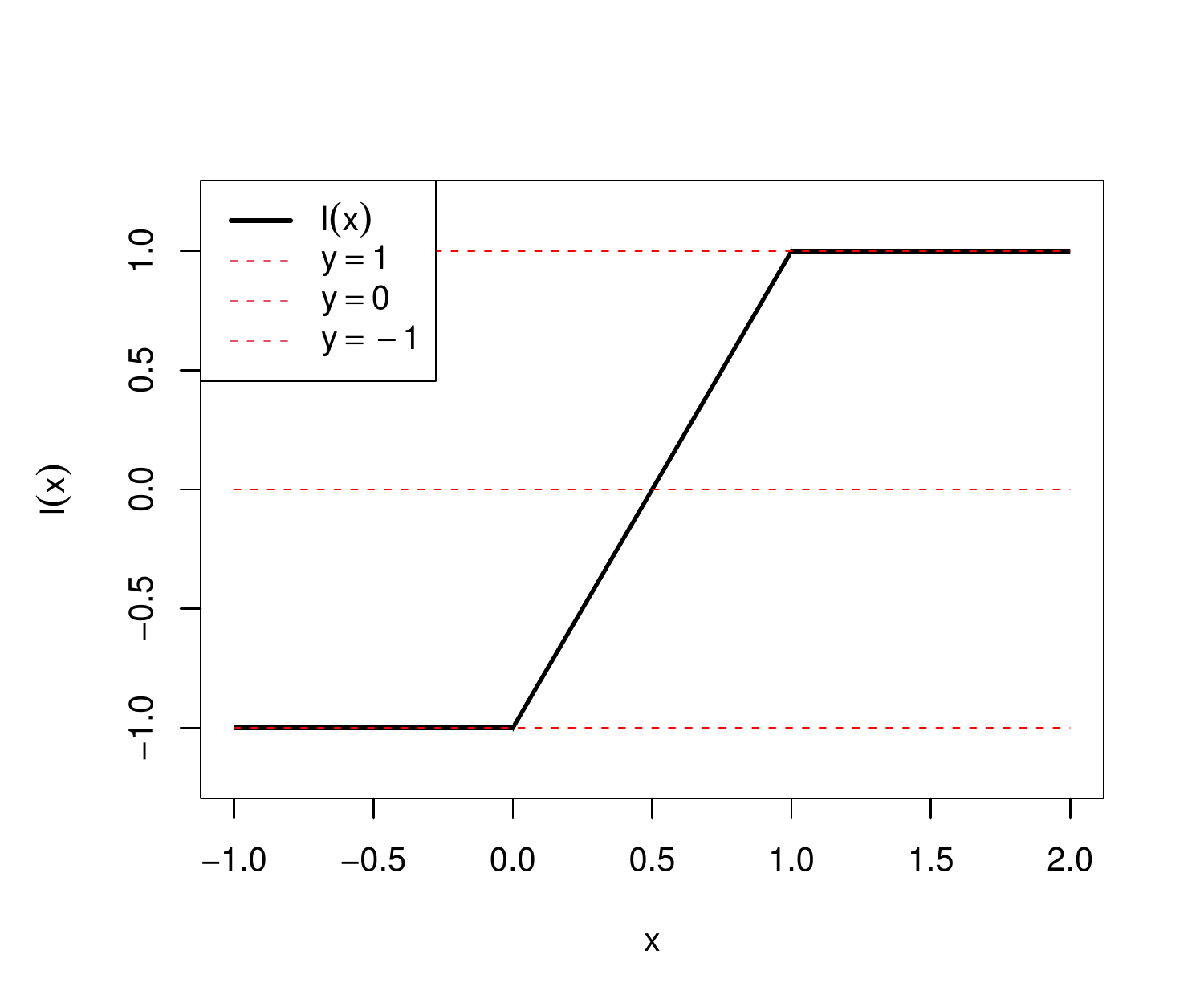}
    \caption{Graph of $l(x)$ for univariate case with $\delta = 1$.}
    \label{fig:l(x):univariate}
\end{subfigure}
\hspace{2em}
\begin{subfigure}{0.4\textwidth}
    \centering
    \includegraphics[width=1.0\linewidth]{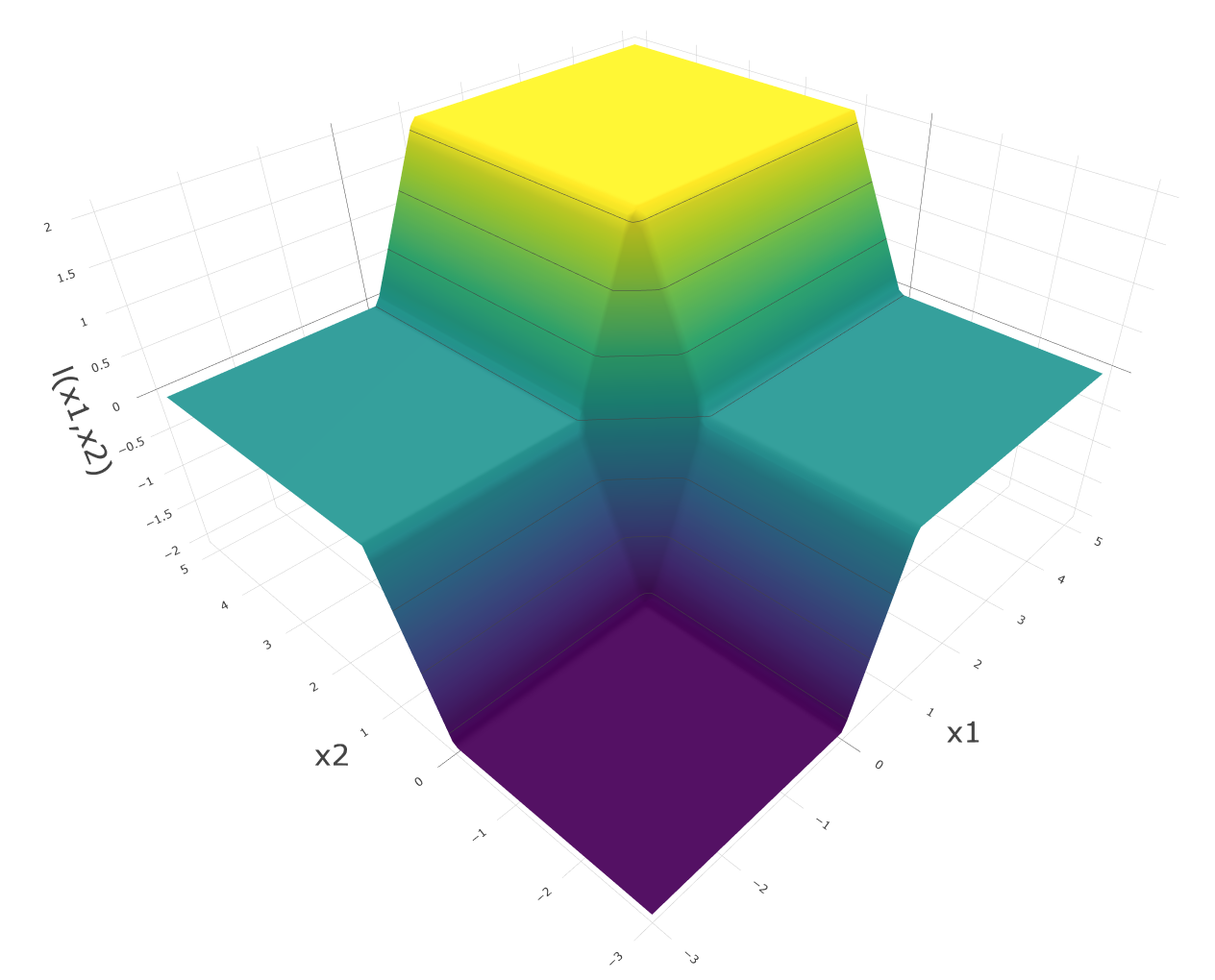}
    \caption{Graph of $l(x_1, x_2)$ for bivariate case with $\delta_1 = \delta_2 = 1$.}
    \label{fig:l(x):bivariate}
\end{subfigure}
\caption{Surface plots for $l(\xv).$}
\label{fig:l(x)}
\end{figure}

The trade-off function between the two $p$-dimensional Laplace distributions can be written
\begin{align}\label{eq:Lap_general_tradeoff}
\begin{split}
    T_{\mathrm{Lap}_p, \deltav}(\alpha)
    &= \E_{\Xv \sim \mathrm{Lap}_p(\deltav, 1)}(1 - \phi_{K,\gamma}(\Xv))\\
    &= \P_{\Xv \sim \mathrm{Lap}(\deltav, 1)}\left(l(\Xv) < K   \right) + (1-\gamma) \cdot \P_{\Xv \sim \mathrm{Lap}(\deltav, 1)} \left( l(\Xv) = K  \right).
\end{split}
\end{align}

For $p=1$ and $p=2$, we derive a closed form of \eqref{eq:Lap_general_tradeoff}. Evaluating \eqref{eq:Lap_general_tradeoff} for $p \ge 3$ can be done, but its expression turns out to be too complex and has little practical relevance.

\subsubsection{Univariate case}\label{subsubsec:1dim_Lap}

We first consider the case of $p=1$ to calibrate the univariate Laplace noise to $L_1$ sensitivity of $\theta$ by calculating the trade-off function between univariate Laplace distributions. By \eqref{eq:Lap_general_tradeoff}, we can derive the trade-off function between univariate Laplace distributions. First, the log-likelihood ratio is
\begin{align*}
    l(x) = \abs{x} - \abs{x - \delta}
    = \begin{cases}
        \delta & \text{if $x \ge \delta$},\\
        2x - \delta & \text{if $0 \le x < \delta$},\\
        -\delta & \text{otherwise}.
    \end{cases}
\end{align*}
Originally, $\phi_{K, \gamma}$ was determined as the most powerful level-$\alpha$ test given $\alpha$. Conversely, if $K$ and $\gamma \in [0,1]$ are given, $\alpha$ is uniquely determined. With this idea, we can derive the trade-off function as follows.

\begin{enumerate}[(i)]
    \item $K > \delta$:
    The most powerful test is given as $\phi \equiv 0$.
    In this case, the corresponding type I error is
    $0$.
    So, the trade-off function value at $\alpha = 0$ is
    $T_{\mathrm{Lap}_1, \delta}(0) = 1$.

    \item $K = \delta$:
    Since $\P(l(X) = K) > 0$, we need randomization.
    Fix a $\gamma \in [0, 1]$.
    The test $\phi_{K, \gamma}$ is the most powerful test at level $\alpha_{\gamma}$ where
    \begin{align*}
        \alpha_{\gamma} = \gamma \cdot \P_{X \sim \mathrm{Lap}_1(0, 1)}(X \ge \delta) = \gamma \cdot \frac{1}{2}e^{-\delta}
        \in [0, \frac{1}{2}e^{-\delta}].
    \end{align*}
    Conversely, for $0 < \alpha \le \frac{1}{2}e^{-\delta}$,
    let $\gamma_{\alpha} = 2e^{\delta}\alpha \in (0, 1]$.
    Then the most powerful test at level $\alpha \in (0, \frac{1}{2}e^{-\delta}]$ is $\phi_{K, \gamma_{\alpha}}$ and the trade-off function value at $\alpha$ is
    \begin{align*}
    T_{\mathrm{Lap}_1, \delta}(\alpha)
    &= \P_{X \sim \Lap_1(\delta, 1)}(X < \delta ) + (1 - \gamma_{\alpha}) \cdot \P_{X \sim \Lap_1(\delta, 1)}(X \ge \delta ) \\
    &= 1 - \frac{1}{2} \gamma_{\alpha} = 1 - e^\delta \alpha.
    \end{align*}

    \item $-\delta < K < \delta$:
    Since $\P(l(X) = K) = 0$, $\phi_{K, 0} = \mathds{1}(l(\Xv) > K) $ is the most powerful test at level $\alpha_K$ where
    \begin{align*}
        \alpha_K = \P_{X \sim \Lap_1(0, 1)} \left( 2X - \delta > K   \right) = \frac{1}{2} e^{- \frac{K + \delta}{2}} \in (\frac{1}{2}e^{-\delta}, \frac{1}{2}).
    \end{align*}
    Conversely, for $\frac{1}{2}e^{-\delta} < \alpha < \frac{1}{2}$, $\phi_{K_{\alpha}, 0}$ is the most powerful test at level $\alpha$ where $K_{\alpha} = -\delta - 2\log(2\alpha) \in (-\delta, \delta)$
    and the trade-off function value at $\alpha$ is
    $$
        T_{\mathrm{Lap}_1, \delta}(\alpha) = \P_{X \sim \Lap_1(\delta, 1)}(2X - \delta < K_{\alpha} ) = \frac{1}{2} e^{\frac{K_{\alpha} - \delta}{2}} = \frac{e^{-\delta}}{4 \alpha}.
    $$
\end{enumerate}
By (i)-(iii) and the symmetry of the trade-off function, $T_{\mathrm{Lap}_1, \delta}$ is given as

\begin{align}\label{eq:Lap_tradeoff_1dim_}
    T_{\mathrm{Lap}_1, \delta}(\alpha) =
    \begin{cases}
      1 - e^\delta \alpha & \text{if $\alpha < \frac{1}{2}e^{-\delta}$},\\
      \frac{e^{-\delta}}{4\alpha} & \text{if $\frac{1}{2}e^{-\delta} \leq \alpha < \frac{1}{2}$},\\
      e^{-\delta}(1-\alpha) & \text{otherwise.}
    \end{cases}
\end{align}

\begin{remark} \label{rmk:log-conc-tf}
The above expression \eqref{eq:Lap_tradeoff_1dim_} was firstly presented in Proposition A.6 of \cite{dong2022gaussianDP}.
In general, $T(\xi, t + \xi)(\alpha) = F(F^{-1}(1-\alpha) - t)$ holds for every $t > 0$ if and only if the density of a random variable $\xi$ is log-concave (Proposition A.3 of \cite{dong2022gaussianDP}). 
Here, $F$ is the distribution function of $\xi$.
Since the density of univariate Laplace distribution is log-concave,
the direct calculation of $F(F^{-1}(1-\alpha) - t)$ when $\xi \sim \Lap(0, 1)$ results in $T_{\Lap_1, t}$ of \eqref{eq:Lap_tradeoff_1dim_}.
Unlike the approach of \citeauthor{dong2022gaussianDP}, our derivation directly use the definition of trade-off function, and our approach can be applied to more higher dimensional case.
\end{remark}

\subsubsection{Bivariate Laplace mechanism}\label{subsubsec:2dim_Lap}

Let $p=2$. 
A primary importance of this bivariate case is that the trade-off function of mechanisms perturbing contingency tables (of any dimension) can be handled using the $p = 2$ results, as argued in Section \ref{sec:lap_mech_table}.

Following essentially the same steps used in obtaining \eqref{eq:Lap_tradeoff_1dim_}, but with messier computations, we derive the following trade-off function between bivariate Laplace distributions.
\begin{lem}\label{lem:Lap_tradeoff_p=2}
    For $\deltav = (\delta_1, \delta_2)$ with $\delta_1 \ge \delta_2 \ge 0$,
    \begin{equation}\label{eq:Laplace_tradeoff_p=2}
        T_{\Lap_2, \deltav}(\alpha) =
        \begin{cases}
          -e^{\delta_1+\delta_2}\alpha + 1
          & \text{if }~ 0 \leq \alpha < \frac{1}{4}e^{-\delta_1-\delta_2},\\
          \frac{1}{4}e^{-C_{\deltav}(\alpha)} \left( 3 + C_{\deltav}(\alpha) \right)
          & \text{if }~ \frac{1}{4}e^{-\delta_1-\delta_2} \leq \alpha < \frac{1}{4}e^{-\delta_1}(1+\delta_2),\\
          -e^{\delta_1-\delta_2}\alpha + \frac{1}{2}e^{-\delta_2}(2+\delta_2)
          & \text{if }~ \frac{1}{4}e^{-\delta_1}(1+\delta_2) \leq \alpha < \frac{1}{4}e^{-\delta_1}(2+\delta_2),\\
          \frac{1}{16\alpha}e^{-\delta_1 - \delta_2}(2+\delta_2)^2
          & \text{if }~ \frac{1}{4}e^{-\delta_1}(2+\delta_2) \leq \alpha \le \frac{1}{4}e^{\frac{-\delta_1 - \delta_2}{2}}(2+\delta_2),\\
        \end{cases}
    \end{equation}
    and $T_{\Lap_2, \deltav}(\alpha)$ for $\alpha > \frac{1}{4}e^{\frac{-\delta_1 - \delta_2}{2}}(2+\delta_2)$ is determined by the symmetry.
    Here, $C_{\deltav}(\alpha) \coloneqq \linebreak W(4 \alpha e^{1 + \delta_1 + \delta_2})  - 1$, and $W(\cdot)$ is the Lambert $W$ function.
\end{lem}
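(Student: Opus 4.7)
I will extend the Neyman--Pearson-based argument of the univariate case (Section~\ref{subsubsec:1dim_Lap}) to the bivariate setting. Writing the joint log-likelihood ratio $l(x_1, x_2) = l_1(x_1) + l_2(x_2)$ with $l_i(x_i) = |x_i| - |x_i - \delta_i|$, each $l_i$ is constant on $\{x_i \geq \delta_i\}$ (value $\delta_i$) and on $\{x_i < 0\}$ (value $-\delta_i$), and linear with slope $2$ on the intermediate strip $\{0 \leq x_i < \delta_i\}$. Consequently $\Real^2$ partitions into nine regions on which $l$ is either constant (on the four corner rectangles, attaining the values $\pm\delta_1 \pm \delta_2$) or linear in one or two variables, and the distribution of $l(\mathbf{X})$ carries point masses at exactly the four values $\pm\delta_1 \pm \delta_2$, forcing randomization in the most powerful tests at the corresponding thresholds.

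The plan is to parametrize the most powerful tests by $\phi_{K,\gamma}(\mathbf{X}) = \mathds{1}(l(\mathbf{X}) > K) + \gamma\, \mathds{1}(l(\mathbf{X}) = K)$ and to scan $K$ downward from $\delta_1 + \delta_2$. For each regime of $K$, the type I error $\alpha(K,\gamma)$ under $H_0$ and the type II error $\beta(K,\gamma)$ under $H_1$ are computed by direct integration of the bivariate Laplace densities over the current rejection region, which is a union of corner rectangles together with triangular or trapezoidal slices of the adjacent strips. Under the ordering $\delta_1 + \delta_2 > \delta_1 - \delta_2 \geq -\delta_1 + \delta_2$, four distinct regimes arise for $\alpha$ up to the symmetry point $\alpha = \tfrac{1}{4}(2+\delta_2)e^{-(\delta_1+\delta_2)/2}$ and yield the four pieces of \eqref{eq:Laplace_tradeoff_p=2} in turn: (i) randomizing the point mass at $K = \delta_1 + \delta_2$ gives the initial linear piece $1 - e^{\delta_1+\delta_2}\alpha$; (ii) $K \in (\delta_1 - \delta_2, \delta_1 + \delta_2)$ gives the transcendental middle case; (iii) randomizing the point mass at $K = \delta_1 - \delta_2$ gives the linear piece $-e^{\delta_1-\delta_2}\alpha + \tfrac{1}{2}e^{-\delta_2}(2+\delta_2)$; (iv) $K \in (-\delta_1 + \delta_2, \delta_1 - \delta_2)$ gives the fourth (hyperbolic) piece. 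The formula on the remainder of $[0,1]$ then follows from the symmetry of trade-off functions about the line $x = y$.

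The main technical obstacle is regime (ii). Integrating the $\Lap_2(\mathbf{0},1)$ density over the union of the top-right corner, the two adjacent partial strips, and a triangular slice of the central rectangle (and collecting several cancellations) yields the clean identity
\begin{equation*}
   \alpha(K) = \tfrac{1}{8}\,(2 + \delta_1 + \delta_2 - K)\, e^{-(K + \delta_1 + \delta_2)/2}, \qquad K \in (\delta_1 - \delta_2,\; \delta_1 + \delta_2).
\end{equation*}
Inverting this transcendental relation for $K$ as a function of $\alpha$ is where the Lambert $W$ function enters: the substitution $u = 1 + (\delta_1+\delta_2-K)/2$ brings the equation to the canonical form $u\,e^{u} = 4\alpha\,e^{1+\delta_1+\delta_2}$, so that $u = W(4\alpha e^{1+\delta_1+\delta_2})$ and $C_{\deltav}(\alpha) = u - 1$. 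A parallel computation of $\beta(K)$ under $H_1$ is conveniently streamlined by the distributional identity $l(\mathbf{X}) \stackrel{d}{=} -l(\mathbf{Y})$ relating $\mathbf{X} \sim \Lap_2(\deltav,1)$ to $\mathbf{Y} \sim \Lap_2(\mathbf{0},1)$ (which reduces the task to another $H_0$ integration) and yields $\beta = \tfrac{1}{4}(u+2)\,e^{1-u}$; substituting $u = C_{\deltav}(\alpha) + 1$ recovers $\tfrac{1}{4}e^{-C_{\deltav}(\alpha)}(3 + C_{\deltav}(\alpha))$ exactly. The three boundary values of $\alpha$ separating the four pieces are finally obtained by evaluating $\alpha(K)$ at the transition thresholds $K \in \{\delta_1+\delta_2,\, \delta_1-\delta_2,\, -\delta_1+\delta_2\}$.
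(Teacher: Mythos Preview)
Your proposal is correct and follows essentially the same route as the paper's proof: both parametrize the most powerful test by $(K,\gamma)$, partition $\Real^2$ into the nine regions dictated by the piecewise-linear structure of $l(\mathbf{x})$, and sweep $K$ downward through the thresholds $\delta_1+\delta_2,\ \delta_1-\delta_2,\ -\delta_1+\delta_2$, computing $\alpha(K,\gamma)$ and $\beta(K,\gamma)$ by direct integration in each regime and inverting the transcendental relation in the second regime via the Lambert $W$ function. The one genuine difference is your use of the distributional identity $l(\mathbf{X}) \stackrel{d}{=} -l(\mathbf{Y})$ (with $\mathbf{X}\sim\Lap_2(\deltav,1)$, $\mathbf{Y}\sim\Lap_2(\mathbf{0},1)$, via $Y_i=\delta_i-X_i$) to reduce the $\beta$-computation to a second $H_0$ integration; the paper instead computes $\beta(K)$ directly under $H_1$ in each case, which is more labor but arrives at the same expressions.
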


A proof of Lemma~\ref{lem:Lap_tradeoff_p=2} is given in Section~\ref{sec:misc}.

For a given $\deltav = (\delta_1, \delta_2)$ where $\delta_1 \ge \delta_2 > 0$, consider a reparameterization $\lambda_1 = \delta_1 + \delta_2 = \|\deltav\|_1$ and $\lambda_2 = \delta_1 - \delta_2$, so that 
$$ 
\deltav^\top = (\delta_1, \delta_2) = \left( \frac{\lambda_1 + \lambda_2}{2}, \frac{\lambda_1 - \lambda_2}{2}\right).
$$

Write $\widetilde{T}_{\lambdav} \coloneqq T_{\Lap_2, (\frac{\lambda_1 + \lambda_2}{2}, \frac{-\lambda_1 + \lambda_2}{2})^\top}$ for the trade-off function   $T_{\Lap_2, \deltav}$, parameterized with respect to  $\lambdav = (\lambda_1, \lambda_2)^\top.$ The following observation is useful:  

\begin{lem}\label{lemma:Lap_tradeoff_monotonicity}
Given $\lambda_1 > 0$ fixed, for any $\alpha \in (0,1)$, $\widetilde{T}_{\lambdav }(\alpha)$ is a monotone decreasing function of $\lambda_2$.
\end{lem}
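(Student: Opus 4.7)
The plan is to substitute $\delta_1 = (\lambda_1+\lambda_2)/2$ and $\delta_2 = (\lambda_1-\lambda_2)/2$ into the explicit piecewise formula for $T_{\Lap_2,\deltav}$ given in Lemma~\ref{lem:Lap_tradeoff_p=2}, and then verify the monotonicity directly by differentiation within each piece. For fixed $\alpha \in (0,1)$ and $\lambda_1 > 0$, define $h(\lambda_2) := \widetilde{T}_{\lambdav}(\alpha)$ on $\lambda_2 \in [0,\lambda_1]$ and show $h' \le 0$ wherever it exists, together with continuity at region transitions.

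First I would rewrite the four breakpoints of \eqref{eq:Laplace_tradeoff_p=2} as
$b_1 = \tfrac14 e^{-\lambda_1}$,
$b_2 = \tfrac14 e^{-(\lambda_1+\lambda_2)/2}\bigl(1+\tfrac{\lambda_1-\lambda_2}{2}\bigr)$,
$b_3 = \tfrac14 e^{-(\lambda_1+\lambda_2)/2}\bigl(2+\tfrac{\lambda_1-\lambda_2}{2}\bigr)$,
$b_4 = \tfrac14 e^{-\lambda_1/2}\bigl(2+\tfrac{\lambda_1-\lambda_2}{2}\bigr)$,
and observe that $b_1$ is independent of $\lambda_2$ while $b_2,b_3,b_4$ are each strictly decreasing in $\lambda_2$ by a routine derivative check. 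On the first two pieces ($\alpha < b_1$ and $b_1 \le \alpha < b_2$), the formula for $T$ involves only $\lambda_1$ (since $\delta_1+\delta_2=\lambda_1$ and the argument of $W$ is $4\alpha e^{1+\lambda_1}$), so $h$ is constant in $\lambda_2$ while $\alpha$ stays in these pieces. On the third piece ($b_2 \le \alpha < b_3$), $T = -e^{\lambda_2}\alpha + \tfrac12 e^{-(\lambda_1-\lambda_2)/2}\bigl(2+\tfrac{\lambda_1-\lambda_2}{2}\bigr)$, and a direct computation gives
$$\frac{\partial T}{\partial \lambda_2} = -e^{\lambda_2}\alpha + \tfrac14 e^{-(\lambda_1-\lambda_2)/2}\bigl(1+\tfrac{\lambda_1-\lambda_2}{2}\bigr);$$
the lower bound $\alpha \ge b_2$ then gives $e^{\lambda_2}\alpha \ge \tfrac14 e^{-(\lambda_1-\lambda_2)/2}(1+(\lambda_1-\lambda_2)/2)$, hence $\partial T/\partial \lambda_2 \le 0$. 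On the fourth piece, $T = (16\alpha)^{-1} e^{-\lambda_1}(2+(\lambda_1-\lambda_2)/2)^2$, which clearly satisfies $\partial T/\partial \lambda_2 \le 0$ since $\lambda_2 \le \lambda_1$.

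The remaining step is to combine these local derivative signs into global monotonicity of $h$ on $[0,\lambda_1]$. As $\lambda_2$ grows the thresholds $b_2,b_3,b_4$ shrink, so $\alpha$ can only migrate from lower-indexed to higher-indexed pieces; at each transition $h$ is continuous because the piecewise expressions match at the breakpoints (by construction of the trade-off function, which is continuous in $\alpha$). Therefore $h$ is continuous on $[0,\lambda_1]$ with non-positive right- and left-derivatives, and is thus monotone decreasing, establishing the claim for $\alpha \le b_4(\lambda_2)$. For $\alpha > b_4(\lambda_2)$ the values of $T$ are obtained by reflecting the graph about $y=x$, and since pointwise inequalities between decreasing functions are preserved under inversion (if $f \ge g$ and both are decreasing then $f^{-1} \ge g^{-1}$, and trade-off functions are their own inverses), monotonicity in $\lambda_2$ extends to all $\alpha \in (0,1)$. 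I expect the main obstacle to be the algebraic verification in the third piece, which requires carefully coupling the sign of $\partial T/\partial \lambda_2$ to the very lower bound $\alpha \ge b_2$ that defines the piece; the other two variable pieces are straightforward.
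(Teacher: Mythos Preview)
Your proposal is correct and takes a genuinely different route from the paper's own proof. You differentiate the explicit piecewise formula with respect to $\lambda_2$ at fixed $\alpha$, verify that the sign of $\partial T/\partial\lambda_2$ on the third piece is governed precisely by the region's lower boundary $\alpha\ge b_2$, and patch the pieces together by continuity; the symmetric half then follows from the self-inverse property of trade-off functions. By contrast, the paper fixes two values $\lambda_2^{(1)}<\lambda_2^{(2)}$ and argues geometrically: equality on the first two pieces, a direct check on the fourth, and on the third piece it exploits convexity together with the linearity of $T(\lambda_2^{(2)})$ on $I_3(\lambda_2^{(2)})$. That geometric step only works when $\lambda_2^{(2)}$ is close enough to $\lambda_2^{(1)}$ (namely $P_3(\lambda_2^{(2)})\in I_3(\lambda_2^{(1)})$), so the paper must additionally build a finite interpolating sequence $\lambda_2^{(1)}=\lambda_{2,0}<\cdots<\lambda_{2,n}=\lambda_2^{(2)}$, with the step size controlled via a Lambert~$W$ bound. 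Your derivative argument avoids this sequence construction entirely and is considerably more direct; the trade is that it leans on the explicit formula and a careful bookkeeping of which piece $\alpha$ occupies as $\lambda_2$ varies, whereas the paper's convexity argument is more structural. One small point: your symmetry extension (``if $f\ge g$ then $f^{-1}\ge g^{-1}$'') is stated as if the global inequality were already in hand; the clean way to phrase it is that once $T_1\ge T_2$ on $[0,b_4(\lambda_2^{(2)})]$, any $\alpha$ beyond that has $T_2(\alpha)=\beta<b_4(\lambda_2^{(2)})$ with $T_2(\beta)=\alpha$, and then $T_1(\beta)\ge T_2(\beta)=\alpha$ forces $T_1(\alpha)\ge\beta=T_2(\alpha)$ by self-inversion.
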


A proof of Lemma~\ref{lemma:Lap_tradeoff_monotonicity} is given in Section~\ref{sec:misc}. 

We are interested in the evaluation of $\Lc_2(b) = \inf_{S\sim S'} T(\Lap_2(\0v, 1), \Lap_2( \deltav, 1))$, where $\deltav = (\thetav(S) - \thetav(S')) / b$.
Then 
$$
\Lc_2(b) = \inf_{\deltav \in \Sc_{\thetav}/b} T(\Lap_2(\0v, 1), \Lap_2( \deltav, 1)). 
$$
Without loss of generality, we can assume that any $\deltav \in \Sc_{\thetav}/b$ satisfies $\delta_1 \ge \delta_2 \ge 0$. To see this, let 
$$|\Sc_{\thetav}| := \left\{ 
\begin{pmatrix}
    \max(|\theta_1(S) - \theta_1(S')|, |\theta_2(S) - \theta_2(S')|)  \\
    \min(|\theta_1(S) - \theta_1(S')|, |\theta_2(S) - \theta_2(S')|)
\end{pmatrix} : S \sim S', S, S' \in \Xc^n
\right\},
$$
and observe that for any $\deltav \in \Sc_{\thetav}/b$, there exists $\deltav_0 \in |\Sc_{\thetav}| / b$ such that 
 $T(\Lap_2(\0v, 1), \Lap_2( \deltav, 1)) = T(\Lap_2(\0v, 1), \Lap_2( \deltav_0, 1))$. 
Define $$\Sc_\lambda = \{(\delta_1 + \delta_2, \delta_1 - \delta_2): (\delta_1, \delta_2) \in |\Sc_{\thetav}| / b\},$$ which may be called the sensitivity set of $\thetav$ with respect to the ``$\lambdav$-parameterization.'' 
If $\Delta_1 = \Delta_1(\thetav; \Xc^n)$, then for any $(\lambda_1, \lambda_2) \in \Sc_\lambda$, $\lambda_1 \in [0,\Delta_1/b]$. 

For each $\lambda_1 \ge 0$, if 
$(\lambda_1, \lambda_2) \in \Sc_\lambda$ for some 
$0 \le \lambda_2 \le \lambda_1$, define
$\nabla(\lambda_1) = \sup
\{\lambda_2: (\lambda_1,\lambda_2) \in \Sc_\lambda\}$.
Then 
\begin{align*}
    \Lc_2(b) & = \inf_{\deltav \in \Sc_{\thetav}/b} T(\Lap_2(\0v, 1), \Lap_2( \deltav, 1) \\
      & =  \inf_{\lambdav \in \Sc_\lambda} \widetilde{T}_{\lambdav } \\
      & = \inf_{\lambda_1 \in [0, \Delta_1/b]} \inf_{\lambda_2 \in [0, \lambda_1]} \widetilde{T}_{\lambdav } \\
      & = \inf_{\lambda_1 \in [0, \Delta_1/b]}  \widetilde{T}_{ (\lambda_1, \nabla(\lambda_1))}.  \quad \mbox{(by Lemma~\ref{lemma:Lap_tradeoff_monotonicity})}
\end{align*}

Summarizing the above discussion, we conclude that evaluating $\Lc_2(b)$ requires an exact specification of $\Sc_{\thetav}$. 

\begin{lem}\label{lem:bivariate_laplace_tradeoff_worst} Let $p = 2$, and $\thetav$ and $S_{\thetav}$ be given and $\Delta_1 = \Delta_1(\thetav; \Xc^n)$. Then the value of the trade-off function $\Lc_2(b):[0,1] \to [0,1]$ at $\alpha \in [0,1]$ is 
$\inf_{\lambda_1 \in [0, \Delta_1/b]}  \widetilde{T}_{ (\lambda_1, \nabla(\lambda_1))} (\alpha)$. 
\end{lem}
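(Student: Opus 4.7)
The plan is to follow the chain of reductions already sketched in the discussion immediately preceding the lemma, and simply verify each step rigorously. The target is to rewrite
\[
\Lc_2(b) = \inf_{S \sim S'} T(\Lap_2(\thetav(S),b), \Lap_2(\thetav(S'),b))
\]
as an infimum over a single real parameter $\lambda_1$, by successively exploiting (i) the affine invariance of trade-off functions, (ii) the coordinate-wise symmetries of $\Lap_2$, and (iii) the monotonicity result in Lemma~\ref{lemma:Lap_tradeoff_monotonicity}.

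First, I would use Lemma~\ref{lem:Tf-linearity} (affine invariance) to rewrite $T(\Lap_2(\thetav(S),b), \Lap_2(\thetav(S'),b)) = T(\Lap_2(\0v,1), \Lap_2(\deltav,1))$ for $\deltav = (\thetav(S) - \thetav(S'))/b$, so that $\Lc_2(b) = \inf_{\deltav \in \Sc_{\thetav}/b} T_{\Lap_2,\deltav}$. Next, I would justify the reduction to the set $|\Sc_{\thetav}|/b$ with $\delta_1 \ge \delta_2 \ge 0$: the density of $\Lap_2(\0v,1)$ is invariant under coordinate sign changes and coordinate permutations, so for each $\deltav \in \Sc_{\thetav}/b$ there exists $\deltav_0 \in |\Sc_{\thetav}|/b$ with identical log-likelihood-ratio distribution under both $H_0$ and $H_1$; by the Neyman--Pearson characterization used throughout Section~\ref{sec:Lap_mech}, this yields $T_{\Lap_2,\deltav} = T_{\Lap_2,\deltav_0}$.

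Then I would pass to the reparameterization $(\lambda_1,\lambda_2) = (\delta_1+\delta_2,\delta_1-\delta_2)$, which is a bijection between $\{(\delta_1,\delta_2): \delta_1 \ge \delta_2 \ge 0\}$ and $\{(\lambda_1,\lambda_2): 0 \le \lambda_2 \le \lambda_1\}$, and restate $\Lc_2(b) = \inf_{\lambdav \in \Sc_\lambda} \widetilde{T}_{\lambdav}$. Since $\lambda_1 = \|\deltav\|_1$, the range of $\lambda_1$ on $\Sc_\lambda$ is exactly $[0,\Delta_1/b]$ by the definition of $\Delta_1(\thetav;\Xc^n)$. Splitting the infimum as
\[
\Lc_2(b) = \inf_{\lambda_1 \in [0,\Delta_1/b]} \; \inf_{\lambda_2:\, (\lambda_1,\lambda_2) \in \Sc_\lambda} \widetilde{T}_{(\lambda_1,\lambda_2)},
\]
I would apply Lemma~\ref{lemma:Lap_tradeoff_monotonicity}, which says that for fixed $\lambda_1$ the map $\lambda_2 \mapsto \widetilde{T}_{(\lambda_1,\lambda_2)}(\alpha)$ is monotone decreasing; hence the inner infimum is attained at the supremum $\nabla(\lambda_1) = \sup\{\lambda_2 : (\lambda_1,\lambda_2) \in \Sc_\lambda\}$, giving $\widetilde{T}_{(\lambda_1,\nabla(\lambda_1))}$.

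The main subtlety, rather than a difficulty, is the measure-theoretic justification that the inner infimum is actually realized at $\nabla(\lambda_1)$ even when this supremum is not attained in $\Sc_\lambda$. This is handled by a standard continuity argument: the trade-off function $\widetilde{T}_{\lambdav}(\alpha)$ is continuous in $\lambdav$ (inspection of the closed form in Lemma~\ref{lem:Lap_tradeoff_p=2} shows each piece is continuous in $\deltav$, and the breakpoints move continuously), so whether or not $\nabla(\lambda_1)$ is achieved, the infimum value equals $\widetilde{T}_{(\lambda_1,\nabla(\lambda_1))}(\alpha)$ by monotonicity plus continuity. Combining everything yields the claimed formula.
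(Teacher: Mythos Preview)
Your proposal is correct and follows essentially the same approach as the paper: the paper's proof is precisely the chain of reductions displayed in the discussion preceding the lemma (affine invariance, coordinate symmetry, the $(\lambda_1,\lambda_2)$ reparameterization, then Lemma~\ref{lemma:Lap_tradeoff_monotonicity} for the inner infimum). Your treatment is in fact slightly more careful than the paper's, since you address the continuity argument needed when $\nabla(\lambda_1)$ is a supremum not attained in $\Sc_\lambda$, whereas the paper passes over this point.
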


Note that since $\nabla(\lambda_1) \le \lambda_1$ for any $\Sc_\lambda$, we have 
\begin{align*}
    \Lc_2(b) & = \inf_{\lambda_1 \in [0, \Delta_1/b]}   \widetilde{T}_{  (\lambda_1, \nabla(\lambda_1)) } \\
     & \ge \inf_{\lambda_1 \in [0, \Delta_1/b]} \widetilde{T}_{ (\lambda_1, \lambda_1) } \\ 
     & = \inf_{\delta_1 \le \Delta_1/b} T(\Lap_2(\0v, 1), \Lap_2( (\delta_1, 0), 1) \\
     & = \inf_{\delta_1 \le \Delta_1/b} T(\Lap_1(0, 1), \Lap_1( \delta_1, 1) \\
     & =  T(\Lap_1(0, 1), \Lap_1(\Delta_1/b, 1).
\end{align*}
In the second equality above, we used the fact that $\lambda_2 = \lambda_1$ if and only if $\delta_1 = \lambda_1$ and $\delta_2 = 0$. This gives proof for Lemma~\ref{lem-high2one}, but only for the case $p = 2$.

\subsubsection{Proofs of Lemmas~\ref{lem:Lap_tradeoff_p=2} and \ref{lemma:Lap_tradeoff_monotonicity}}\label{sec:misc}

\begin{proof}[Proof of Lemma~\ref{lem:Lap_tradeoff_p=2}]

We consider the case where $\delta_1 > \delta_2 \geq 0$. In the case of $\delta_1 = \delta_2$, it is verified in the same way, and it also can be confirmed that results are compatible with the case of $\delta_1 > \delta_2$. The log-likelihood for $p=2$ is
\begin{align*}
    l(\xv) &= |x_1| - |x_1 - \delta_1| + |x_2| - |x_2 - \delta_2|\\
    &= \begin{cases}
      \delta_1 + \delta_2, & \text{if $x_1 \geq \delta_1, \; x_2 \geq \delta_2$}\\
      \delta_1 - \delta_2 + 2x_2, & \text{if $x_1 \geq \delta_1, \; 0 < x_2 < \delta_2$}\\
      \delta_1 - \delta_2, & \text{if $x_1 \ge \delta_1, \; x_2 \leq 0$}\\
      -\delta_1 + \delta_2 + 2x_1, & \text{if $0 < x_1 < \delta_1, \; x_2 \geq \delta_2$}\\
      -\delta_1 - \delta_2 + 2x_1 + 2x_2, & \text{if $0 < x_1 < \delta_1, \; 0 < x_2 < \delta_2$}\\
      -\delta_1 - \delta_2 + 2x_1, & \text{if $0 < x_1 < \delta_1, \; x_2 \leq 0$}\\
      -\delta_1 + \delta_2, & \text{if $x_1 \leq 0, \; x_2 \geq \delta_2$}\\
      -\delta_1 - \delta_2 + 2x_2, & \text{if $x_1 \leq 0, \; 0 < x_2 < \delta_2$}\\
      -\delta_1 - \delta_2, & \text{if $x_1 \leq 0, \; x_2 \leq 0$}\\
    \end{cases},
\end{align*}
and examples of its values by $\xv = (x_1, x_2)^\top$ are denoted in Figure \ref{fig:3d_tmp} with contour lines.

\begin{figure}[]
\centering
\begin{subfigure}{0.4\textwidth}
  \centering
  \includegraphics[width=\linewidth]{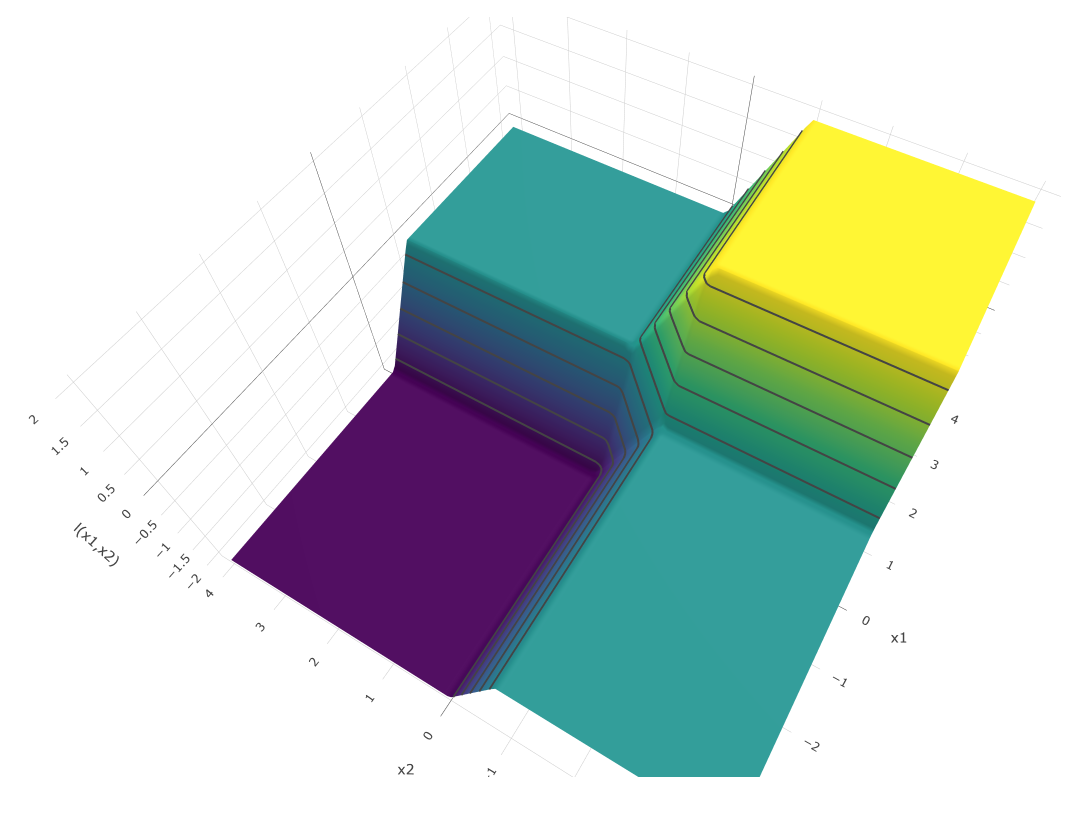}
  \caption{$\delta_1 = -\delta_2 = 1$}
  \label{fig:3d_tmp:samedelta}
\end{subfigure}
\begin{subfigure}{0.4\textwidth}
  \centering
  \includegraphics[width=\linewidth]{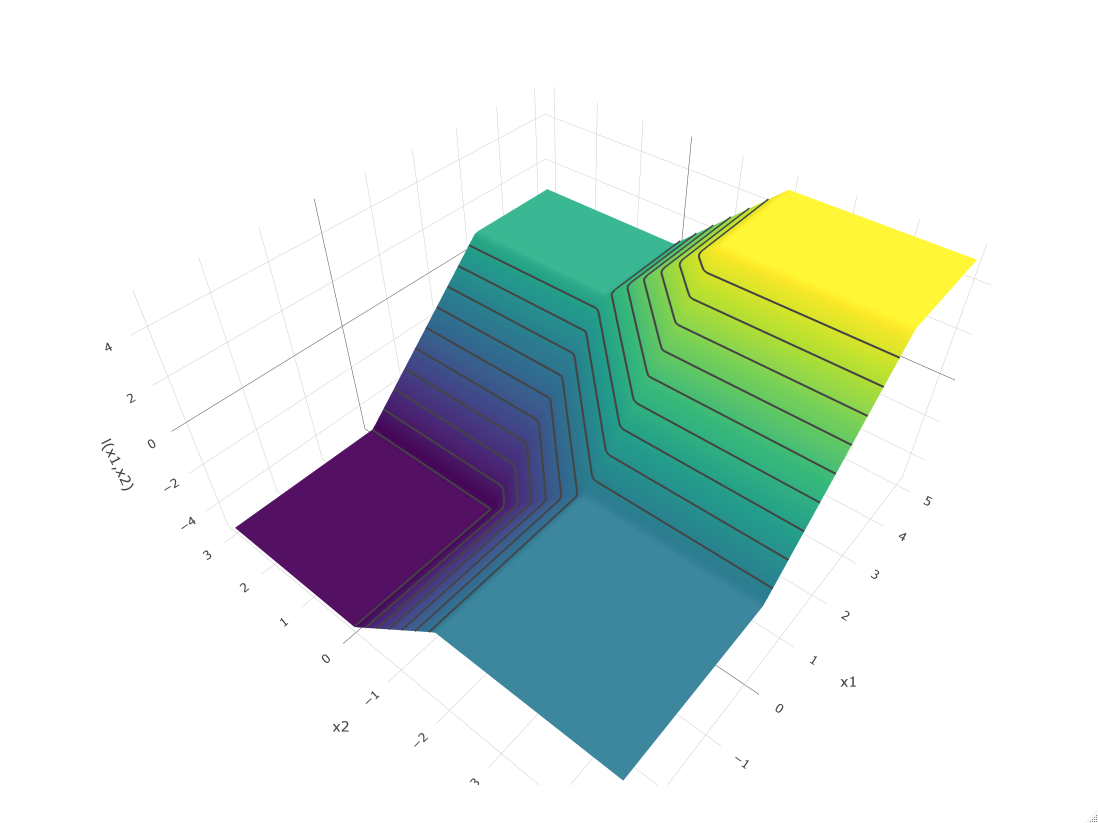}
  \caption{$\delta_1 = 3, \delta_2 = -2$}
  \label{fig:3d_tmp:diffdelta}
\end{subfigure}
\caption{Surface plots for $l(\xv).$}
\label{fig:3d_tmp}
\end{figure}

Let $\alpha(K, \gamma) := \P_{\Xv \sim \mathrm{Lap}_2(\0v, 1)}(l(\Xv) > K  ) + \gamma \cdot \P_{\Xv \sim \mathrm{Lap}_2(\0v, 1)}(l(\Xv) = K )$. 
First, we obtain $\alpha(K, \gamma)$ by the range of $K$.
Next, we calculate the corresponding trade-off function value at $\alpha(K, \gamma)$ by
$
    T_{\Lap_2, \deltav}(\alpha(K,\gamma))
    = \P_{\Xv \sim \mathrm{Lap}_2(\deltav, 1)}(l(\Xv) < K ) + (1-\gamma) \cdot \P_{\Xv \sim \mathrm{Lap}_2(\deltav, 1)}(l(\Xv) = K ).
$

\begin{enumerate}[(i)]
    \item $K > \delta_1 + \delta_2$ :
    $\alpha(K, \gamma) = 0$, and $ T_{\Lap_2, \deltav}(0) = 1. $

    \item $K = \delta_1 + \delta_2$ :
    For any $\gamma \in [0, 1]$, 
    $$
    \alpha(K, \gamma) = \gamma \cdot \P_{\Xv \sim \mathrm{Lap}_2(\0v, 1)}(X_1 \geq \delta_1, \; X_2 \geq \delta_2   ) = \frac{1}{4}e^{-\delta_1-\delta_2} \gamma. 
    $$
    Then, $T_{\Lap_2, \deltav}(\alpha(K, \gamma)) = \frac{3}{4} + \frac{1}{4}(1-\gamma).$
    Conversely, for $0 \leq \alpha \leq \frac{1}{4}e^{-\delta_1-\delta_2}$,
    by choosing $\gamma_\alpha = 4e^{\delta_1+\delta_2}$, we have 
    \[
    T_{\Lap_2, \deltav}(\alpha) 
    = \frac{3}{4} + \frac{1}{4}(1 - \gamma_{\alpha})
    = - e^{\delta_1 + \delta_2}\alpha + 1.
    \]

    \item $\delta_1 - \delta_2 < K < \delta_1 + \delta_2$ :
    \begin{align*}
        \alpha(K, \gamma)
        &= \P_{\Xv \sim \mathrm{Lap}_2(\0v, 1)} \left( X_1 + X_2 \geq \frac{\delta_1 + \delta_2 + K}{2}, \; X_1 \geq \frac{\delta_1 - \delta_2 + K}{2}, \; X_2 \geq \frac{-\delta_1 + \delta_2 + K}{2}   \right)\\
        &= \frac{1}{4}e^{-\frac{\delta_1 + \delta_2 + K}{2}} \left( 1 + \frac{\delta_1 + \delta_2 - K}{2} \right),
    \end{align*}
    and
    $$T_{\Lap_2, \deltav}(\alpha(K,\gamma)) = \frac{1}{4}e^{-\frac{\delta_1 + \delta_2 - K}{2}} \left( 3 + \frac{\delta_1 + \delta_2 - K}{2} \right).$$
    Conversely, for $\frac{1}{4}e^{-\delta_1 - \delta_2} < \alpha < \frac{1}{4}e^{-\delta_1}(1+\delta_2)$,
    $$
    T_{\Lap_2, \deltav}(\alpha) = \frac{1}{4}e^{-\frac{\delta_1 + \delta_2 - K_{\alpha}}{2}} \left( 3 + \frac{\delta_1 + \delta_2 - K_{\alpha}}{2} \right),
    $$
    where $K_{\alpha}$ is a solution of $\alpha = \frac{1}{4}e^{-\frac{\delta_1 + \delta_2 + K}{2}} \left( 1 + \frac{\delta_1 + \delta_2 - K}{2} \right).$
    By using Lambert $W$ function, one can check that 
    $K_\alpha = \delta_1 + \delta_2 - 2W(4\alpha^{1+\delta_1+\delta_2})$.
    Then, 
    $
    T_{\Lap_2, \deltav}(\alpha) = \frac{1}{4}e^{-C_{\deltav}(\alpha)} \left( 3 + C_{\deltav}(\alpha) \right),
    $
    where $C_{\deltav}(\alpha) = W(4\alpha e^{1+\delta_1 + \delta_2}) - 1 $.

    \item $K = \delta_1 - \delta_2$ :
    $
        \alpha(K, \gamma) = \frac{1}{4}e^{-\delta_1}(1+\delta_2) + \frac{1}{4}e^{-\delta_1} \gamma,
    $
    and 
    $$
    T_{\Lap_2, \deltav}(\alpha(K, \gamma)) = \frac{1}{4}e^{-\delta_2}(2+\delta_2) + \frac{1}{4}e^{-\delta_2} (1-\gamma).
    $$
    By using the same argument used in (ii), we have  
    $$
    T_{\Lap_2, \deltav}(\alpha) 
    = -e^{\delta_1 - \delta_2}\alpha + \frac{1}{2}e^{-\delta_2}(2 + \delta_2)
    $$
    for  $\frac{1}{4}e^{-\delta_1}(1+\delta_2) \leq \alpha \leq \frac{1}{4}e^{-\delta_1}(2+\delta_2)$.

    \item $-\delta_1 + \delta_2 < K < \delta_1 - \delta_2$ :
    $
        \alpha(K, \gamma) = \frac{1}{4}e^{-\frac{\delta_1 + \delta_2 + K}{2}} \left( 2 + \delta_2 \right),
    $
    and
    $$
    T_{\Lap_2, \deltav}(\alpha(K, \gamma)) = \frac{1}{4}e^{-\frac{\delta_1+\delta_2-K}{2}}(2+\delta_2).
    $$
    Conversely, for $\frac{1}{4}e^{-\delta_1}(2+\delta_2) < \alpha < \frac{1}{4}e^{-\delta_2}(2+\delta_2)$, it holds that 
    $$
    T_{\Lap_2, \deltav}(\alpha) = \frac{1}{16\alpha}e^{-\delta_1-\delta_2} (2+\delta_2)^2.
    $$

\end{enumerate}
Note that $T_{\Lap_2, \deltav}$ is symmetric on $y=x$.
Since 
$$
\frac{1}{4}e^{-(\delta_1+\delta_2)/2}(2+\delta_2) \in \left[\frac{1}{4}e^{-\delta_1}(2+\delta_2), \frac{1}{4}e^{-\delta_2}(2+\delta_2)\right],
$$
(v) leads to 
$$ T_{\Lap_2, \deltav}(\frac{1}{4}e^{-(\delta_1+\delta_2)/2}(2+\delta_2))
= \frac{1}{4}e^{-(\delta_1+\delta_2)/2}(2+\delta_2).$$
Thus, $T_{\Lap_2, \deltav}(\alpha)$ for $\alpha >\frac{1}{4}e^{-(\delta_1+\delta_2)/2}(2+\delta_2) $ is determined by symmetry and (i)-(v).

\end{proof}

\begin{proof}[Proof of Lemma \ref{lemma:Lap_tradeoff_monotonicity}]
Since the trade-off function is symmetric in the sense that it coincides with its inverse function, we need only to consider the left part where $0 \leq \alpha < \frac{1}{4}e^{-\frac{\delta_1 + \delta_2}{2}}(2+\delta_2)$.

Let $\lambda_1 \in (0, \Delta_1)$ be fixed. Rephrasing $T_{\Lap_2, \deltav}(\alpha)$ by $\lambda_1$ and $\lambda_2$, we get
\begin{align*}
    &T(\lambda_2)(\alpha) \coloneqq \widetilde{T}_{\lambdav}(\alpha) =\\
    &\begin{cases}
      -e^{\lambda_1}\alpha + 1, & \text{if $\alpha \in \left[ 0, \frac{1}{4}e^{-\lambda_1} \right) \eqqcolon I_1(\lambda_2) \equiv I_1$}\\
      \frac{1}{4}e^{-C_{\lambda_1}(\alpha)} \left( 3 + C_{\lambda_1}(\alpha) \right), & \text{if $\alpha \in \left[ \frac{1}{4}e^{-\lambda_1}, P_2(\lambda_2) \right) \eqqcolon I_2(\lambda_2)$}\\
      -e^{\lambda_2}\alpha + \frac{1}{2}e^{\frac{\lambda_2 - \lambda_1}{2}}(2-\frac{\lambda_2 - \lambda_1}{2}), & \text{if $\alpha \in \left[ P_2(\lambda_2), P_3(\lambda_2) \right) \eqqcolon I_3(\lambda_2)$}\\
      \frac{1}{16\alpha}e^{-\lambda_1}(2-\frac{\lambda_2 - \lambda_1}{2})^2, & \text{if $\alpha \in \left[ P_3(\lambda_2), P_4(\lambda_2) \right] \eqqcolon I_4(\lambda_2)$}\\
      \text{(symmetric part omitted),}
    \end{cases}
\end{align*}
where
\begin{align*}
    &C_{\lambda_1}(\alpha) := C_{\deltav}(\alpha) = W(4 \alpha e^{1 + \lambda_1}) - 1 \\
    &P_2(\lambda_2) \coloneqq \frac{1}{4}e^{-\frac{\lambda_1 + \lambda_2}{2}}(1-\frac{\lambda_2 - \lambda_1}{2})\\
    &P_3(\lambda_2) \coloneqq \frac{1}{4}e^{-\frac{\lambda_1 + \lambda_2}{2}}(2-\frac{\lambda_2 - \lambda_1}{2})\\
    &P_4(\lambda_2) \coloneqq \frac{1}{4}e^{-\frac{\lambda_1}{2}}(2-\frac{\lambda_2 - \lambda_1}{2}).
\end{align*}

Now let $\lambda_2^{(1)} < \lambda_2^{(2)}$. Then, as $P_i, \; i = 2, 3, 4$ are decreasing with respect to $\lambda_2$, in particular we have that either $P_3(\lambda_2^{(2)}) \in I_3(\lambda_2^{(1)})$ or $P_3(\lambda_2^{(2)}) \in I_2(\lambda_2^{(1)})$. In both cases, we can see that
\begin{equation}\label{eq:2dim_tradeoff_I1_eq}
    T(\lambda_2^{(1)})(\alpha) = T(\lambda_2^{(2)})(\alpha) \mbox{ for } \alpha \in [0, P_2(\lambda_2^{(2)}))
\end{equation}
and
\begin{equation}\label{eq:2dim_tradeoff_I4_ineq}
    T(\lambda_2^{(1)})(\alpha) > T(\lambda_2^{(2)})(\alpha) \mbox{ for } \alpha \in I_4(\lambda_2^{(1)}) \cap I_4(\lambda_2^{(2)}) = [P_3(\lambda_2^{(1)}), P_4(\lambda_2^{(2)})].
\end{equation}
Thus, we only need to check inequality for $\alpha \in [P_2(\lambda_2^{(2)}), P_3(\lambda_2^{(1)})]$.

First we regard the case where $P_3(\lambda_2^{(2)}) \in I_3(\lambda_2^{(1)})$. The trade-off functions are convex, and $T(\lambda_2^{(2)})(\alpha)$ is linear in $\alpha$ on $I_3(\lambda_2^{(2)})$. Due to \eqref{eq:2dim_tradeoff_I1_eq}, we obtain
$$
    T(\lambda_2^{(1)})(\alpha) \ge T(\lambda_2^{(2)})(\alpha) \mbox{ for } \alpha \in [P_2(\lambda_2^{(2)}), P_3(\lambda_2^{(2)})].
$$
Therefore, we have $T(\lambda_2^{(1)})(P_3(\lambda_2^{(2)})) > T(\lambda_2^{(2)})(P_3(\lambda_2^{(2)}))$ and $T(\lambda_2^{(1)})(P_3(\lambda_2^{(1)})) > T(\lambda_2^{(2)})(P_3(\lambda_2^{(1)}))$. As $T(\lambda_2^{(1)})(\alpha)$ is linear on $[P_3(\lambda_2^{(2)}), P_3(\lambda_2^{(1)})]$, we get that
$$
    T(\lambda_2^{(1)})(\alpha) \ge T(\lambda_2^{(2)})(\alpha) \mbox{ for } \alpha \in [P_3(\lambda_2^{(2)}), P_3(\lambda_2^{(1)})].
$$

For the case of $P_3(\lambda_2^{(2)}) \in I_2(\lambda_2^{(1)})$, let us show that there exists a finite sequence $\{\lambda_{2,i}\}_{i=0}^n, \; n \ge 2$ of
$$
\lambda_2^{(1)} = \lambda_{2,0} \le \lambda_{2,1} \le \cdots \le \lambda_{2,n-1} \le \lambda_{2,n} = \lambda_2^{(2)}
$$
such that
$$
P_3(\lambda_{2,i}) \in I_3(\lambda_{2,i-1}), \; i = 1, \cdots, n.
$$
Then we have $T(\lambda_{2,i-1})(\alpha) > T(\lambda_{2,i})(\alpha), \; i = 1, \cdots, n$ for $\alpha \in [0,1]$, so that $T(\lambda_2^{(1)})(\alpha) = T(\lambda_{2,0})(\alpha) > T(\lambda_{2,n})(\alpha) = T(\lambda_2^{(2)})(\alpha), \alpha \in [0,1]$.

Consider $\lambda_2' > \lambda_2^{(1)}$. We seek the range of $\lambda_2'$ to satisfy $P_3(\lambda_2') \in I_3(\lambda_2^{(1)})$. Let $c \coloneqq \frac{\lambda_1 - \lambda_2^{(1)}}{2} > 0$ and $\lambda_2' = \lambda_2^{(1)} + 2\epsilon, \; \epsilon \in (0,c)$. Then, we obtain
\begin{align}\label{eq:I3_bound}
\begin{split}
    &P_3(\lambda_2') \in I_3(\lambda_2^{(1)})\\
    &\Leftrightarrow \frac{1}{4} e^{-\frac{\lambda_2' + \lambda_1}{2}} (2 + \frac{\lambda_1 - \lambda_2'}{2}) \ge \frac{1}{4} e^{-\frac{\lambda_2^{(1)} + \lambda_1}{2}} (1 + \frac{\lambda_1 - \lambda_2^{(1)}}{2})\\
    &\Leftrightarrow e^{-\epsilon} (2 + \frac{\lambda_1 - \lambda_2^{(1)}}{2} - \epsilon) \ge 1 + \frac{\lambda_1 - \lambda_2^{(1)}}{2}\\
    &\Leftrightarrow 2 + c - \epsilon \ge (1+c)e^{\epsilon}\\
    &\Leftrightarrow \epsilon \le 2 + c - W((1+c)e^{2+c})
\end{split}
\end{align}
Let $h(x) := 2 + x - W((1+x)e^{2+x})$.
As $h(c) > 0$, any $\lambda_2' \in (\lambda_2^{(1)}, \lambda_2^{(1)} + 2h(c))$ satisfies $P_3(\lambda_2') \in I_3(\lambda_2^{(1)})$. Let $\lambda_{2,1} \coloneqq \lambda_2^{(1)} + h(c)$.

We also have that
$$
\frac{\partial h(x)}{\partial x} = \frac{1 - e^{h(x)}}{1 + (1+x)e^{h(x)}} < 0
$$
for all $x > 0$.
This implies, as $\frac{\lambda_1 - \lambda_{2,1}}{2} < \frac{\lambda_1 - \lambda_2^{(1)}}{2} = c$, that either $\lambda_{2,1} + h(c) \ge \lambda_2^{(2)}$ or $P_3(\lambda_{2,1} + h(c)) \in I_3(\lambda_{2,1})$ by applying the logic of \eqref{eq:I3_bound} again. In the former case, $n=2$ and $P_3(\lambda_2^{(2)}) \in I_3(\lambda_{2,1})$ so the proof is done. In the latter case, let $\lambda_{2,2} = \lambda_{2,1} + h(c)$ and construct $\{\lambda_{2,i}\}_{i=0}^n$ inductively as $\lambda_{2, i+1} = \lambda_{2,i} + h(c)$ until $\lambda_{2,n-1} + h(c) \ge \lambda_2^{(2)}$.
\end{proof}

\subsection{Proofs  for Section~\ref{sec:Lap_mech}}\label{sec:appendix_proofs2}

\subsubsection{Proof of Lemma~\ref{lem:eps-DP_to_mu-GDP_conversion}}

\begin{proof}[Proof of Lemma~\ref{lem:eps-DP_to_mu-GDP_conversion}]
(i)
Suppose that a randomized mechanism $\Mv$ is $\varepsilon$-DP for some $\varepsilon > 0$. 
Then, we have $\inf_{S \sim S'} T(\Mv(S), \Mv(S'))(\alpha) \ge T_{\varepsilon \mbox{-} \mathrm{DP}}$.
Let $\mu_{\varepsilon} = 2\Phi^{-1}\{e^{\varepsilon} / ( e^{\varepsilon} + 1) \}$.
To show $\Mv$ is $\mu$-GDP for any $\mu \ge \mu_{\varepsilon}$ it is enough to show that $ T_{\varepsilon \mbox{-} \mathrm{DP}} \ge G_{\mu_{\varepsilon}} $ since $G_{\mu}$ is decreasing in $\mu$.
Observe that 
\begin{align*}
    G_{\mu_{\varepsilon}}\left(\frac{1}{1+e^{\varepsilon}}\right)
    &= \Phi\left(\Phi^{-1}\left(\frac{e^{\varepsilon}}{1+e^{\varepsilon}}\right) - 2\Phi^{-1}\left(\frac{e^{\varepsilon}}{1+e^{\varepsilon}} \right) \right) \\
    &= \Phi\left(- \Phi^{-1}\left(\frac{e^{\varepsilon}}{1+e^{\varepsilon}} \right) \right) \\
    &= 1 - \Phi\left(\Phi^{-1}\left(\frac{e^{\varepsilon}}{1+e^{\varepsilon}} \right) \right) \\ 
    &= \frac{1}{1+e^{\varepsilon}}.
\end{align*}
From the convexity of any trade-off function, for any $\alpha \in [0, (1+e^{\varepsilon})^{-1}]$, we have 
$
    G_{\mu_{\varepsilon}} (\alpha) 
    \le l(\alpha),
$
where $l(\alpha)$ is function of line segment passes through $(0, 1)$ to $( (1+e^{\varepsilon})^{-1},(1+e^{\varepsilon})^{-1} ) $.
In fact, simple calculation shows that $l(\alpha) = T_{\varepsilon \mbox{-} \mathrm{DP}}(\alpha)$ for $\alpha \in [0, (1+e^{\varepsilon})^{-1}]$.
Finally, trade-off functions are symmetric on $y=x$, we have, for any $\alpha \in [0, 1]$
$$
    G_{\mu_{\varepsilon}} (\alpha) 
    \le T_{\varepsilon \mbox{-} \mathrm{DP}}(\alpha).
$$

(ii) 
Suppose that $\Mv$ is $\varepsilon$-DP where $\varepsilon \le \varepsilon_{\mu}$.
Note that $e^{\varepsilon} / (e^{\varepsilon}+1)$ is increasing in $\varepsilon > 0$, and 
\begin{align*}
2\Phi^{-1} \left( \frac{e^{\varepsilon}}{1+e^{\varepsilon}} \right) 
&\le 
2\Phi^{-1} \left( \frac{e^{\varepsilon_{\mu}}}{1+e^{\varepsilon_{\mu}}} \right) \\
&= 2\Phi^{-1} \left( \frac{\Phi(\mu/2)/\Phi(-\mu/2)}{1 + \Phi(\mu/2)/\Phi(-\mu/2)} \right)  \\
&= 2\Phi^{-1} \left( \Phi(\mu/2) \right)  \\
&= \mu.
\end{align*}
From the part (i) of the Lemma, $\Mv$ is $\mu$-GDP.
\end{proof}

\subsubsection{Proof of Lemma~\ref{lem:univariate_laplace_tradeoff}}
\begin{proof}[Proof of Lemma~\ref{lem:univariate_laplace_tradeoff}]
 
From the previous remark, for any $t > 0$ we have
$T_{\Lap_1, t}(\alpha) = F(F^{-1}(1-\alpha)-t)$, where $F$ is the distribution of $\Lap(0, 1)$.
For a fixed $\alpha \in (0, 1)$, $F(F^{-1}(1-\alpha)-t)$ is decreasing in $t$.  
Thus, for any fixed $\alpha \in (0, 1)$, if $t \ge t' > 0$ 
\[
T_{\Lap_1, t}(\alpha) \le  T_{\Lap_1, t'}(\alpha).
\]
Since the trade-off function $\Lc_1(b)$ is defined over the univariate Laplace mechanism, we have
\begin{align*}
\Lc_1(b)
&= \inf_{S \sim S'} T(M_{\Lap}(S; b), M_{\Lap}(S'; b)) \\
&= \inf_{S \sim S'} T(\Lap(0, 1), \Lap(|\theta(S) - \theta(S')|/b, 1)) \\
&= T(\Lap(0, 1), \Lap(\sup_{S \sim S'}|\theta(S) - \theta(S')|/b, 1)) \\
&= T(\Lap(0, 1), \Lap(\Delta_1/b, 1)) \\
&= T(\Lap(0, 1), \Lap(\delta, 1)).
\end{align*}
The proof is completed by \eqref{eq:Lap_tradeoff_1dim_}.

\end{proof}

 \subsubsection{Proof of Proposition~\ref{prop:1dim_lap_stronger} }
 
Proof of Proposition~\ref{prop:1dim_lap_stronger} relies on the following lemma, which answers the question: For which values  of $\delta$, $T_{\Lap_1,\delta}$ is greater than $G_\mu$? 

\begin{lem} \label{lem:1dim-tight-lap-tf}
For a $\mu > 0$, let $\delta = -2\log\{2\Phi(-\mu/2)\}$.
Then 
\[
T_{\Lap_1, \delta}(\alpha)
> G_{\mu}(\alpha)
\]
for any $\alpha \in (0, 1)$, except for $\alpha_0 = \Phi(-\mu/2)$, where $T_{\Lap_1, \delta}(\alpha_0) = G_{\mu}(\alpha_0). $
Moreover, For any $\delta' > \delta$, $T_{\Lap_1, \delta'}(\alpha_0) < G_{\mu}(\alpha_0)$.
\end{lem}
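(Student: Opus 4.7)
The plan is to exploit the fact that both $T_{\Lap_1,\delta}$ and $G_\mu$ are trade-off functions, hence symmetric about the line $y=x$ (each equals its own inverse), and that their fixed points coincide precisely when $\delta$ has the prescribed value. First I would compute $G_\mu(\alpha_0)$ directly: since $1-\alpha_0 = \Phi(\mu/2)$, we have $\Phi^{-1}(1-\alpha_0) = \mu/2$, and hence $G_\mu(\alpha_0) = \Phi(\mu/2-\mu) = \Phi(-\mu/2) = \alpha_0$. Next, from $\delta = -2\log\{2\Phi(-\mu/2)\}$ one gets $e^{-\delta} = 4\alpha_0^2$, so $\tfrac12 e^{-\delta} = 2\alpha_0^2 \le \alpha_0 < \tfrac12$, placing $\alpha_0$ in the middle branch of \eqref{eq:Lap_tradeoff_1dim_}, where $T_{\Lap_1,\delta}(\alpha_0) = e^{-\delta}/(4\alpha_0) = \alpha_0$. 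This establishes equality at $\alpha_0$.

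For strict inequality at all other $\alpha \in (0,1)$, it suffices to prove $T_{\Lap_1,\delta}(\alpha) > G_\mu(\alpha)$ on $(0,\alpha_0)$: the extension to $(\alpha_0,1)$ follows from symmetry (if the inequality held on $(0,\alpha_0)$, then for $\alpha \in (\alpha_0,1)$ set $\beta = T_{\Lap_1,\delta}(\alpha) < \alpha_0$; involutivity gives $\alpha = T_{\Lap_1,\delta}(\beta) > G_\mu(\beta)$, so $\beta > G_\mu^{-1}(\alpha) = G_\mu(\alpha)$, which rearranges to the claim). I would then split $(0,\alpha_0)$ according to the piecewise form of $T_{\Lap_1,\delta}$. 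On the middle branch $[2\alpha_0^2,\alpha_0)$, the inequality $\alpha_0^2/\alpha > G_\mu(\alpha)$ becomes, after substituting $y = \Phi^{-1}(1-\alpha) > \mu/2$, the claim $f(y) := \Phi(-y)\Phi(y-\mu) < \alpha_0^2 = f(\mu/2)$. The main obstacle is this analytic inequality. I would handle it by computing $f'(y)$, observing that $f'(y) = 0$ iff $g(-y) = g(y-\mu)$ with $g(x) := \Phi(x)/\phi(x)$; since $g'(x) = 1 + xg(x) > 0$ (verified for $x\ge 0$ directly and for $x<0$ via the standard Mills-ratio bound $\Phi(x) < -\phi(x)/x$), the map $g$ is strictly increasing, forcing $y = \mu/2$ as the unique critical point. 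Combined with $f(\mu/2) > 0 = \lim_{y\to\infty} f(y)$, this yields $f$ strictly decreasing on $(\mu/2,\infty)$, giving the required strict inequality.

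On the left branch $(0,2\alpha_0^2)$, the function $T_{\Lap_1,\delta}(\alpha) = 1 - \alpha/(4\alpha_0^2)$ is the line segment joining $(0,1)$ to $(2\alpha_0^2,\tfrac12)$, while the previous step (applied at $\alpha = 2\alpha_0^2$) gives $G_\mu(2\alpha_0^2) < \tfrac12$. Convexity of $G_\mu$ with $G_\mu(0)=1$ then implies $G_\mu(\alpha)$ lies below the chord from $(0,1)$ to $(2\alpha_0^2, G_\mu(2\alpha_0^2))$, which itself lies strictly below the chord defining $T_{\Lap_1,\delta}$ on this interval. Finally, for the moreover statement, any $\delta' > \delta$ satisfies $\tfrac12 e^{-\delta'} < 2\alpha_0^2 \le \alpha_0 < \tfrac12$, so $\alpha_0$ still lies in the middle branch for $T_{\Lap_1,\delta'}$, giving $T_{\Lap_1,\delta'}(\alpha_0) = e^{-\delta'}/(4\alpha_0) < e^{-\delta}/(4\alpha_0) = \alpha_0 = G_\mu(\alpha_0)$, as required.
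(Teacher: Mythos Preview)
Your proof is correct and follows a genuinely different route from the paper's. Both arguments verify equality at $\alpha_0$, invoke the self-inverse property of trade-off functions to reduce to one half of $(0,1)$, and dispatch the linear branch of $T_{\Lap_1,\delta}$ by a convexity-versus-chord comparison. The substantive difference is in the middle branch, where $T_{\Lap_1,\delta}(\alpha)=\alpha_0^2/\alpha$. The paper works on $(\alpha_0,\tfrac12]$, compares the derivatives of $T_{\Lap_1,\delta}$ and $G_\mu$, introduces auxiliary functions $g$, $h$, $\bar h$ through the reparametrization $\eta=-2\Phi^{-1}(\alpha)$, and splits into cases according to the sign of $\bar h(0)$. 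You instead work on $[2\alpha_0^2,\alpha_0)$ and, via the substitution $y=\Phi^{-1}(1-\alpha)$, recast the inequality as $\Phi(-y)\Phi(y-\mu)<\Phi(-\mu/2)^2$; the product on the left has its unique critical point at $y=\mu/2$ because $\Phi/\phi$ is strictly increasing (a one-line Mills-ratio fact), so it is strictly decreasing on $(\mu/2,\infty)$. Your argument is shorter and avoids the case analysis, at the cost of relying on the monotonicity of $\Phi/\phi$, which you correctly justify. The ``moreover'' clause is handled identically in spirit by both proofs.
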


\begin{proof}
From \eqref{eq:Lap_tradeoff_1dim_}, we have
\[
T_{\Lap_1, \delta}(\alpha)
=    
\begin{cases}
    1 - \frac{1}{4\Phi^2(-\mu/2)}  \alpha 
    & \text{ if } \alpha < 2\Phi^2(-\mu/2) ,\\
    \Phi^2(-\mu/2) / \alpha 
    & \text{ if } 2\Phi^2(-\mu/2) \le \alpha < \frac{1}{2} ,\\
    4\Phi^2(-\mu/2)(1-\alpha) & \text{ otherwise.}
\end{cases}
\]
(Here, $\Phi^2(t) \coloneqq \{\Phi(t)\}^2$.)
Since $\alpha_0 \in [2\Phi^2(-\mu/2), 1/2)$,
it can be easily checked that
\[
T_{\Lap_1, \delta}(\alpha_0)
= \alpha_0 = G_{\mu}(\alpha_0).
\]
Next, we will show that $T_{\Lap_1, \delta}(\alpha) > G_{\mu}(\alpha)$ for any other choice of $\alpha$.

First, consider $\alpha \in (\alpha_0, 1/2]$.
The derivatives become 
$$ 
\frac{\partial T_{\Lap_1, \delta}}{\partial \alpha}(\alpha) 
= - \frac{\Phi^2(-\mu/2)}{\alpha^2},
$$
and 
$$ 
\frac{\partial G_{\mu}(\alpha)}{\partial \alpha}(\alpha) 
= - e^{\mu\Phi^{-1}(1-\alpha) - \tfrac{1}{2}\mu^2}.
$$
From the definition of $\alpha_0$, it holds that 
$$
\frac{\partial T_{\Lap_1, \delta}}{\partial \alpha}(\alpha_0)
=\frac{\partial G_{\mu}(\alpha)}{\partial \alpha}(\alpha_0). 
$$
The difference between two derivatives is
$$
\left(
\frac{\partial T_{\Lap_1, \delta}}{\partial \alpha}
-\frac{\partial G_{\mu}(\alpha)}{\partial \alpha}
\right)(\alpha) 
= - \frac{\Phi^2(-\mu/2)}{\alpha^2} + 
e^{\mu\Phi^{-1}(1-\alpha) - \tfrac{1}{2}\mu^2}.
$$
Observe that
$$
\left(
\frac{\partial T_{\Lap_1, \delta}}{\partial \alpha}
-\frac{\partial G_{\mu}(\alpha)}{\partial \alpha}
\right)(\alpha)
> 0 
$$
if and only if 
\begin{equation} \label{eq:pf-prop11-g}
g(\alpha) \coloneqq \log(\alpha) - \frac{1}{2}\mu\Phi^{-1}(\alpha) - \frac{1}{4}\mu^2  -\log{\Phi(-\mu/2)}
> 0.
\end{equation}
After some calculation, one can see that $g(\alpha_0) = 0$.
The derivative of $g$ becomes
\[
g'(\alpha) = \frac{2}{\alpha} - \frac{\mu}{\phi(\Phi^{-1}(\alpha))}.
\]
Note that 
$
g'(\alpha_0) 
= \frac{2}{\alpha_0\phi(\mu/2)}\left(\phi(- \tfrac{\mu}{2}) + (-\tfrac{\mu}{2})\Phi(-\tfrac{\mu}{2})\right) > 0
$.
The last inequality holds from the well-known fact that $\phi(x) > x(1-\Phi(x))$ for any $x > 0$.
For $\alpha \in (\alpha_0, 1/2)$,  reparameterize $\alpha$ by $\eta = -2\Phi^{-1}(\alpha)$ where $\eta \in (0, \mu)$, then $g'(\alpha)$ is reparameterized by $h(\eta)$ which is defined as 
\[
h(\eta) \coloneqq \frac{2}{\Phi(-\eta/2)} - \frac{\mu}{\phi(-\eta/2)}.
\]
Note that $h(\eta) > 0$ if and only if $\bar h(\eta) \coloneqq 2\phi(- \tfrac{\eta}{2}) - \mu \Phi(-\tfrac{\eta}{2}) > 0$.
Observe that the derivative of $\bar h$ is positive:
for any $0 < \eta < \mu$,
\[
\frac{\partial \bar h}{\partial \eta} (\eta)
= \frac{1}{2} \phi(\frac{\eta}{2})(\mu - \eta) > 0.
\]
So, $\bar h$ is an increasing function on $(0, \mu)$.
Also, $\bar h(\mu) = 2\phi(-\mu/2) - \mu\Phi(-\mu/2) > 0$, and we have $h(\mu) > 0$.

If $\bar h(0) > 0$ then $\bar h(\eta) > 0$ for any $\eta \in (0, \mu)$, and  $h(\eta) > 0$. 
As a consequence, $g'(\alpha) > 0$ on $[\alpha_0, 1/2)$.
Since $g(\alpha_0) = 0$, we can conclude that 
$g(\alpha) > g(\alpha_0) = 0$ for $\alpha \in (\alpha_0, 1/2)$.
By \eqref{eq:pf-prop11-g}, we can conclude that 
$\left(
\frac{\partial T_{\Lap_1, \delta}}{\partial \alpha}
-\frac{\partial G_{\mu}}{\partial \alpha}
\right)(\alpha) $ strictly increasing on $(\alpha_0, 1/2]$, and then 
\[
T_{\Lap_1, \delta}(\alpha) -
G_{\mu}(\alpha)
> 
T_{\Lap_1, \delta}(\alpha_0) -
G_{\mu}(\alpha_0) = 0.
\]

Otherwise, suppose that $\bar h(0) \le 0$.
Then there exists a $\eta_0 \in [0, \mu)$ such that 
$\bar h(\eta) \le 0$ if $\eta \le \eta_0$, and $\bar h(\eta) > 0$ if $\eta > \eta_0$.
Due to the reparametrization, we have
$g'(\alpha) \le 0$ if $\alpha \ge \Phi(-\eta_0/2)$, and $g'(\alpha) > 0$ if $\alpha < \Phi(-\eta_0/2)$. 
From \eqref{eq:pf-prop11-g}, 
\begin{align*}
\left(
\frac{\partial T_{\Lap_1, \delta}}{\partial \alpha}
-\frac{\partial G_{\mu}(\alpha)}{\partial \alpha}
\right)(\alpha)
> 0 \quad&\text{ if }\: \alpha < \Phi(-\eta_0/2), \\
\left(
\frac{\partial T_{\Lap_1, \delta}}{\partial \alpha}
-\frac{\partial G_{\mu}}{\partial \alpha}
\right)(\alpha)
\le 0 \quad&\text{ if }\: \alpha \ge \Phi(-\eta_0/2).
\end{align*}
Thus, $T_{\Lap_1, \delta} - G_{\mu}$ is strictly increasing in $\alpha < \Phi(-\eta_0/2)$, and is decreasing in  $\alpha \ge \Phi(-\eta_0/2)$.
However, we know that $(T_{\Lap_1, \delta} - G_{\mu})(\alpha_0) = 0$, and 
\[
(T_{\Lap_1, \delta} - G_{\mu})(1/2)
= 2\Phi^2(-\mu/2) - \Phi(-\mu) > 0,
\]
due to the log-concavity of $\Phi$.
This shows that $T_{\Lap_1, \delta}(\alpha) - G_{\mu}(\alpha) > 0 $ if $\alpha \in (\alpha_0, 1/2]$.

Next, suppose that $\alpha \in (1/2, 1)$.
It holds that $T_{\Lap_1, \delta}(1) = G_{\mu}(1) = 0$. 
Note that $T_{\Lap_1, \delta}(\alpha) = e^{-\delta}(1-\alpha)$ is the function of line segment connecting $(1/2, T_{\Lap_1, \delta}(1/2))$ to $(1, 0)$.
By the convexity, $G_{\mu}(\alpha) < T_{\Lap_1, \delta}(\alpha)$ for any $\alpha \in (1/2, 1)$.

Lastly, if $\alpha \in [0, \alpha_0)$, then $G_{\mu}(\alpha) \le T_{\Lap_1, \delta}(\alpha)$ always holds since the trade-off functions are symmetric on $y=x$.
\end{proof}

\begin{proof}[Proof of Proposition~\ref{prop:1dim_lap_stronger}]
(i)
By Lemma \ref{lem:univariate_laplace_tradeoff}, 
\[
\Lc_1(b) = T\left(\Lap(0, 1), \Lap\left(\frac{\Delta_1}{b}, 1\right)\right).
\]
If $b \ge b_{\mu}^{\Delta_1}$, then by the monotonicity of $\Lc_1(b)$ and Lemma \ref{lem:1dim-tight-lap-tf}, 
\begin{align*}
\Lc_1(b)
&\ge T\left(\Lap(0, 1), \Lap\left(\frac{\Delta_1}{b_{\mu}^{\Delta_1}}, 1)\right)\right) \\
&= T\left(\Lap(0, 1), \Lap\left(-2\log\{2\Phi(-\tfrac{\mu}{2})\}, 1)\right)\right) \\
&\ge G_{\mu}.
\end{align*}

For the `\textit{only if}' part, suppose that $b < b_{\mu}^{\Delta_1}$.
It follows that
\[
\Lc_1(b) < T\left(\Lap(0, 1), \Lap\left(-2\log\{2\Phi(-\tfrac{\mu}{2})\}, 1)\right)\right).
\]
However, from Lemma \ref{lem:1dim-tight-lap-tf}, 
$\Lc_1(b)(\alpha_0) < G_{\mu}(\alpha_0)$ where $\alpha_0 = \Phi(-\mu/2)$.
Thus, $\Mv_{\Lap}(\cdot, b)$ is not $\mu$-GDP.
These prove the first part of (i).

Next, we claim that $b_{\mu}^{\Delta_1} < b_{\mu}^{\varepsilon}$.
To prove the claim, it is enough to show that
\begin{equation} \label{eq:pf-prop11-logPhi}
   \log(\Phi(\mu/2)) - \log(\Phi(-\mu/2)) <  -2\log( 2\Phi(-\mu/2) ). 
\end{equation}
The above inequality \eqref{eq:pf-prop11-logPhi} holds if 
\begin{align*}
\log\left( 4\Phi\left(-\frac{\mu}{2}\right) \Phi\left(\frac{\mu}{2}\right) \right) < 0.
\end{align*}
So, if we show that $\Phi(\mu/2)\Phi(\mu/2) < \frac{1}{4}$ for any $\mu > 0$ the claim is proved.
Let $x = \Phi(\mu/2)$. Then $x \in (1/2, 1)$, and
$$\Phi\left(-\frac{\mu}{2}\right) \Phi\left(\frac{\mu}{2}\right) = (1-x)x = -(x-\frac{1}{2})^2 + \frac{1}{4} < \frac{1}{4}.$$

(ii)
The proof can be easily derived from the first part of the proposition.  
\end{proof} 

\subsubsection*{Proofs for Section \ref{sec:Lap_multivariate_sub}}

First, we introduce a notion of tensor product of trade-off functions. 
For two trade-off functions $ f = T(P, Q) $ and $ g = T(P', Q') $, the tensor product is defined as
$
    f \otimes g \coloneqq T(P \times P', \: Q \times Q').
$
The tensor product makes it easy to write trade-off function between multivariate distributions.
The following proposition and lemma are due to \cite{dong2022gaussianDP} and used in the proof of Lemma~\ref{lem-high2one}.

\begin{prop}[Proposition D.1 of \cite{dong2022gaussianDP}]
    \label{prop-tensor}
    $  $
\begin{enumerate}
    \item The tensor product $ \otimes $ is commutative and associative.
    \item If $ g_1 \ge g_2 $ then $ f \otimes g_1 \ge f \otimes g_2 $.
    \item $f \otimes {\rm{Id}} = {\rm{Id}} \otimes f = f$.
\end{enumerate}
\end{prop}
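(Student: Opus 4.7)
The plan is to treat each of the three properties by combining three ingredients: (a) natural measurable bijections between product spaces, (b) the post-processing property of trade-off functions (Lemma~\ref{lem:post-processing}), and (c) a Blackwell-type randomization criterion stating that $T(P_1, Q_1) \ge T(P_2, Q_2)$ if and only if there exists a Markov kernel $K$ with $K P_1 = P_2$ and $K Q_1 = Q_2$. The latter is the $f$-DP analogue of Blackwell's classical theorem, established in the foundational development of $f$-DP by \cite{dong2022gaussianDP}. Throughout, write $f = T(P, Q)$ and let $g$, $g_i$ be represented analogously on auxiliary spaces.

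For part 1, the coordinate-swap $\sigma: (x, y) \mapsto (y, x)$ and the reassociator $((x, y), z) \mapsto (x, (y, z))$ are measurable bijections transporting the relevant product measures onto one another. Applying Lemma~\ref{lem:post-processing} to $\sigma$ yields $T(P\times P', Q\times Q') \ge T(P'\times P, Q'\times Q)$, and applying it to $\sigma^{-1}=\sigma$ yields the reverse inequality, hence $f \otimes g = g \otimes f$. The same argument using the reassociator and its inverse gives associativity.

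For part 2, invoke the randomization criterion on $g_1 \ge g_2$ to obtain a Markov kernel $K$ sending $(P_1, Q_1)$ onto $(P_2, Q_2)$. The lifted kernel $\tilde K(x, y) := (x, K(y))$ is then a (randomized) post-processing sending $(P \times P_1, Q \times Q_1)$ to $(P \times P_2, Q \times Q_2)$, and Lemma~\ref{lem:post-processing} gives $f \otimes g_1 \ge f \otimes g_2$. As an immediate corollary, $f \otimes g$ depends only on $f$ and $g$ (not on the choice of distributions representing them, which can be verified by applying the monotonicity in both directions to pairs of equivalent representations), so $\otimes$ is well defined on trade-off functions.

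For part 3, observe that $\mathrm{Id}(\alpha) = 1 - \alpha$ is realized as $T(P_0, P_0)$ for any distribution $P_0$, since any test with type I error $\alpha$ on identical hypotheses automatically has type II error $1 - \alpha$. It therefore suffices to show $T(P \times P_0, Q \times P_0) = T(P, Q)$. The inequality $\le$ follows by extending any test $\phi$ on $(P, Q)$ to $\tilde\phi(x, y) := \phi(x)$, which preserves both error rates; the reverse inequality follows by the Fubini-style marginalization $\bar\phi(x) := \E_{Y \sim P_0}[\phi(x, Y)]$ of any product-space test $\phi$, which likewise preserves both error rates. The main obstacle is the careful deployment of the Blackwell-type randomization criterion underlying part 2; I expect that to require a separate lemma (or an explicit citation to the $f$-DP literature), whereas parts 1 and 3 amount to measure-theoretic bookkeeping combined with the post-processing property already recorded in Lemma~\ref{lem:post-processing}.
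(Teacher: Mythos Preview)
Your proof is correct. Note, however, that the paper does not supply its own proof of this proposition: it is stated verbatim as Proposition~D.1 of \cite{dong2022gaussianDP} and used without argument, so there is no ``paper's proof'' to compare against. Your approach---bijective transport plus post-processing for commutativity and associativity, the Blackwell randomization criterion plus a lifted kernel for monotonicity, and a direct extension/marginalization of tests for the identity---is exactly the standard route taken in \cite{dong2022gaussianDP} itself, and your explicit flag that part~2 rests on the Blackwell-type equivalence (which indeed requires its own lemma or citation) is accurate.
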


The following lemma is useful in handling a lower bound of composition of trade-off functions. 
\begin{lem}[Lemma A.5 of \cite{dong2022gaussianDP}]
    \label{lem-comp}
    Suppose $ T(P, Q) \ge f $, $ T(Q, R) \ge g $, then $ T(P, R)(\alpha) \ge g(1-f(\alpha)) $ for any $\alpha \in [0, 1]$.
\end{lem}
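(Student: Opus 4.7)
The plan is to fix an arbitrary test $\phi$ witnessing the infimum in $T(P,R)(\alpha)$, track its type I and type II errors across the three distributions $P$, $Q$, $R$, and chain the two given inequalities through $Q$. The key observation is that $\phi$ is simultaneously a test for $(P,Q)$ and for $(Q,R)$, so the hypothesis $T(P,Q) \ge f$ forces the $Q$-based type II error of $\phi$ to be large, which in turn forces its $Q$-based type I error to be small, and then $T(Q,R) \ge g$ pushes the $R$-based type II error past $g$ evaluated at that small number.

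More concretely, first I fix $\alpha \in [0,1]$ and any (possibly randomized) test $\phi$ with $\alpha_\phi := \mathbb E_P[\phi] \le \alpha$. Let $\beta_\phi^Q := 1 - \mathbb E_Q[\phi]$. Since $T(P,Q)(\alpha_\phi) \le \beta_\phi^Q$ by the definition of the trade-off function, and since trade-off functions are non-increasing, the assumption $T(P,Q) \ge f$ yields
\[
\beta_\phi^Q \;\ge\; T(P,Q)(\alpha_\phi) \;\ge\; f(\alpha_\phi) \;\ge\; f(\alpha).
\]
Next I reinterpret $\phi$ as a test for the hypotheses $(Q,R)$: its type I error is $\mathbb E_Q[\phi] = 1 - \beta_\phi^Q$, and its type II error is $1 - \mathbb E_R[\phi]$. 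The hypothesis $T(Q,R) \ge g$, together with the monotonicity of $g$, then gives
\[
1 - \mathbb E_R[\phi] \;\ge\; T(Q,R)\bigl(1 - \beta_\phi^Q\bigr) \;\ge\; g\bigl(1-\beta_\phi^Q\bigr) \;\ge\; g\bigl(1-f(\alpha)\bigr).
\]

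Finally, taking the infimum over all $\phi$ with $\alpha_\phi \le \alpha$ yields $T(P,R)(\alpha) \ge g(1-f(\alpha))$. There is no serious obstacle here; the only subtlety is ensuring that the monotonicity of trade-off functions (both $f$ and $g$, and $T(P,Q)$, $T(Q,R)$) is applied in the correct direction, and that the inequality $T(P,Q)(\alpha_\phi) \le \beta_\phi^Q$ is interpreted correctly as saying that $\phi$ is a \emph{feasible} level-$\alpha_\phi$ test for $(P,Q)$ whose type II error cannot be smaller than the minimum possible.
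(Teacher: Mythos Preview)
Your proof is correct. The paper does not supply its own proof of this lemma; it merely quotes the result as Lemma~A.5 of \cite{dong2022gaussianDP} and uses it as a black box in the proof of Lemma~\ref{lem-tens}. Your argument---fixing a feasible test $\phi$ for $(P,R)$, using $T(P,Q)\ge f$ to bound $1-\mathbb{E}_Q[\phi]\ge f(\alpha)$, then using $T(Q,R)\ge g$ with $\phi$ now viewed as a level-$(1-\beta_\phi^Q)$ test for $(Q,R)$, and finally invoking the monotonicity of $g$---is exactly the standard chaining argument and matches the original proof in \cite{dong2022gaussianDP}.
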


In the proof of Lemma~\ref{lem-high2one}, we importantly use the following Lemma, slightly modified from Lemma A.2 of \cite{awan2022log}, and apply it to the Laplace mechanism.

\begin{lem}
    \label{lem-tens}
    For $ \delta_1, \delta_2 \ge 0 $, consider two trade-off functions
    $ f = T(\rm{Lap}(0, 1), \rm{Lap}(\delta_1, 1) ) $ and
    $ g = T(\rm{Lap}(0, 1), \rm{Lap}(\delta_2, 1) ). $
    Then $ f \otimes g \ge T(\rm{Lap}(0, 1), \rm{Lap}(\delta_1+\delta_2, 1) ) $. 
\end{lem}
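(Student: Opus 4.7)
The plan is to reduce the bivariate Laplace trade-off to a composition of two univariate ones via the post-processing/composition lemma (Lemma~\ref{lem-comp}), then collapse the resulting composite bound back to a univariate Laplace trade-off using the log-concavity of the Laplace density.

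First I would introduce the intermediate distribution $Q = \mathrm{Lap}(\delta_1, 1) \times \mathrm{Lap}(0,1)$, along with $P = \mathrm{Lap}(0,1) \times \mathrm{Lap}(0,1)$ and $R = \mathrm{Lap}(\delta_1,1) \times \mathrm{Lap}(\delta_2,1)$. Since $P$ and $Q$ share the same second marginal, an application of the Neyman--Pearson lemma (or Proposition~\ref{prop-tensor}(3)) yields $T(P,Q) = f \otimes \mathrm{Id} = f$, and similarly $T(Q,R) = \mathrm{Id} \otimes g = g$. By definition of the tensor product, $T(P,R) = f \otimes g$. Lemma~\ref{lem-comp} then gives the pointwise bound
\[
(f \otimes g)(\alpha) = T(P,R)(\alpha) \;\ge\; g\bigl(1 - f(\alpha)\bigr), \qquad \alpha \in [0,1].
\]

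Next I would evaluate the right-hand side using the log-concavity of the univariate Laplace density. By Proposition~A.3 of \cite{dong2022gaussianDP} (see Remark~\ref{rmk:log-conc-tf}), if $F$ is the distribution function of $\mathrm{Lap}(0,1)$, then for every $t \ge 0$,
\[
T(\mathrm{Lap}(0,1), \mathrm{Lap}(t,1))(\alpha) = F\bigl(F^{-1}(1-\alpha) - t\bigr).
\]
Applying this formula to $f$ and $g$, a direct substitution gives
\[
g\bigl(1 - f(\alpha)\bigr) = F\!\left( F^{-1}\!\bigl(F(F^{-1}(1-\alpha) - \delta_1)\bigr) - \delta_2 \right) = F\bigl(F^{-1}(1-\alpha) - \delta_1 - \delta_2\bigr),
\]
which equals $T(\mathrm{Lap}(0,1), \mathrm{Lap}(\delta_1+\delta_2, 1))(\alpha)$. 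Chaining this with the bound from the previous paragraph completes the proof.

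There is no serious obstacle here: the only nontrivial step is spotting that a one-coordinate-at-a-time intermediate distribution $Q$ turns the tensor product into a composition amenable to Lemma~\ref{lem-comp}, and log-concavity then makes the telescoping $F^{-1} \circ F$ collapse cleanly. The non-negativity assumption $\delta_1, \delta_2 \ge 0$ is used only to guarantee the validity of the log-concave representation with non-negative shift; the argument is otherwise symbolic.
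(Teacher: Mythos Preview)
Your proof is correct and follows essentially the same route as the paper: introduce the intermediate product distribution $Q = \mathrm{Lap}(\delta_1,1)\times\mathrm{Lap}(0,1)$, identify $T(P,Q)=f$ and $T(Q,R)=g$ via Proposition~\ref{prop-tensor}(3), apply Lemma~\ref{lem-comp} to bound $f\otimes g$ below by $g(1-f(\cdot))$, and then collapse this using the log-concave representation $T(\mathrm{Lap}(0,1),\mathrm{Lap}(t,1))(\alpha)=F(F^{-1}(1-\alpha)-t)$ from Remark~\ref{rmk:log-conc-tf}. The paper's proof is identical in structure and in the specific ingredients used.
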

\begin{proof}
    Let $P = \rm{Lap}(0, 1) \times \rm{Lap}(0, 1)$, 
    $Q = \rm{Lap}(\delta_1, 1) \times \rm{Lap}(0, 1)$,
    and $R = \rm{Lap}(\delta_1, 1) \times \rm{Lap}(\delta_2, 1)$.
    Observe that
    \begin{align*}
        f 
        &= f \otimes \mbox{Id} \quad (\text{by Proposition \ref{prop-tensor}.3}) \\
        &= T(\Lap(0, 1), \Lap(\delta_1, 1)) \otimes T(\Lap(0, 1), \Lap(0, 1)) \\
        &= T(P, Q) \quad (\text{by the definition of tensor product}),
    \end{align*}
    and similarly, $g = T(Q, R)$.
    However, $(f \otimes g)(\alpha) = T(P, R)$, and then Lemma \ref{lem-comp} implies that 
    $$ 
    (f \otimes g)(\alpha) = T(P, R) \ge g(1-f(\alpha)) 
    $$ 
    for any $\alpha \in (0, 1)$.
    Let $F$ be the distribution of $\Lap(0, 1)$.
    From Remark \ref{rmk:log-conc-tf}, we have
    \begin{align*}
    g(1-f(\alpha))
    &= F(F^{-1}(1-(1-f(\alpha))) - \delta_2) \\ 
    &= F(F^{-1}( F(F^{-1}(1-\alpha) - \delta_1) ) - \delta_2) \\ 
    &= F(F^{-1}(1-\alpha) - \delta_1 - \delta_2) \\ 
    &= T(\Lap(0, 1), \Lap(\delta_1+\delta_2, 1)).
    \end{align*}
    This completes the proof.

\end{proof}

Now, we are ready to prove Lemma~\ref{lem-high2one}. 

\begin{proof}[Proof of Lemma~\ref{lem-high2one}]
We first prove \eqref{eq:trade-off_multiLap}.
Fix any $\deltav = (\delta_1, \dots, \delta_p) \in \Real^p$.
By symmetricity of Laplace distribution and trade-off function,  we may assume that $ \delta_1 \ge \delta_2 \ge \dots \ge \delta_p \ge 0 $ without loss of generality.
Consider $ f_i = T(\Lap(0, 1), \Lap(\delta_i, 1)) $ and $ f = T(\Lap_p(\0v_p, 1), \Lap_p(\deltav, 1)) $.
To prove the lemma, we use the mathematical induction on $ p $.
When $ p = 1 $, it holds trivially, and Lemma \ref{lem-tens} implies that \eqref{eq:trade-off_multiLap} also holds when $ p = 2 $.
Now, suppose that the lemma holds for $ p = k $ ($ k \ge 2 $).
Then
\begin{align*}
    f
    &= f_1 \otimes f_2 \otimes \dots \otimes f_{k+1} \\
    &= f_1 \otimes (f_2 \otimes \dots \otimes f_{k+1})
    \quad (\text{by Proposition \ref{prop-tensor}.1})
    \\
    &\ge f_1 \otimes T_{\Lap_{1}, \delta_2 + \dots + \delta_{k+1}}
    \quad (\text{by the induction assumption and Proposition \ref{prop-tensor}.2})
    \\
    &\ge T_{\Lap_{1}, \delta_1 + \delta_2 + \dots + \delta_{k+1}}
    \quad (\text{by Lemma \ref{lem-tens}})
    \\
    &= T(\Lap(0, 1), \Lap(\|\deltav\|_1, 1)).
\end{align*}

Suppose that $\deltav = (\delta_1, 0, \dots, 0)$
so that $\delta_1 \neq 0$ and other elements in $\deltav$ are all zero.
Then, the trade-off function becomes
\[
T(\Lap_p(\0v_p, 1), \Lap_p(\deltav, 1))
= T(\Lap_1(0, 1), \Lap_1(\delta_1, 1))
= T(\Lap_1(0, 1), \Lap_1(\|\deltav\|_1, 1)).
\]
This proves (i).

To prove (ii), suppose that at least two elements in $\deltav = (\delta_1, \dots, \delta_p)$ are nonzero.
We may assume that $\delta_i > 0$ for all $i = 1, \dots, p$ and $p \ge 2$.
First, consider the case of $p = 2$. 
More precisely, let $\deltav = (\delta_1, \delta_2)$ where  $\delta_1 \ge \delta_2 > 0$.
Note that $\alpha_0^{(1)} = \frac{1}{2}e^{-\frac{\|\deltav\|_1}{2}}$, and $\alpha_0^{(2)} = \frac{1}{4} e^{-\frac{\|\deltav\|_1}{2}}(2+\delta_2)$ are fixed points of $T_{\Lap_1, \|\deltav\|_1}$, and $T_{\Lap_2, \deltav}$, respectively, i.e., $T_{\Lap_1, \|\deltav\|_1}(\alpha_0^{(1)}) = \alpha_0^{(1)} $, and $T_{\Lap_2, \deltav}(\alpha_0^{(2)}) = \alpha_0^{(2)} $.
Since $\delta_2 > 0$, it follows that 
\[
\alpha_0^{(1)} = 
\frac{1}{4}e^{-\frac{\|\deltav\|_1}{2}}(2+\delta_2)
> 
\frac{1}{2}e^{-\frac{\|\deltav\|_1}{2}}
= \alpha_0^{(2)}.
\]
Then 
\[
T_{\Lap_1, \|\deltav\|_1}(\alpha_0^{(1)})
< T_{\Lap_2, \deltav}(\alpha_0^{(2)})
\le T_{\Lap_2, \deltav}(\alpha_0^{(1)}).
\]
By the continuity of $T_{\Lap_2, \deltav} - T_{\Lap_1, \|\deltav\|_1}$, there exists an interval $(a, b)$ such that contains $\alpha_0^{(1)}$, and for any $\alpha \in (a, b)$, 
\[
(T_{\Lap_2, \deltav} - T_{\Lap_1, \|\deltav\|_1})(\alpha) > 0.
\]

Next, assume $p \ge 3$, and let $\deltav = (\delta_1, \dots, \delta_p) $ where $\delta_i > 0$ for all $i$.
Denote $\deltav = (\delta_1, \deltav_{-})$, where $\deltav_{-} = (\delta_2, \dots, \delta_p)$.
Then
\begin{align*}
    T_{\Lap_p, \deltav}
    &= T_{\Lap_1, \delta_1} \otimes T_{\Lap_{p-1}, \deltav_{-}} \\
    &\ge T_{\Lap_1, \delta_1} \otimes T_{\Lap_{1}, \|\deltav_{-}\|_1} \quad (\text{by \eqref{eq:trade-off_multiLap}  and Proposition \ref{prop-tensor}.2}) \\
    &= T_{\Lap_2, (\delta_1, \|\deltav_{-}\|_1)^{\top}}.
\end{align*}
Since $\delta_1 \neq 0$ and $\|\deltav_{-}\|_1 \neq 0$, from the case of $p=2$, it follows that there exists an interval $(a, b)$, $0 < a < b< 1$ such that, for any $\alpha \in (a, b)$, 
\[
T_{\Lap_2, (\delta_1, \|\deltav_{-}\|_1)^{\top}}(\alpha)
> T_{\Lap_1, \delta_1+\|\deltav_{-}\|_1}(\alpha) = T_{\Lap_1, \|\deltav\|_1}(\alpha).
\]
Thus, $T_{\Lap_p, \deltav}(\alpha) > T_{\Lap_1, \|\deltav\|_1}(\alpha).$ 
Also, $(a, b)$ contains $\|\deltav\|_1$.
This completes the proof of (ii) of the lemma.
\end{proof}

\begin{proof}[Proof of Theorem~\ref{thm:improved_lap_mech}]

(i)
We first show the \textit{`if'} part. 
We have already observed that for any given $\Mv_{\Lap}(\cdot; b)$ of $\thetav$,
\begin{align*}
\Lc_p(b; \thetav, \mathcal{X}^n)
\ge 
T\left(
    \Lap(0, 1),
    \Lap\left(\Delta_1(\thetav, \mathcal{X}^n)/b, 1 \right) 
\right).
\end{align*}
Since $b \ge b_{\mu}^{\Delta_1} \ge b_{\mu}^{\Delta_1(\thetav, \Xc^n)}$,   Proposition \ref{prop:1dim_lap_stronger}(i) implies  
$ T\left(
    \Lap(0, 1),
    \Lap\left(\Delta_1(\thetav, \mathcal{X}^n)/b, 1 \right) 
\right)
\ge G_{\mu}$, as required. 

The  \textit{`only if'} part can be verified by showing the contrapositive: If $b < b_\mu^{\Delta_1}$, then there exists a statistic $\thetav$ with $\Delta_1(\thetav, \Xc^n) = \Delta_1$ such that $\Mv_{\Lap}(\cdot; b)$ does not satisfy $\mu$-GDP. 

Let $\thetav$ and $\Xc^n$ be such that for some $S,S' \in \Xc^n$ satisfying $S\sim S'$, $\thetav(S) - \thetav(S') = (\Delta_1,0,\ldots,0)^\top$. Then 
$$\Lc_p(b ; \thetav, \Xc^n) = \inf_{S\sim S'} 
T(\Lap_p( (\thetav(S) - \thetav(S'))/b, 1), \Lap_p(\0v, 1)) \le T( \Lap_p( (\Delta_1/b,0,\ldots,0),1), \Lap_p(\0v,1)),$$ and by Lemma\ref{lem-high2one}(i), we have $\Lc_p(b ; \thetav, \Xc^n) \le T(\Lap(\Delta_1/b), \Lap(0,1)).$ 
By  Proposition \ref{prop:1dim_lap_stronger}, for any $b < b_\mu^{\Delta_1}$, 
there exists $\alpha \in (0,1)$ such that 
$$\Lc_p(b ; \thetav, \Xc^n)(\alpha) \le T(\Lap(\Delta_1/b), \Lap(0,1))(\alpha) < G_\mu(\alpha).$$ This shows that 
$\Mv_{\Lap}(\cdot; b)$, perturbing $\thetav$, is not $\mu$-GDP for any $b < b_\mu^{\Delta_1}$. 
 
(ii)
This can be easily derived from the first part of the theorem.

(iii)
Consider a statistic $\thetav$ such that the sensitivity space $\mathcal{S}_{\thetav}$ is 
\[
\mathcal{S}_{\thetav} = \left\{
\frac{\Delta_1}{2}(\ev_i + \ev_j) : i, j = 1, \dots, p
\right\}.
\]
Then, by the process similar to \eqref{eq:freq_1}, we obtain
\[
\Lc_p(b) = T\left(\Lap_2(\0v, 1), \Lap_2\left(\frac{\Delta_1}{2b}\1v, 1\right)\right)
\]
From the above inequality and Lemma \ref{lem:1dim-tight-lap-tf}, 
we have 
\begin{equation} \label{eq:pf-thm13-iii-eq1}
\Lc_p(b_{\mu}^{\Delta_1})(\alpha)
> G_{\mu}(\alpha),
\end{equation}
for any $\alpha \in (0, 1)$.
Define a function $\Fc(b, \alpha): (0, \infty)\times [0,1] \to \Real$ by 
\[
\Fc(b, \alpha) = \Lc_p(b)(\alpha) - G_\mu(\alpha).
\]
The inequality \eqref{eq:pf-thm13-iii-eq1} is equivalent to 
$\Fc(b_{\mu}^{\Delta_1}, \alpha) > 0$ for any $\alpha \in (0, 1)$.
Since 
$$
\lim_{\alpha \to 0+} \frac{\partial G_{\mu}}{\partial \alpha}(\alpha) = -\infty,
$$
and $\Lc_p(b)(\alpha) = -e^{\Delta_1/b}\alpha + 1$ when $\alpha < \frac{1}{4}e^{-\Delta_1/b}$,
there exists $c_1, c_2 > 0$ such that 
if $b \in (b_{\mu}^{\Delta_1}-c_1, b_{\mu}^{\Delta_1})$ and $\alpha \in (0, c_2)$ then $\Fc(b, \alpha) > 0$.
Due to the trade-off functions are symmetric on $y=x$, we have
$\Fc(b, \alpha) > 0$ if $\alpha \in (0, c_2) \cup (1-c_2, 1)$ whenever $b \in (b_{\mu}^{\Delta_1}-c_1, b_{\mu}^{\Delta_1})$.

On the other hand, since $[c_2, 1-c_2]$ is compact, there exists a $c_3 > 0$ such that if $b \in (b_{\mu}^{\Delta_1} - c_3, b_{\mu}^{\Delta_1})$, and $\alpha \in [c_2, 1-c_2]$, then $\Fc(b, \alpha) > 0.$

Take $C = \min(c_1, c_3)$, and choose $b_0 \in (b_{\mu}^{\Delta_1} - C, b_{\mu}^{\Delta_1})$.
Then, by the above arguments, we have $\Fc(b_0, \alpha) > 0$ for any $\alpha \in (0, 1)$.
This shows that, for any $\alpha \in (0, 1)$,
\[
\Lc_p(b_0)(\alpha) > G_{\mu}(\alpha).
\]
This completes the proof of (iii).
\end{proof}

\section{Technical details omitted from Section~\ref{sec:comparison}} \label{sec:detailsforComparionsSection}

\begin{lem}\label{lem:re}
Let $\mu>0$ Let $\thetav: \Xc^n \to \Real^p$, and let $\Mv_G$ and $\Mv_{\Lap} = \Mv_{\Lap}(\cdot, b_{\mu}^{\Delta_1})$  be the ordinary Gaussian and Laplace mechanisms, respectively, both calibrated to satisfy $\mu$-GDP. Let $\Delta_r = \Delta_r(\thetav; \Xc^n)$ for $r = 1,2$. For any $r\ge 1$,
\begin{itemize}
  \item[(i)] $L_r(\Mv_{\Lap}) = p (b_\mu^{\Delta_1})^r \Gamma(r+1)$ and $L_r(\Mv_G) = p (\Delta_2/\mu)^r \frac{2^{r/2}}{\sqrt{\pi}}\Gamma\left( \frac{p+1}{2} \right)$
  \item[(ii)] ${\rm RE}_r(\Mv_G , \Mv_{\Lap} ) = \left(\frac{\Delta_1}{\Delta_2}\right)^r  2^{\frac{r}{2}} \Gamma\left(\frac{r}{2} + 1\right) \left(\frac{\mu}{-2 \log (2 \Phi (-\frac{\mu}{2}))}\right)^r$
  \item[(iii)] For each given $r$, ${\rm RE}_r(\Mv_G , \Mv_{\Lap} )$ increases  as  $\Delta_1/ \Delta_2$ increases or $\mu$ decreases.
\end{itemize}
\end{lem}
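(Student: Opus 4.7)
The plan is to establish the three parts in order, with part (i) being a direct computation, part (ii) an algebraic simplification, and part (iii) a monotonicity argument that reduces to a classical property of the normal distribution.

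For part (i), I would compute the $L_r$-costs from the definition $L_r(\Mv) = \E\|\Mv(S) - \thetav(S)\|_r^r$, which factorizes across the independent coordinates of the additive noise. For the Gaussian mechanism, $\Mv_G(S) - \thetav(S) \sim N_p(\0v, \sigma^2\Iv_p)$ with $\sigma = \Delta_2/\mu$, so $L_r(\Mv_G) = p\sigma^r \E|Z|^r$ with $Z \sim N(0,1)$, and the classical formula $\E|Z|^r = \tfrac{2^{r/2}}{\sqrt{\pi}}\Gamma(\tfrac{r+1}{2})$ gives the stated expression (the $\Gamma((p+1)/2)$ in the statement appears to be a typo for $\Gamma((r+1)/2)$). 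For the Laplace mechanism, each coordinate is independently $\mathrm{Lap}(0, b_\mu^{\Delta_1})$, and $\E|X|^r = b^r\int_0^\infty x^r e^{-x}\,dx = b^r\Gamma(r+1)$, yielding $L_r(\Mv_{\Lap}) = p(b_\mu^{\Delta_1})^r\Gamma(r+1)$.

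For part (ii), I take the ratio and simplify. Writing $b_\mu^{\Delta_1} = \Delta_1/[-2\log\{2\Phi(-\mu/2)\}]$ and using Legendre's duplication formula $\Gamma(r+1) = 2^r \Gamma(\tfrac{r+1}{2})\Gamma(\tfrac{r}{2}+1)/\sqrt{\pi}$, the ratio of gamma factors becomes $\sqrt{\pi}\,\Gamma(r+1)/[2^{r/2}\Gamma(\tfrac{r+1}{2})] = 2^{r/2}\Gamma(\tfrac{r}{2}+1)$, and the remaining factors collect into $(\Delta_1/\Delta_2)^r$ and $(\mu/[-2\log\{2\Phi(-\mu/2)\}])^r$, which gives the claimed expression.

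For part (iii), monotonicity of $\mathrm{RE}_r$ in $\Delta_1/\Delta_2$ (with the other variables fixed) is immediate from $r \ge 1$. The monotonicity in $\mu$ reduces to showing that $f(\mu) := \mu/g(\mu)$ is decreasing on $(0,\infty)$, where $g(\mu) = -2\log\{2\Phi(-\mu/2)\}$. Since $f'$ has the sign of $\psi(\mu) := \mu g'(\mu) - g(\mu)$ (up to a negation), and $\psi(0)=0$ with $\psi'(\mu) = \mu g''(\mu)$, it suffices to show $g''(\mu) > 0$ on $(0,\infty)$. A direct computation gives $g'(\mu) = \phi(\mu/2)/\Phi(-\mu/2)$, which is (up to the chain rule factor $1/2$) precisely the hazard rate of the standard normal distribution. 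The main (and only non-routine) step is thus invoking the well-known fact that the standard normal has a strictly increasing hazard rate, hence $g''(\mu) > 0$, $\psi(\mu) > 0$, and $f'(\mu) < 0$ for all $\mu > 0$. This gives the claimed monotonicity in $\mu$ and completes the proof.
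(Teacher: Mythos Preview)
Your proof is correct and follows essentially the same route as the paper: direct moment computations for (i), Legendre's duplication formula for (ii), and for (iii) reducing to the convexity of $\mu \mapsto -\log\Phi(-\mu/2)$ (which you phrase via the increasing hazard rate of the standard normal, while the paper verifies it by computing the second derivative and invoking the Mills ratio inequality $\Phi(-x) < \phi(x)/x$). You are also right that the $\Gamma((p+1)/2)$ in the statement and the paper's proof is a typo for $\Gamma((r+1)/2)$.
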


\begin{proof}
(i)
First, we calculate $L_r$ cost of the improved Laplace mechanism.
For $Y \sim \Lap(0; 1)$, the $r$th moment is
\begin{align*}
    \mathbb E |Y|^r
    = \int_{-\infty}^{\infty} \frac{1}{2} e^{-|y|} \cdot |y|^r dy
    = \int_0^{\infty} y^re^{-y} dy
    = \Gamma(r+1).
\end{align*}
Let $\xiv = (Y_1, \dots, Y_p)^\top \sim \Lap_p(\0v; b_\mu^{\Delta_1} )$.
Then, the $L_r$ cost becomes
\begin{align*}
L_r(\Mv_{\Lap}(S; b_\mu^{\Delta_1}))
= p \mathbb E  |Y_1|^r
= p (b_\mu^{\Delta_1})^r \mathbb E  |Y|^r
= p (b_\mu^{\Delta_1})^r \Gamma(r+1).
\end{align*}
In the normal case, it is well known that
$$
\mathbb E |X|^r = \frac{2^{r/2}}{\sqrt{\pi}}\Gamma\left( \frac{p+1}{2} \right),
$$
where $X \sim N(0, 1)$.
Thus, we have
$$
L_r(\Mv_{G}(S))
= p (\Delta_2/\mu)^r \mathbb E |X|^r
= p (\Delta_2/\mu)^r \frac{2^{r/2}}{\sqrt{\pi}}\Gamma\left( \frac{p+1}{2} \right).
$$

(ii)
From (i), the relative efficiency becomes
$$
{\rm RE}_r(\Mv_G, \Mv_{\rm Lap})
= \frac{\sqrt{\pi}}{2^{r/2}} \left( \frac{b_\mu^{\Delta_1}}{\Delta_2/\mu} \right)^r \frac{\Gamma(r+1)}{\Gamma(\frac{r+1}{2})}
= \left(\frac{\Delta_1}{\Delta_2}\right)^r  2^{\frac{r}{2}} \Gamma\left(\frac{r}{2} + 1\right) \left(\frac{\mu}{-2 \log (2 \Phi (-\frac{\mu}{2}))}\right)^r.
$$
The last equality holds from the definition of $b_{\mu}^{\Delta_1}$ and by Legendre duplication formula for gamma function (5.5.5 of \cite{olver2010nist}), which says that
$$
\Gamma(r+1) = \frac{2^{r}}{\sqrt{\pi}} \Gamma \left( \frac{r+1}{2} \right)
\Gamma \left( \frac{r}{2} + 1 \right).
$$

(iii)
From (ii), it is obvious that ${\rm RE}_r(\Mv_G, \Mv_{\rm Lap})$ is increasing in $\Delta_1/\Delta_2$ for each given $r$. 
To check whether ${\rm RE}_r(\Mv_G, \Mv_{\rm Lap})$ is an increasing function of $\mu$, it is enough to show that 
\begin{equation} \label{eq:L_r-cost-cond}
g(\mu) \coloneqq \frac{-\log (2 \Phi (-\frac{\mu}{2}))}{\mu}
\end{equation}
increases as $\mu$ increases. 
Define a function $h$ on $(0, \infty)$ as
$
h(\mu) = -\log\left( \Phi \left(- \frac \mu 2 \right) \right).
$
Then $g$ is expressed by using $h$ as
$$
g(\mu) 
= \frac{ -\log(\Phi(-\mu/2)) - (-\log(1/2)) }{\mu}
= \frac{ h(\mu) - h(0) }{\mu - 0}.
$$
Let $\phi$ be a density of $N(0, 1)$.
Note that
$$
h'(\mu)
= \frac{\phi(\mu/2)}{2\Phi(-\mu/2)},
$$
and
$$
h''(\mu)
= \frac{\phi(\mu/2)}{4(\Phi(-\mu/2))^2}
( \phi(\frac{\mu}{2}) - \frac{\mu}{2}\Phi(-\frac{\mu}{2}) ).
$$
However, it is well known that $\Phi(-x) < \phi(x)/x$ for any $x > 0$, and thus $h''(\mu) > 0$ for any $\mu >0$.
So, $h$ is a convex function, and as a consequence, $g(\mu)$ is increasing in $\mu > 0$.
\end{proof}

\section{Additional simulation results} \label{sec:app_sim}

In this section, we provide simulation results for the private homogeneity tests, as referred to in Section~\ref{sec:sim_gof}. 
We use the same model settings used for comparing GOF tests, with 5-year age groups ($p=18$ categories) and 10-year age groups ($p=9$ categories).
In 5-year age grouped table ($p=18$), categories are divided into \texttt{age:0-4}, \texttt{age:5-10}, \dots, \texttt{age:80-84} and \texttt{age:85+}.
Similarly, we set \texttt{age:0-9}, \texttt{age:10-19}, \dots, \texttt{age:80+} as the categories in the case of 10-year age grouped table ($p=9$).

For the homogeneity test (Algorithm \ref{alg:BootHomTest}),  $2 \times p$ contingency table is generated, in which the rows of the table are sampled from $\mbox{Multinomial}(n, \piv_1)$ and $\mbox{Multinomial}(n, \piv_2)$ respectively.
We set $\piv_1 = \piv_2 = \piv_{\rm Kor}^{(p)} $ to simulate the type I error rate, and set $(\piv_1, \piv_2) = (\piv_{\rm Kor}^{(p)}, \piv_{\rm US}^{(p)})$ for the power analysis.

The simulation results for homogeneity tests are illustrated in  Figures \ref{fig:hom-type1} and \ref{fig:hom-power}.

\begin{figure}
\centering
\begin{subfigure}{0.4\textwidth}
    \centering
    \includegraphics[width=.8\textwidth]{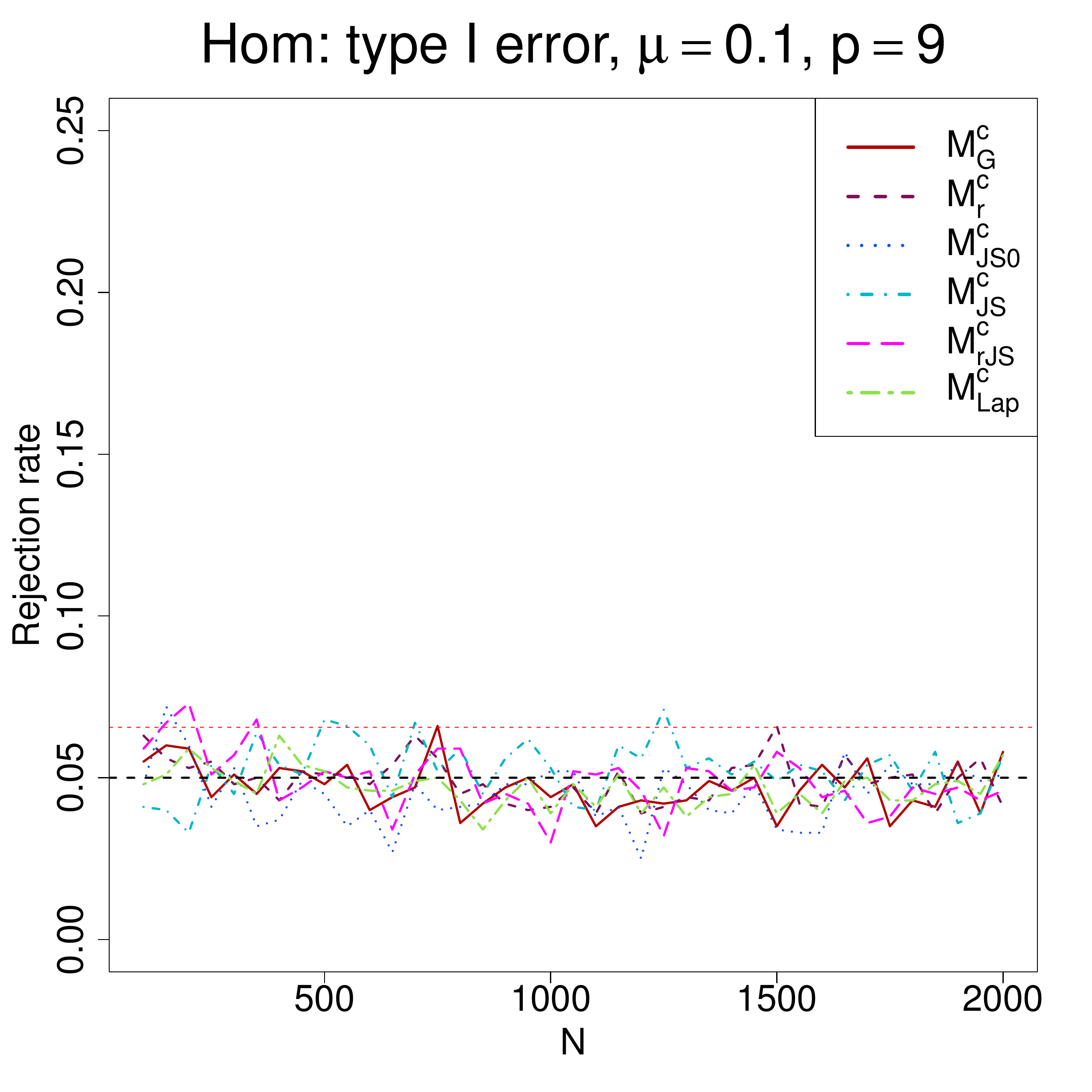}
\end{subfigure}
\begin{subfigure}{0.4\textwidth}
    \centering
    \includegraphics[width=.8\textwidth]{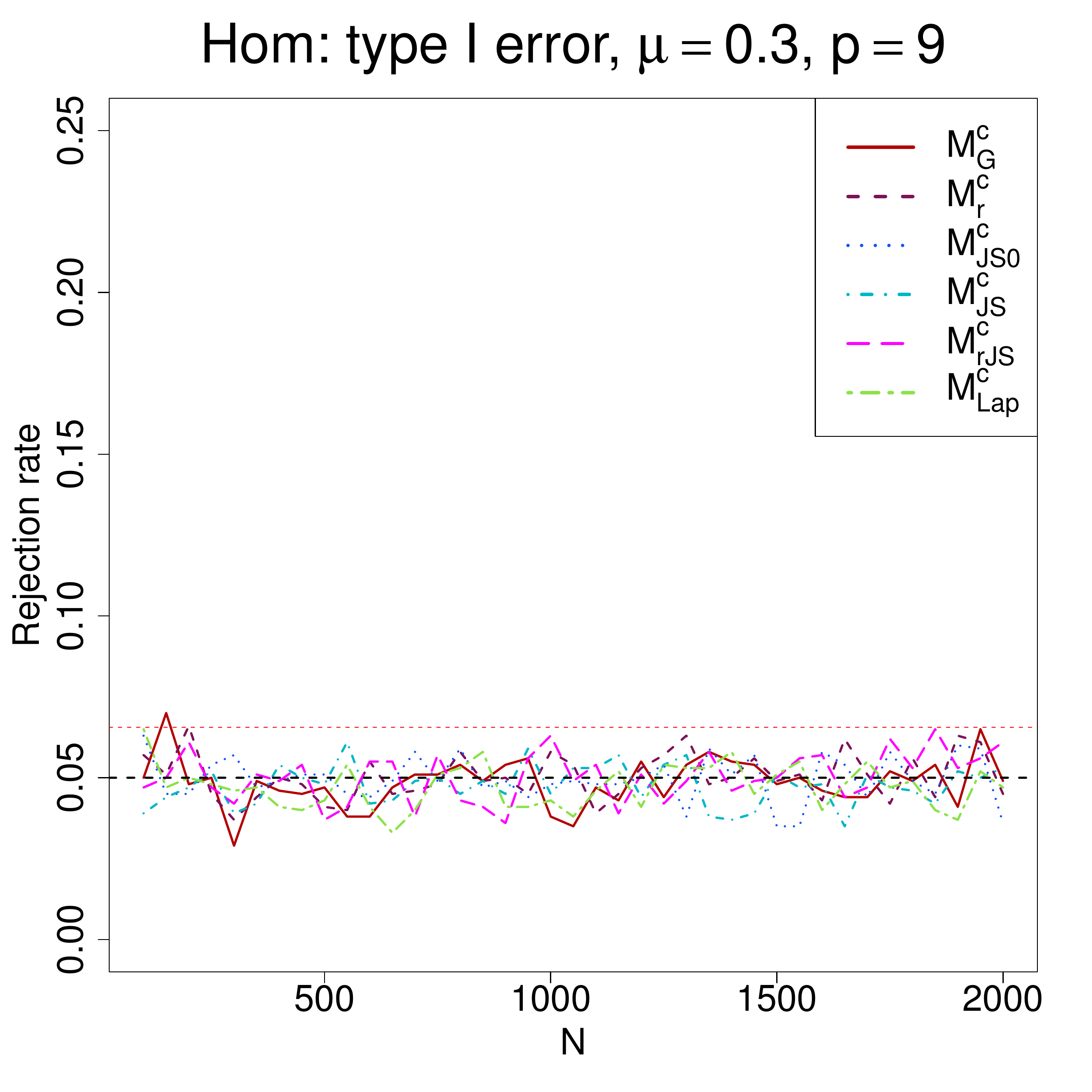}
\end{subfigure}
\begin{subfigure}{0.4\textwidth}
    \centering
    \includegraphics[width=.8\textwidth]{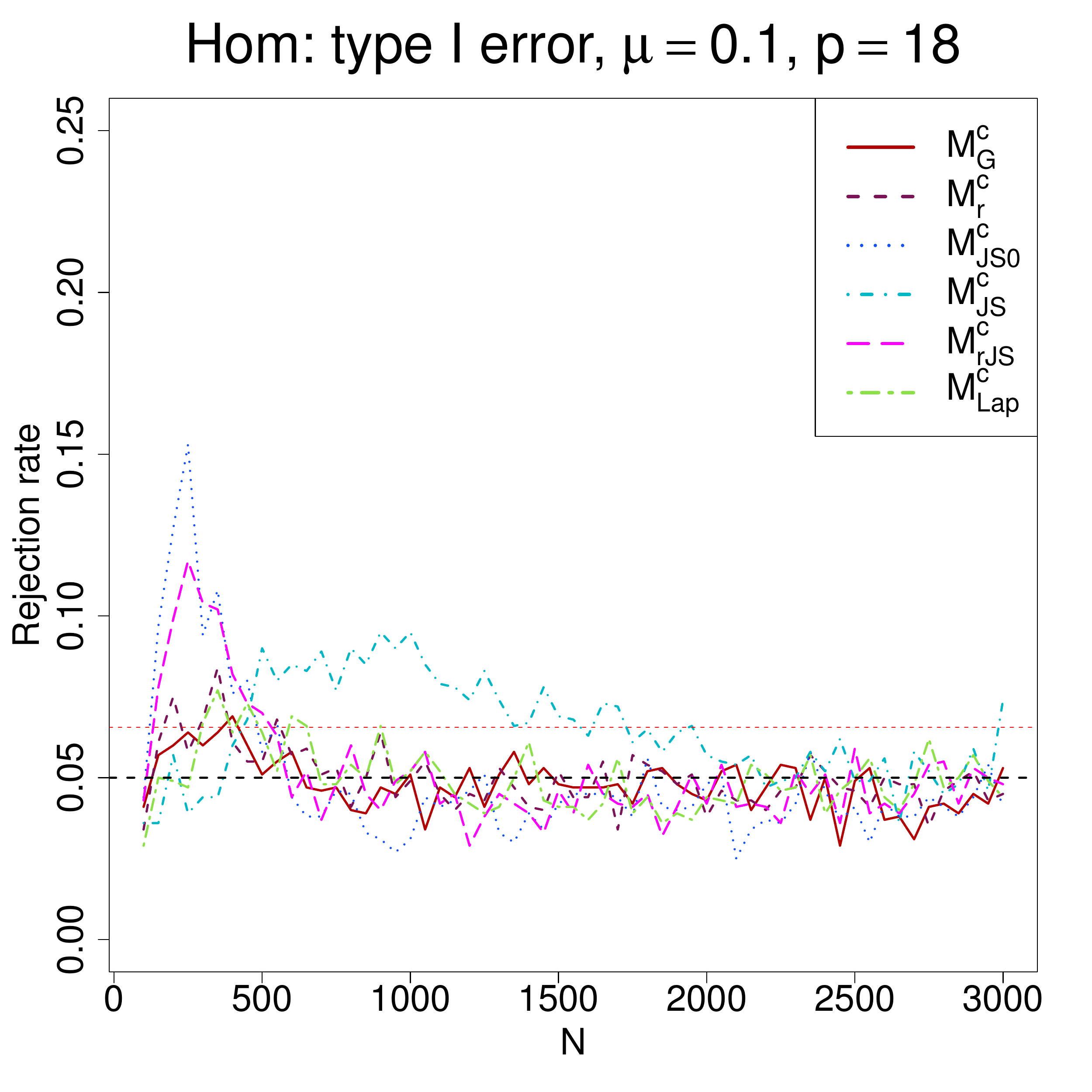}
\end{subfigure}
\begin{subfigure}{0.4\textwidth}
    \centering
    \includegraphics[width=.8\textwidth]{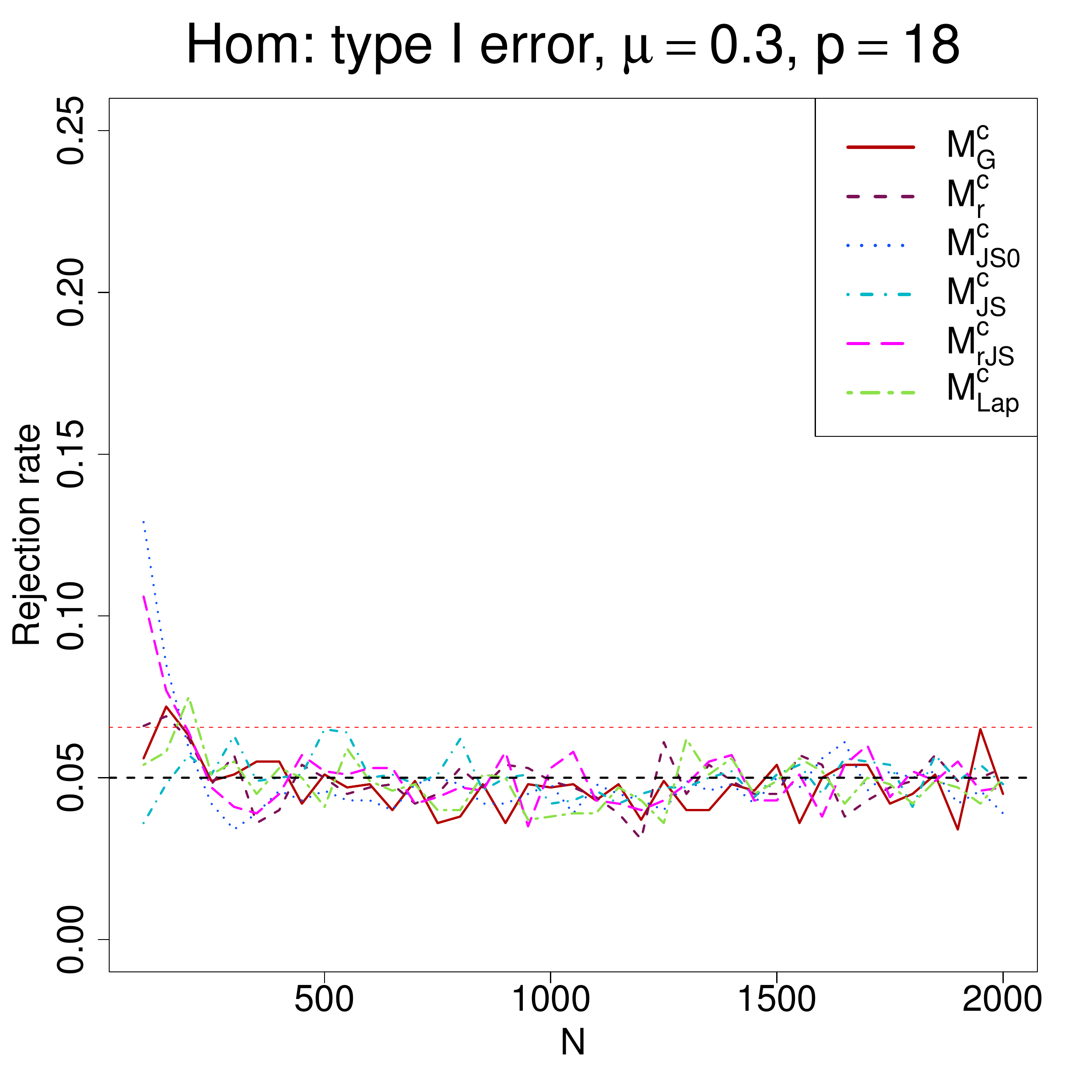}
\end{subfigure}
\caption{Graph of empirical type I error rates of homogeneity tests for each mechanism. The black dotted line is the level of the test, $\alpha = 0.05$. The red dotted line is the maximum empirical rejection rate value that contains $0.05$ within two times of standard error.}
\label{fig:hom-type1}
\end{figure}

\begin{figure}
\centering
\begin{subfigure}{0.4\textwidth}
    \centering
    \includegraphics[width=.8\textwidth]{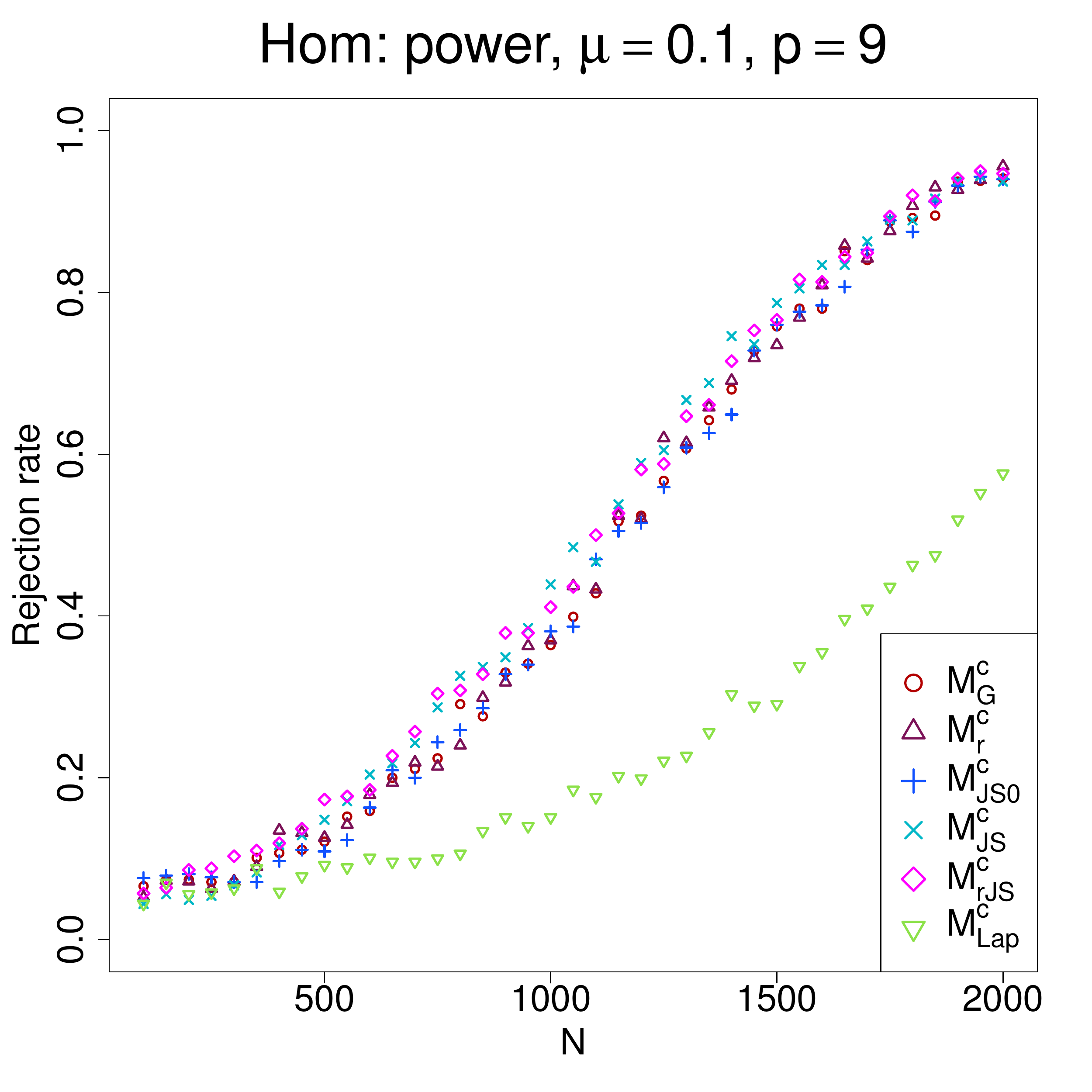}
\end{subfigure}
\begin{subfigure}{0.4\textwidth}
    \centering
    \includegraphics[width=.8\textwidth]{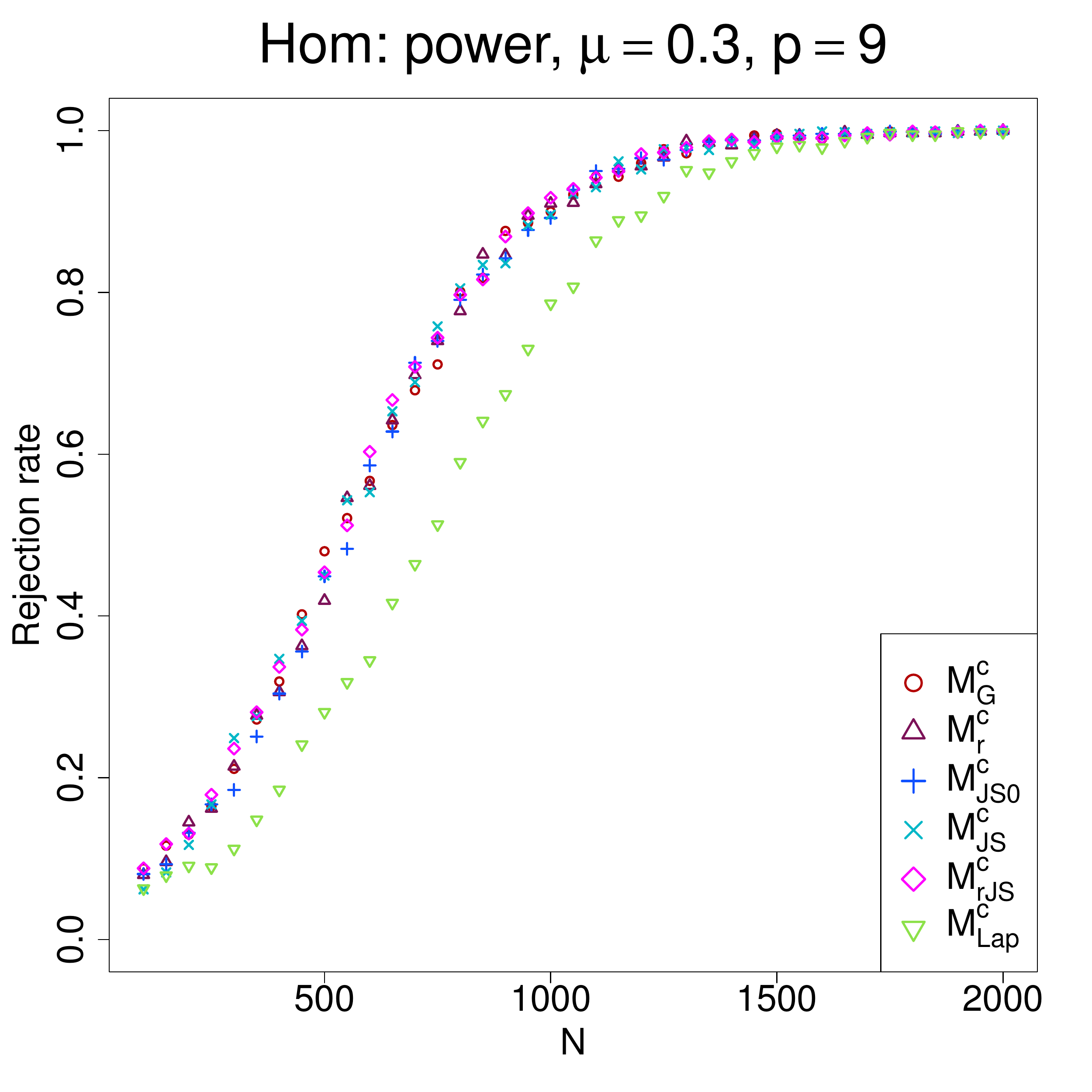}
\end{subfigure}
\begin{subfigure}{0.4\textwidth}
    \centering
    \includegraphics[width=.8\textwidth]{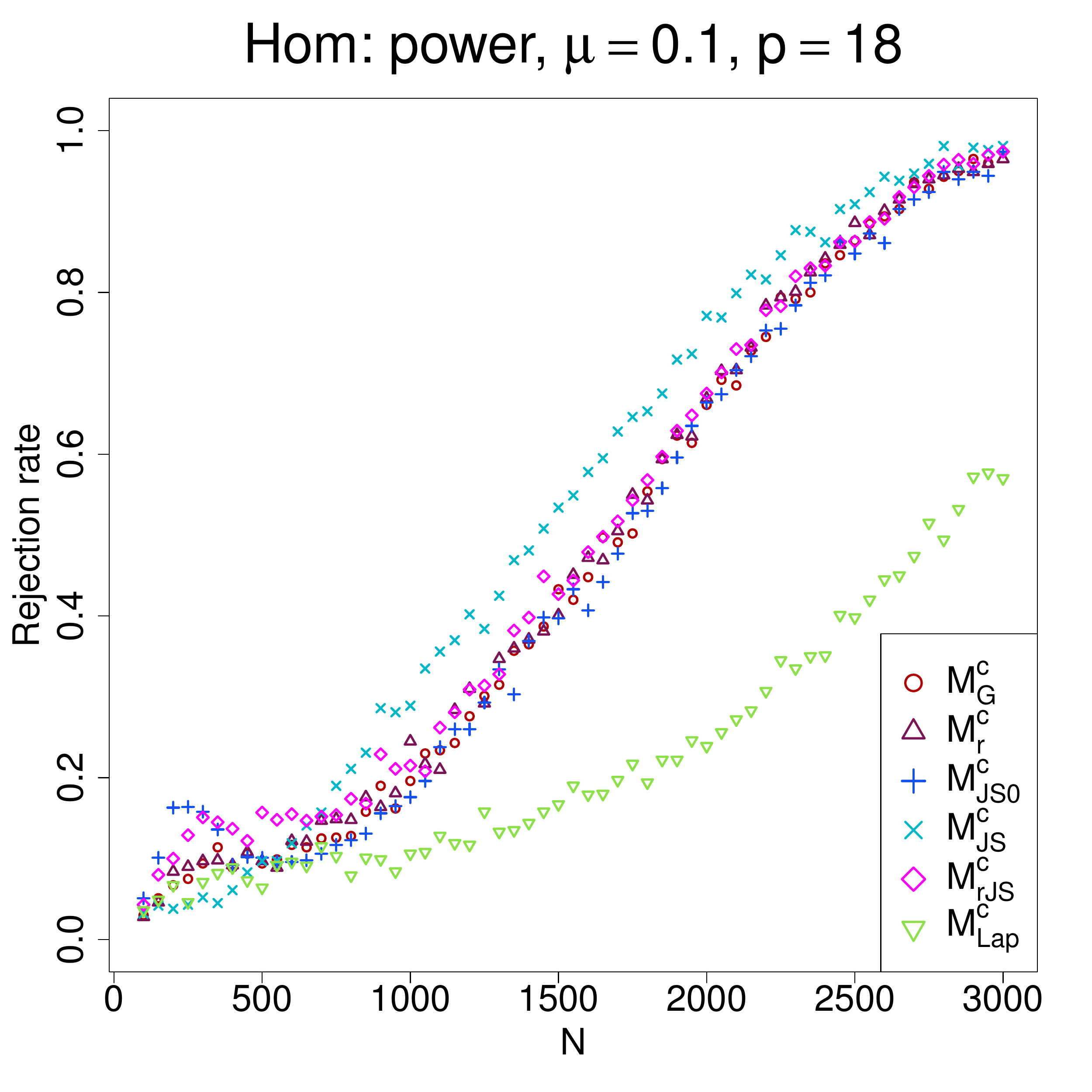}
\end{subfigure}
\begin{subfigure}{0.4\textwidth}
    \centering
    \includegraphics[width=.8\textwidth]{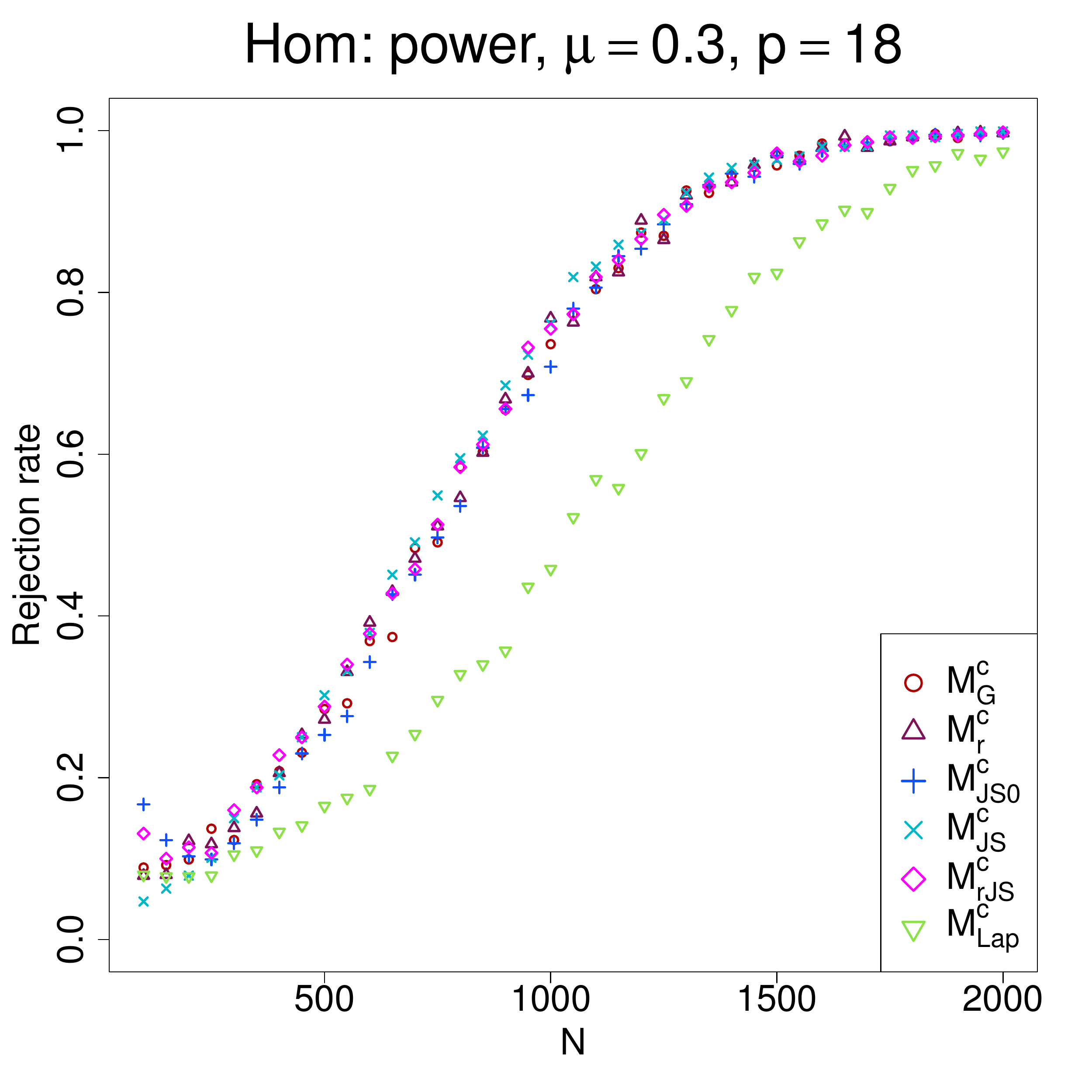}
\end{subfigure}
\caption{Graph of empirical power of homogeneity tests for each mechanism.}
\label{fig:hom-power}
\end{figure}

 As in the GOF test, except for the case of $\mu = 0.1$ and $p=18$, the type I errors are well controlled in  homogeneity test (Figure \ref{fig:hom-type1}).
Converse to the GOF test, the James--Stein processed mechanisms do not control type I error, and one remarkable observation is that $\Mv_{JS}^c$ does not control type I error in the range of $500 \le n \le 1500$.
We speculate that this phenomenon occurs because the shrinkage and truncation steps break the uniform distribution of empirical p-values.

As in the GOF test, Figure \ref{fig:hom-power} reveals that the Laplace mechanism has the lowest power compared to the Gaussian-type mechanisms.
When $\mu = 0.3$, it is hard to say that one mechanism dominates all others with respect to the power of the test.
For the analysis of the power of homogeneity test of $\mu=0.1$ and $p=18$, we need some care.
Since $\Mv_{JS}^c$ does not control the type I error unless $n > 1500$, we cannot conclude that $\Mv_{JS}^c$ dominates the other mechanisms in terms of power when $n \le 1500$ because type I error is not controlled.
However, $\Mv_{JS}^c$ still performs better than other methods for $n > 1500$.
Finally, in the scenario of $\mu=0.1$ and $p=9$, $\Mv_{JS}^c$ and $\Mv_{rJS}^c$ perform better than other mechanisms.
 
\end{document}